\newtheorem{theorem}{Theorem}
\newtheorem{proposition}{Proposition}
\newtheorem{lemma}{Lemma}
\algrenewcommand{\algorithmiccomment}[1]{\hfill$\blacktriangleright$ #1}
\DeclareMathOperator{\Cov}{Cov}
\DeclareMathOperator{\E}{E}
\DeclareMathOperator*{\argmin}{argmin}  
\DeclareMathOperator*{\argmax}{argmax}
\DeclareMathOperator{\diag}{diag}
\DeclareMathOperator{\rank}{rank}
\DeclareMathOperator{\IFu}{IF}
\DeclareMathOperator{\vect}{vec}
\newcommand{\ind}{I}
\newcommand{\MD}{\mbox{MD}}
\newcommand{\p}{\%\xspace}
\newcommand{\MCD}{\text{\tiny\upshape MCD}}
\newcommand{\rk}{k} 
\newcommand{\ran}{s}
\newcommand{\rt}{t}
\newcommand{\bZimp}{\bZ^{\mbox{\scriptsize{\upshape imp}}}}
\newcommand{\bhRimp}{\bhR^{\mbox{\scriptsize{\upshape imp}}}}
\newcommand{\sub}{k} 
\newcommand{\zort}{\boldsymbol{z}^{\perp}} 
\newcommand{\xort}{\boldsymbol{x}^{\perp}}
\newcommand{\cell}{\mbox{\scriptsize{\upshape cell}}} 
\newcommand{\case}{\mbox{\scriptsize{\upshape case}}} 
\newcommand{\CD}{{\boldsymbol{\cdot}}}
\newcommand{\eps}{\varepsilon}
\newcommand{\bzbar}{\overline{\boldsymbol z}}
\newcommand{\hsigma}{\widehat{\sigma}}
\newcommand{\hme}{\widehat{m}}
\newcommand{\bhme}{\boldsymbol{\widehat{m}}}
\newcommand{\hs}{\widehat{s}}
\newcommand{\bhs}{\boldsymbol{\widehat{s}}}
\newcommand{\hz}{\widehat{z}}
\newcommand{\ki}{\ell}
\newcommand{\bzero}{\boldsymbol 0}
\newcommand{\bone}{\boldsymbol 1}
\newcommand{\ba}{\boldsymbol a}
\newcommand{\bb}{\boldsymbol b}
\newcommand{\bd}{\boldsymbol d}
\newcommand{\bc}{\boldsymbol c}
\newcommand{\bE}{\boldsymbol E}
\newcommand{\bhE}{\boldsymbol{\widehat{E}}}
\newcommand{\be}{\boldsymbol e}
\newcommand{\bL}{\boldsymbol L}
\newcommand{\bp}{\boldsymbol p}
\newcommand{\hr}{\widehat{r}}
\newcommand{\hti}{\widehat t}
\newcommand{\bu}{\boldsymbol u}
\newcommand{\bhu}{\boldsymbol{\widehat{u}}}
\newcommand{\bv}{\boldsymbol v}
\newcommand{\bw}{\boldsymbol w}
\newcommand{\bx}{\boldsymbol x}
\newcommand{\bhx}{\boldsymbol{\widehat{x}}}
\newcommand{\bz}{\boldsymbol z}
\newcommand{\bhz}{\boldsymbol{\widehat{z}}}
\newcommand{\bA}{\boldsymbol A}
\newcommand{\bB}{\boldsymbol B}
\newcommand{\bC}{\boldsymbol C}
\newcommand{\bD}{\boldsymbol D}
\newcommand{\bI}{\boldsymbol I}
\newcommand{\bK}{\boldsymbol K}
\newcommand{\bM}{\boldsymbol M}
\newcommand{\bO}{\boldsymbol O}
\newcommand{\bP}{\boldsymbol P}
\newcommand{\bQ}{\boldsymbol Q}
\newcommand{\bR}{\boldsymbol R}
\newcommand{\bhR}{\boldsymbol{\widehat{R}}}
\newcommand{\bS}{\boldsymbol S}
\newcommand{\bT}{\boldsymbol T}
\newcommand{\bU}{\boldsymbol U}
\newcommand{\bhU}{\boldsymbol{\widehat{U}}}
\newcommand{\bV}{\boldsymbol V}
\newcommand{\btV}{\boldsymbol{\widetilde{V}}}
\newcommand{\bh}{\boldsymbol h}
\newcommand{\bhV}{\boldsymbol{\widehat{V}}}
\newcommand{\bW}{\boldsymbol W}
\newcommand{\btW}{\boldsymbol{\widetilde{W}}}
\newcommand{\bX}{\boldsymbol X}
\newcommand{\bZ}{\boldsymbol Z}
\newcommand{\bhZ}{\boldsymbol{\widehat{Z}}}
\newcommand{\bg}{\boldsymbol g}
\newcommand{\hdelta}{\widehat \delta}
\newcommand{\btheta}{\bm \theta}
\newcommand{\bTheta}{\boldsymbol \Theta}
\newcommand{\bPsi}{\boldsymbol \Psi}
\newcommand{\bPi}{\boldsymbol \Pi}
\newcommand{\bttheta}{\boldsymbol{\widetilde{\theta}}}
\newcommand{\bmu}{\boldsymbol \mu}
\newcommand{\bhmu}{\boldsymbol{\widehat{\mu}}}
\newcommand{\bsigma}{\boldsymbol \sigma}
\newcommand{\bSigma}{\boldsymbol \Sigma}
\newcommand{\bhSigma}{\boldsymbol{\widehat{\Sigma}}}
\newcommand{\btSigma}{\boldsymbol{\widetilde{\Sigma}}}
\newcommand{\bLambda}{\boldsymbol \Lambda}
\newcommand{\bhD}{\boldsymbol{\widehat{D}}}
\newcolumntype{M}[1]{>{\centering\arraybackslash}m{#1}}
\begin{document}

\def\spacingset#1{\renewcommand{\baselinestretch}
{#1}\small\normalsize} \spacingset{1}


\title{\bf Cellwise and Casewise Robust Covariance\\
           in High Dimensions}
\author[1]{Fabio Centofanti}
\author[1]{Mia Hubert}
\author[1]{Peter J. Rousseeuw}

\affil[1]{Section of Statistics and Data Science, Department 
          of Mathematics, KU Leuven, Belgium}

\setcounter{Maxaffil}{0}
\renewcommand\Affilfont{\itshape\small}
\date{May 31, 2026}       
\maketitle

\bigskip
\begin{abstract}
The sample covariance matrix is a cornerstone of 
multivariate statistics, but it is highly sensitive 
to outliers. These can be casewise outliers, such as 
cases belonging to a different population, or 
cellwise outliers, which are deviating cells 
(entries) of the data matrix. Recently some robust 
covariance estimators have been developed that can 
handle both types of outliers, but their computation 
is only feasible up to at most 20 dimensions.
To remedy this we propose the cellRCov method, a 
robust covariance estimator that simultaneously 
handles casewise outliers, 
cellwise outliers, and missing data. 
It relies on a decomposition of the covariance into 
the covariance of fitted points lying in a low-rank 
subspace and the covariance of their residuals, 
leveraging recent work on robust PCA. It also 
employs a ridge-type regularization
to stabilize the estimated covariance matrix.
We establish some theoretical properties of cellRCov, 
including its casewise and cellwise influence functions 
as well as consistency and asymptotic normality.  
A simulation study demonstrates the superior performance
of cellRCov in contaminated and missing data 
scenarios. Furthermore, its practical utility is 
illustrated in a real-world application to anomaly 
detection. We also construct and illustrate the 
cellRCCA method for robust and regularized canonical 
correlation analysis.
\end{abstract}

\noindent {\it Keywords:} 
Casewise outliers;
Covariance estimation;
Cellwise outliers; 
High-dimensional data;
Regularization. 

\newpage
\spacingset{1.5}

\section{Introduction} \label{sec:intro}
The covariance matrix is a key tool for understanding 
relationships between variables, making it a major 
component of many multivariate statistical methods. 
It plays a crucial role in linear and quadratic 
discriminant analysis, multivariate regression, 
cluster analysis, and countless other settings.
The population covariance matrix is typically 
estimated by the sample covariance matrix, which 
is easy to compute and unbiased and consistent at 
distributions with finite second moments. But 
in real-world applications data are often 
contaminated by outliers that can result from 
measurement errors, sensor malfunctions, or rare 
and unexpected events. In such situations the 
sample covariance becomes less reliable.

Multivariate data are typically stored in an 
$n\times p$ matrix, whose rows describe cases and 
columns are variables. Outliers are parts of the
data matrix that deviate substantially from the 
overall pattern. Depending on the context, outliers 
may be undesirable errors that can distort statistical 
analyses, or they can be valuable signals indicating 
interesting phenomena. Regardless of their nature, 
detecting outliers is crucial.

Research on outliers has traditionally focused on
outlying cases, also known as casewise outliers, 
that do not follow the same pattern as the majority 
of the cases. Many robust covariance estimators 
have been developed to address casewise
outliers in order to detect and/or downweight them, 
see e.g.\ \cite{portnoyhe2000} and Chapter 6 
of \cite{maronna2019robust}. For 
instance, M-estimators generalize maximum likelihood 
estimation by assigning lower weights to extreme 
cases. Another method is the Minimum Covariance 
Determinant (MCD) estimator 
\citep{rousseeuw1984least}, which seeks the subset 
with a fixed number of cases that is the most 
concentrated. Combinations of both ideas have
resulted in S-estimators and MM-estimators.
All casewise robust methods require that 
at least 50\p of the cases are clean.

In recent years, cellwise outliers have gained 
increasing attention \citep{alqallaf2009}. 
These are individual entries in the data matrix.
Such deviating cells are particularly common in 
high-dimensional data, when the large number of 
variables makes it more likely that some have
anomalous values. Even a relatively small 
proportion of outlying cells can contaminate many
cases. When random cells are contaminated with
probability $\eps$, the expected fraction of 
contaminated cases is $1 - (1-\eps)^p$. This
grows fast with $p$: even if only 1\p of the 
cells is contaminated in $p=100$ dimensions, 
we can expect 63\p of the cases to be 
contaminated. In such situations casewise 
robust methods become ineffective. 

The existence of outlying cells in the data bears
some resemblance to the missing data situation.
But the new problem is much more challenging, 
because
with missing cells you know exactly where they 
are, whereas outlying cells can be anywhere in
the data matrix. Moreover, they do not have to
stand out in their column. That is, they do not
have to be outlying in the marginal distribution
of their variable, they only need to deviate 
from their conditional distribution given all 
the other cells in the same case. This makes 
outlying cells difficult to detect. Some 
proposals were made by \cite{van2011stahel} and 
\cite{agostinelli2015robust}. The DDC method 
\citep{DDC2018} detects deviating cells by taking 
all the pairwise correlations between the 
variables into account.

In recent years several cellwise robust 
covariance estimators have been developed.
The first was the two-step generalized 
S-estimator (2SGS) of 
\cite{agostinelli2015robust}, based on an
S-estimator for incomplete data. Recently 
\cite{raymaekers2024cellMCD} constructed
a cellwise robust version of the MCD estimator
(cellMCD), that is computationally feasible 
up to about 20 dimensions.
Some cellwise robust methods have been developed 
for other settings, including 
discriminant analysis \citep{aerts2017cellwise},
time series \citep{centofanti2025multivariate},
principal component analysis 
\citep{hubert2019macropca, centofanti2026robust}, 
analysis of tensor data \citep{hirari2025robust}, 
and clustering \citep{zaccaria2024cellwise}. For 
a comprehensive overview of cellwise robust 
methods see \cite{raymaekersChallenges}.

Even apart from the issue of outliers, covariance
estimation has to contend with another challenge, 
the prevalence of datasets with high dimensions. 
Genomics, finance, image processing,
and many other fields generate datasets where 
the number of variables $p$ is comparable to or 
even exceeds the number of observations $n$. 
This exacerbates 
statistical errors, leading to unreliable results 
\citep{pourahmadi2013high}. In high dimensions,
a covariance matrix contains $p(p+1)/2$ entries
that have to be estimated. Even in the absence of 
any outliers, the sample covariance matrix performs 
poorly when $n/p$ is small. It exhibits eigenvalue 
distortions, where large eigenvalues tend to be 
overestimated and small ones underestimated, leading 
to a biased representation of the variance structure. 
These distortions worsen as $n/p$ decreases, and
when $n/p < 1$ the sample covariance matrix becomes 
singular. This severely limits its applicability in 
statistical procedures that require the inverse of
the covariance matrix.

To overcome these limitations of the sample 
covariance in high-dimensional settings, numerous 
regularized covariance estimators have been proposed 
in the literature. Among these,
ridge-type covariance estimators have gained 
widespread adoption due to their simplicity, ease of 
implementation, and ability to consistently improve 
covariance estimates in high-dimensional scenarios. 
They are weighted averages of the sample covariance 
matrix and some target matrix \citep{ledoit2004well,
schafer2005shrinkage,warton2008penalized}.
Other approaches include eigenvalue shrinkage, which 
adjusts eigenvalues to correct estimation errors while 
preserving eigenvectors \citep{ledoit2012nonlinear}, 
and structured covariance estimators, which impose 
sparsity or banded patterns 
\citep{bickel2008regularized,rothman2009generalized}. 
For comprehensive reviews of 
covariance estimation in high-dimensional settings
see e.g. \cite{engel2017overview}
and \cite{lam2020high}.

A few methods for robust high-dimensional covariance 
estimation have been developed. 
\cite{boudt2020minimum} introduced the Minimum 
Regularized Covariance Determinant (MRCD) estimator, 
which extends the MCD approach to high-dimensional 
settings by a ridge-type regularization.
It is casewise robust but not cellwise
robust. Cellwise robust covariance estimation 
methods for high dimensions were developed by 
\cite{croux2016robust}, who use rank correlations, 
and by \cite{tarr2016robust}, who employ other 
robust pairwise covariance estimates. But these 
methods tend to be less casewise robust.

Currently there is no covariance estimator 
that works well for both casewise and cellwise 
outliers and is computationally feasible in high 
dimensions. To address this limitation we
propose the \textit{cellwise Regularized Covariance} 
(cellRCov) method. It relies on a covariance 
decomposition into two components that are in some 
sense orthogonal. The first builds upon a recent 
method for robust dimension reduction, while the 
second is a weighted covariance matrix on a kind of 
residuals. Next it carries out a ridge-type 
regularization to enhance stability.

The main novelty of cellRCov is that it is 
the only covariance method capable of 
simultaneously handling cellwise outliers and 
casewise outliers in high-dimensional 
settings. It can also deal with missing values. 
This makes the methodology suitable for analyzing 
the kind of real data that is becoming increasingly 
common. Another novelty is in the theoretical
properties shown, including consistency, influence
functions, and asymptotic normality. This is the
first time a cellwise influence function of
covariance is obtained.

Section~\ref{sec:method} presents the cellRCov 
estimator, and Section~\ref{sec:theo} provides 
its theoretical properties. 
Section~\ref{sec:implementation} elaborates on 
its implementation. The empirical performance 
of cellRCov is assessed by Monte 
Carlo in Section~\ref{sec:simulation}.  
Section~\ref{sec:realdata} presents a real data
example where cellRCov is used for anomaly 
detection. It also constructs and illustrates
the cellRCCA method for robust and regularized 
canonical correlation analysis. 
Section~\ref{sec:conc} concludes.

\section{Methodology}
\label{sec:method}

\subsection{A covariance decomposition}
Consider a $p$-dimensional random vector $X$, 
with mean $\bmu=\left(\mu_1,\dots,\mu_p\right)^T$ 
and covariance matrix $\bSigma$. The 
spectral decomposition of $\bSigma$ is given by
\begin{equation*}
    \bSigma=\bE\bLambda\bE^T=
    \sum_{j=1}^p\lambda_j\be_j\be_{j}^T
\end{equation*}
where the columns $\be_1, \dots, \be_p$ of the 
matrix $\bE$ are the 
eigenvectors of $\bSigma$ corresponding to the 
eigenvalues $\lambda_1 \geqslant  \cdots 
\geqslant \lambda_p \geqslant 0$, and 
$\bLambda=\diag(\lambda_1,\dots,\lambda_p)$. We 
can express $X$ as the sum of two components: 
one lying in the $\rk$-dimensional subspace of 
$\mathbb{R}^p$ spanned by the first $\rk$ 
eigenvectors of $\bSigma$, and the other in its 
orthogonal complement. That is,
\begin{equation}
    X=X^\sub + X^{\perp}
\end{equation}
where $X^\sub=\bmu+\bE_k\bu$ with 
$\bu=\bE_k^T(X-\bmu)$ in which the columns of 
$\bE_k$ are the first $\rk$ eigenvectors 
$\be_1, \dots, \be_k$\,, and 
$X^{\perp} := X-X^\sub$\,. Then $\bSigma$ can 
be decomposed as
\begin{equation}
\label{eq:covdec}
  \bSigma=\bSigma_{X^\sub}+\bSigma_{X^{\perp}}
\end{equation}
where $\bSigma_{X^\sub}:=\Cov(X^\sub)$ and 
$\bSigma_{X^{\perp}}:=\Cov(X^{\perp})$, because 
$X^\sub$ and $X^{\perp}$ are uncorrelated, i.e.\ 
$\Cov(X^\sub,X^{\perp})=\bzero$.
This decomposition is analogous to the decomposition 
of the response covariance in multivariate regression 
as the sum of the covariance of predicted values and 
the covariance of residuals 
\citep{johnson2002applied}.  

From an estimation perspective, the 
decomposition~\eqref{eq:covdec} can be
used as a guiding principle for covariance 
estimation. The idea is to estimate the 
covariance as the sum of two contributions: 
the estimated covariance of fitted points
lying in a rank-$k$ subspace, that
summarizes the dominant low-dimensional 
structure retained by the fit, and the
estimated covariance of their 
residuals that accounts for the
remaining covariance structure.
Suppose we have $n$ realizations 
$\bx_1, \dots, \bx_n$ of $X$. In classical 
principal component analysis (CPCA), a
data point $\bx_i$ is fitted by the point
$\widehat{\bx^\sub_i}=\bhmu+\bhE_k\bhu_i$.
Here $\bhmu$ is the sample mean, $\bhE_k$ 
is the $p\times k$ matrix whose columns are 
the first $k$ eigenvectors of the sample 
covariance matrix $\bS$, and 
$\bhu_i=\bhE_k^T(\bx_i-\bhmu)$. 
The corresponding residual is
$\widehat{\boldsymbol{x}_i^{\perp}}=
\bx_i-\widehat{\bx^\sub_i}$\,.
Then we can compute $\bhSigma_{\bx^\sub}$ 
and $\bhSigma_{\xort}$ as the sample 
covariance matrices of the 
$\widehat{\bx^\sub_i}$  and 
$\widehat{\boldsymbol{x}_i^{\perp}}$\,.
Then the estimated $\bhSigma :=
\widehat{\bx^\sub_i} +
\widehat{\boldsymbol{x}_i^{\perp}}$ 
from~\eqref{eq:covdec} equals $\bS$. Of 
course, in that setting we gain 
nothing, but~\eqref{eq:covdec} becomes
useful when constructing a robust 
estimator of covariance.

If we were to use CPCA, both the fitted points 
and their residuals would be
affected by outliers. Indeed, the fitted point
$\widehat{\bx^\sub_i}$ depends on the scores
$\bhu_i=\bhE_k^T(\bx_i-\bhmu)$, while the residual
$\widehat{\boldsymbol{x}_i^{\perp}}=
\bx_i-\widehat{\bx^\sub_i}$ also depends
directly on the original observation $\bx_i$\,. 
Therefore, both casewise and cellwise outliers 
in $\bx_i$ would propagate to the fitted 
component and to the residual component. 

\subsection{The cellPCA method}
\label{sec:cellPCA}

We briefly describe the cellPCA method of 
\cite{centofanti2026robust}, that will be used 
as a building block for our
robust covariance matrix. CellPCA obtains a 
robust low-rank approximation of a dataset
in the presence of cellwise outliers, casewise
outliers, and missing values. We emphasize 
that cellPCA is not a new contribution 
of this paper.

Let $\bZ=\lbrace z_{ij}\rbrace$ be an 
$n \times p$ data matrix for which a
rank-$k$ approximation is sought. 
The goal of PCA is to represent $\bZ$ in 
a lower dimensional space, that is
\begin{equation}\label{eq:model}
   \bZ=\bone_n \bmu^T + \bU\bV^T + \bR,
\end{equation}
where $\bone_n$ is a column vector with all 
$n$ components equal to $1$,  the scores matrix
$\bU  =\lbrace u_{i\ell}
\rbrace=\left[\bu^1,\dots,\bu^\rk\right]=
\left[\bu_1,\dots,\bu_n\right]^T$ is 
$n \times \rk$, the loadings matrix
$\bV=\lbrace v_{j\ell}\rbrace=
\left[\bv^1,\dots,\bv^\rk\right]=
\left[\bv_1,\dots,\bv_p\right]^T$ is 
$p\times k$, and the matrix 
$\bR=\left[\zort_1,\dots,\zort_n\right]^T$ is 
the residual term. 
In CPCA, $\bmu$, $\bU$, and $\bV$ are estimated as
$\bhmu_{L^2}$\,, $\widehat{\bU}_{L^2}$\,, and 
$\widehat{\bV}_{L^2}$\, that together minimize
\begin{equation} \label{eq:appF}
 \sum_{i=1}^{n}\sum_{j=1}^{p}
 \left(z_{i j}- \mu_j-\sum_{\ell=1}^\rk 
 u_{i \ell}v_{j \ell} \right)^2
 =\sum_{i=1}^{n}\sum_{j=1}^{p} r_{i j}^2\;,
\end{equation}
where the residuals are denoted 
$r_{i j} := z_{i j}-\mu_j-
\sum_{\ell=1}^k u_{i \ell}v_{j \ell}$\;.
This quadratic loss function 
makes it a least squares fit, that is very 
sensitive to casewise and cellwise outliers.
Another challenge is the presence of missing 
data, as CPCA requires complete data.
  
Instead, cellPCA obtains estimators $\bhV$, $\bhU$, and 
$\bhmu$ of the $\bV$, $\bU$, and $\bmu$ 
in~\eqref{eq:model} by minimizing the loss 
function
\begin{equation} \label{eq:objP}
  L_{\rho_1,\rho_2}(\bZ,\bV,\bU,\bmu) := 
  \frac{\hsigma_2^2}{m}\sum_{i=1}^{n}m_i \rho_2\! 
  \left(\frac{1}{\hsigma_2}\sqrt{\frac{1}{m_i}
  \sum_{j=1}^{p} m_{ij}\, \hsigma_{1,j}^2\,
  \rho_1\!\left(\frac{r_{i j}}
  {\hsigma_{1,j}}\right)}\, \right),
\end{equation}
where $m_{ij}$ is 0 if $x_{ij}$ is missing 
and 1 otherwise, 
$m_i=\sum_{j=1}^{p} m_{ij}$\,,
and $m=\sum_{i=1}^{n} m_i$\,.
The scales $\hsigma_{1,j}$ standardize the 
{\it cellwise residuals} $r_{ij}$\,, and
the scale $\hsigma_2$ standardizes the 
{\it casewise total deviation} defined as
\begin{equation}\label{eq:rt_i}
  \rt_i := \sqrt{\frac{1}{m_i} \sum_{j=1}^{p}
  m_{ij}\, \hsigma_{1,j}^2\, \rho_1\!\left(
  \frac{r_{ij}}{\hsigma_{1,j}} \right)}\;.
\end{equation} 

For $\rho_1(z) = \rho_2(z) = z^2$ the 
objective~\eqref{eq:objP} becomes 
the objective~\eqref{eq:appF} of CPCA,
but here we use the bounded functions 
$\rho_1$ and $\rho_2$ of 
type~\eqref{eq:rhotanh}. This makes
$\bhmu$, $\bhU$, and $\bhV$ robust against 
both cellwise and casewise outliers. 
Indeed, a cellwise outlier in the cell $(i,j)$ 
yields a cellwise residual $r_{ij}$ with a 
large absolute value, but the boundedness of
$\rho_1$ reduces its effect on the estimates.
Similarly, a casewise outlier results in a large 
casewise total deviation $\rt_i$ but its effect
is reduced by $\rho_2$\,.
Note that in the computation of $\rt_i$ the effect 
of cellwise outliers is tempered by the presence 
of $\rho_1$\,. This avoids that a single cellwise 
outlier would always give its case a 
large $\rt_i$\,. The objective~\eqref{eq:objP} 
is minimized by an iteratively reweighted
least squares algorithm described
in \cite{centofanti2026robust}.

We assume the missingness mechanism is 
non-informative with respect to the 
unobserved entries. This includes missing 
completely at random (MCAR) and, more generally, 
missing at random (MAR) settings in which the
missingness pattern can be explained by the 
observed data. 

Note that the loading vectors of classical 
PCA are orthonormal, and the corresponding 
scores are uncorrelated. But here we will only 
use cellPCA to obtain a low-rank
approximation of the dataset. For our purposes
the key quantity is the fitted matrix
$\widehat{\bZ}=\bone_n\widehat{\bmu}^{T} +
\widehat{\bU}\widehat{\bV}^{T}$, rather than 
specific choices of the matrices 
$\widehat{\bU}$ and $\widehat{\bV}$. 
Consequently, imposing orthogonality constraints 
on $\widehat{\bV}$ or correlation constraints on
$\widehat{\bU}$ is not needed here.

We use the decomposition in~\eqref{eq:model}
as a computational device, rather than as a 
signal-plus-noise generative model. In particular, 
we do not assume that the residual component $\bR$ 
is isotropic or Gaussian. Therefore, the term 
$\bU\bV^T$ should not be interpreted as a uniquely 
identifiable signal component. Instead, for a fixed
rank $k$, cellPCA provides the robust low-rank fit
$\bhZ$. These fitted points are the quantities 
needed for the construction of cellRCov. Therefore
\eqref{eq:model} should be interpreted as an
approximation used to construct a covariance 
matrix, not as an identifiable signal/noise
decomposition.

\subsection{The cellRCov estimator}
\label{sec:cellRCov}
We now introduce the cellRCov estimator. Its construction
consists of the following steps.
\begin{enumerate}[nosep]
\item \textit{Robust marginal standardization.}
    Standardize the original data matrix columnwise,
    using robust univariate M-scales.
\item \textit{Robust low-rank approximation.
    Apply the cellPCA method to the standardized data.}
\item \textit{Covariance estimation in the 
    fitted subspace.} Estimate the first term of the 
    decomposition~\eqref{eq:covdec} from
    the robust cellPCA scores and loadings. 
\item \textit{Residual covariance estimation.
    Estimate the second term of the 
    decomposition~\eqref{eq:covdec} 
    from weighted residuals, with cellwise 
    weights downweighting individual contaminated cells, 
    and casewise weights downweighting anomalous cases.}
\item \textit{Regularization.} Regularize the
    residual covariance by ridge-type shrinkage
    to address possible ill-conditioning in 
    high-dimensional settings.
\item \textit{Final covariance estimate.}
    Add the regularized residual covariance matrix
    to the fitted subspace covariance matrix. 
    Then undo the initial marginal standardization.
\end{enumerate}
We now describe these steps in detail.

We store the data $\bx_1,\dots,\bx_n$ in the 
$n \times p$ data matrix $\bX$. It is first
standardized to $\bZ=\lbrace z_{ij}\rbrace=
(\bz_1,\dots,\bz_n)^T=\bX\bhD^{-1}$, where 
$\bhD=\diag(\hsigma^X_1,\dots,\hsigma^X_p)$. 
Here $\hsigma^X_j$ is the M-scale of the 
univariate set 
$(x_{1j}-m_j,\dots,x_{nj}-m_j)$, where 
$m_j$ is the median of the $j$-th variable.
An M-scale of a univariate sample 
$\left(t_1,\dots,t_n\right)$ is the 
solution $\hsigma$ of an equation
\begin{equation}\label{eq:Mscale}
  \frac{1}{n} \sum_{i=1}^n \rho\left(
  \frac{t_i}{a\sigma}\right)=\delta\,.
\end{equation}
For maximal robustness we take 
$\delta = \max(\rho)/2$, and the 
consistency factor $a$ is chosen such 
that $E[\rho(t/a)] = \delta$ for 
$t \sim N(0,1)$, the standard Gaussian 
distribution.
The function $\rho$ is the hyperbolic 
tangent (\textit{tanh}) function $\rho_{b,c}$ 
introduced by~\cite{tanh1981}. It is
defined piecewise by
\begin{equation}\label{eq:rhotanh}
\rho_{b,c}(t) = 
\begin{cases}
 t^2/2 &\mbox{ if } 0 \leqslant |t| \leqslant b,\\
  d - (q_1/q_2) \ln(\cosh(q_2(c - |t|)))    
    &\mbox{ if } b \leqslant |t| \leqslant c,\\
  d &\mbox{ if } c \leqslant |t|,\\
\end{cases}
\end{equation}
where $d = (b^2/2) + (q1/q2)\ln(\cosh(q_2(c - b)))$, 
with the default values $b=1.5$ and $c=4$ with 
$q_1=1.540793$ and $q_2=0.8622731$. With this 
choice, we have $\delta = 1.8811$ and $a = 0.3431$\,.

Then, cellRCov decomposes the 
$\bz_i$ as $\bz^\sub_i + \zort_i$ as 
in~\eqref{eq:covdec}. To obtain this 
decomposition robustly, we apply cellPCA
to $\bZ$, as described in 
Section~\ref{sec:cellPCA}. This yields 
estimates of the quantities appearing in 
the low-rank representation~\eqref{eq:model},
namely the robust center $\bhmu$, the 
score matrix $\bhU$, and the loading
matrix $\bhV$. 

CellRCov next estimates the covariance 
of the fitted points 
$\bz^\sub_i = \bhz_i =\bhmu + \bhV\bhu_i$ as
\begin{equation}\label{eq:SigmaMCD}
  \btSigma_{\bz^\sub}=\bhV
\bhSigma_{\text{\tiny MCD}}(\bhU)\bhV^T\;,
\end{equation} 
where 
$\bhSigma_{\MCD}(\cdot)$ is computed by the 
MCD estimator \citep{rousseeuw1984least} 
with parameter $0.5 \leqslant\alpha<1$.
For this estimation we use 
the fast and robust deterministic algorithm
DetMCD of \cite{hubert2012deterministic}.
Unlike CPCA, the columns of $\bhU$ 
may be correlated. Therefore, the covariance
matrix of the scores is not guaranteed to be
diagonal, so it cannot be estimated 
coordinatewise. This needs to be done 
robustly, because although cellPCA reduces 
the effect of points with large residuals, 
a point $\bz_i$ with small residuals can 
still have a $\bhz_i$ that is far from the 
bulk of the fitted points in the principal
subspace. Such points have outlying scores,
that would strongly affect a non-robust 
covariance estimate of $\bhU$. For this 
reason we use the MCD estimator, to avoid 
that outlying fitted points $\bz_i^\sub$ 
have a large effect on the estimated 
covariance. Appendix~A.1 of 
\cite{hubert2019macropca} illustrates 
why this step is needed.

The use of the MCD, rather than a cellwise 
robust covariance estimator such as
cellMCD \citep{raymaekers2024cellMCD}, is 
motivated by the fact that the notion of 
cellwise outlyingness is tied to the 
original coordinate system of the variables. 
The scores $\bhu_i$ are coordinates in the 
low-dimensional subspace, and a single 
contaminated cell in the original data
does not correspond to a single contaminated
cell in the scores space. As a result, 
the usual notion of cellwise outlyingness is 
no longer meaningful in the scores 
space. Moreover, the influence of cellwise
outliers in the original variable space was 
already reduced when computing the scores 
matrix $\bhU$ and the fitted points 
$\bz_i^\sub$\,. Therefore, once the data 
are represented in the scores space, the 
remaining concern is to protect the scatter 
estimate against casewise outlying scores.

A large $|\hr_{ij}|$ in the residual matrix 
$\bhR=\lbrace \hr_{ij}\rbrace =
\lbrace z_{ij}-\hz_{ij}\rbrace=\bZ-\bhZ$ 
leads us to suspect its $z_{ij}$\,. 
Note that $\bhZ$ is estimated by downweighting 
casewise and cellwise outliers. This avoids 
that the fitted point $\hz_{ij}$ adapts 
to a contaminated cell $z_{ij}$\,. 
If $|\hr_{ij}|$ 
is small, then $z_{ij}$ is well predicted by 
the robust low-rank structure estimated from
the whole data matrix, and is not considered 
outlying relative to the fitted subspace.

To obtain a version of $\bZ$ that is free of
cellwise outliers, we define the imputed data 
matrix as
\begin{equation} \label{eq:imputxi}
   \bZimp := \bhZ + \btW\odot\bhR,
\end{equation}
where the Hadamard product $\odot$ multiplies 
matrices entry by entry. In this expression
$\btW=\bW^{\cell} \odot \bM$, where the 
$n \times p$ matrix 
$\bW^{\cell}=\lbrace w_{ij}^{\cell}\rbrace$ 
contains the {\it cellwise weights}
\begin{equation}\label{eq:cellweight}
   w_{ij}^{\cell}=\psi_1\!\left(
   \frac{\hr_{ij}}{\hsigma_{1,j}}\right)
   \Big/ \frac{\hr_{ij}}{\hsigma_{1,j}}\; , 
   \quad i=1,\dots,n,\quad j=1\dots,p,
\end{equation}
where $\psi_1=\rho_1'$ with the 
convention $w_{ij}^{\cell}(0) = 1$. 
The $n \times p$ matrix $\bM$ contains the 
missingness indicators $m_{ij}$\,.
We then use the imputed residuals 
$\bhRimp=\bZimp-\bhZ=\btW\odot\bhR$ 
to estimate the second term of the
decomposition~\eqref{eq:covdec} as
\begin{equation}\label{eq:Sigmay}
   \btSigma_{\zort} = 
   \frac{1}{b}\sum_{i=1}^n
   w_i^{\case}\btW_i(\bz_i - \bhz_i)
   (\bz_i - \bhz_i)^T\btW_i\;,
\end{equation}
\sloppy where $\btW_i$  is a diagonal matrix 
whose diagonal is the $i$-th row of $\btW$, 
and $b$ is given by
$b=\sum_{i=1}^n\sum_{j=1}^p
\sum_{\ell=1}^p m_{ij}m_{i\ell}w_i^{\case}
w_{ij}^{\cell} w_{i\ell}^{\cell}/p^2$.
Here the {\it casewise weights} $w_i^{\case}$ 
are defined as
\begin{equation}\label{eq:caseweight}
   w_{i}^{\case}=\psi_2\!\left(
   \frac{\hti_i}{\hsigma_2}\right)
   \Big/ \frac{\hti_i}{\hsigma_2}\; , 
   \quad i=1,\dots,n\,,
\end{equation}
with $\psi_2=\rho_2'$\,, and $\hti_i$ is 
obtained from \eqref{eq:rt_i} with 
$\hr_{ij}$ in place of $r_{ij}$.
The constant $b$ is a normalizing 
factor that accounts for the effective 
number of observed and downweighted cell 
pairs contributing to the residual 
covariance estimate.
We stress that the residual component is 
not interpreted as isotropic noise. It 
represents the variation not captured by the 
rank-$k$ fit $\bhZ$, and its covariance
structure need not be diagonal. Since 
cellRCov aims to estimate the full covariance 
matrix, this covariance structure is not 
discarded.

In high-dimensional settings, where the number 
of variables $p$ is comparable to or exceeds the 
number of observations $n$, estimating a covariance 
matrix becomes challenging due to the huge number 
of parameters $(p^2 + p)/2$ it contains. 
In such situations the matrix $\btSigma_{\zort}$
of~\eqref{eq:Sigmay} can degrade, as it inherits 
some of the limitations of the sample covariance. 
Indeed, $\btSigma_{\zort}$ is a kind of weighted 
sample covariance matrix. 
Therefore we stabilize $\btSigma_{\zort}$
by a ridge-type regularization, yielding 
\begin{equation} \label{eq:ridge}
   \btSigma_{\zort}^{R}=(1-\delta) 
   \btSigma_{\zort}+\delta\bT,
\end{equation}
where $\bT=\diag((\btSigma_{\zort})_{11},\,\dots,
(\btSigma_{\zort})_{pp})$ is the diagonal target 
matrix 
with the diagonal entries of $\btSigma_{\zort}$\,, 
and $\left.\delta\in(0,1\right]$ is the tuning 
parameter that controls the amount of shrinkage.
The superscript $R$ in $\btSigma_{\zort}^{R}$
stands for Regularized. As the target 
matrix $\bT$ is positive definite, the 
resulting $\btSigma_{\zort}^{R}$ will be as well 
for $\delta>0$. Several alternative target 
matrices have been proposed in the literature, 
see e.g. \cite{engel2017overview}, but we prefer 
$\bT$ as suggested by \cite{schafer2005shrinkage} 
because it leaves the diagonal entries of 
$\btSigma_{\zort}$ intact, so it does not shrink 
the estimated variances.

The cellRCov estimate $\bhSigma_{\bz}$ of the 
covariance of $\bz$ is defined as
$\bhSigma_{\bz} := \bhSigma_{\bz^\sub} +
\bhSigma_{\zort}^R$\;. Undoing the original
standardization by the diagonal matrix $\bhD$,
this yields
\begin{equation} \label{eq:cellRCov}
  \mbox{cellRCov}(\bX) = \bhSigma =
  \bhD\big(\btSigma_{\bz^\sub}+
  \btSigma_{\zort}^R\big)\bhD\,.
\end{equation}
The standardization step in the
beginning ensures that cellRCov 
is scale equivariant, meaning that for 
any diagonal matrix 
$\bD = \operatorname{diag}(d_1, \dots, d_p)$ 
with $d_j > 0$, the estimator satisfies 
$\bhSigma(\bX\bD) = \bD \bhSigma(\bX) \bD$.
Therefore $\bhSigma$ reacts in the usual
way to changes of variable units.
Moreover, since $\bhSigma$ is positive 
definite, it can be used to compute robust 
Mahalanobis-type distances in 
high-dimensional data.

A detailed complexity analysis is provided in 
Section~\ref{sec:supp-complexity-cellRCov} 
of the Supplementary Material. There we show that, 
when the rank $\rk$ is bounded by a fixed maximum
$\rk_{\max}$\,, the computational complexity of 
cellRCov is only $O(np^2+np\log(n))$. This is not
much more than the $O(np^2)$ of the classical
covariance matrix.

\section{Large-sample properties}
\label{sec:theo}

The influence function (IF) is a key robustness 
tool. It reveals how an estimating functional, 
that is, a mapping from a space of probability 
measures to a parameter space, changes due to a 
small fraction of contamination.
This section presents the influence functions and 
asymptotic normality of cellRCov. 
The proofs are provided in Section~\ref{app:proofs} 
of the Supplementary Material.
For notational simplicity, we assume that the data 
$\bX$ are already standardized, so $\bhD=\bI_p$.  

Consider a $p$-variate random variable $X$
with distribution $H_0$. We then contaminate it,
yielding the variable
\begin{equation} \label{eq:cont_cell_z}
X_{\eps}=A \odot X + (\bone_p-A) \odot \bz 
\end{equation}
where the column vector $A$ is a random variable,
$\bone_p$ is a column vector of ones, and
$\bz=\left(z_1, \ldots, z_p\right)^T$  is a fixed 
$p$-variate vector. The variable $A$ is of the
form $A = A^{\case} \odot A^{\cell}$ and its 
distribution is denoted as $G_\eps$\,.
The \mbox{$p$-variate} variable $A^{\case}$ has 
Bernoulli distributed marginals $A_j^{\case}$ 
for $j=1,\ldots,p$ with success parameter 
$1-\eps^{\case}$, and jointly they are fully 
dependent in the sense that 
$P(A_1^{\case} = \ldots = A_p^{\case})=1$. 
The $p$-variate variable $A^{\cell}$ has 
Bernoulli components $A_j^{\cell}$ with 
success probability $1-\eps^{\cell}_j$.

The usual \textit{casewise influence function}
of \cite{hampel1986} 
uses the \textit{fully dependent contamination
model} (FDCM), which has 
$A := A^{\case}$ with independent $X$ and 
$A^{\case}$ in \eqref{eq:cont_cell_z}. In that 
situation the distribution of $X_{\eps}$ 
simplifies to $(1-\eps^{\case})H_0 + 
\eps^{\case}\Delta_{\bz}$, where  
$\Delta_{\bz}$ is the distribution that puts 
all of its mass in the point $\bz$. We then 
denote $G_\eps$ as $G_\eps^D$, which depends 
on $\eps=\eps^{\case}$, and the distribution 
of $X_{\eps}$ is denoted as $H(G_\eps^D,\bz)$. 
For a functional $\bT$ with values in 
$\mathbb R^p$, the casewise influence 
function is then defined as
\begin{equation}
\label{eq:caseIF}
\IFu_{\case}(\bz, \bT, H_0)=\left.
 \frac{\partial}{\partial \eps} 
 \bT(H(G_\eps^D,\bz))\right|_{\eps=0}=\,
 \lim _{\eps \downarrow 0}
 \frac{\bT\left(H(G_\eps^D,\bz)\right) -
 \bT(H_0)}{ \eps}\;.
\end{equation} 

\cite{alqallaf2009} proposed a cellwise 
version of the IF as well. It considers the 
contaminated variable~\eqref{eq:cont_cell_z} 
under the \textit{fully independent 
contamination model} (FICM), which has  
$A := A^{\cell}$ 
such that $X, A_1^{\cell},\ldots, A_p^{\cell}$
are independent, and 
$P(A_j^{\cell} = 1) = 1-\eps^{\cell}$ for all 
$j=1,\ldots,p$. In this situation we denote 
$G_\eps$ as $G_\eps^I$ which depends on 
$\eps=\eps^{\cell}$, and the distribution of 
$X_{\eps}$ is denoted as $H(G_\eps^I,\bz)$.
The {\it cellwise influence function} 
$\IFu_{\cell}(\bz, \bT, H_0)$ is then 
defined as
\begin{equation}
\label{eq:gIF}
\IFu_{\cell}(\bz, \bT, H_0)=\left.
 \frac{\partial}{\partial \eps}
 \bT(H(G_\eps^I,\bz))\right|_{\eps=0}=\,
 \lim _{\eps \downarrow 0}
 \frac{\bT\left(H(G_\eps^I,\bz)\right)-
 \bT(H_0)}{ \eps}.
\end{equation} 

We denote the functional corresponding to the 
cellRCov estimator $\bhSigma$ in 
\eqref{eq:cellRCov} by $\bSigma(H)$.
We derive the IF of the $p^2 \times 1$
column vector $\vect(\bSigma(H))$, 
where $\vect(\CD)$ converts a matrix to a vector 
by stacking its columns on top of each other.
The derivation of the IF of $\vect(\bSigma(H))$
is long, and involves computing the IF of the
various components of its construction, such
as $\bV(H)$ from~\eqref{eq:objP}, 
$\bSigma^{\bu}_{\MCD}(H)=\bSigma_{\MCD}(\bhU)$ 
from~\eqref{eq:SigmaMCD}, and 
$\bSigma_{\xort}$ from~\eqref{eq:Sigmay}.
Therefore these computations are relegated 
to Section A of the Supplementary Material.

\begin{theorem}\label{IFSigma}
The casewise and cellwise influence functions 
of $\vect(\bSigma)$ are
\begin{align} \label{eq:IFFD_cellRCov}
  \IFu_{\case}\left(\bz,\vect(\bSigma),H_0\right)
  &= \bR_{1}\IFu_{\case}(\bz,\vect(\bV),H_0)
    +\bR_{2}\IFu_{\case}(\bz,\vect(
    \bSigma^{\bu}_{\MCD}),H_0) \nonumber\\
  &\;\;\;\;\;+ 
    \IFu_{\case}(\bz,\vect(\bSigma_{\xort}),H_0)
\end{align}
\begin{align} \label{eq:IFFI_cellRCov}
  \IFu_{\cell}\left(\bz,\vect(\bSigma),H_0\right)
  &= \bR_{1}\IFu_{\cell}(\bz,\vect(\bV),H_0)
    +\bR_{2}\IFu_{\cell}(\bz,\vect(
    \bSigma^{\bu}_{\MCD}),H_0) \nonumber \\
  &\;\;\;\;\;+
    \IFu_{\cell}(\bz,\vect(\bSigma_{\xort}),H_0)\;.
\end{align}
All of these terms are computed in Propositions
\ref{prop1} to \ref{prop3} in 
Section~\ref{app:proofs} of the 
Supplementary Material. In particular
\begin{align*}
  \bR_{1}&=\big(\bV(H_0)\bSigma^{\bu}_{\MCD}(H_0)
  \otimes\bI_p\big)+ \big(\bI_p\otimes\bV(H_0)
  \bSigma^{\bu}_{\MCD}(H_0)\big)\bK_{p,k}\\
  \bR_{2}&=\bV(H_0)\otimes\bV(H_0)
\end{align*}   
in which $\otimes$ is the Kronecker product and
$\bK_{p,k}$ is a $pk \times pk$ permutation matrix. 
\end{theorem}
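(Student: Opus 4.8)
The plan is to reduce the theorem to a chain-rule computation. The cellRCov functional decomposes, after standardization ($\bhD = \bI_p$), as $\bSigma(H) = \bSigma_{\bz^\sub}(H) + \bSigma^R_{\zort}(H)$, where $\bSigma_{\bz^\sub}(H) = \bV(H)\,\bSigma^{\bu}_{\MCD}(H)\,\bV(H)^T$ from \eqref{eq:SigmaMCD}, and $\bSigma^R_{\zort}(H)$ is the ridge-regularized version \eqref{eq:ridge} of $\bSigma_{\zort}(H)$ from \eqref{eq:Sigmay}. Since the IF is a directional derivative at $\eps = 0$, and both \eqref{eq:gIF} and \eqref{eq:caseIF} are linear operations applied to $\eps \mapsto \bT(H(G_\eps,\bz))$, the two cases (FDCM and FICM) can be handled by one argument that only differs in which contamination family $G_\eps$ is used; I would carry out the derivation once, writing $\IFu$ generically, and then instantiate.

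First I would vectorize $\bSigma_{\bz^\sub}(H) = \bV\bSigma^{\bu}_{\MCD}\bV^T$ using the identity $\vect(\bA\bB\bC) = (\bC^T \otimes \bA)\vect(\bB)$, so that $\vect(\bV\bSigma^{\bu}_{\MCD}\bV^T) = (\bV \otimes \bV)\vect(\bSigma^{\bu}_{\MCD})$. Differentiating this product at $\eps = 0$ by the product rule gives three contributions: one where $\vect(\bSigma^{\bu}_{\MCD})$ is differentiated (yielding the coefficient $\bV(H_0) \otimes \bV(H_0) = \bR_2$ times $\IFu(\bz, \vect(\bSigma^{\bu}_{\MCD}), H_0)$), and two where one of the $\bV$ factors is differentiated. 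For those two, I would use $\vect(\dot\bV \bSigma^{\bu}_{\MCD} \bV^T) = (\bV \bSigma^{\bu}_{\MCD} \otimes \bI_p)\vect(\dot\bV)$ and $\vect(\bV \bSigma^{\bu}_{\MCD} \dot\bV^T) = (\bI_p \otimes \bV\bSigma^{\bu}_{\MCD})\vect(\dot\bV^T)$, then rewrite $\vect(\dot\bV^T) = \bK_{p,k}\vect(\dot\bV)$ using the commutation matrix $\bK_{p,k}$ (whose defining property is $\bK_{m,n}\vect(\bA) = \vect(\bA^T)$ for $m \times n$ matrices $\bA$). Summing these two terms produces exactly $\bR_1 = (\bV(H_0)\bSigma^{\bu}_{\MCD}(H_0) \otimes \bI_p) + (\bI_p \otimes \bV(H_0)\bSigma^{\bu}_{\MCD}(H_0))\bK_{p,k}$ acting on $\IFu(\bz, \vect(\bV), H_0)$. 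This requires knowing that the IF of $\bV$ is itself well-defined, which is supplied by the later propositions.

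For the second summand, I would observe that the ridge map $\bSigma_{\zort} \mapsto (1-\delta)\bSigma_{\zort} + \delta\,\diag(\bSigma_{\zort})$ of \eqref{eq:ridge} is \emph{linear} in $\bSigma_{\zort}$ (the tuning parameter $\delta$ is fixed), so its differential is itself, and the IF of the regularized term is the same linear map applied to $\IFu(\bz, \vect(\bSigma_{\zort}), H_0)$. Strictly speaking this contributes a coefficient matrix $(1-\delta)\bI_{p^2} + \delta\,\bJ$ (with $\bJ$ the linear projection onto diagonal entries) in front of $\IFu(\bz, \vect(\bSigma_{\zort}), H_0)$; the statement as written absorbs this into the $\bSigma_{\xort}$ term by understanding $\vect(\bSigma_{\xort})$ to already denote the regularized functional, so I would add one sentence making that convention explicit to match the display. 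Adding the two summands and using linearity of differentiation gives \eqref{eq:IFFD_cellRCov} and \eqref{eq:IFFI_cellRCov}.

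The main obstacle is not the algebra above, which is routine, but justifying that the derivative $\partial_\eps \bSigma(H(G_\eps,\bz))|_{\eps=0}$ exists and that the chain rule is valid — i.e., that each of $\bV(H)$, $\bSigma^{\bu}_{\MCD}(H)$, and $\bSigma_{\zort}(H)$ is Gateaux-differentiable along the contamination paths and that the composition $\bSigma_{\bz^\sub}(H) = \bV(H)\bSigma^{\bu}_{\MCD}(H)\bV(H)^T$ inherits differentiability. This hinges on the existence results in Propositions \ref{prop1}--\ref{prop3}; I would cite those and then note that products and sums of Gateaux-differentiable functionals are Gateaux-differentiable, so the product rule applies termwise. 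One subtlety worth flagging is sign/identifiability of the eigenvectors in $\bV$: the loss \eqref{eq:objP} determines the subspace, and $\bSigma^{\bu}_{\MCD}$ supplies the axes, so $\bV\bSigma^{\bu}_{\MCD}\bV^T$ is invariant under the orthogonal ambiguity in $\bV$; I would remark that this invariance is exactly why the $\bR_1$ term is well-posed despite $\IFu(\bz,\vect(\bV),H_0)$ being defined only up to that ambiguity.
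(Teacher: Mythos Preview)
Your proposal is correct and follows essentially the same route as the paper: decompose $\bSigma(H)=\bV(H)\bSigma^{\bu}_{\MCD}(H)\bV(H)^T+\btSigma_{\xort}^R(H)$, apply the product rule to the first summand, vectorize via $\vect(\bA\bB\bC)=(\bC^T\otimes\bA)\vect(\bB)$ and the commutation matrix $\bK_{p,k}$, and add the IF of the orthogonal part directly (the paper's Proposition~\ref{prop3} indeed works with the regularized functional $\btSigma_{\xort}^R$, confirming your reading of the notation). Your additional remarks on the linearity of the ridge map and on the orthogonal ambiguity in $\bV$ are sound but go slightly beyond what the paper writes out.
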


Let us look at a special case to get a feel for 
these results. For our model distribution we 
choose the bivariate normal
$H_0 = N\Big(\bzero, \begin{bsmallmatrix} 1 & 0.9\\ 
& \\ 0.9 & 1 \end{bsmallmatrix}\Big)$ and we set 
$\rk=1$. The left panel of Figure~\ref{fig_IFs} 
shows the bounded casewise IF of the entry 
$\hsigma_{11}$ of the estimated covariance matrix 
$\bSigma$. Its shape looks complicated at first.
A point $\bz$ with small $|z_2|$ and large
$|z_1|$ gets imputed by moving its first cell,
so its IF is the same as for some smaller $z_1$.
This explains why the IF has a constant shape
in $z_2$ no matter how far we move $z_1$\,. 
That shape is quadratic for centrally located
$z_2$\,, gets lower outside, and stays constant 
further away. The situation is similar for small 
$|z_1|$ and large $|z_2|$.
Moreover, a far-out point $\bz$ close to the 
subspace given by $z_1 = z_2$ will
have both its cells imputed, explaining the 
constant shape of the IF along the first
diagonal.
The cellwise IF in the right panel is bounded
as well, and in its cross-sections parallel to
the axes we again see the shape is quadratic 
in the middle, lower outside, and constant 
further out.

\begin{figure}[ht]
\centering
\vspace{3mm}
\includegraphics[width=0.45\textwidth]
    {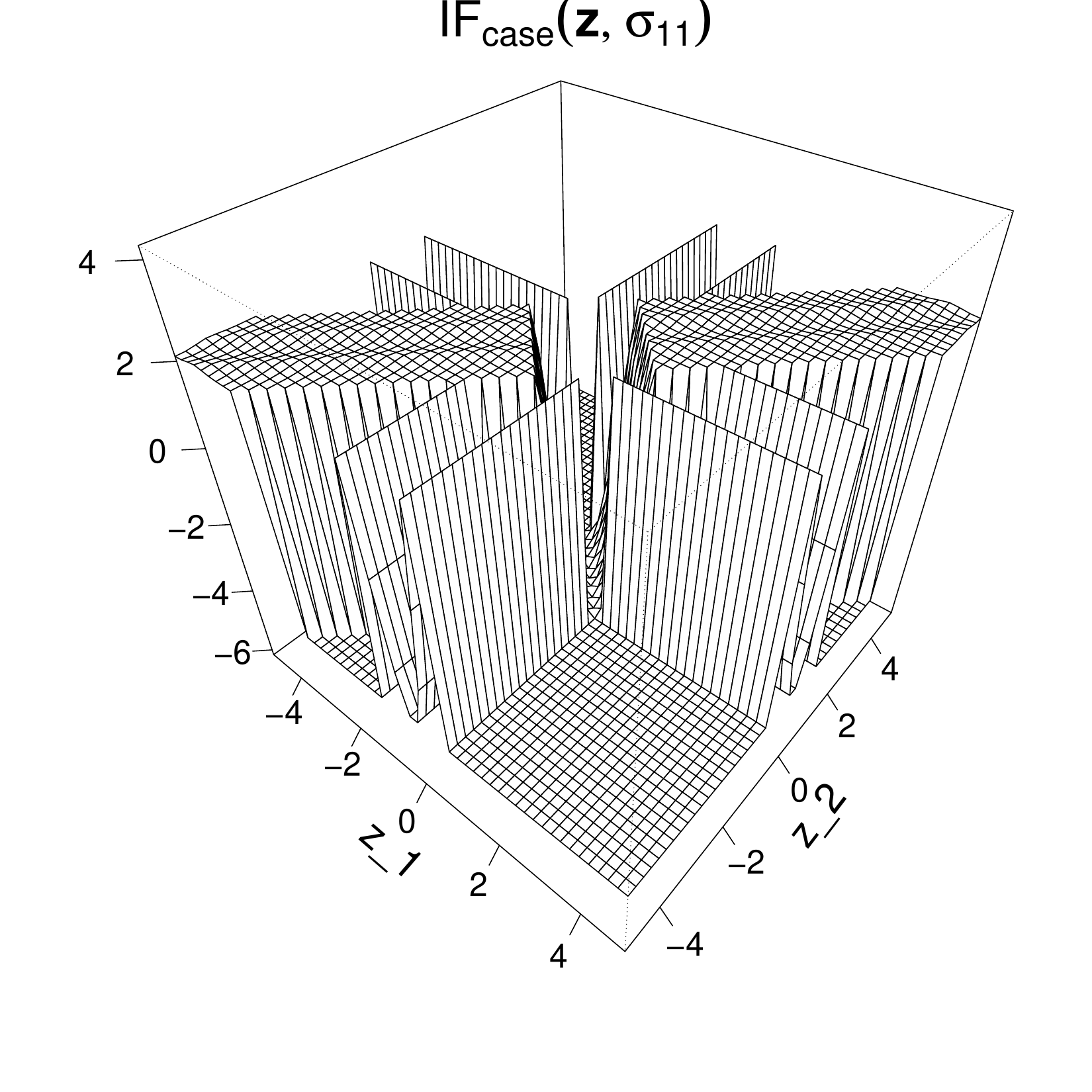}
\includegraphics[width=0.45\textwidth]
    {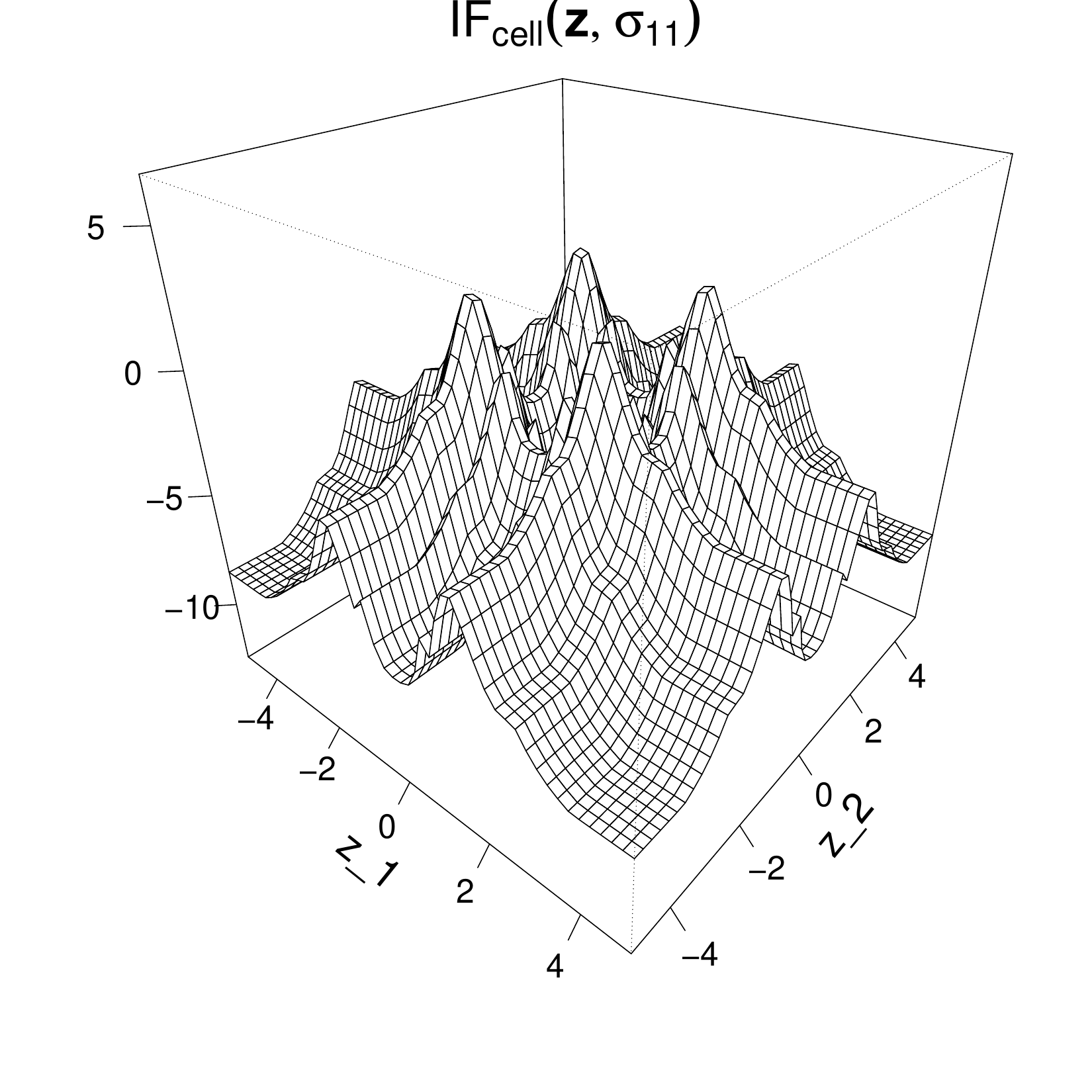}\\
\vspace{-4mm}
\caption{The casewise (left) and cellwise (right)
IF of $\sigma_{11}$ at the bivariate normal $H_0$.}
\label{fig_IFs}
\end{figure}

The following theorem states the consistency 
of the cellRCov estimator. Suppose we obtain 
i.i.d.\ observations $\bx_1,\dots,\bx_n$ 
from $H_0$. We denote the corresponding 
empirical distribution function as $H_n$ and 
put $\bSigma_{n}:=\bSigma(H_n)$.

\begin{theorem}\label{consistSigma}
Under the assumptions of Propositions 
\ref{prop5} to \ref{prop8}
in Section~\ref{app:consistency}
of the Supplementary
Material, $\vect(\bSigma_n)$ is 
consistent for $\vect(\bSigma(H_0))\,$.
\end{theorem}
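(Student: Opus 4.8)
The plan is to prove consistency of $\vect(\bSigma_n)$ by showing that each of the building blocks of the cellRCov construction is consistent, and then invoking the continuous mapping theorem to propagate consistency through the (finitely many) continuous operations that assemble $\bSigma$ from those blocks. Concretely, $\bSigma(H)$ is built from $\bmu(H)$, $\bV(H)$ and $\bU(H)$ (the cellPCA minimizers of \eqref{eq:objP}), the MCD covariance $\bSigma^{\bu}_{\MCD}(H)$ of the scores in \eqref{eq:SigmaMCD}, the weighted residual covariance $\bSigma_{\xort}(H)$ of \eqref{eq:Sigmay}, the auxiliary M-scales $\sigma_{1,j}(H)$ and $\sigma_2(H)$, the cellwise and casewise weight functions \eqref{eq:cellweight}--\eqref{eq:caseweight}, and finally the ridge step \eqref{eq:ridge}. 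The first step is therefore to record, citing Propositions~\ref{prop5}--\ref{prop8}, that each of these functionals is continuous at $H_0$ (in a suitable weak-convergence sense) and that it is Fisher-consistent, i.e.\ evaluating the functional at $H_0$ returns the corresponding population quantity; the ridge parameter $\delta$ is a fixed constant so \eqref{eq:ridge} is just an affine map.

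Second, I would assemble these pieces. By the Glivenko--Cantelli theorem $H_n \to H_0$ weakly, almost surely. Applying the consistency of the cellPCA functional gives $\bmu(H_n)\to\bmu(H_0)$, $\bV(H_n)\to\bV(H_0)$ and the scores distribution converging appropriately, hence (by continuity of the MCD functional) $\bSigma^{\bu}_{\MCD}(H_n)\to\bSigma^{\bu}_{\MCD}(H_0)$; this yields $\bSigma_{\bz^\sub}(H_n) = \bV(H_n)\bSigma^{\bu}_{\MCD}(H_n)\bV(H_n)^T \to \bSigma_{\bz^\sub}(H_0)$ since matrix multiplication is continuous. In parallel, consistency of the M-scales $\sigma_{1,j}$ and $\sigma_2$ together with continuity of $\psi_1,\psi_2$ (the tanh derivatives) gives convergence of the weight functions, and hence of the weighted residual covariance $\bSigma_{\xort}(H_n)\to\bSigma_{\xort}(H_0)$ and its regularized version $\bSigma_{\xort}^R(H_n)\to\bSigma_{\xort}^R(H_0)$. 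Adding the two components and applying $\vect(\CD)$ (a linear, hence continuous, map) delivers $\vect(\bSigma_n)=\vect(\bSigma_{\bz^\sub}(H_n)+\bSigma_{\xort}^R(H_n))\to\vect(\bSigma(H_0))$, which is the claim.

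The main obstacle is the first step for the cellPCA functional: establishing that the minimizer of the objective \eqref{eq:objP} depends continuously on the underlying distribution at $H_0$, and that it is well-defined (unique up to the usual sign/rotation ambiguities that are resolved by the subsequent MCD step on the scores). This is essentially an argmin-consistency argument — one needs the population objective $L_{\rho_1,\rho_2}(H,\bV,\bmu)$ (the expectation replacing the averages in \eqref{eq:objP}) to be continuous in $(\bV,\bmu)$ uniformly over $H$ near $H_0$, to have a unique well-separated minimizer at $H_0$, and to dominate so that the empirical objective converges uniformly over a compact parameter set; boundedness of $\rho_1,\rho_2$ and the assumed moment/identifiability conditions in Propositions~\ref{prop5}--\ref{prop8} are what make this go through. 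A secondary subtlety is the MCD step: the MCD functional is continuous only at distributions satisfying a unimodality/smoothness condition on the scores, so one must check that the population score distribution under $H_0$ inherits such a condition — this too is covered by the cited propositions. Once those functional-analytic facts are in hand, the remainder is the routine continuous-mapping bookkeeping sketched above.
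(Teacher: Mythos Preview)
Your proposal is correct and matches the paper's approach: decompose $\bSigma_n$ into the principal-subspace piece $\bV_n\bSigma^{\bu}_{\MCD,n}\bV_n^T$ and the regularized orthogonal piece $\bSigma_{\xort,n}^R$, invoke Propositions~\ref{prop5}--\ref{prop8} to obtain consistency of each piece, and conclude via the continuous mapping theorem. The paper's actual proof is even more terse (three sentences citing the propositions and CMT), while you spell out which proposition supplies which ingredient and correctly flag the non-uniqueness of $\bV$ as a subtlety---this is exactly what Proposition~\ref{prop7} handles by showing that $\bV_n\bSigma^{\bu}_{\MCD,n}\bV_n^T$ is invariant to the choice of representation.
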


It turns out that cellRCov is asymptotically normal,
and its asymptotic covariance is related to the
casewise influence function:

\begin{theorem}\label{asynormSigma}
Under suitable regularity conditions on $\bSigma$, 
we have that
 \begin{equation}\label{eq:asym}
   \sqrt{n}\left[\vect(\bSigma_{n})
   -\vect(\bSigma(H_0))\right]
   \; \rightarrow_{\mathcal{D}} \;
   N_{p^2}(\bzero,\bV(\bSigma,H_0))
 \end{equation}
with $\bV(\bSigma,H_0)=\int\IFu_{\case}
\left(\bx,\bSigma,H_0\right)\left(\IFu_{\case}
\left(\bx,\bSigma,H_0\right)\right)^TdH_0(\bx)$.
\end{theorem}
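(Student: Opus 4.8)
The plan is to deduce \eqref{eq:asym} from a von Mises (first-order) expansion of the cellRCov functional $\bSigma(\cdot)$ around $H_0$, whose leading term is governed by the casewise influence function already obtained in Theorem~\ref{IFSigma}. Concretely, I would establish the asymptotically linear (Bahadur-type) representation
\[
\sqrt{n}\big[\vect(\bSigma_n)-\vect(\bSigma(H_0))\big]
=\frac{1}{\sqrt{n}}\sum_{i=1}^{n}\IFu_{\case}(\bx_i,\bSigma,H_0)+o_P(1),
\]
after which the classical multivariate CLT applied to the i.i.d.\ average $n^{-1/2}\sum_i \IFu_{\case}(\bx_i,\bSigma,H_0)$ --- which is centered because $\E_{H_0}[\IFu_{\case}(X,\bSigma,H_0)]=\bzero$ by the Fisher consistency underlying Theorem~\ref{consistSigma} --- yields the stated limit with asymptotic covariance $\bV(\bSigma,H_0)=\E_{H_0}[\IFu_{\case}(X,\bSigma,H_0)\,\IFu_{\case}(X,\bSigma,H_0)^T]$. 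Finiteness of this covariance only requires each coordinate of $\IFu_{\case}(\cdot,\bSigma,H_0)$ to lie in $L^2(H_0)$, which follows from the boundedness of the casewise IF noted after Theorem~\ref{IFSigma}.

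The asymptotically linear representation itself I would obtain by propagating it through the construction of $\bSigma$, mirroring the chain used to derive the IF in Propositions~\ref{prop1}--\ref{prop3}. The estimator is a composition: (i) the coordinatewise M-scales $\hsigma_{1,j}$ and $\hsigma_2$, which are standard univariate M-estimators and hence asymptotically linear under smoothness and moment conditions on $\rho$; (ii) the cellPCA fit $(\bhmu,\bhV,\bhU)$, the minimizer of the smooth loss \eqref{eq:objP}, treated as a Z-estimator solving the stationarity equations --- here one invokes the usual M-estimation machinery (consistency from Theorem~\ref{consistSigma}, a nonsingular population Hessian, and a Donsker / stochastic-equicontinuity argument for the score classes, valid because $\psi_1=\rho_1'$ and $\psi_2=\rho_2'$ are bounded and Lipschitz); (iii) the MCD functional $\bSigma^{\bu}_{\MCD}(\bhU)$ in \eqref{eq:SigmaMCD}, for which a $\sqrt{n}$-expansion is available from the literature; (iv) the cellwise and casewise reweighting \eqref{eq:cellweight}--\eqref{eq:caseweight} and the weighted covariance \eqref{eq:Sigmay}, which are $C^1$ functions of quantities already shown to be asymptotically linear; and (v) the ridge step \eqref{eq:ridge}, a fixed affine map and hence harmless. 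Composing the expansions via the delta method and collecting terms through the Kronecker-product bookkeeping of $\bR_1,\bR_2$ reproduces the decomposition \eqref{eq:IFFD_cellRCov}, confirming that the function appearing there is the representing function.

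The main obstacle, I expect, is step (iii): unlike the M-type ingredients, the MCD is defined through a combinatorial subset search rather than a smooth estimating equation, so its asymptotic linearity cannot be read off directly. I would handle this by appealing to the known characterization of the MCD functional as the solution of a weighted-covariance fixed-point equation with hard weights $w(d)=\ind(d\le c_\alpha)$ based on squared Mahalanobis distance, together with the existing $\sqrt{n}$-consistency and asymptotic normality results for the MCD under an elliptical model --- applicable here because, under $H_0$ and consistency of the cellPCA fit, the scores $\bhu_i$ are asymptotically i.i.d.\ from an elliptical law. A secondary technical obstacle is verifying the Donsker property and the invertibility of the population Hessian for the coupled cellPCA stationarity system in step (ii); this is precisely where the ``suitable regularity conditions on $\bSigma$'' of the statement enter, namely identifiability of the principal subspace via a spectral gap $\lambda_k>\lambda_{k+1}$, strict convexity of $\rho_1,\rho_2$ on their quadratic part, and enough moments of $H_0$ for the relevant expectations to exist and to be differentiable under the integral sign.
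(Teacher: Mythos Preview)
Your proposal is correct and follows the same fundamental strategy as the paper: a first-order von Mises expansion of $\bSigma(\cdot)$ around $H_0$, yielding the asymptotically linear representation with the casewise influence function as the representing function, followed by the multivariate CLT. The paper's proof, however, is considerably shorter than what you outline: it writes the expansion $\vect(\bSigma(G)) = \vect(\bSigma_0) + \int \IFu_{\case}(\bx,\bSigma,H_0)\,d(G-H_0)(\bx) + \mbox{Rem}(G-H_0)$, substitutes $G = H_n$, and then simply \emph{assumes} that $\sqrt{n}\,\mbox{Rem}(H_n-H_0) = o_P(1)$ as the content of the ``suitable regularity conditions'' in the theorem statement, citing the literature on Hadamard and Fr\'echet differentiability (von Mises, Fernholz, Clarke) for rigorous frameworks. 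Your componentwise propagation through the M-scales, the cellPCA Z-estimator, the MCD, the weighted covariance, and the ridge map is a valid and more explicit route to the same asymptotic linearity, and it correctly identifies where the real work lies; but the paper does not carry out any of those steps --- it treats remainder negligibility as a hypothesis rather than a conclusion.
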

The proof is in Section~\ref{app:consistency} 
of the Supplementary Material.

\section{Matters of implementation}
\label{sec:implementation}

The cellRCov method contains two parameters,
the rank $\rk$ of the fitted subspace in 
the PCA model~\eqref{eq:model} from the
first step, and the factor $\delta$ in the
ridge-type regularization~\eqref{eq:cellRCov}.
They often need to be selected in a
data-dependent way, as described below.

\subsection{Selecting the rank 
\texorpdfstring{$\rk$}{k}}
\label{sec:rank}

\cite{horn1965rationale} developed Parallel 
Analysis (PA) to automatically select the 
number of components of PCA applied to
standardized data.
Its rationale is that sampling variability 
will produce eigenvalues above 1 even if all 
eigenvalues of the population correlation 
matrix are exactly 1 and no large components 
exist, as with independent variates 
\citep{buja1992remarks}. PA first standardizes
the eigenvalues by dividing them by their 
average, and then compares them with those 
from standard normal data with the same 
$n$ and $p$. Those that are larger than the
eigenvalues from the random data are kept.

We use a robust adaptation of PA
to select the dimension $\rk$ of the cellPCA
low-rank approximation. In classical PCA,
each eigenvalue $\ell_s$ is the decrease in 
reconstruction error achieved by moving from
a model with $s-1$ components to a model 
with $s$ components, that is, 
$\ell_s=\nu_{s-1}-\nu_s$ where 
$\nu_s$ is the Frobenius norm 
$\nu_s := ||\bZ-\bhZ_s||_F^2$ and 
$\bhZ_s$ is the best rank-$s$ 
approximation of $\bZ$. In particular, 
$\nu_0 = ||\bZ-\bone_n\bzbar^T||_F^2$ 
where $\bzbar$ is the sample mean.
The robust PA considers the value
$\nu_s^{\mathrm{rob}}$ of the rank-$s$ 
cellPCA objective~\eqref{eq:objP}. The value 
$\nu_0^{\mathrm{rob}}$ is obtained by 
computing~\eqref{eq:objP} on the cellwise 
residuals $z_{ij}-m_j^z$\,, where $m_j^z$ is 
the median of the $j$-th variable of $\bZ$. 
We then define the robust analog of the 
$s$-th eigenvalue as the decrease in the
robust objective, 
$\ell_s^{\mathrm{rob}} :=
\nu_{s-1}^{\mathrm{rob}}-
\nu_s^{\mathrm{rob}}$.
So a component is retained when its 
inclusion decreases the cellPCA 
objective more than would be expected under 
a reference distribution without a dominant 
low-dimensional structure. The reference 
distribution of the eigenvalues is obtained 
by applying CPCA to simulated standard 
normal random datasets, and computing 
$\nu_\ran$ as the 
objective~\eqref{eq:objP} of $\bhZ_\ran$\,. 
Then we select $\rk$ as the number of 
eigenvalues that are larger than the $99$th 
percentile of this distribution. The 
robust PA algorithm is 
provided in Section~\ref{sec:supp-rank}
of the Supplementary Material.

The selected rank should be 
interpreted in light of the low-rank plus 
residual decomposition used by cellRCov. 
The robust PA rule retains the components 
whose contribution to the low-rank fit is 
higher than what would be expected under 
an uninformative reference distribution. 
The retained subspace thus captures the 
dominant low-dimensional variation in 
the fitted component $\bhZ$.
Components not retained by this rule are 
not interpreted as pure noise. They simply 
do not show a sufficiently strong 
low-dimensional structure (relative to
the reference distribution) to be included 
in the low-rank fitted component. Since 
these directions may still contain a weaker
covariance structure, cellRCov accounts for
their contribution through the regularized 
residual covariance estimator.

\subsection{Selecting 
\texorpdfstring{$\delta$}{delta}}
\label{sec:delta}

To select the regularization parameter $\delta$ we 
follow \cite{bickel2008regularized} by using their 
$H$-fold cross-validation procedure with $H=5$. 
We randomly split the original sample $H$ times into 
two subsets of sizes $n_1 = n/3$ and $n_2 = n-n_1$. 
For the $h$-th split, $\btSigma_{\zort}^{R,h}(\delta)$ 
denotes the cellRCov estimate of $\bSigma_{\zort}$ 
computed from the first subset, and $\btSigma_{\zort}^h$ 
represents the unregularized estimate of 
$\bSigma_{\zort}$ computed from the second subset. 
Then we choose $\delta$ as 
\begin{equation}\label{eq:opcv}
  \hdelta = \argmin_{\left.\delta\in (0,1\right]}
  \frac{1}{H}\sum_{h=1}^H \Big|\Big|
  \btSigma_{\zort}^{R,h}(\delta)
  -\btSigma_{\zort}^h\Big|\Big|_F\;.
\end{equation}
In simulations we indeed saw that the selected
$\delta$ tended to grow with the dimension.

The goal of this procedure is to select 
the amount of shrinkage that stabilizes the 
residual covariance estimate, across datasets. 
For a fixed $\delta$, a regularized residual
covariance estimate is computed on the training
portion and compared with the unregularized 
residual covariance computed on the validation 
portion. If the unregularized residual 
covariance is highly variable, small values of 
$\delta$ lead to unstable estimates that do 
not agree well. Larger values
of $\delta$ reduce this variability by 
shrinking the estimate toward the diagonal 
target. On the other hand, if $\delta$ is 
too large, the estimator may be overshrunk and 
fail to capture relevant residual structure.
The selected $\delta$ balances these two 
effects.

The cross-validation for selecting $\delta$ 
does not require much computation, because
the low-rank approximation, the residuals, and
the cellwise and the casewise weights are 
computed only once and kept fixed during the 
cross-validation procedure.
For each split and each candidate value of 
$\delta$, we only need to compute the weighted 
residual covariance matrix and carry out the 
ridge-type shrinkage. This only costs
$O(np^2)$ time, as seen in 
Section~\ref{sec:supp-complexity-cellRCov} 
of the Supplementary Material.

\section{Simulation study}
\label{sec:simulation}
We study the performance of the cellRCov 
method~\eqref{eq:cellRCov} by 
Monte Carlo, where the clean data are generated 
from a multivariate Gaussian with $\bmu=\bzero$ and 
covariance matrix\linebreak 
$\bSigma=\lbrace \sigma_{j\ell}\rbrace$. 
We use the A09 covariance matrix with 
$\sigma_{j\ell}=(-0.9)^{|j-\ell\,|}$. We generate 
$n=100$ data points in dimensions $p = 30,60,120$ 
so both $n>p$ and $n<p$ occur. 

Three contamination types are considered. In the 
cellwise outlier scenario we randomly replace 
$20\p$ of the cells $x_{ij}$ by $\gamma$. 
The parameter $\gamma$ varies from 0 to 10, and
when $\gamma$ is 0 we do not contaminate the data.
In the casewise outlier setting, $20\p$ of the 
cases are generated from 
$N(\gamma\sqrt{p}\,\be/\sqrt{\be^T\bSigma^{-1}\be}
\,, \bSigma)$ 
where $\be$ is the 
eigenvector of $\bSigma$ with smallest eigenvalue.
In the third scenario, the data is contaminated 
by $10\p$ of cellwise outliers as well as $10\p$ 
of casewise outliers. We measure performance by 
the Kullback-Leibler discrepancy between the 
estimated $\bhSigma$ and the true $\bSigma$,
given by
\begin{equation*}
  \mbox{KL}(\bhSigma,\bSigma)=
  \mbox{trace}\left(\bhSigma\bSigma^{-1} 
  - I_p\right) -\log \left(
   \det\left(\bhSigma\bSigma^{-1}\right)\right).
\end{equation*}
For each setting of the simulation parameters 
we generate 200 datasets, and report the average 
Kullback-Leibler discrepancy over these 200 
replications. The figures in this section show 
the average KL for the covariance model A09.
The plots for three other covariance models are 
very similar, see the figures in 
Section~\ref{app:addsim} 
of the Supplementary Material.

\begin{figure}[!ht]
\centering
\begin{tabular}{M{0.0005\textwidth}M{0.29\textwidth}M{0.29\textwidth}M{0.32\textwidth}}
   &\large \textbf{Cellwise}  & \large \textbf{Casewise} &\large{\textbf{Casewise \& Cellwise}} \\
[-4mm]
\rotatebox{90}{\textbf{\footnotesize{$p=30$}}} &
 \includegraphics[width=.31\textwidth]
 {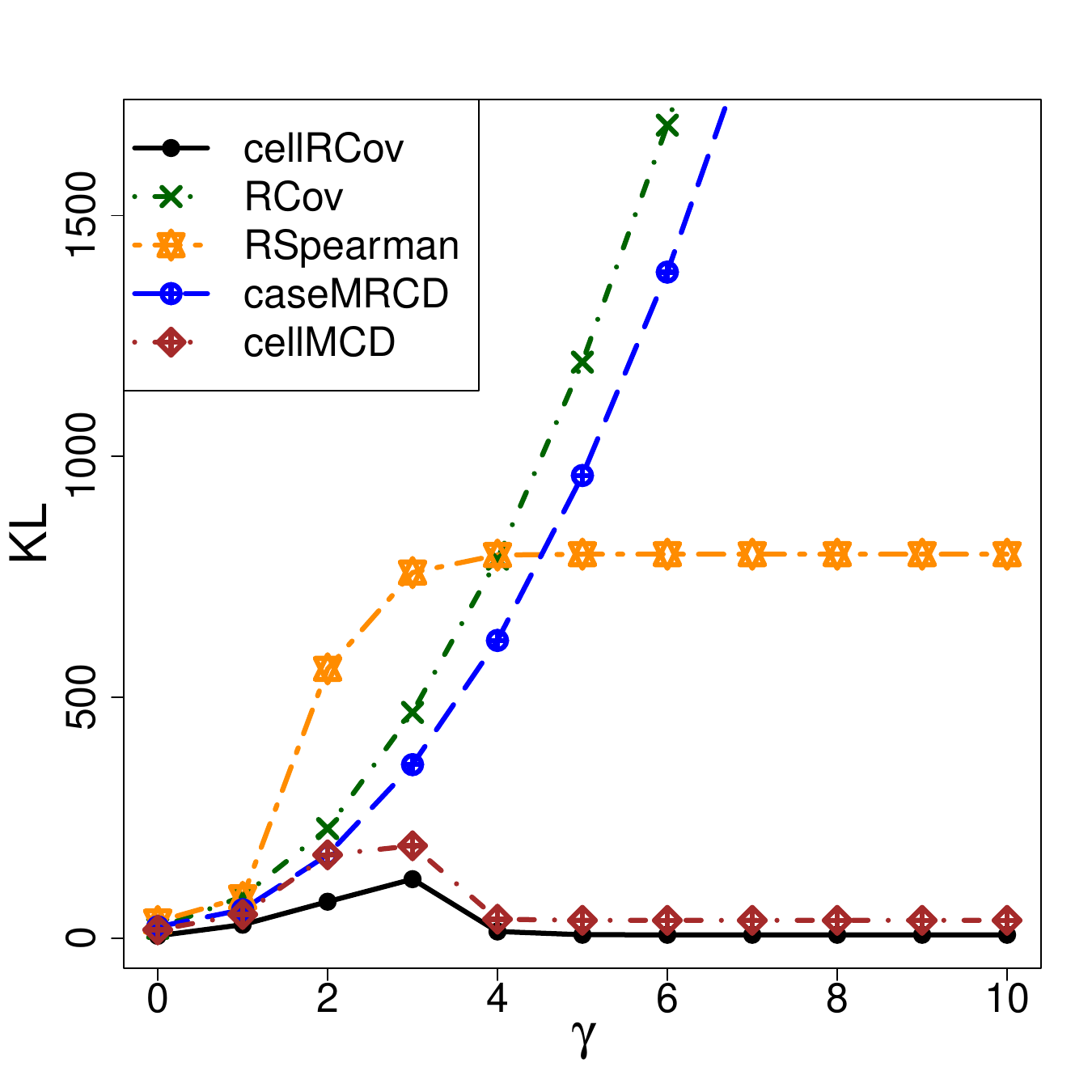} &
 \includegraphics[width=.31\textwidth]
 {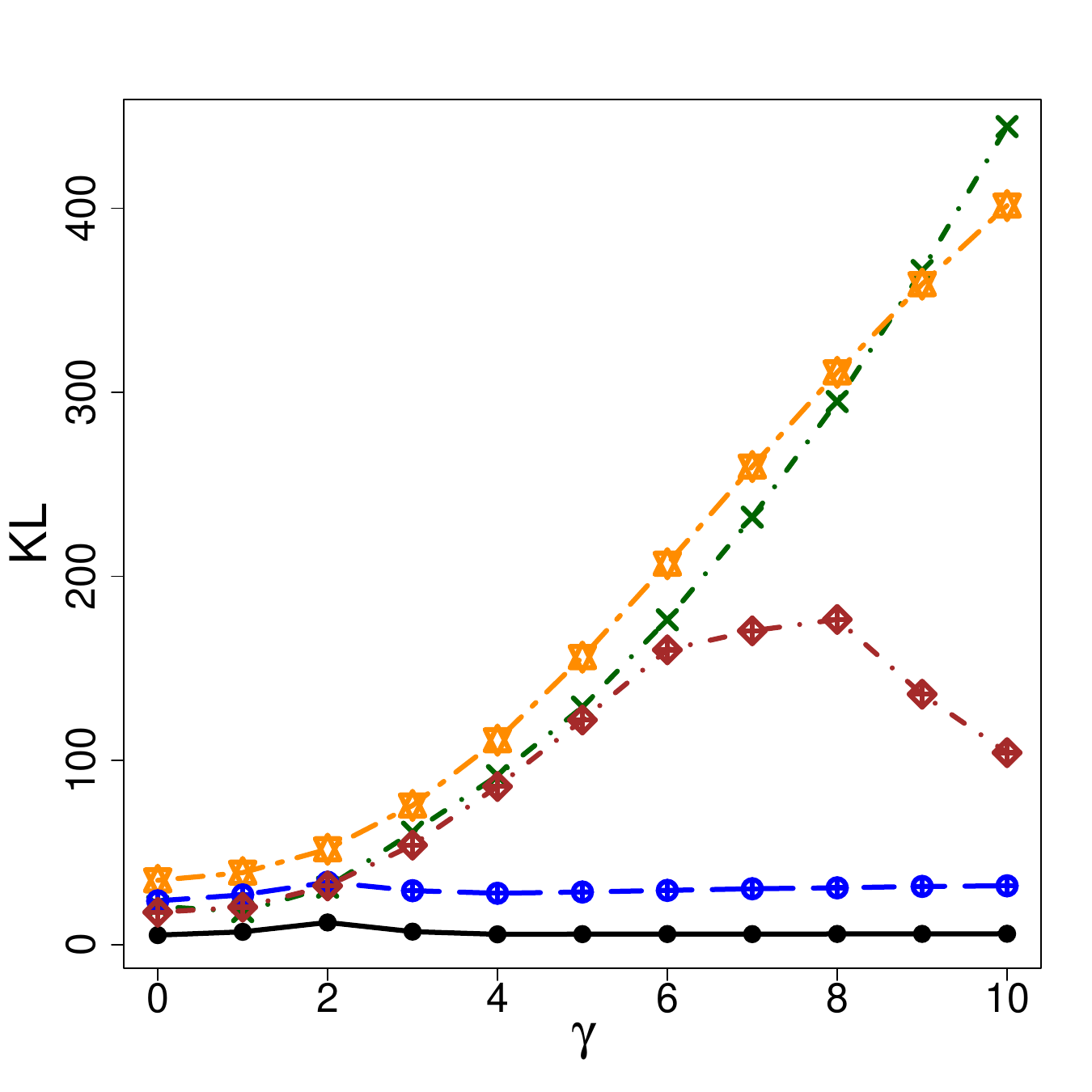} &
 \includegraphics[width=.31\textwidth]
 {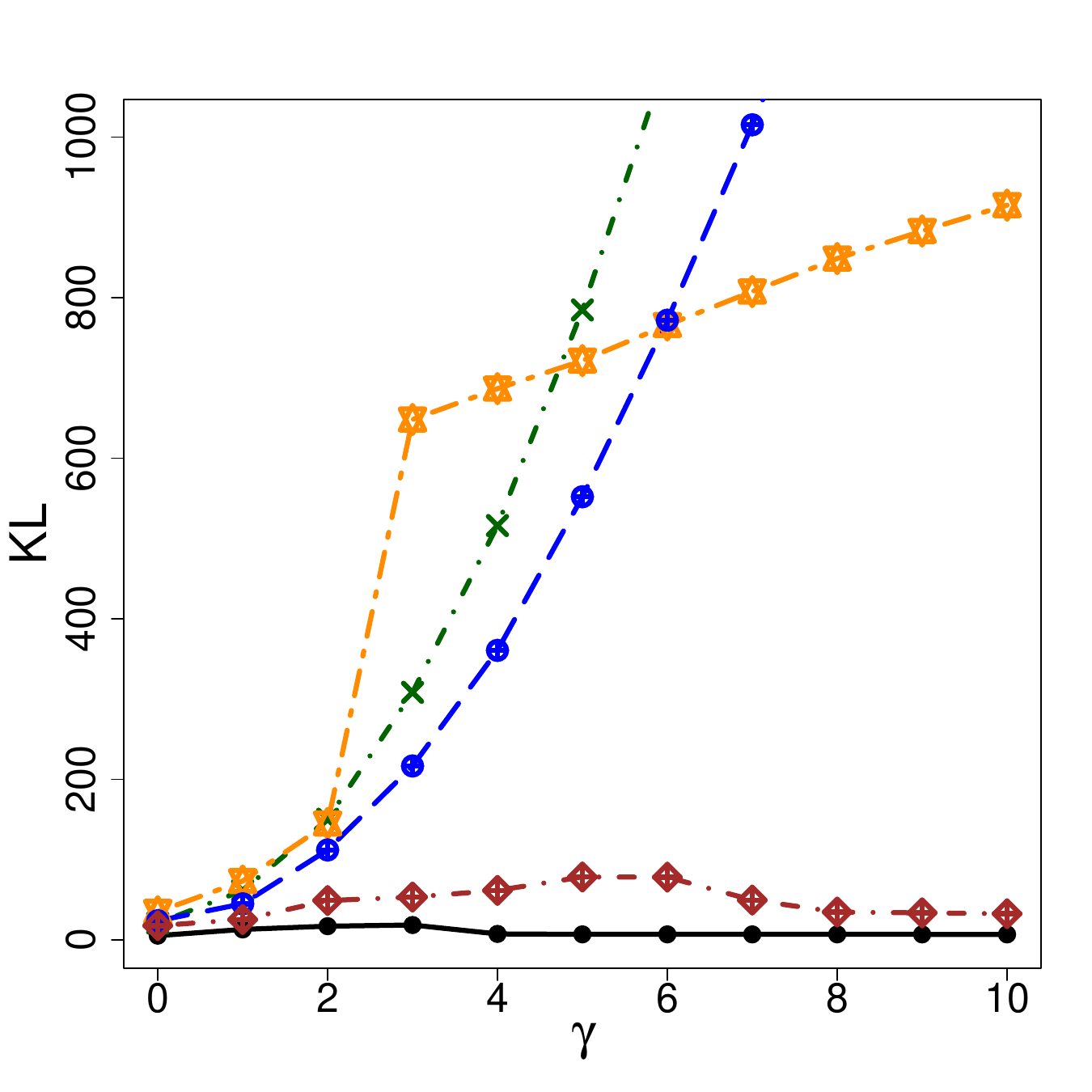} \\ [-4mm]  
 \rotatebox{90}{\textbf{\footnotesize{$p=60$}}} &
 \includegraphics[width=.31\textwidth]
 {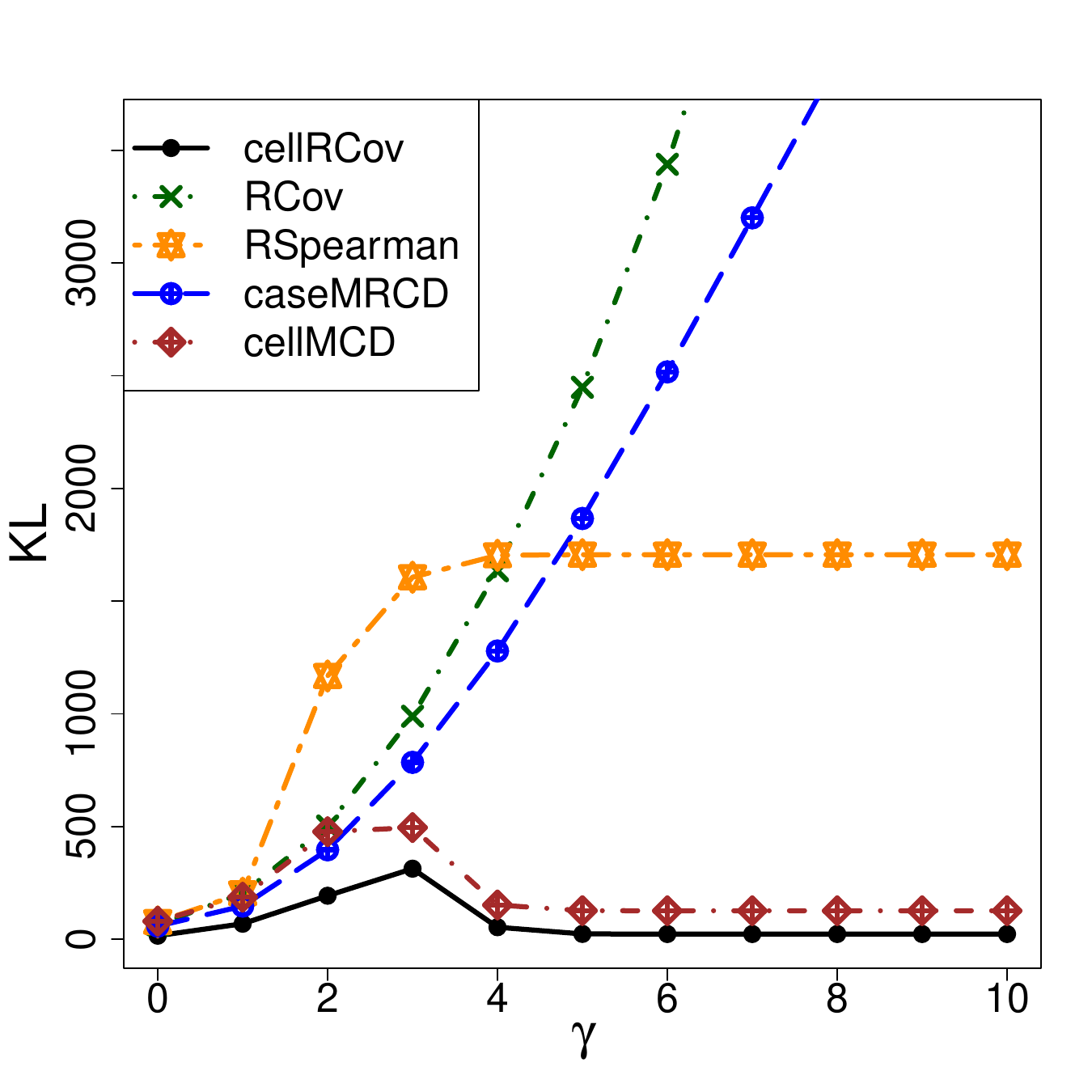} & 
 \includegraphics[width=.31\textwidth]
 {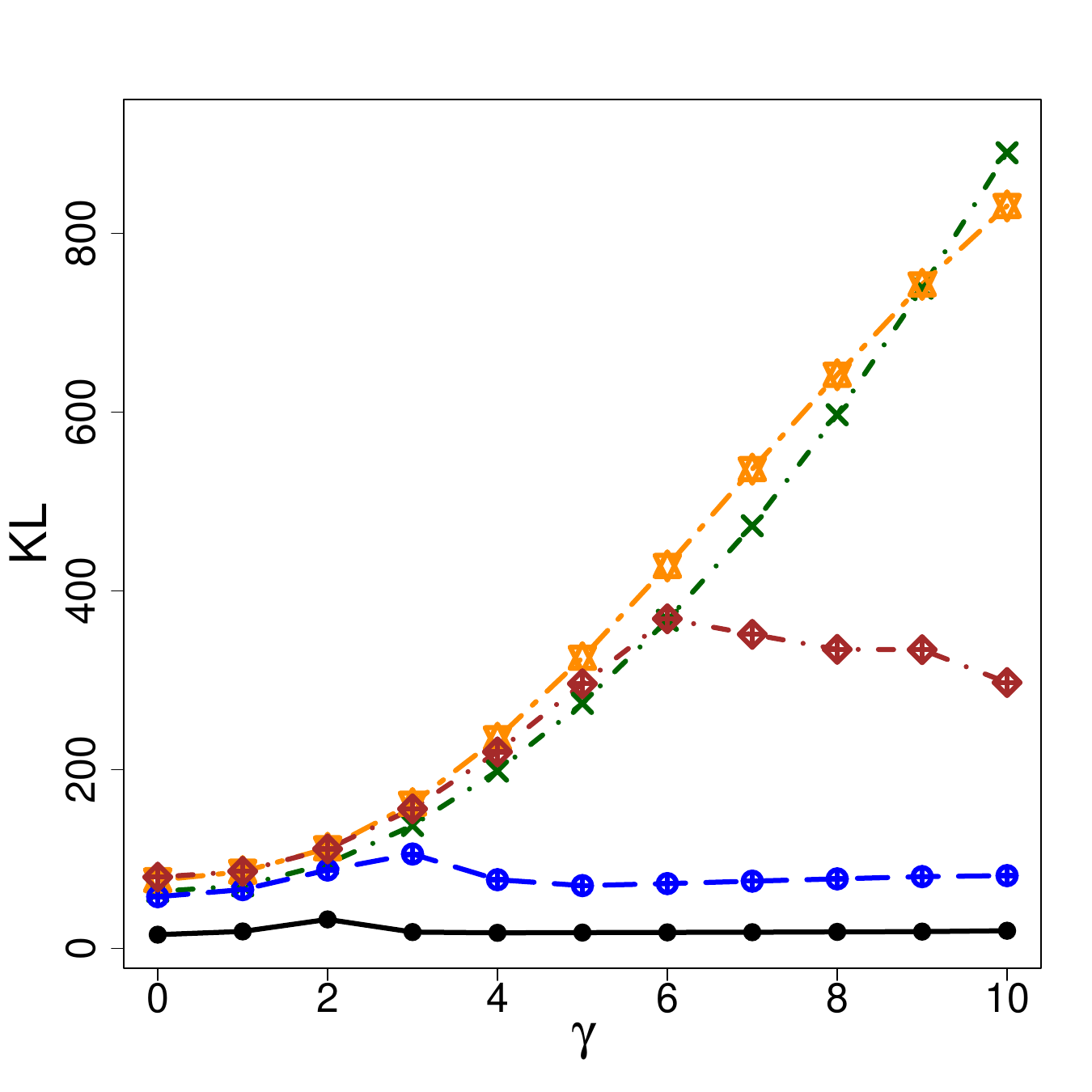} &
 \includegraphics[width=.31\textwidth]
 {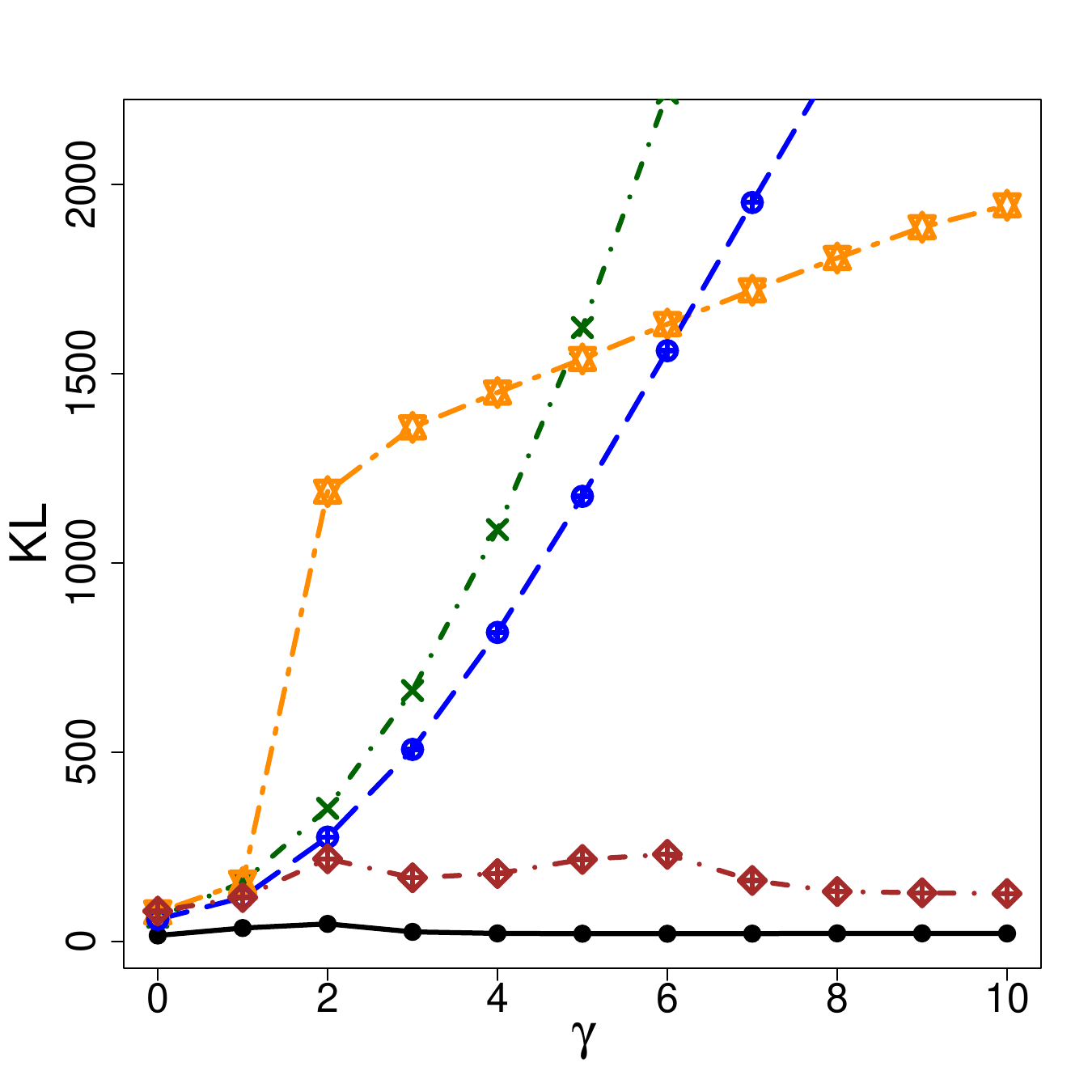} \\ [-4mm]  
 \rotatebox{90}{\textbf{\footnotesize{$p=120$}}} & 
 \includegraphics[width=.31\textwidth]
 {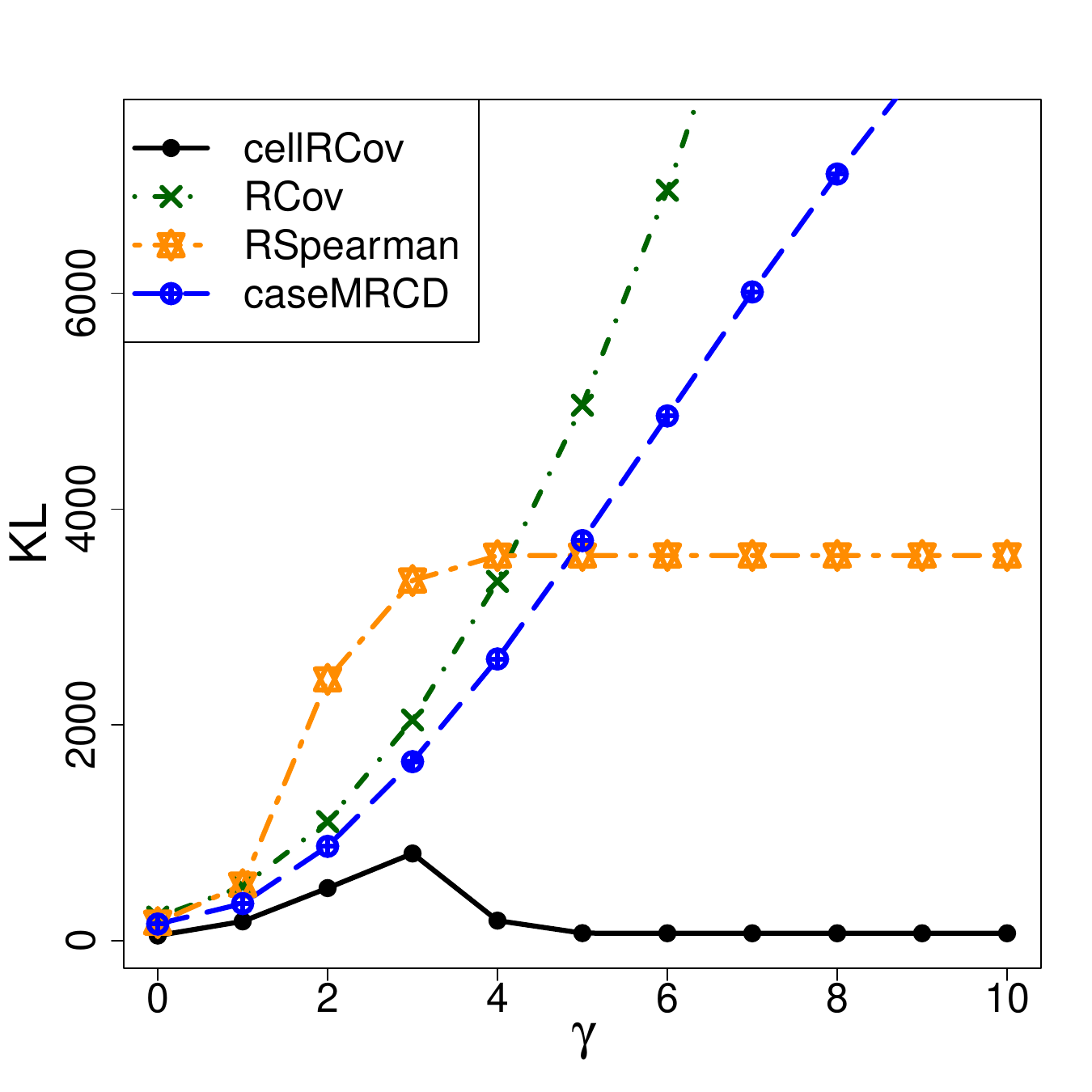} & 
 \includegraphics[width=.31\textwidth]
 {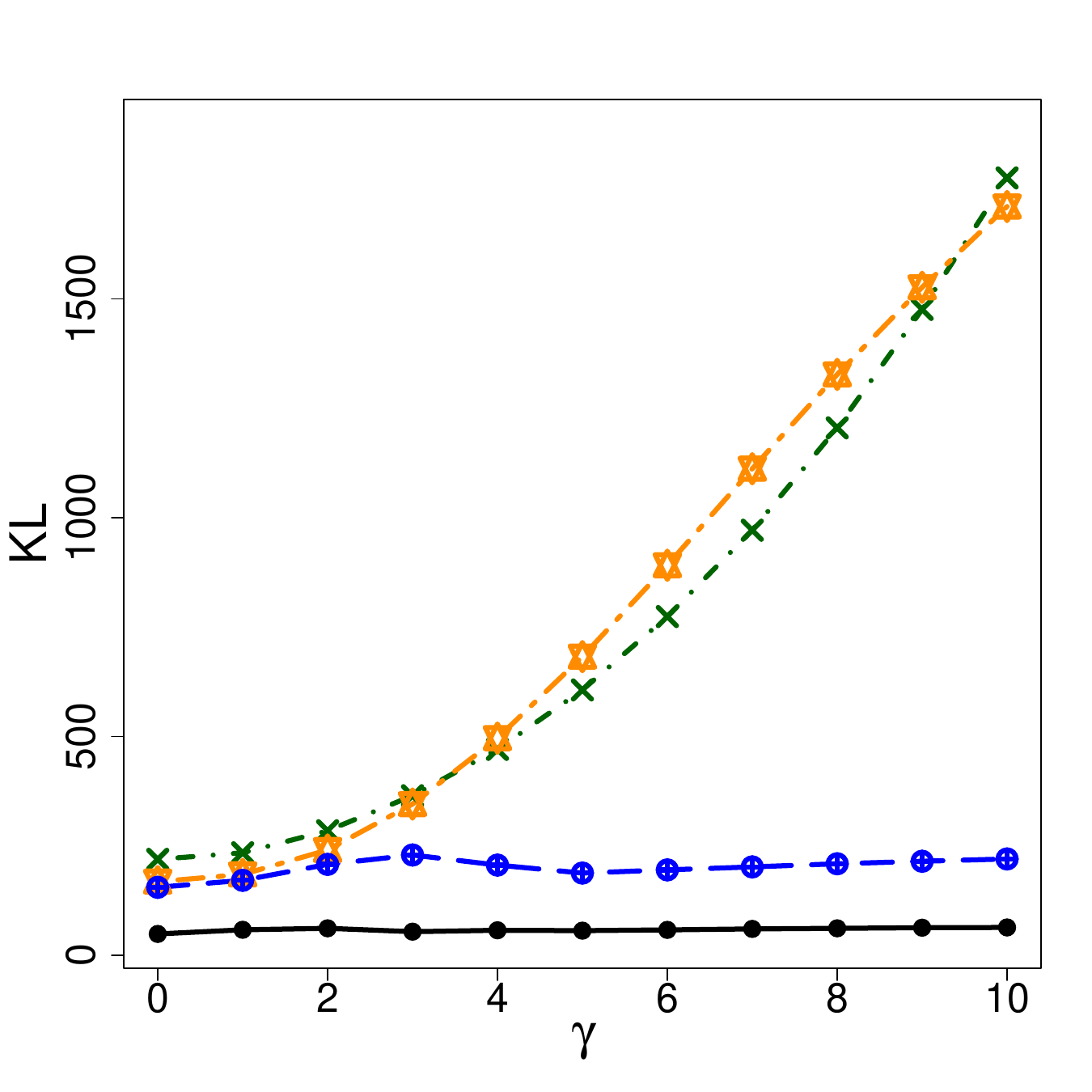} &
 \includegraphics[width=.31\textwidth] 
 {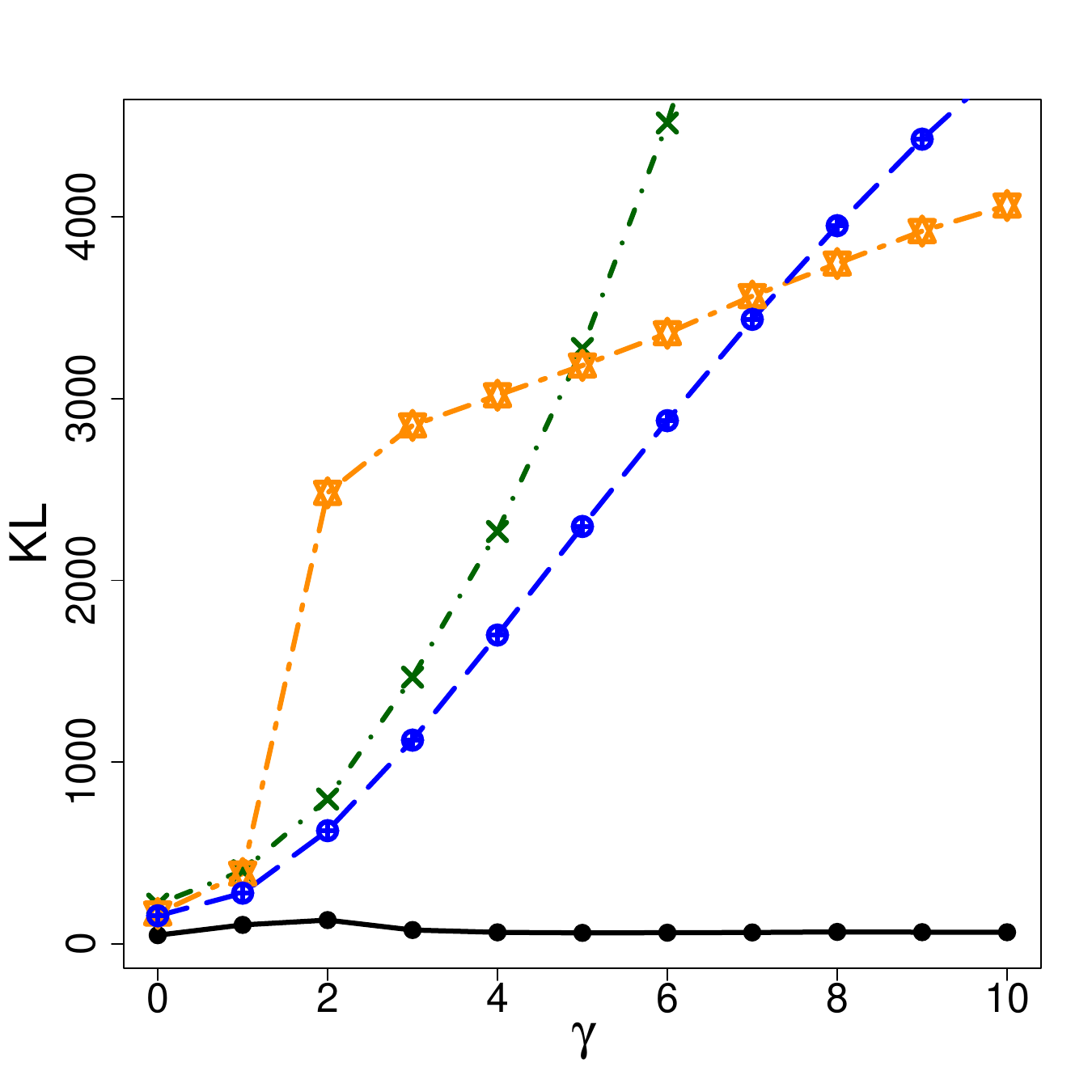}   
\end{tabular}
\caption{Average KL attained by cellRCov, 
RCov, RSpearman, caseMRCD, and cellMCD in the presence 
of cellwise outliers, casewise outliers, or 
both for the A09 covariance model and 
dimensions $p$ in $\lbrace30,60,120\rbrace$.}
\label{fig:results_A09}
\end{figure}

We compare cellRCov with a regularized version 
of the sample covariance denoted by RCov, 
which carries out a ridge-type covariance 
regularization as described in 
Section~\ref{sec:cellRCov}. 
The other competitors are RSpearman, the 
regularized approach based on the Spearman 
correlation of \cite{croux2016robust}, 
the casewise robust minimum regularized 
covariance determinant (caseMRCD) estimator 
of \cite{boudt2020minimum}, and
the cellMCD estimator 
of \cite{raymaekers2024cellMCD}. 
For cellRCov, the rank $\rk$ and the 
regularization parameter $\delta$ are selected 
as described in Sections~\ref{sec:rank} 
and~\ref{sec:delta}. We used the default 
versions of RSpearman, 
caseMRCD, and cellMCD.

The top row of Figure~\ref{fig:results_A09}
shows the results for dimension $p=30$.
We see that the nonrobust RCov exploded in 
the presence of cellwise outliers, casewise 
outliers, and both. The caseMRCD method 
failed when there are cellwise outliers, 
while RSpearman performed poorly for 
casewise outliers.
CellMCD performed well under purely cellwise
contamination, as expected, but its 
performance deteriorated in the presence of 
20\p of casewise outliers, since it is not 
designed to handle that many. It still
did rather well in the mixed setting with
10\p of casewise and 10\p of cellwise
outliers. CellRCov did well for all
three contamination types.
Note that some of the curves have a local
maximum for $\gamma$ around 2 or 3, which is
perfectly natural because that is around the
boundary of the region where a cell or a 
case might be considered outlying. Cells or 
cases with smaller $\gamma$ are not  
downweighted, so they have an effect. The 
middle row of Figure~\ref{fig:results_A09}
shows qualitatively similar results for 
$p=60$, as does the bottom row for $p = 120$ 
where $p > n$. We conclude that cellRCov was 
the only method with satisfactory 
performance in all three outlier settings.

To assess performance in the presence of 
missing data, we repeated the simulation
while setting $20\p$ of randomly selected 
cells to NA. In this situation we cannot 
run caseMRCD, that was not designed for 
incomplete data. For RCov and RSpearman we 
computed the correlations and covariances 
between any two variables on their pairs
of available cells. The resulting 
Figure~\ref{fig:results_A09_NA} looks quite 
similar to Figure~\ref{fig:results_A09}.
Again cellRCov outperforms the other 
approaches over the three outlier settings 
and in all three dimensions. The other 
methods are not competitive in these 
demanding settings.

\begin{figure}[!ht]
\centering
\begin{tabular}{M{0.0005\textwidth}M{0.29\textwidth}
 M{0.29\textwidth}M{0.32\textwidth}} & 
 \large \textbf{Cellwise} & \large \textbf{Casewise} &
 \large{\textbf{Casewise \& Cellwise}} \\ [-4mm]
 \rotatebox{90}{\textbf{\footnotesize{$p=30$}}} &
 \includegraphics[width=.31\textwidth]
 {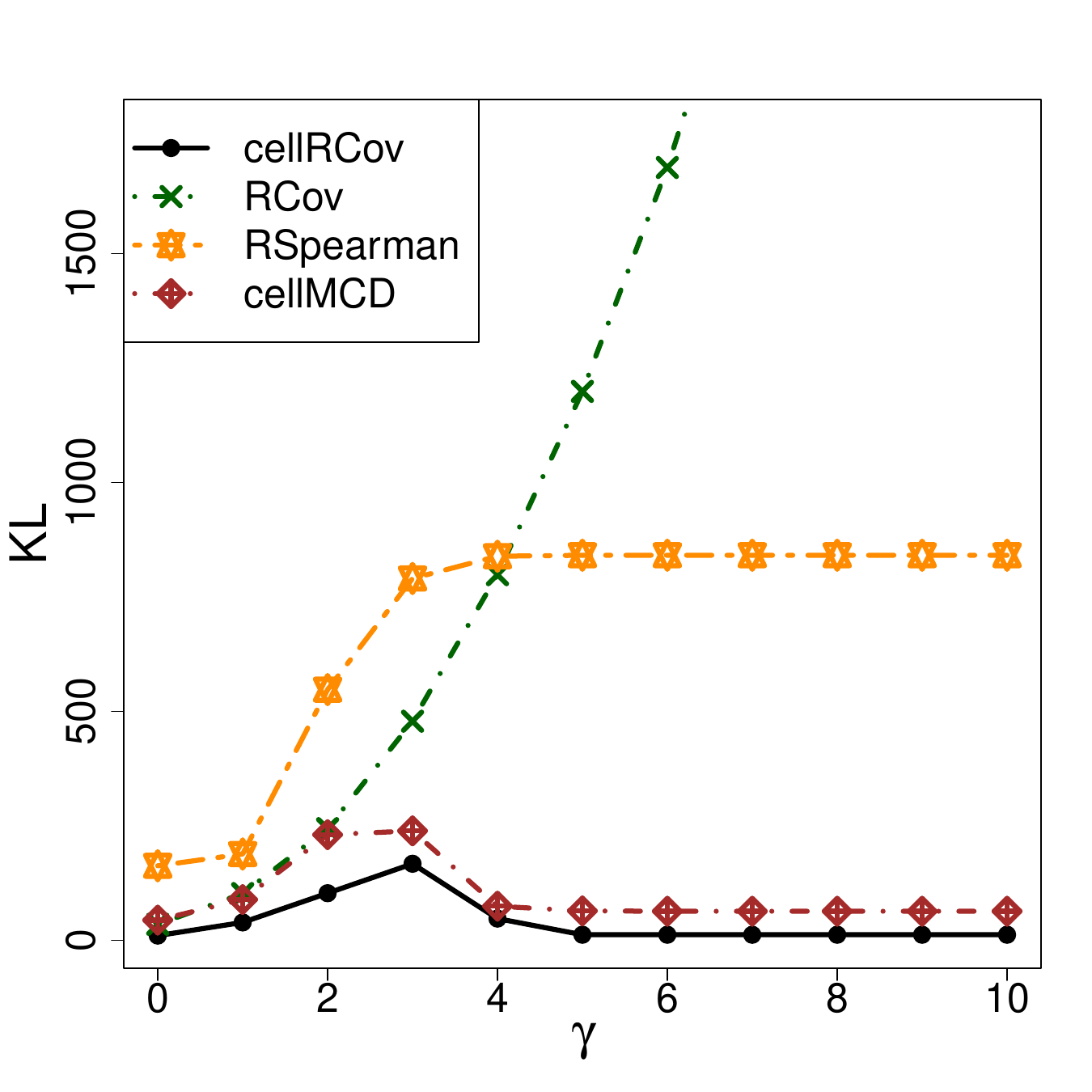} &
 \includegraphics[width=.31\textwidth]
 {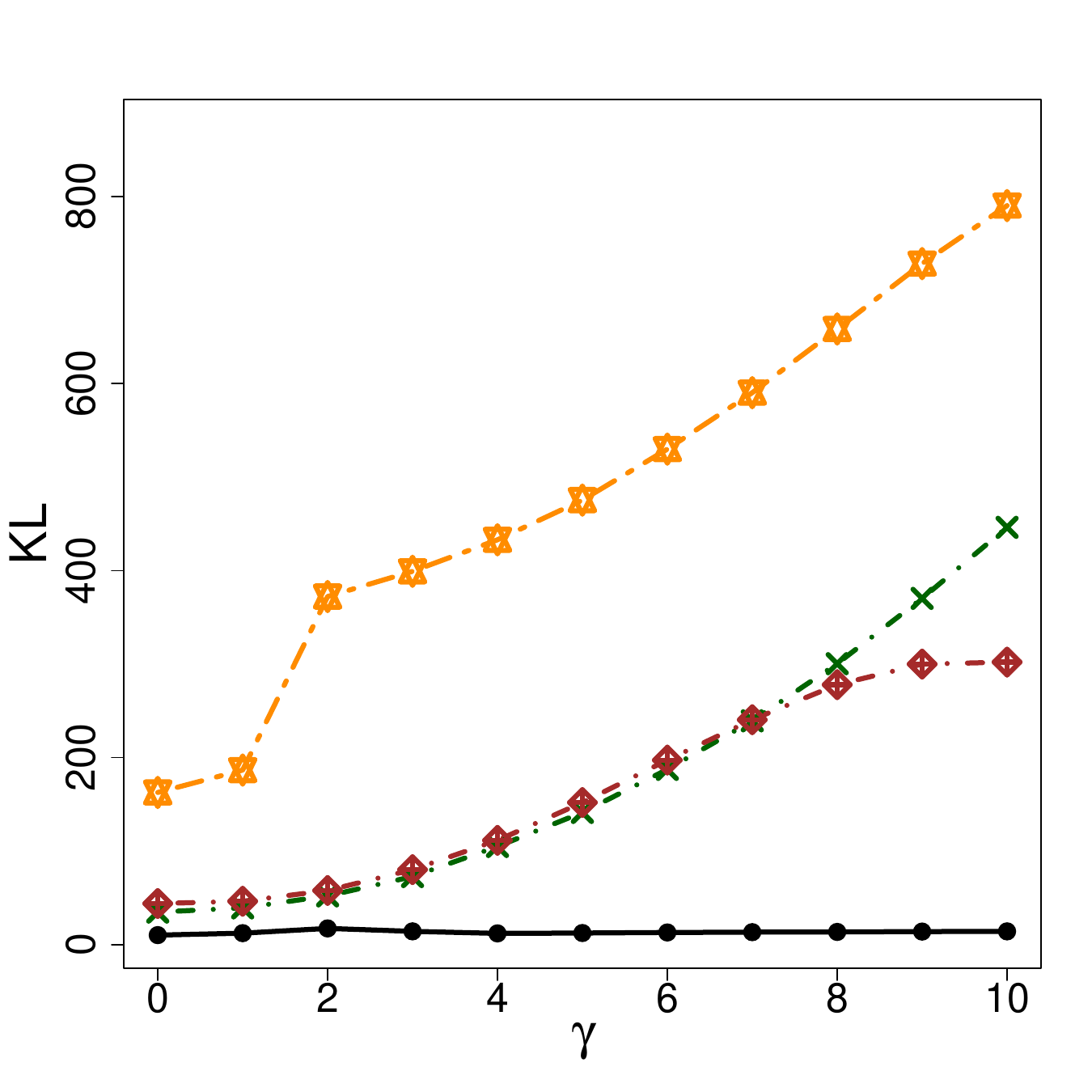} &
 \includegraphics[width=.31\textwidth]
 {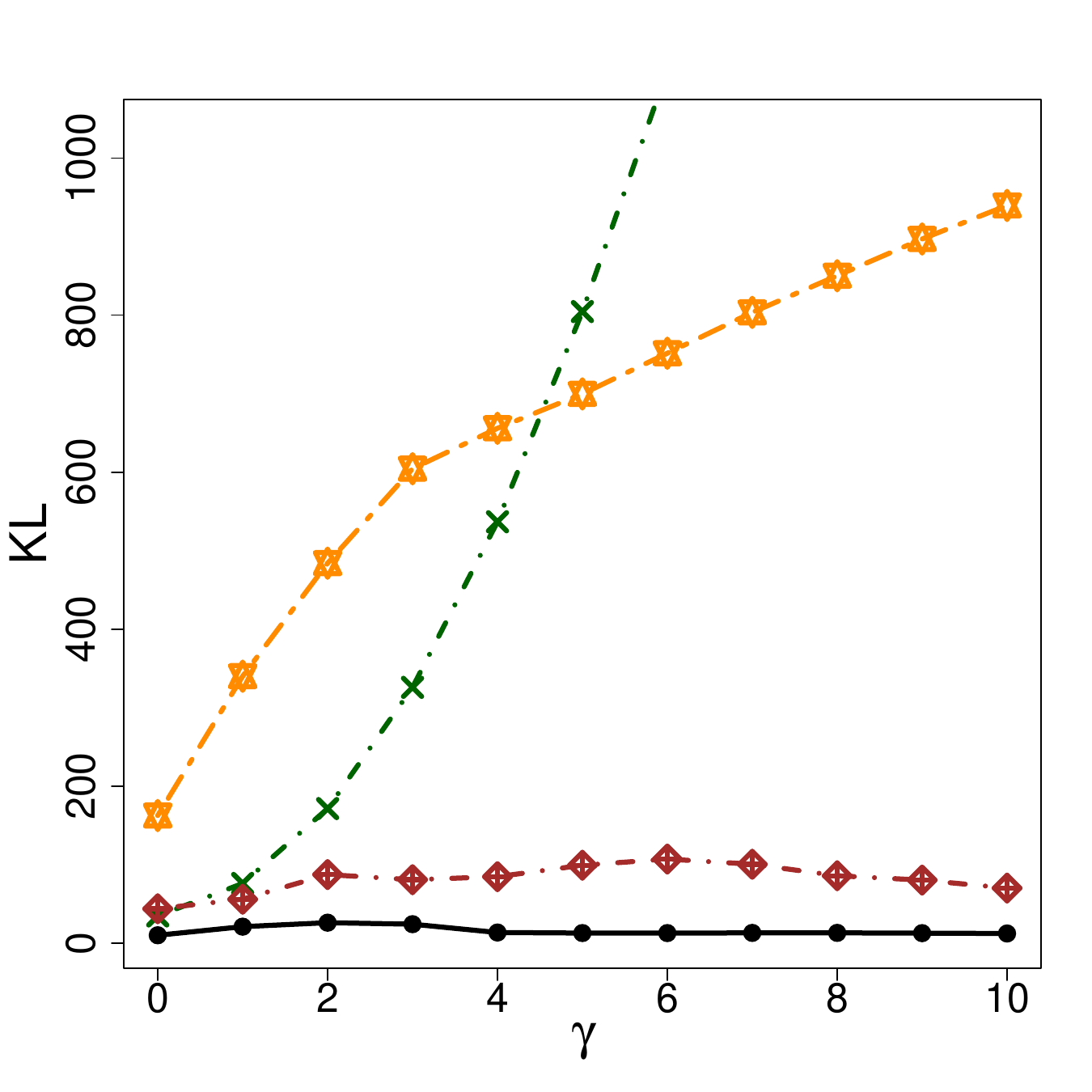} \\ [-4mm]
 \rotatebox{90}{\textbf{\footnotesize{$p=60$}}} &
 \includegraphics[width=.31\textwidth]
 {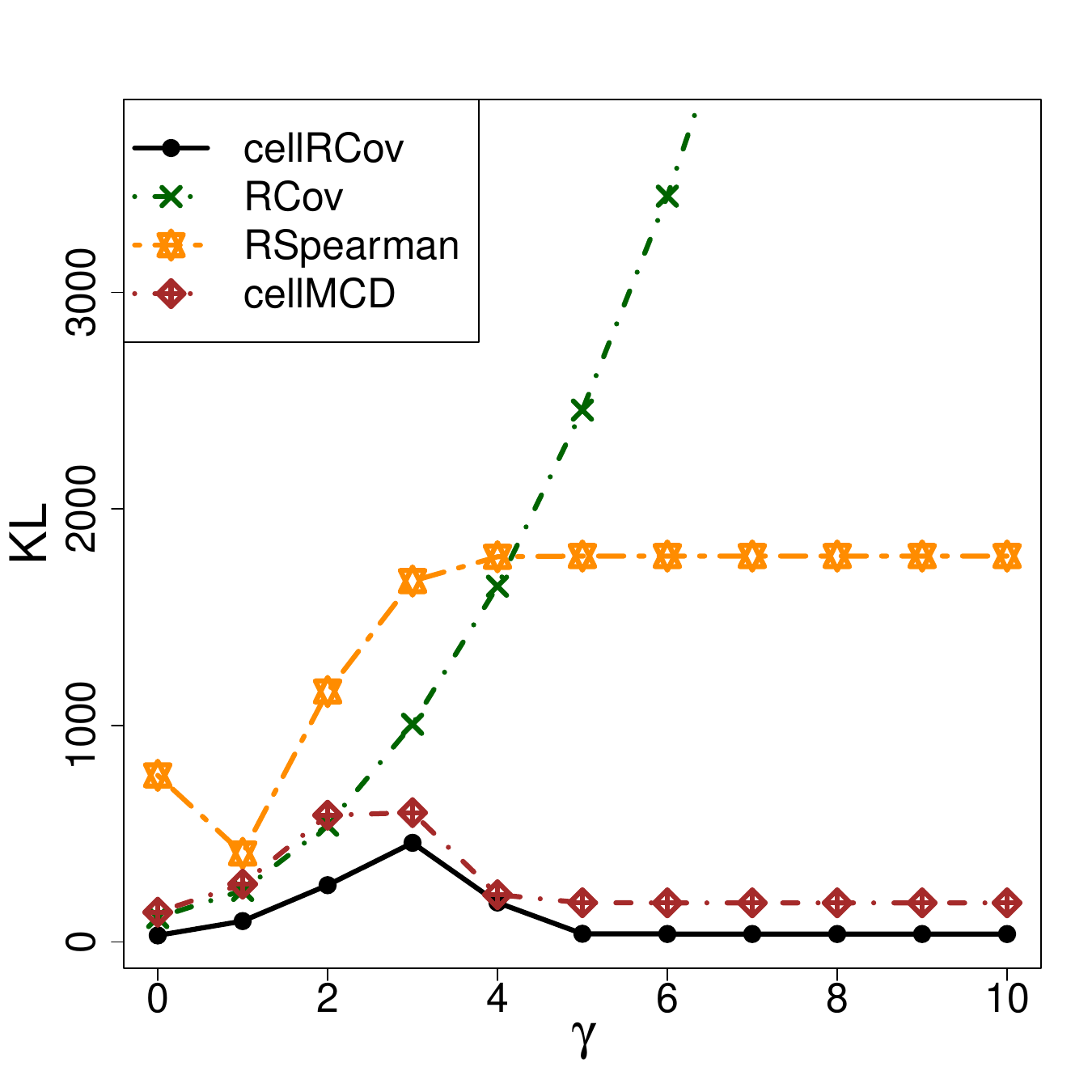} &
 \includegraphics[width=.31\textwidth]
 {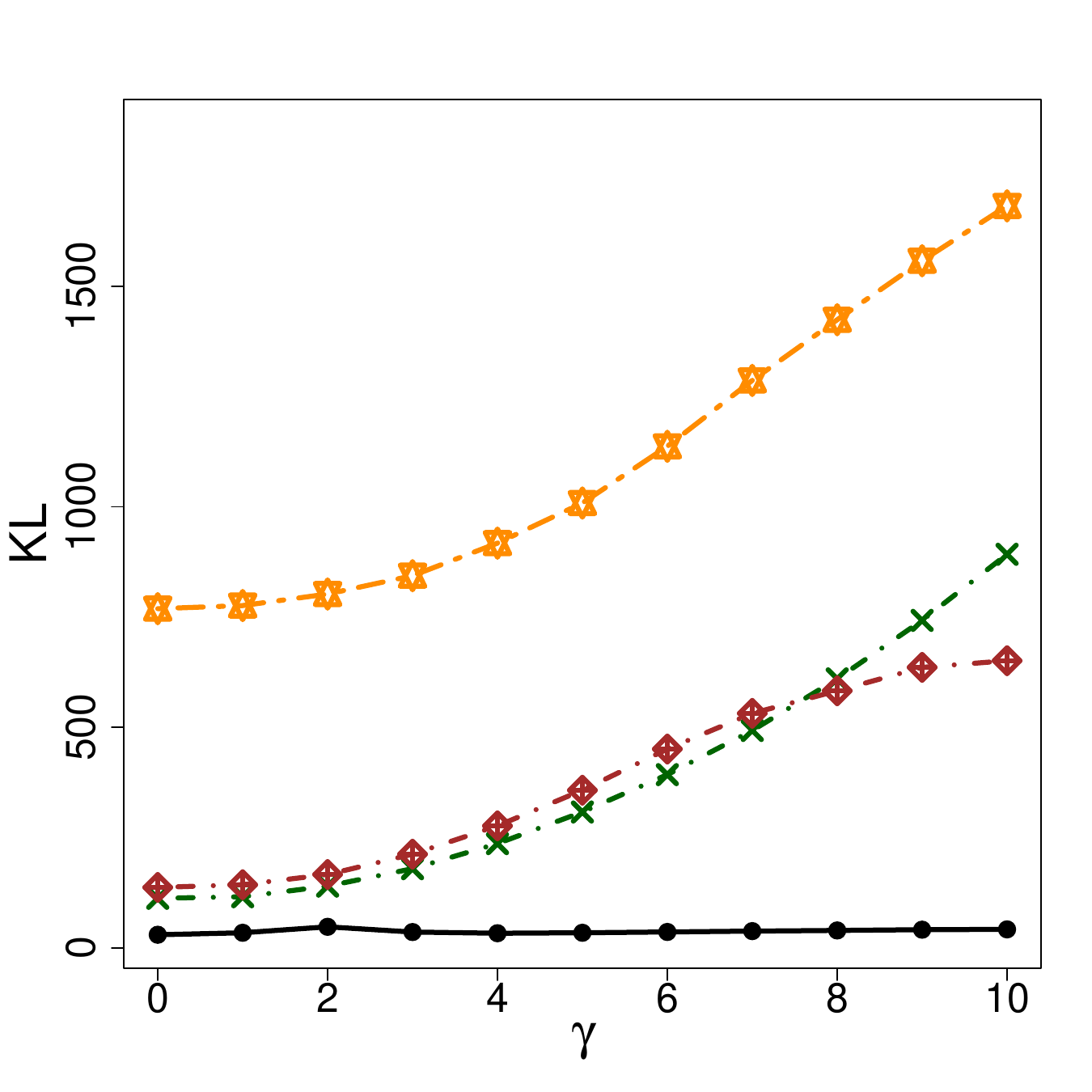} &
 \includegraphics[width=.31\textwidth]
 {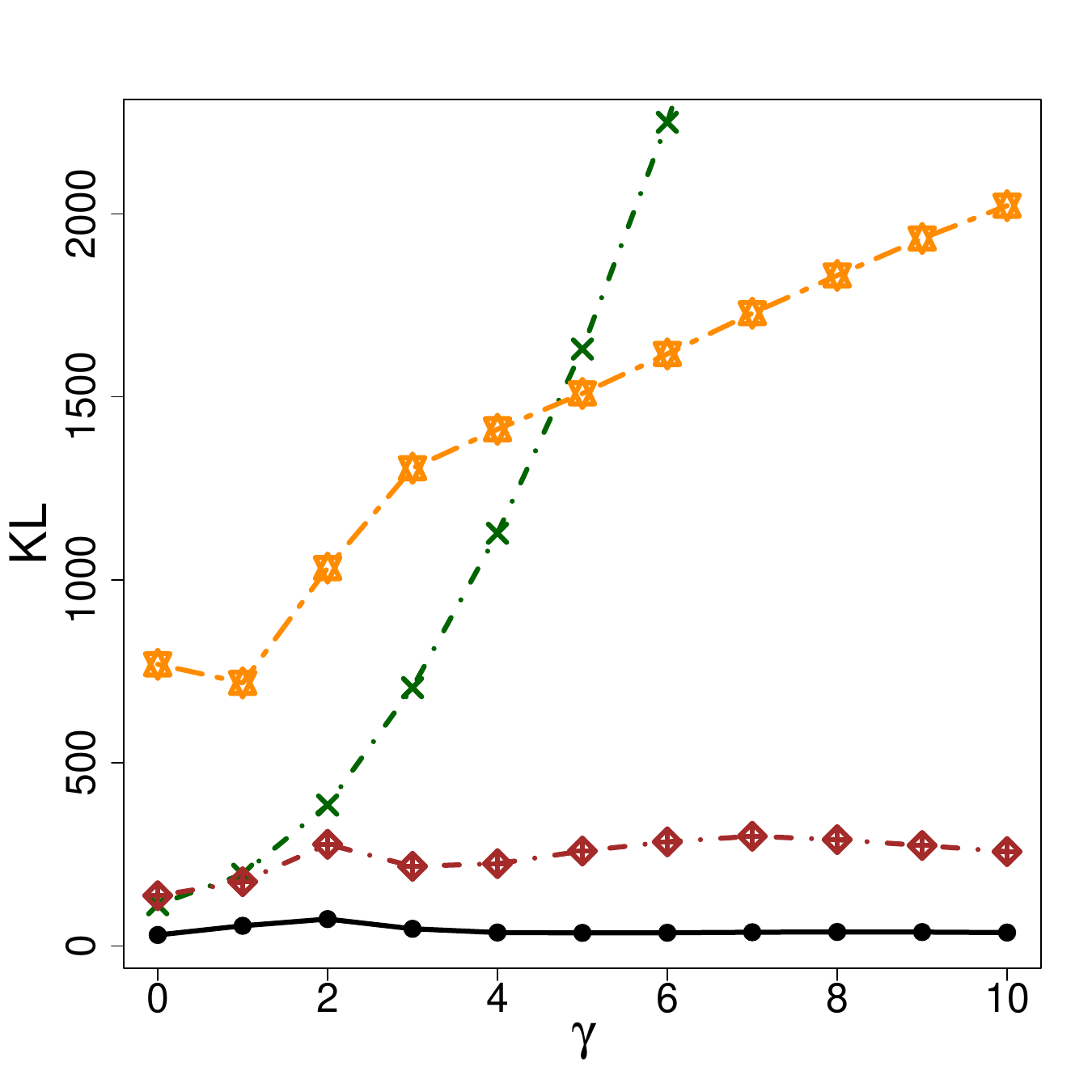} \\ [-4mm] 
 \rotatebox{90}{\textbf{\footnotesize{$p=120$}}} &
 \includegraphics[width=.31\textwidth]
 {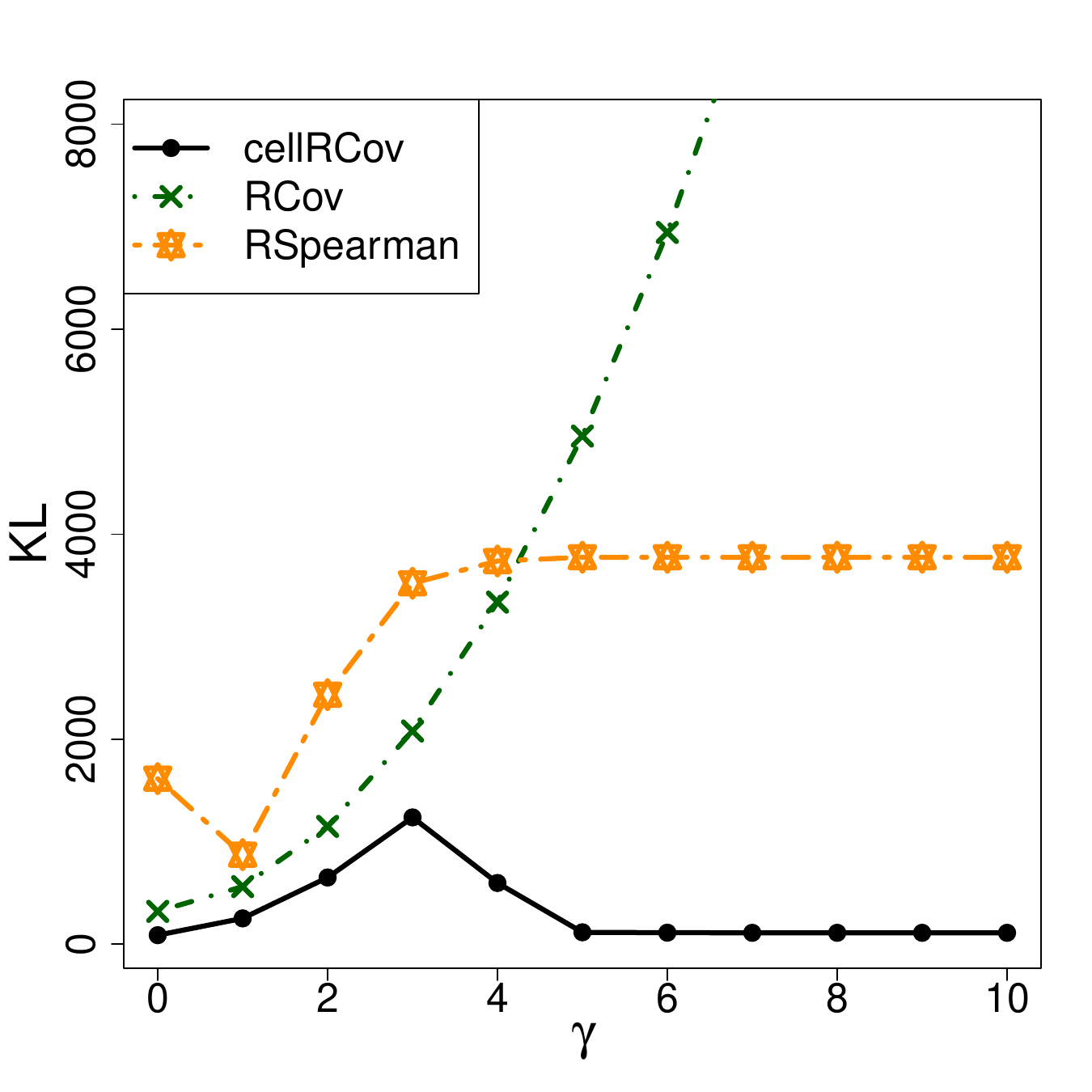} &
 \includegraphics[width=.31\textwidth]
 {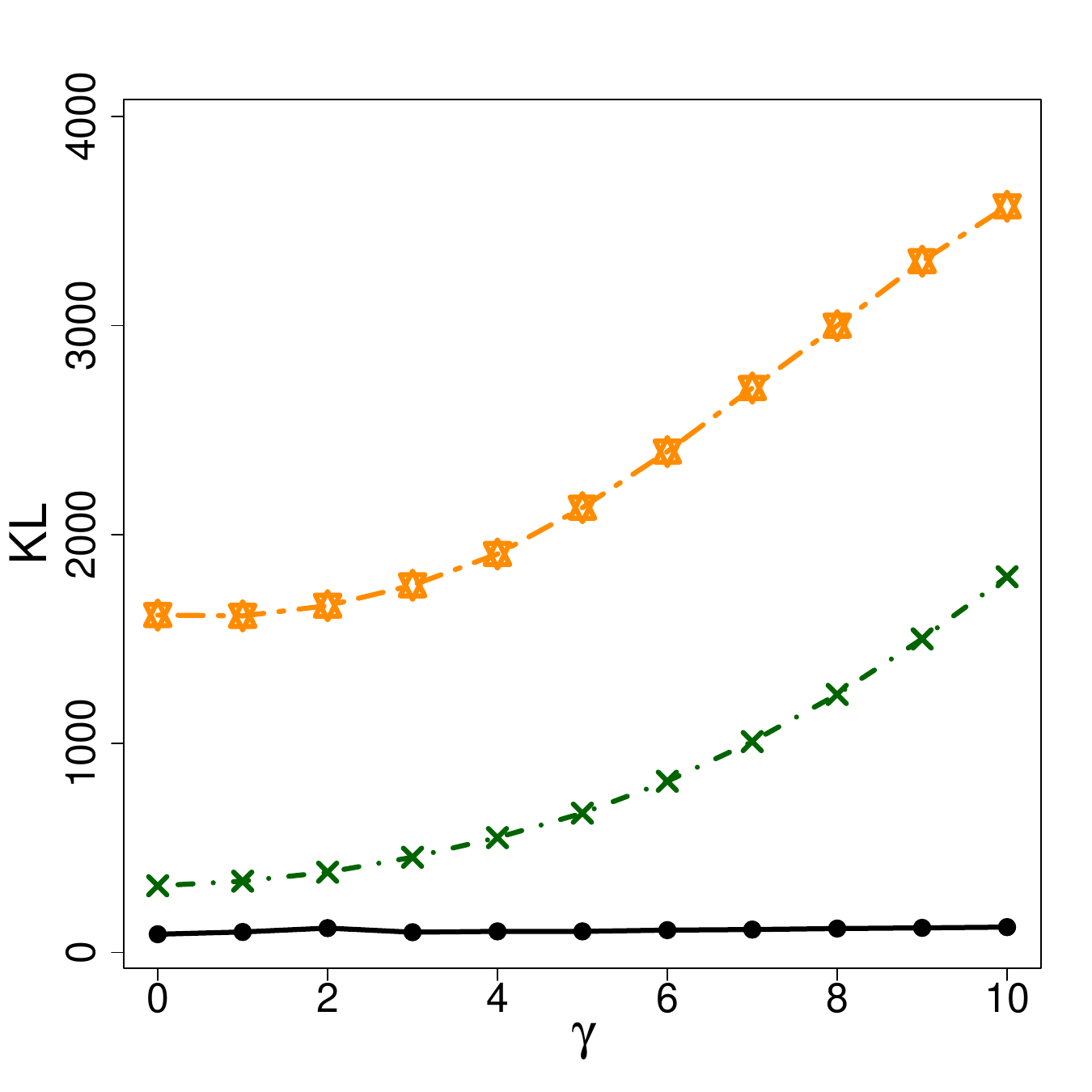} &
 \includegraphics[width=.31\textwidth]
 {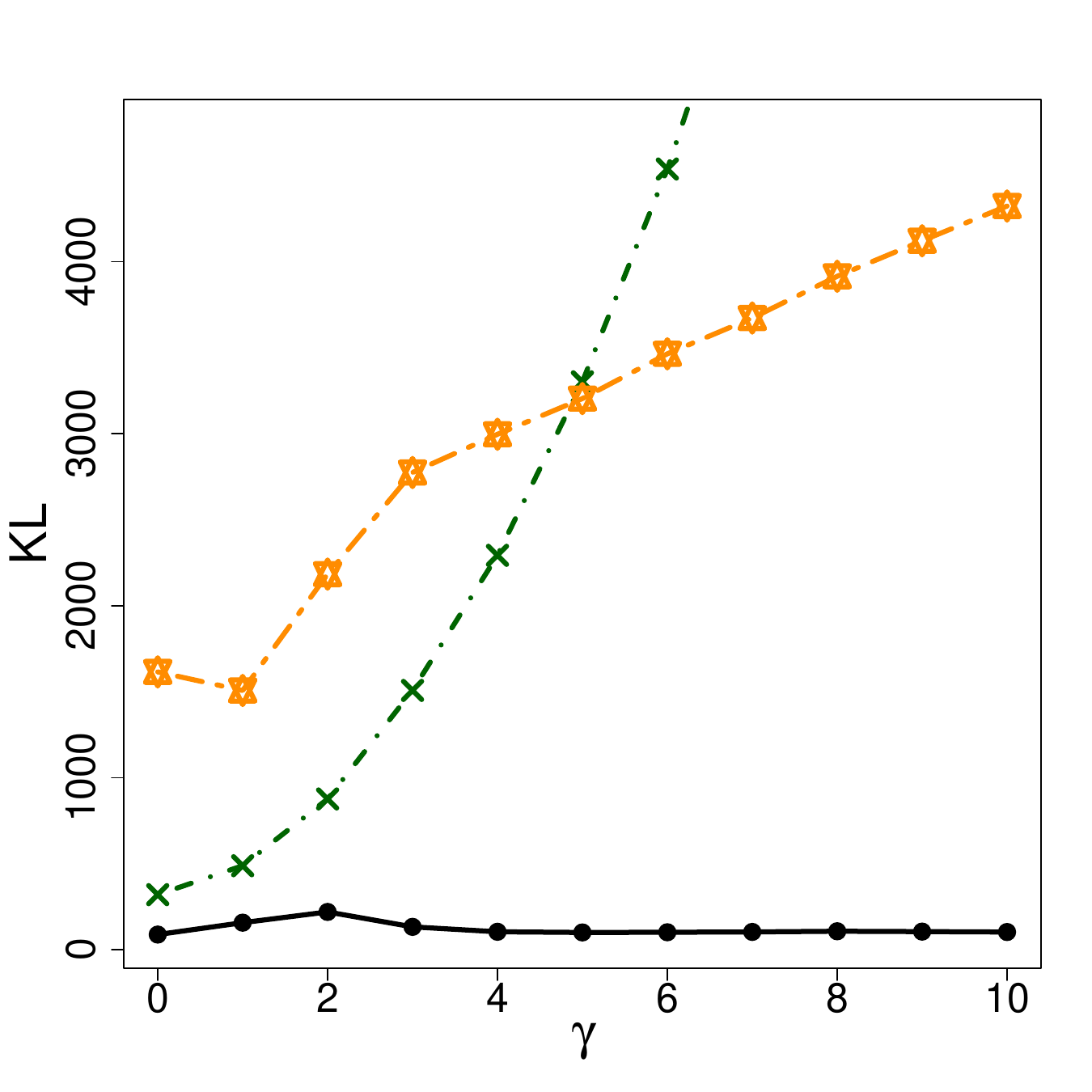}   
\end{tabular}
\caption{Average KL attained by cellRCov, 
RCov, RSpearman, caseMRCD, and cellMCD
in the presence of cellwise outliers, 
casewise outliers, or both for the A09 
covariance model and dimensions $p$ 
in $\lbrace30,60,120\rbrace$, 
with 20\p of missing cells.}
\label{fig:results_A09_NA}
\end{figure}

A sensitivity analysis on the tuning 
parameters $\rk$ and $\delta$ is provided
in Section~\ref{app:sensitivity} of the 
Supplementary Material. It assesses the
effect of varying these parameters across 
the covariance models considered in
the simulation study. 
Section~\ref{app:times} reports runtime 
benchmarks.

\section{Real data examples}
\label{sec:realdata}

\subsection{Anomaly detection in a welding process}
To demonstrate the potential of the proposed covariance estimator, we consider an application about resistance spot welding (RSW) in automotive  manufacturing. RSW is the most common technique employed in joining metal sheets for mass production. In this process two overlapping galvanized steel sheets are joined without the use of any filler material, to guarantee the structural integrity and solidity of the welded items \citep{martin2014assessment}.
The so-called dynamic resistance curve (DRC) shows how the electrical resistance between the pieces of metal being joined changes during the welding operation, providing a technological signature of the metallurgical development of a spot weld. Anomalous conditions, such as excessive welding current and insufficient electrode pressure, cause expulsion in the welding joint \citep{mikno2018analysis}. Expulsion is the ejection of molten metal from either the interfaces of metal sheets or the interfaces between
metal sheets and electrodes, resulting in a serious defect that destroys the welding strength. It also reduces the electrode life. Given these detrimental effects, detecting expulsion is paramount for ensuring consistent welding quality in production lines. 

The dataset consists of $n = 115$ DRC curves measured in $m\Omega$, collected for five spot welding points made by the same welding machine. The data are shown in Figure \ref{fig:data_rsw}. The total number of variables is $p=750$. Expulsion occurred in parts of 70 curves, so more than half of the cases. 

\begin{figure}[ht]
\vspace{2mm}
\centering
\includegraphics[width=0.9\textwidth]
   {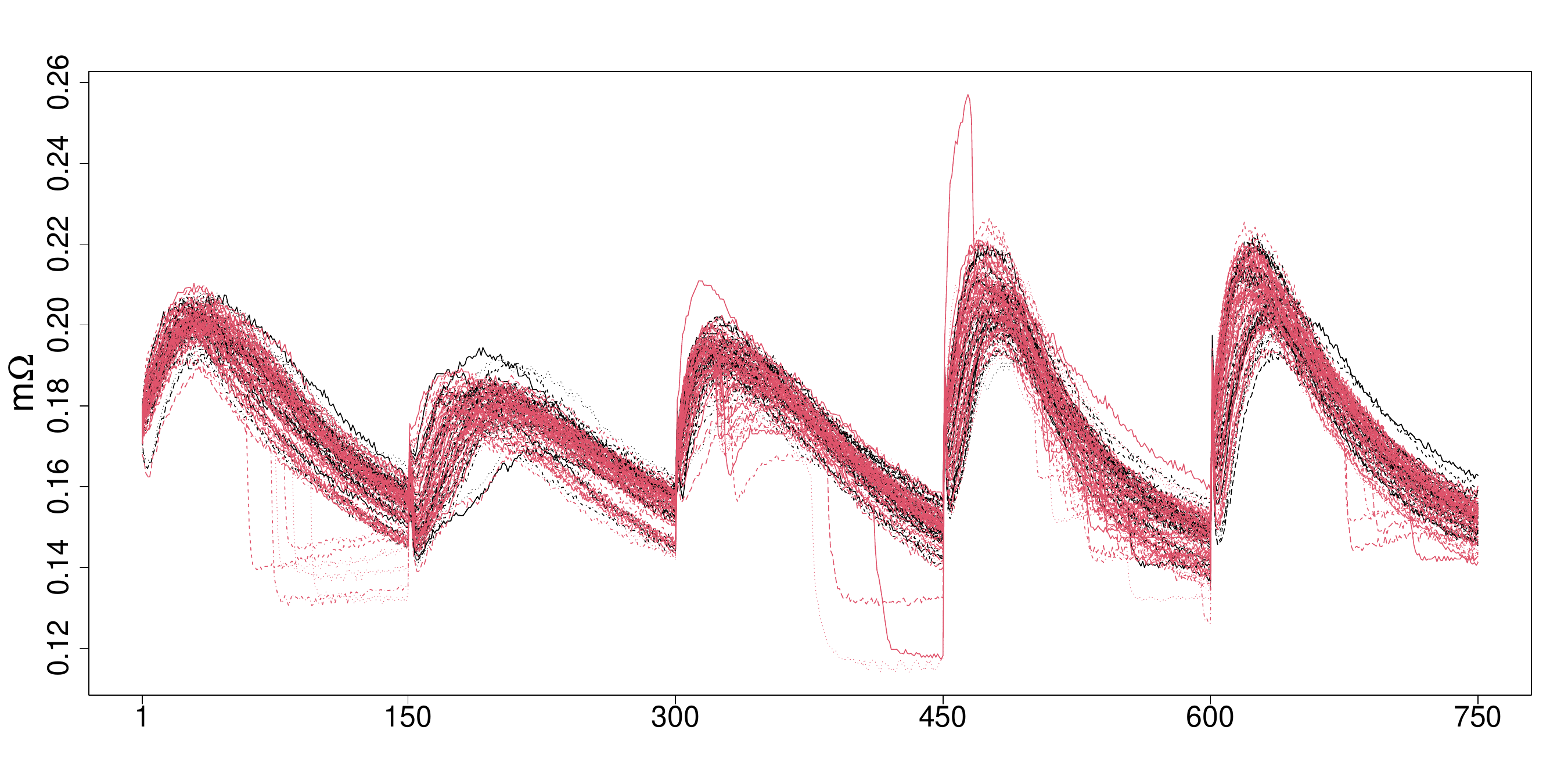}
\caption{The 115 DRC measurements in 
$[m\Omega]$ corresponding to five spot 
welding points. Red curves are cases where 
expulsion occurred.}
\label{fig:data_rsw}
\end{figure}

Since manual labeling of expulsions is 
time-consuming and costly, the goal is to 
develop an unsupervised method for detecting 
joints where expulsion has occurred. For 
this purpose we  consider a detection rule 
where a case $\bx_i$ is classified as 
anomalous if its Mahalanobis distance 
$\MD_i = ((\bx_i-\bhmu)^T\bhSigma^{-1}
(\bx_i-\bhmu))^{1/2}$ is large.
Since $p > n$, we need to use regularized 
estimators. We compare the performance of 
the detection rules using the cellRCov, 
RCov, Spearman, and caseMRCD estimates 
of $\bmu$ and $\bSigma$. Here cellRCov 
selected $k=8$ and $\delta=0.56$.

\begin{figure}[ht]
\centering
\includegraphics[width=0.4\textwidth]{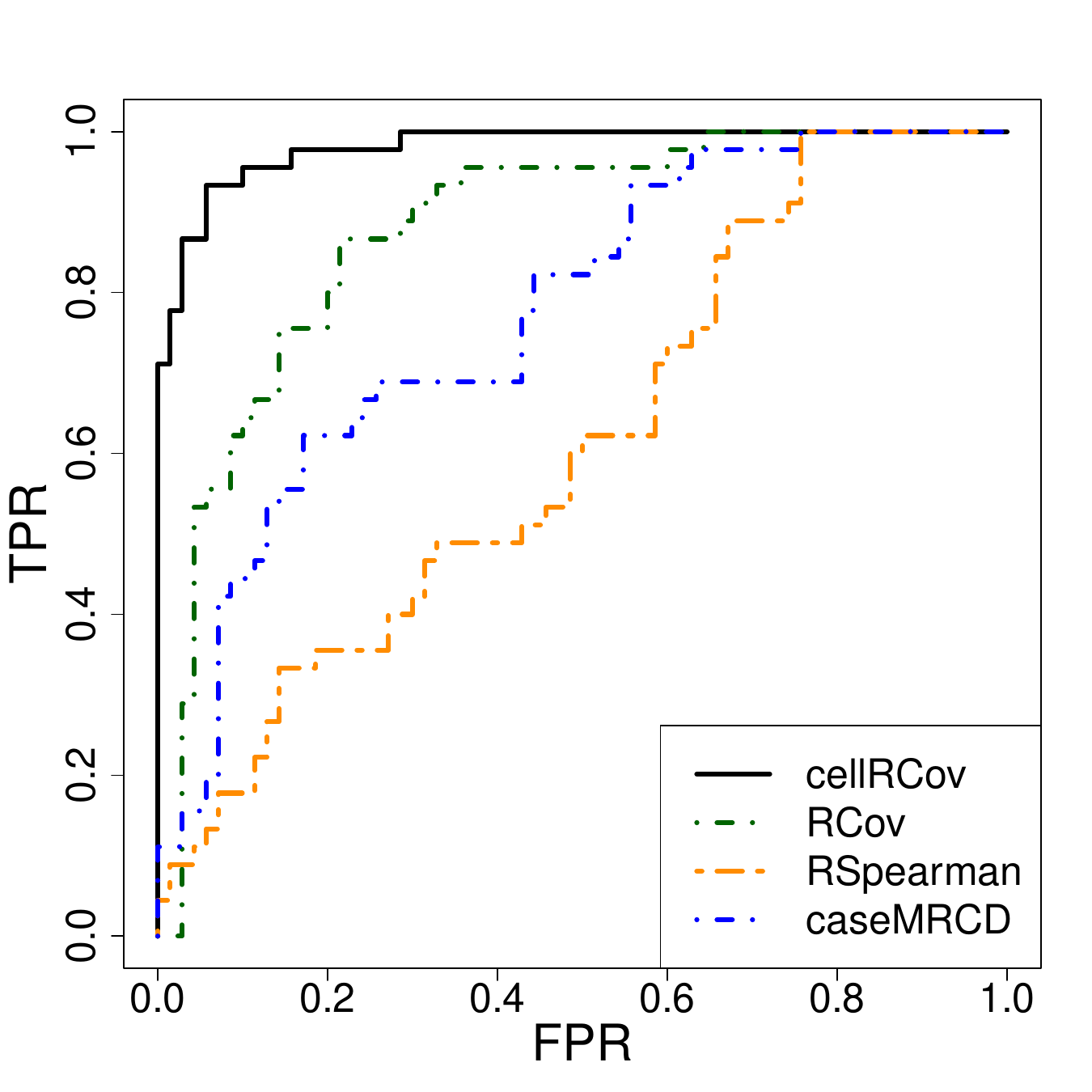}
\caption{ROC curves of the detection rules using 
cellRCov, RCov, RSpearman, and \mbox{caseMRCD}.}
\label{fig:results_rsw}
\end{figure}

The performance of these methods is measured by 
their receiver operating characteristic (ROC) 
curves, which describe the trade-off between the 
true positive rate (TPR) and the false positive 
rate (FPR) at various thresholds. The TPR, also 
known as sensitivity, is the proportion of 
actual positives identified correctly, while 
the true negative rate (TNR), also called 
specificity, is the proportion of actual 
negatives classified correctly.
In Figure~\ref{fig:results_rsw} we
see that cellRCov substantially outperforms 
the competing methods as its ROC curve lies
consistently above the others. This is also
reflected in the area under the ROC curve (AUC),
a standard summary measure of the effectiveness
of a detection rule. Higher values of AUC indicate
better discrimination between normal and anomalous 
cases. The AUC of the cellRCov rule is 0.981, 
which is close to the maximal value of 1 and far 
exceeds those of RCov (0.878), caseMRCD (0.774), 
and Spearman (0.615).

Computing cellRCov on this dataset required 
37 seconds. The computation was performed on 
a laptop equipped with an AMD Ryzen 9 PRO 
8945HS processor with 8 cores and 16 threads, 
and 64 GB of RAM.

To find out which part of the 
cellRCov construction drives the detection of 
anomalous curves, we compute two distances.
Denote the fitted point of $\bz_i$ as 
$\bhz_i$\,. We then compute
$\MD^{\bz^\sub}_{i}
    =
    ((\bhz_i-\bhmu)^T
    \btSigma_{\bz^\sub}^{-1}
    (\bhz_i-\bhmu))^{1/2}$
and $\MD^{\zort}_{i}
    =
    ((\bz_i-\bhz_i)^T
    \left(\btSigma_{\zort}^{R}\right)^{-1}
    (\bz_i-\bhz_i))^{1/2}$.
Therefore $\MD^{\bz^\sub}_{i}$ measures
how far $\bhz_i$ is from the center in the 
fitted subspace, whereas $\MD^{\zort}_{i}$ is 
the distance of its residual component.

\begin{figure}[ht]
\centering
\includegraphics[width=0.45\textwidth]
   {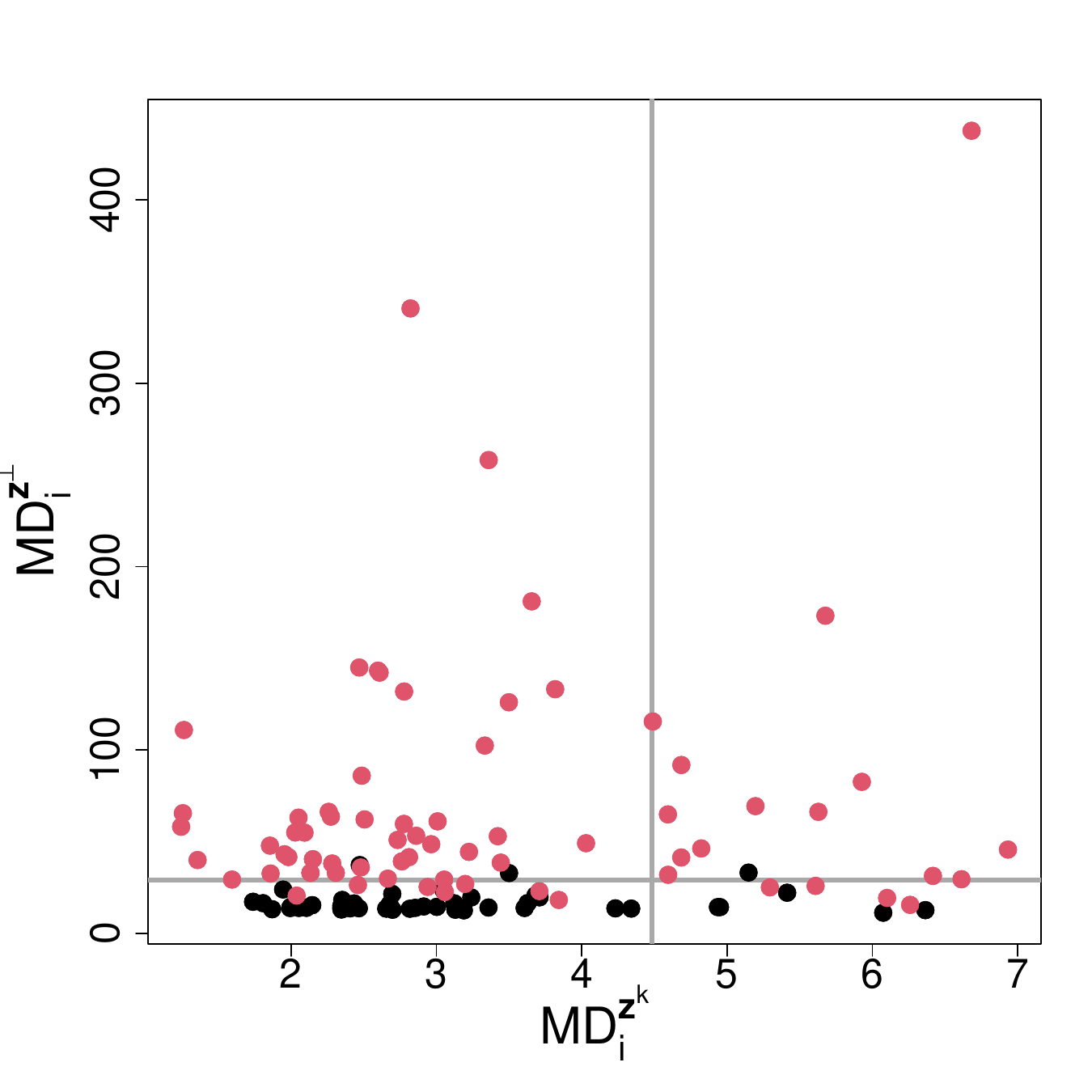}\\
\vspace{-2mm}
\caption{Distances $\MD^{\bz^\sub}_{i}$ 
and $\MD^{\zort}_{i}$ for the RSW data. Cases
with expulsion are shown in red.}
\label{fig:rsw_components}
\end{figure}

Figure~\ref{fig:rsw_components} plots
$\MD^{\zort}_{i}$ versus $\MD^{\bz^\sub}_{i}$. 
Red points correspond to curves where expulsion 
occurred, while black points correspond to 
regular curves. The vertical line is at the 
cutoff $\sqrt{\chi^2_{k,0.99}}$ and the 
horizontal line at $\sqrt{\chi^2_{p-k,0.99}}$\,.
The red and black points are 
mainly separated through $\MD^{\zort}_{i}$, 
whereas $\MD^{\bz^\sub}_{i}$ shows less 
separation. This indicates that, for these data, 
the anomalous behaviour is primarily captured 
by deviations from the fitted subspace rather 
than by unusual positions within that subspace.

\subsection{Robust canonical correlation analysis}
\label{sec:CCA}

Canonical Correlation Analysis (CCA) is 
designed to explore and quantify the relation
between two sets of variables, by identifying 
pairs of linear combinations that are most 
correlated. Instead of describing the full 
covariance structure, CCA extracts a 
lower-dimensional summary that captures the 
strongest associations between the two sets 
of variables. This is particularly useful when 
the goal is to reduce complexity while 
preserving meaningful dependencies, such as in 
feature extraction and visualization.

Take two random vectors 
$X^{(1)}\in\mathbb{R}^p$ 
and $X^{(2)}\in\mathbb{R}^q $ with means 
$\bmu_1$ and $\bmu_2$\,, covariance matrices 
$\bSigma_1$ and $\bSigma_2$\,, and 
cross-covariance $\bSigma_{12} = 
\bSigma_{21}^T = \Cov(X^{(1)},X^{(2)})$.  
CCA seeks a sequence of vectors 
$\ba_\ell \in \mathbb{R}^p$ and 
$\bb_\ell \in \mathbb{R}^q$ for
$\ell=1,\dots,\min(p,q)$ that maximize the 
correlation between the linear projections on 
$X^{(1)}$ and $X^{(2)}$. Formally, CCA
seeks
\begin{equation*}
  \left(\ba_{\ell}, \bb_{\ell}\right) =
  \argmax_{\ba, \bb} \mbox{corr} \left(
  \ba^T X^{(1)},\bb^T X^{(2)}\right)
\end{equation*}
subject to the constraints
\begin{equation*}
  \Cov\left(\ba_{\ell}^T X^{(1)}, 
  \ba_j^T X^{(1)}\right) = 0
  = \Cov\left(\bb_{\ell}^T X^{(2)}, 
  \bb_j^T X^{(2)}\right), 
  \quad j=1, \ldots, \ell-1.
\end{equation*}
The resulting scalar variables
$U_{\ell} = \ba_{\ell}^T X^{(1)}$ and 
$V_{\ell} = \bb_{\ell}^T X^{(2)}$\,, known as 
canonical variables, provide a lower-dimensional 
representation of the original data while 
preserving the strongest correlations. 
The canonical directions $\ba_{\ell}$ and 
$\bb_{\ell}$ indicate how each original variable 
contributes to the correlated structure between 
the two data sets.

It can be shown that $\ba_{\ell}$ and 
$\bb_{\ell}$ are the 
eigenvectors associated with the largest 
eigenvalues of the matrices 
$\bSigma_1^{-1}\bSigma_{1 2}
\bSigma_2^{-1}\bSigma_{2 1}$ and 
$\bSigma_2^{-1}\bSigma_{2 1}
\bSigma_1^{-1}\bSigma_{1 2}$\,,
see e.g.\ \cite{johnson2002applied}.
In classical 
CCA, $\ba_{\ell}$ and $\bb_{\ell}$ 
are estimated from the sample covariance 
estimates $\bS_1$\,, $\bS_2$\,, and 
$\bS_{12}$ of $\bSigma_1$\,, $\bSigma_2$\,, 
and $\bSigma_{12}$\,. However, CCA suffers 
in high-dimensional settings as $\bS_1$ and 
$\bS_2$ can become ill-conditioned or even 
singular. To address this issue regularized
versions of CCA have been proposed
\citep{leurgans1993canonical,WilmsCroux2015},
typically by incorporating regularized 
covariance estimators. Also the presence of 
outliers can severely impact classical CCA by 
making the sample mean and covariance estimates
unreliable. In response to this, several 
casewise robust methods have been introduced 
\citep{branco2005robust,Alfons_MaxAssoc}. 

We propose a CCA method that is cellwise
and casewise robust and able to deal with 
high-dimensional data, by estimating 
$\bSigma_1$\,, $\bSigma_2$\,, and 
$\bSigma_{12}$ as submatrices of cellRCov 
applied to the combined dataset 
$[\bX^{(1)}\, ;\, \bX^{(2)}]$. We refer to 
this procedure as cellRCCA.

To assess the performance of cellRCCA we 
analyze the Corn dataset 
\citep{kalivas1997two}, a well-established 
benchmark in chemometrics. The data is 
available from 
\url{https://www.eigenvector.com/data/Corn/index.html}\,. 
It consists of Near-Infrared 
(NIR) spectra of 80 corn samples. 
Each sample has a spectrum $\bx_i^{(1)}$ ranging 
from 1100 to 2498 nm at 7 nm intervals, 
resulting in $p=200$ variables. The dataset 
also contains a vector $\bx_i^{(2)}$ with $q=4$ 
values for moisture, oil, protein, and starch 
content. This makes it an ideal testbed for 
CCA. Our analysis aims to study the relation 
between the NIR spectral data and the chemical 
composition of the corn samples, assessing how 
well the spectral information can predict the 
underlying physical and chemical properties. 
Studies of such relationships are common in 
the food, chemical, and pharmaceutical
industries.

We want to compare the performance of cellRCCA
to that of rCCA, the regularized CCA method of 
\cite{leurgans1993canonical}, since we cannot 
apply the plain classical CCA because $p>n$
in these data. To evaluate their performance 
we implement a 10-fold cross-validation 
procedure. The dataset is randomly split into 
training and test sets, and for each fold we 
derive the canonical directions from the
training set and then apply these to compute
canonical variables on the test set. From
these variables we then compute the Mean 
Canonical Correlation (MCC) that we define as
\begin{equation*}
\text{MCC}=\frac{1}{L}\sum_{\ell=1}^{L}
  \widehat{\mbox{corr}}_{S}\left(
  \bX^{(1)} \ba_{\ell}\;,
  \bX^{(2)} \bb_{\ell} \right)
\end{equation*}
where 
$\widehat{\operatorname{corr}}_{S}(\cdot)$ 
is the Spearman correlation coefficient, 
and $L$ denotes the number of extracted 
canonical variates. The MCC serves as a
robust measure of performance, with higher 
values indicating better performance, as they 
imply that the method captures and explains 
more of the underlying correlation structure.
Here cellRCCA attains an MCC of 0.951, 
outperforming rCCA which has an MCC of 0.791.
Figure~\ref{fig:CCA_scores} displays the 
canonical variables $U_{\ell}$ and $V_{\ell}$ 
for $\ell=1,2$ for both rCCA and cellRCCA. 
The canonical variables in cellRCCA are closer
to the identity line, resulting in higher 
correlations. 

\begin{figure}[ht]
\centering
 \begin{tabular}{cc}
   \large \textbf{rCCA}  & \large \textbf{cellRCCA}\\
   [-4mm]
\includegraphics[width=.30\textwidth]
  {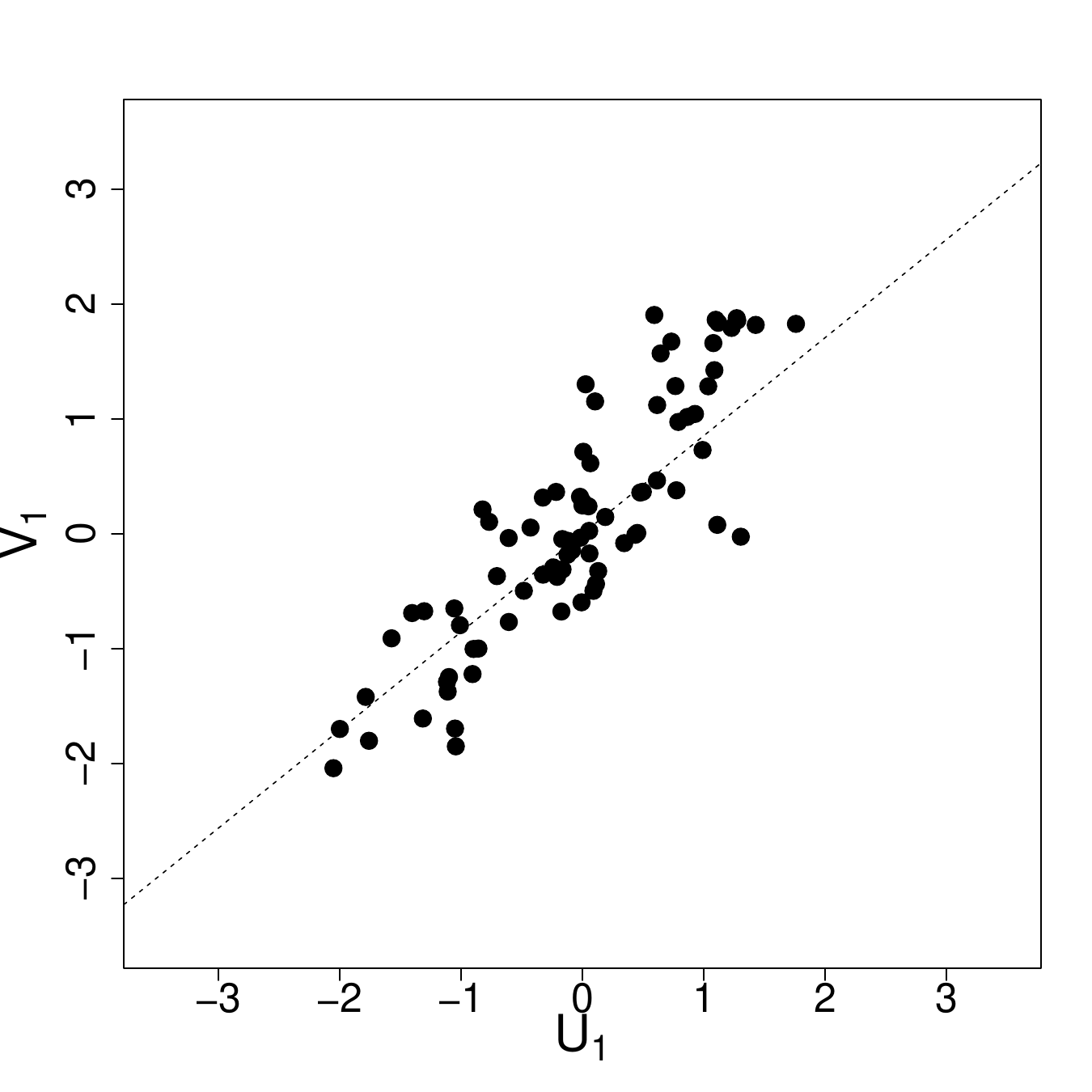} &\includegraphics[width=.30\textwidth]
 {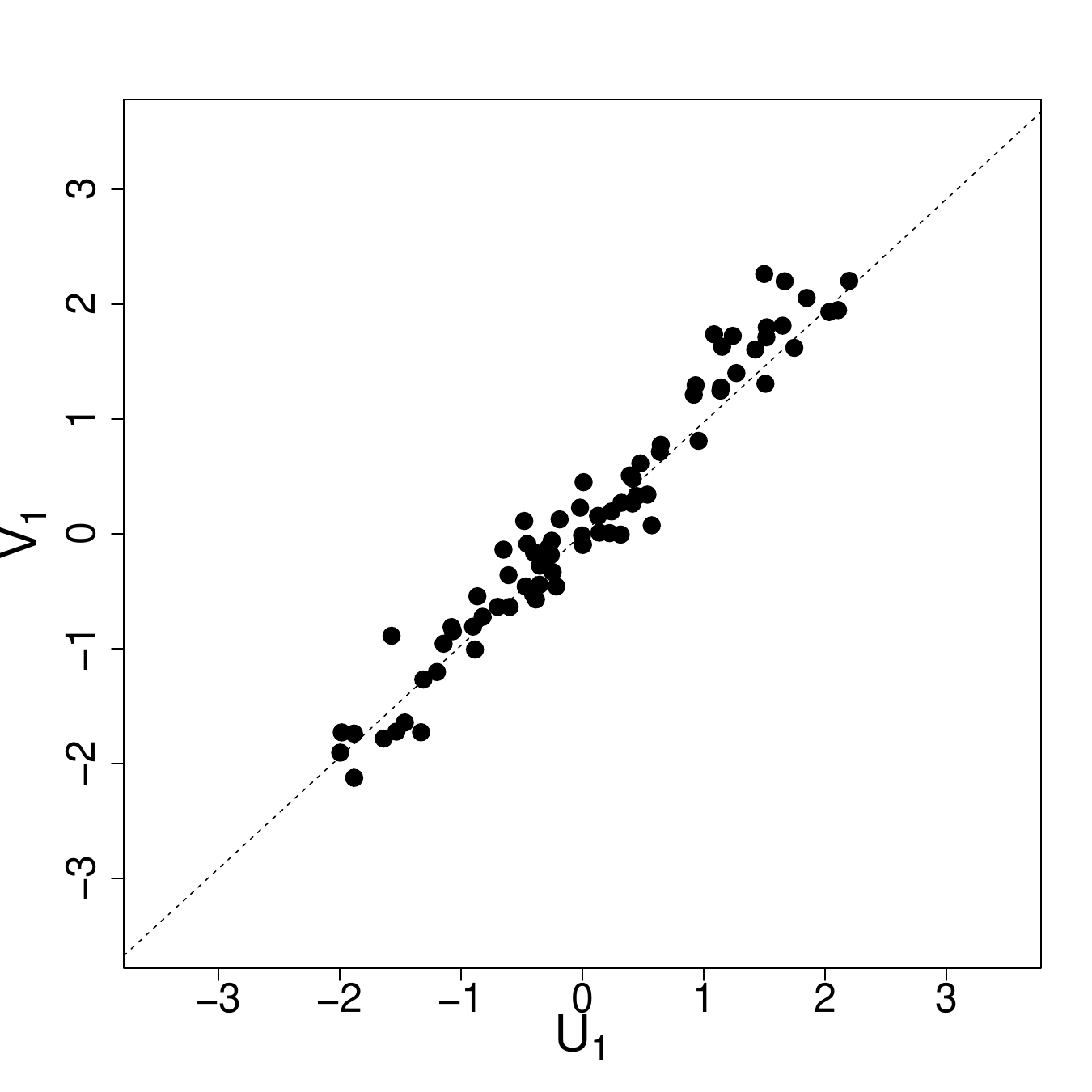} \\  [-4mm]
\includegraphics[width=.30\textwidth]
  {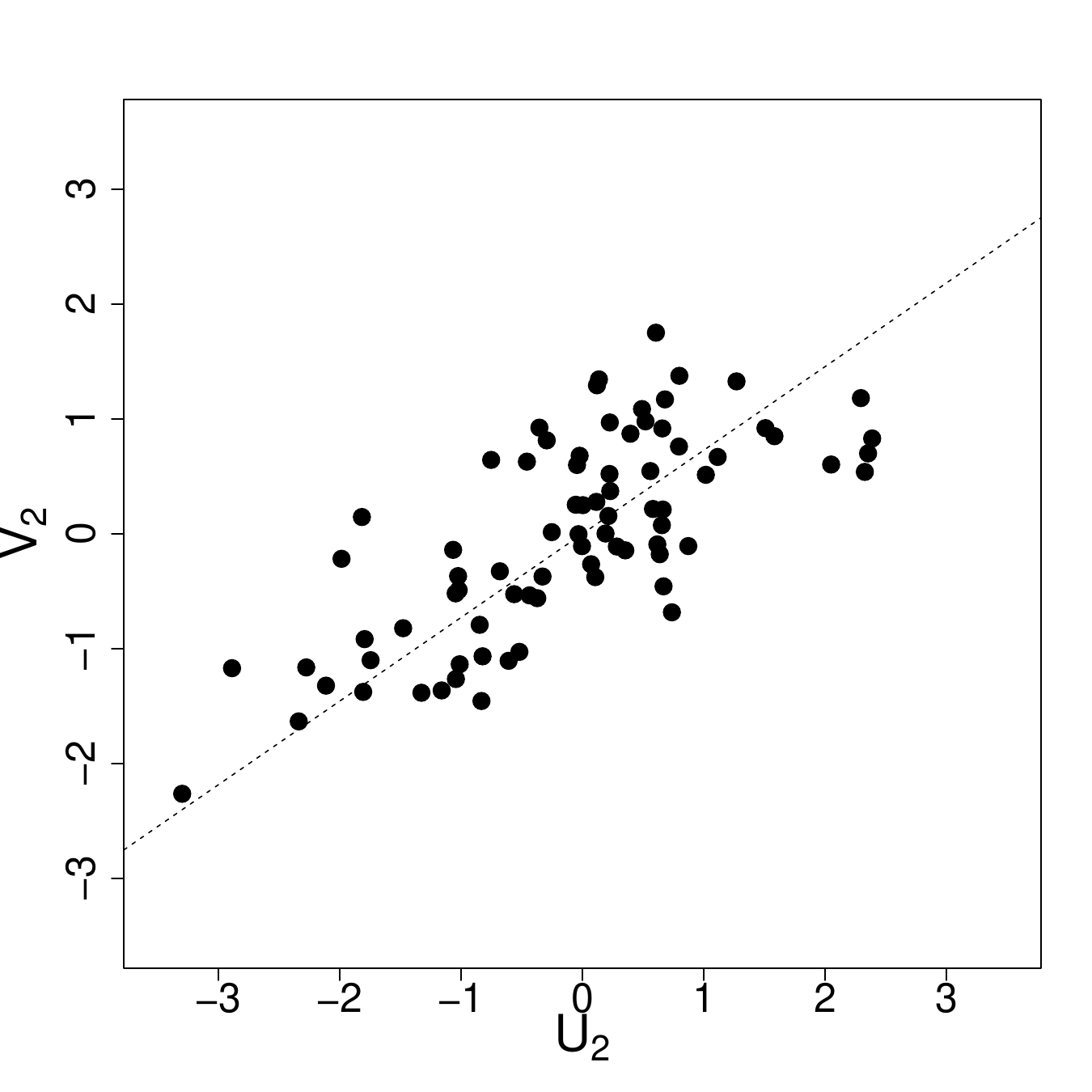} &\includegraphics[width=.30\textwidth]
 {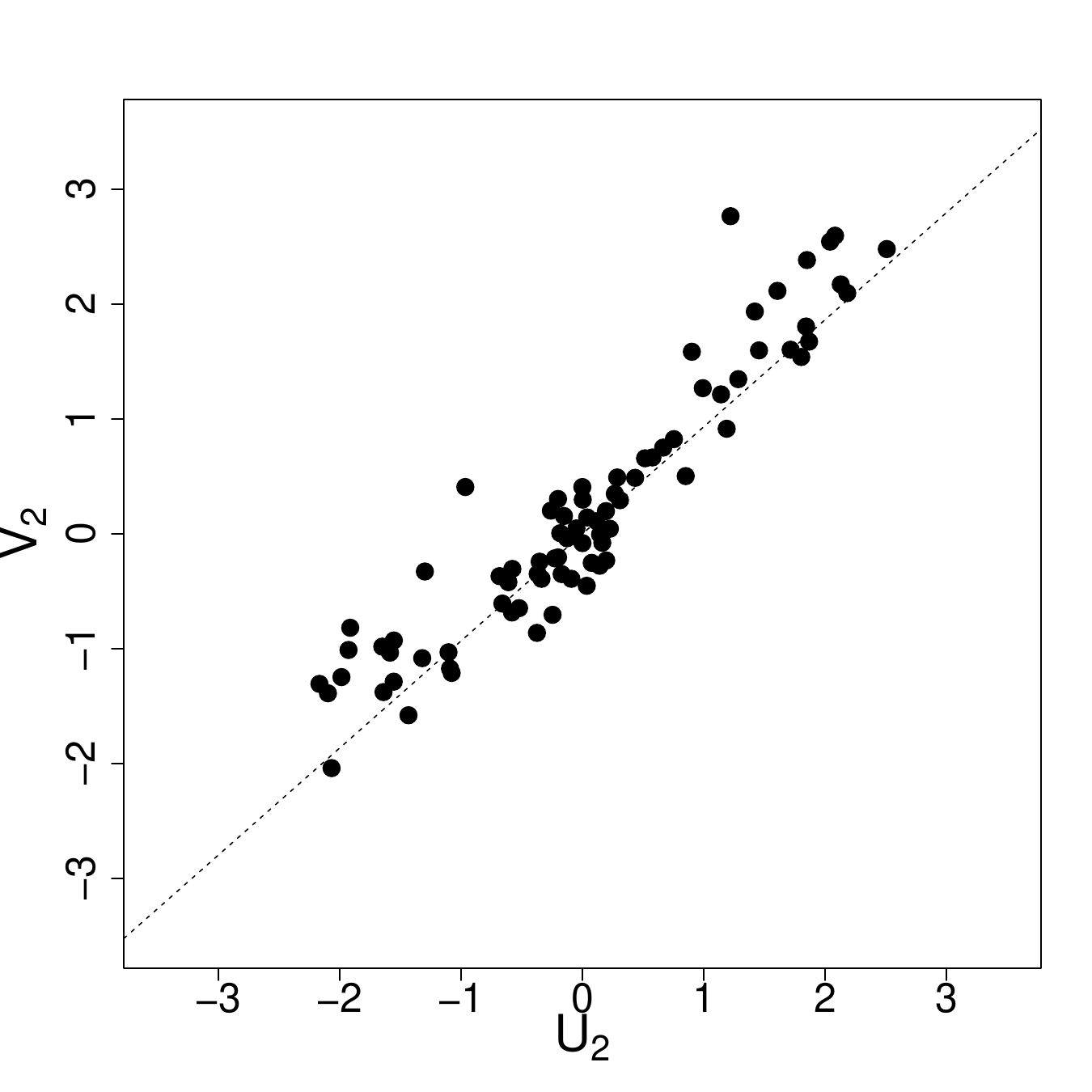} 
\end{tabular}

\vspace{-3mm}
\caption{Canonical pairs $(U_1,V_1)$ and $(U_2,V_2)$
         obtained by rCCA and cellRCCA.}
\label{fig:CCA_scores}
\end{figure}

Computing cellRCov on the combined dataset
$[\bX^{(1)}\, ;\, \bX^{(2)}]$ 
required 3.5 seconds. The computation was 
performed on a laptop equipped with an AMD 
Ryzen 9 PRO 8945HS processor with 8 cores 
and 16 threads, and 64 GB of RAM.

To further validate the performance of cellRCCA, 
we contaminated the Corn dataset with both cellwise 
and casewise outliers. The $n \times (p+q)$ 
combined data matrix was contaminated by $10\p$ 
of cellwise outliers as well as $10\p$ of casewise 
outliers. Cellwise outliers were generated by 
randomly replacing $10\p$ of the cells $z_{ij}$ 
by $\hme_j+\gamma\hs_j$\,, where  $\hme_j$ and 
$\hs_j$ are the median and M-scale~\eqref{eq:Mscale}
of variable $j$ and where $\gamma$ varies from 1 
to 10. The casewise outliers were generated from 
$N_{p+q}\left(\bhme+\gamma\bhs, 
\diag(\hs_1^2,\dots,\hs_{p+q}^2\right))$, where 
$\bhme=\left(\hme_1,\dots,\hme_{p+q}\right)^T$ and 
$\bhs=\left(\hs_1,\dots,\hs_{p+q}\right)^T$.

Figure~\ref{fig:CCA_sim} shows the MCC averaged  
over 200 replications as a function of $\gamma$, 
for rCCA and cellRCCA. It indicates that cellRCCA 
performs well across the entire range of $\gamma$.
In contrast, rCCA lacks robustness and 
deteriorates significantly for large $\gamma$. 
\begin{figure}[ht]
\centering
\includegraphics[width=.45\textwidth]
  {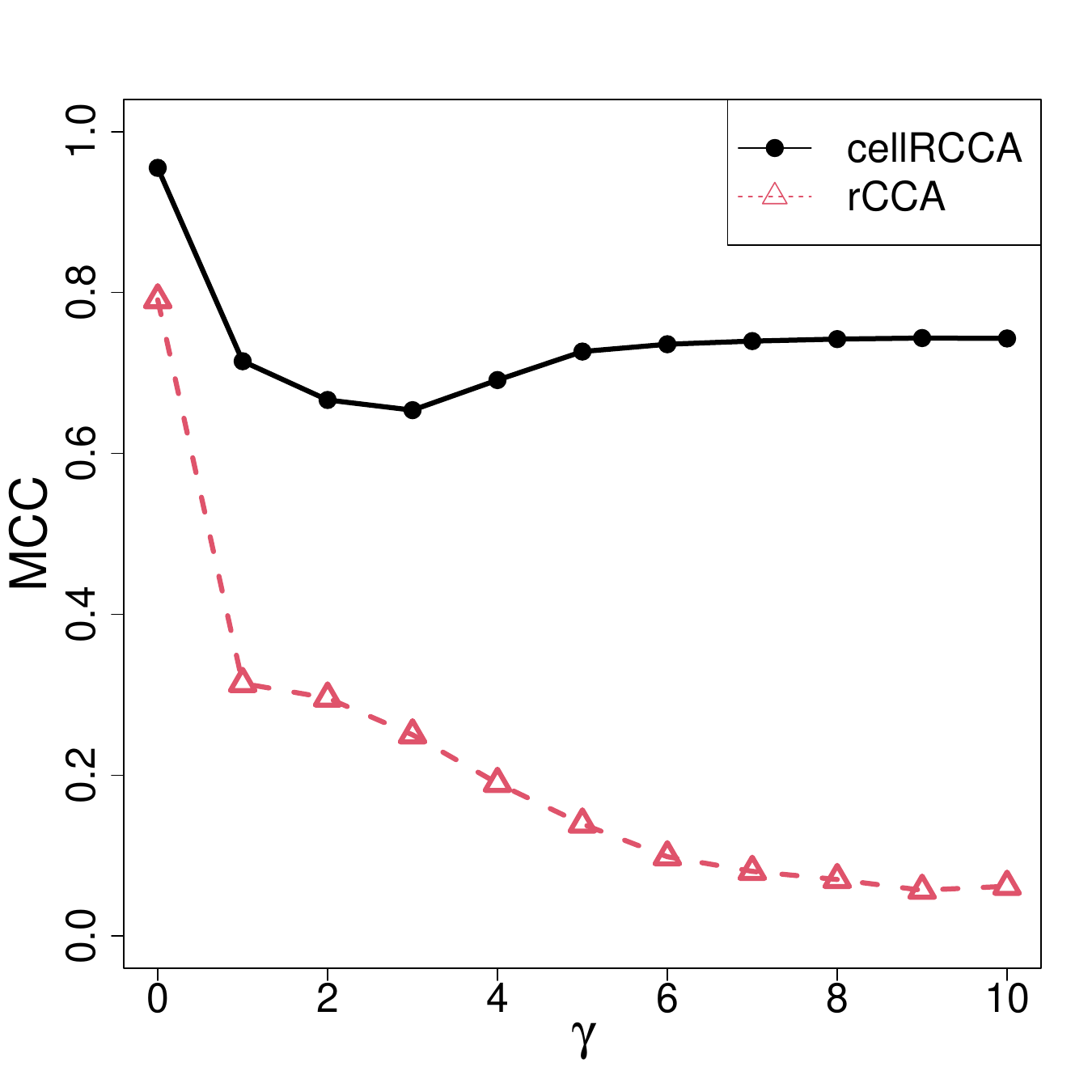} 
\caption{Average MCC attained by rCCA and 
cellRCCA as a function of $\gamma$.}
\label{fig:CCA_sim}
\end{figure}

\section{Conclusions}
\label{sec:conc}
We have introduced the Cellwise Regularized
Covariance (cellRCov) 
method, a novel robust covariance matrix estimator 
capable of simultaneously handling casewise outliers, 
cellwise outliers, and missing data. The method 
builds the covariance matrix from two components.
The first is derived from the subspace 
obtained by a recently developed robust dimension 
reduction method. The second component originates
from the orthogonal subspace, in which a weighted
covariance is computed on robustly imputed residuals.
We have included a ridge-type regularization
to enhance numerical stability, making cellRCov 
particularly well-suited for high-dimensional 
statistical analyses.

We established theoretical properties of the 
cellRCov estimator by deriving its casewise and 
cellwise influence functions, as well as proving
its consistency and asymptotic normality. 

The empirical performance of cellRCov 
was investigated by simulation, confirming
its robustness in contaminated settings with or
without missing cells. Finally, two real data 
examples illustrated practical applications to
anomaly detection and robust canonical correlation 
analysis, both in high dimensions.

In future research, cellRCov could serve as 
a building block for various multivariate 
statistical techniques, such as robust discriminant 
analysis, factor analysis, graphical models, and 
high-dimensional regression.

\noindent{\bf Software availability.} R code for 
cellRCov and a script that reproduces 
the examples are at
\url{https://wis.kuleuven.be/statdatascience/code/cellrcov_r_code.zip}\,.

\spacingset{1.0}


\clearpage
\pagenumbering{arabic}
\appendix
\begin{center}
\phantom{abc}\\ 

\Large{Supplementary Material to: 
  Cellwise and Casewise Robust Covariance
           in High Dimensions}\\
\end{center}
\vspace{17mm}

\setcounter{equation}{0} 
\renewcommand{\theequation}
  {A.\arabic{equation}} 

\spacingset{1.45} 

\section{\large Computational complexity of cellRCov}
\label{sec:supp-complexity-cellRCov}

We determine the computational complexity 
of cellRCov by going through its steps. We
assume that the selected rank 
$\rk$ is bounded above by a given maximum value 
$\rk_{\max}$\,, as in the actual implementation.

The first step is the robust marginal 
standardization of the data matrix. This requires 
computing a univariate robust scale for each 
variable, and therefore has complexity $O(np)$. 
The next step is the robust low-rank approximation
computed by cellPCA. \cite{centofanti2026robust} 
showed that the computational complexity of 
cellPCA when $\rk\leqslant \rk_{\max}$ is
$O(np(\min(n, p)+\log(n)+\log(p)))$.

After the cellPCA fit, cellRCov estimates 
the covariance of the fitted component. This 
requires applying DetMCD to the $n$ scores in 
$\rk$ dimensions, which has complexity 
$O(n\log(n)\rk^2)$ 
\citep{hubert2012deterministic}.
The resulting $\rk \times \rk$ scatter matrix 
is then mapped back to the
original $p$-dimensional space through
$\bhV\bhSigma_{\MCD}(\bhU)\bhV^T$. This matrix 
multiplication costs $O(p\rk^2+p^2\rk)$.

The residual component requires computing the 
fitted points and residuals, which costs 
$O(np\rk)$, and computing the cellwise and 
casewise weights, which costs $O(np)$. The 
weighted residual covariance matrix 
in~\eqref{eq:Sigmay} is obtained as a sum of 
$n$ weighted outer products in dimension $p$, 
and therefore has complexity $ O(np^2)$.
The ridge-type regularization of the residual 
covariance, the addition of the two covariance 
components, and the final back-transformation 
to the original scale all require 
$O(p^2)$ operations.

Combining these terms, the time complexity 
of cellRCov is
\begin{align*}
&O(np)+ O(np(\min(n,p)+\log(n)+\log(p)))
+ O(n\log(n)\rk^2)  \\
& \qquad
+ O(p\rk^2+p^2\rk)
+ O(np\rk)+O(np)
+ O(np^2)
+ O(p^2).
\end{align*}
Since $\rk$ is bounded by $\rk_{\max}$, this 
simplifies to
$O(np(\min(n,p)+\log(n)+\log(p)) + np^2)$.
The first term corresponds to the robust 
low-rank fit, whereas the second term
comes from the computation of the full residual 
covariance matrix. Thus, when $p$ is large, the
computation of the full $p\times p$ covariance 
matrix becomes one of the dominant costs.
Since $\min(n,p)\leq p$ and $\log(p)$ is $o(p)$, 
the terms $np\min(n,p)$ and $np\log(p)$ are 
dominated by $np^2$. Hence the complexity can 
be written as $O(np^2+np\log(n))$, which is not 
much higher than the $O(np^2)$ of the classical 
covariance matrix.

\section{\large Derivation of influence functions}
\label{app:proofs}

We consider the contamination 
model~\eqref{eq:cont_cell_z} given by 
\begin{equation*}
X_{\eps}=A \odot X + (\bone_p-A) \odot \bz 
\end{equation*}
where $\bz=\left(z_1, \ldots, z_p\right)^T$ 
and $A=(A_1,\dots,A_p)^T \sim G_\eps$.

Under both the dependent and independent 
contamination models with 
$P(A_j^{\cell}=1)=1-\eps^{\cell}$ for all 
$j=1,\ldots,p$, 
the distribution of $A$ satisfies
$P\left(A_j=1\right)=1-\eps$, $j=1, \dots, p$, 
and (ii) for any sequence 
$\left(j_1, j_2, \ldots, j_p\right)$ of zeroes 
and ones with $p-\ki$ ones and $\ki$ zeroes, 
$P(A_1=j_1, \ldots, A_p=j_p)$ has the same 
value, denoted as $\delta_\ki(\eps)$. Obviously,
$\eps=\eps^{\case}$ under the fully dependent 
contamination model (FDCM) and 
$\eps=\eps^{\cell}$ under the fully independent 
contamination model (FICM). Under FDCM we have 
that $P(A_1=\cdots=A_p)=1$, and then 
$\delta_0(\eps)=(1-\eps),\; 
\delta_1(\eps)=\cdots=\delta_{p-1}(\eps)=0$, 
and $\delta_p(\eps)=\eps$.
In that situation the 
distribution of $X_{\eps}$ simplifies to 
$(1-\eps)H_0 + \eps\Delta_{\bz}$ where 
$\Delta_{\bz}$ is the distribution which puts 
all of its mass in the point $\bz$. The fully 
independent contamination model (FICM) instead 
assumes that $A_1, \dots, A_p$ are independent, 
hence
$$\delta_h(\eps)=\binom{p}{h}
   (1-\eps)^{p-h} \eps^h, 
  \quad h=0,1, \ldots, p\;. $$
$G_\eps$ is denoted as $G_\eps^D$ in the dependent 
model, and as $G_\eps^I$ in the independent model.

So far influence functions of robust covariance
estimators have only been computed under the 
FDCM and for casewise robust methods, see e.g.
\cite{maronna2019robust}.
Here we will derive both the casewise and 
cellwise IFs of cellRCov.
When there are no missing values we can write 
the functional version $(\bV(H),\bmu(H))$ of 
the minimizer of \eqref{eq:objP} as
\begin{align} \label{eq:ux}
\hspace{-6mm}
  \left(\bV(H),\bmu(H)\right)&=\argmin_{\bV,\bmu}
  \E_H\left[\rho_2 \left(\frac{1}{\sigma_2(H)}
  \sqrt{\frac{1}{p}\sum_{j=1}^{p} \sigma_{1,j}^2(H)
  \rho_1\left(\frac{x_{j}-\mu_j-\bu^T\bv_j}
  {\sigma_{1,j}(H)}\right)}\right)\right]\nonumber\\
  \text{such that}\quad \bu&= \argmin_{\bu}
  \rho_2 
  \left(\frac{1}{\sigma_2(H)}\sqrt{\frac{1}{p}
  \sum_{j=1}^{p} \sigma_{1,j}^2(H)\rho_1 \left(
  \frac{x_{j}-\mu_j-\bu^T\bv_j}{\sigma_{1,j}(H)}
  \right)} \right)
\end{align}
where $\bu=\left(u_{1},\dots,u_{\rk}\right)^T$ 
and $\bx=\left(x_1,\dots,x_p\right)^T\sim H$.
Here $\sigma_{1,j}(H)$ and $\sigma_2(H)$ are 
the initial scale estimates of 
$r_{j} :=x_{j}-\mu_j- \bu^T\bv_j$ and 
$\rt :=\sqrt{\sum_{j=1}^{p} \sigma_{1,j}^2(H)
\rho_1(r_{j}/\sigma_{1,j}(H))/p}$\,. 
This is subject to the first-order conditions 
in \citep{centofanti2026robust} given by
\begin{align} 
  \label{eq:C1}
  \E_{H}\left[ \bW\bV\bu\bu^T\right]&=  
  \E_{H}\left[ \bW(\bx-\bmu) \bu^T\right],\\
\label{eq:C3}
  \E_{H}\left[\bW\bV\bu\right] &=
   \E_{H}\left[\bW 
   \left(\bx-\bmu\right)\right],\\
    \label{eq:C2} 
  \left(\bV^T\bW\bV\right)\bu &=
     \bV^T\bW(\bx-\bmu)\,.
\end{align}
Here $\bW=\diag(\bw)$ for 
    $\bw=\left(w_1,\dots,w_p\right)^T$. 
The components of $\bw$ are 
$w_j=w^{\cell}_jw^{\case}$ 
with  cellwise weights 
$w_{j}^{\cell}=\psi_1\left(\frac{r_{j}}
{\sigma_{1,j}}\right)
\Big/\frac{r_{j}}{\sigma_{1,j}}$ 
and casewise weights
$w^{\case}=\psi_2\left(\frac{\rt}{\sigma_2}
\right)\Big/\frac{\rt}{\sigma_2}$\,, where 
$\psi_1=\rho_1'$ and $\psi_2=\rho_2'$.  
We also denote 
$\btW=\diag(w^{\cell}_1,\dots,w^{\cell}_p)$.
\begin{proposition}\label{prop1}
The casewise and cellwise influence functions
of $\vect(\bV)$ and $\bmu$ are
\begin{equation}\label{eq:IFPFDCM1}
\IFu_{\case}\left(\bz,\vect(\bV),H_0\right)= -\bD_{1}\Big[\bS\IFu_{\case}\left(\bz,\bsigma,H_0\right) + 
   \bg\left(\Delta_{\bz},\vect\left(\bV_0\right),\bmu_0,\bsigma_0\right)\Big],
\end{equation}
\begin{equation}\label{eq:IFPFDCM2}
\IFu_{\case}\left(\bz,\bmu,H_0\right)= -\bD_{2}\Big[\bS\IFu_{\case}\left(\bz,\bsigma,H_0\right)+ 
   \bg\left(\Delta_{\bz},\vect\left(\bV_0\right),\bmu_0,\bsigma_0\right)\Big],
\end{equation}
and
\begin{equation}\label{eq:IFPFICM1}
\IFu_{\cell}\left(\bz,\vect(\bV),H_0\right)= -\bD_{1}\Big[\bS\IFu_{\cell}\left(\bz,\bsigma,H_0\right)+ 
   p\sum_{j=1}^p
  \bg\left(H\left(j,\bz\right),\vect\left(\bV_0\right),\bmu_0,\bsigma_0\right)\Big],
\end{equation}
\begin{equation}\label{eq:IFPFICM2}
\IFu_{\cell}\left(\bz,\bmu,H_0\right)= -\bD_{2}\Big[\bS\IFu_{\cell}\left(\bz,\bsigma,H_0\right)+ 
   p\sum_{j=1}^p
  \bg\left(H\left(j,\bz\right),\vect\left(\bV_0\right),\bmu_0,\bsigma_0\right)\Big],
\end{equation}
with $\bsigma(H)=\left(\sigma_{1,1}(H), \dots, 
\sigma_{1,p}(H), \sigma_{2}(H)\right)^T$, $\bmu_0:=\bmu(H_0)$,
$\bV_0:=\bV(H_0)$, $\bsigma_0:=\bsigma(H_0)$,
\begin{align}\label{eq:gmuL1}
  \bg_1(H,\bmu,\bV,\bsigma)&=
  \vect\left(\E_{H}\big[\bW\right(\bV\bu-\bx+\bmu\left)\bu^T\big]\right),\\\label{eq:gmuL2}
  \bg_2(H,\bmu,\bV,\bsigma)&=
 \E_{H}\big[\bW\left(\bV\bu-\bx+\bmu\right)\big],
\end{align}
and $\bg(H,\bmu,\bV,\bsigma)=\left(\bg_1(H,\bmu,\bV,
\bsigma)^T,\bg_2(H,\bmu,\bV,\bsigma)^T\right)^T$. 
The matrices $\bD_{1}$\,, $\bD_{2}$\,, and 
$\bS$ are described in the proof, and 
$\IFu_{\case}(\bz,\bsigma)$ and $\IFu_{\cell}(\bz,\bsigma)$ 
are the casewise and cellwise influence functions of 
$\bsigma$. 
\end{proposition}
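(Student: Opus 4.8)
The plan is to characterize the pair $(\bV(H),\bmu(H))$ through a system of estimating equations and then differentiate that system along the Alqallaf contamination path $\eps\mapsto H_\eps:=H(G_\eps,\bz)$ at $\eps=0$. By~\eqref{eq:gmuL1}--\eqref{eq:gmuL2}, the first-order conditions~\eqref{eq:C1} and~\eqref{eq:C3} for the functional~\eqref{eq:ux} say exactly that $\bg(H,\bmu(H),\bV(H),\bsigma(H))=\bzero$, where for each fixed $\bx$ the score $\bu=\bu(\bx;\bV,\bmu,\bsigma)$ is determined by the inner condition~\eqref{eq:C2}, and the (initial) scale functionals $\bsigma(H)$ are determined by their own M-scale equations of the form~\eqref{eq:Mscale}; the latter are treated as known, their influence functions $\IFu_{\case}(\bz,\bsigma)$ and $\IFu_{\cell}(\bz,\bsigma)$ being the standard M-scale influence functions that appear as inputs in the statement. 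So first I would insert $H_\eps$ into the identity $\bg(H,\bmu(H),\bV(H),\bsigma(H))=\bzero$, differentiate in $\eps$, and solve the resulting linear equation for $\IFu(\bz,\vect(\bV))$ and $\IFu(\bz,\bmu)$.

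Carrying out the differentiation with the chain rule produces four contributions that sum to $\bzero$: the partial derivative of $\bg$ in its distributional slot, plus the partials through $\vect(\bV)$, through $\bmu$, and through $\bsigma$, each multiplied by the corresponding influence function. Since $\bg$ is affine in its distributional argument, the first contribution is governed by $\frac{d}{d\eps}H_\eps|_{\eps=0}$. Under the FDCM, $H_\eps=(1-\eps)H_0+\eps\Delta_{\bz}$ and $\bg(H_0,\cdot)=\bzero$, so it equals $\bg(\Delta_{\bz},\vect(\bV_0),\bmu_0,\bsigma_0)$; under the FICM, expanding the mixing weights $\eps^{|I|}(1-\eps)^{p-|I|}$ over index sets $I$ shows that only the single-cell contaminations $H(j,\bz)$ survive at first order, which yields the sum over $j$ in~\eqref{eq:IFPFICM1}--\eqref{eq:IFPFICM2} with the combinatorial factor stated there. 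Collecting the Jacobian blocks $\partial_{(\vect(\bV),\bmu)}\bg$ evaluated at $H_0$ into one matrix, letting $\bD_1$ and $\bD_2$ denote the two block rows of its inverse and stacking $\partial_{\bsigma}\bg$ into $\bS$, the linear equation rearranges into precisely~\eqref{eq:IFPFDCM1}--\eqref{eq:IFPFICM2}; the explicit forms of $\bD_1$, $\bD_2$, $\bS$ are then read off from these Jacobian blocks.

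The main obstacle is the bookkeeping around the implicitly defined objects. Both the score $\bu$ and the weight matrix $\bW=\diag(\bw)$ depend on all of $(\bV,\bmu,\bsigma)$, so when differentiating $\bg$ one must propagate the chain rule through~\eqref{eq:C2} and through $\psi_1=\rho_1'$ and $\psi_2=\rho_2'$. This forces three supporting ingredients: (a)~a regularity assumption that the joint Jacobian of~\eqref{eq:C1}--\eqref{eq:C2} together with the scale equations is nonsingular at $H_0$, which is what makes $\bD_1$ and $\bD_2$ well defined; (b)~a dominated-convergence argument justifying the interchange of $\partial_\eps$ with $\E_{H_\eps}$, available because $\psi_1,\psi_2$ are bounded and $\rho_1,\rho_2$ of $\tanh$ type~\eqref{eq:rhotanh} are $C^1$ with bounded derivatives; and (c)~care with the fact that $\bV$ is only identified up to a rotation within the principal subspace, handled by parametrizing through the subspace projector, or by fixing a canonical rotation, so that $\bV(H)$ becomes a well-defined and Hadamard-differentiable functional. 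Once these are in place, the remaining steps are routine matrix calculus.
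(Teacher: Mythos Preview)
Your proposal is correct and follows essentially the same route as the paper: both characterize $(\bV(H),\bmu(H))$ implicitly through the stacked first-order conditions $\bg=\bzero$, differentiate along the contamination path $\eps\mapsto H(G_\eps,\bz)$ at $\eps=0$, and solve the resulting linear system, with $\bD_1,\bD_2$ the block rows of the inverse Jacobian $(\partial\bg/\partial\bT)^{-1}$ and $\bS=\partial\bg/\partial\bsigma$ evaluated at $(H_0,\bT_0,\bsigma_0)$. The paper makes two points more explicit than you do: it invokes the implicit function theorem (with a reflection trick extending $\bg$ to $\eps<0$) to guarantee that $\bT(\eps)$ is locally well defined and differentiable, which is precisely your regularity assumption~(a); and it handles the non-uniqueness of $\bV$ not by reparametrizing via the projector as you suggest in~(c), but simply by fixing a particular $\bV_0$ and showing the IFT yields a unique nearby $\bV(H_\eps)$ --- the projector parametrization is deferred to the consistency proofs.
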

Note that \eqref{eq:gmuL1} and \eqref{eq:gmuL2} 
express two of the first order conditions, but 
the other first-order condition 
$\left(\bV^T\bW\bV\right)\bu =
\bV^T\bW(\bx-\bmu)$ must hold as well, and acts 
as a constraint. Moreover, $\bg_1$ and $\bg_2$ 
depend on $\bsigma$ through $\bW$ and $\bu$\,. 
Also note that $H(j,\bz)$ in \eqref{eq:IFPFICM1} 
and \eqref{eq:IFPFICM2} is the distribution of 
$X \sim H_0$ but with its \mbox{$j$-th} 
component fixed at the constant $z_j$\,. It is 
thus a degenerate distribution concentrated on 
the hyperplane $x_j = z_j$\,.\\

The proof of Proposition \ref{prop1} closely 
follows that of Proposition 2 in 
\cite{centofanti2026robust}. However, we include 
it here as we do not adopt the simplifying 
assumption made in \cite{centofanti2026robust} 
that $\bmu$ is known and equal to $\bzero$.
The proof is based on the implicit function 
theorem, see e.g. Rio Branco
de Oliveira (2012):

\begin{lemma}[Implicit Function Theorem]
Let $\bu(x, \btheta)=(f_1, \ldots, f_p)$ be a function from $\mathbb{R} \times \mathbb{R}^p$ to $\mathbb{R}^p$ that is continuous in $(x_0,\widetilde{\btheta})\in \mathbb{R} \times \mathbb{R}^p$ with $\bu(x_0, \widetilde{\btheta})=\bzero$. Suppose the derivative of $\bu$ exists in a neighborhood $N$ of $(x_0,\widetilde{\btheta})$ and is continuous at $(x_0,\widetilde{\btheta})$,  and that the derivative matrix $\partial \bu/\partial \btheta$ is nonsingular at $(x_0, \widetilde{\btheta})$. Then there are neighborhoods $N_1$ of $x_0$ and $N_p$ of $\widetilde{\btheta}$ with $N_1\times N_p \subset N$, such that for every $x$ in $N_1$ there is a unique $\btheta=\bT(x)$ in $N_p$ for which $\bu(x, \bT(x))=\bzero$. In addition, $\bT$ is differentiable in $x_0$ with derivative matrix given by
\begin{equation*}
  \frac{\partial \bT(x)}{\partial x}\Big|_{x=x_0} 
  = -\Big(\frac{\partial \bu(x_0,\btheta)}
    {\partial \btheta}\Big|_{\btheta=
    \widetilde{\btheta}}\Big)^{-1}
    \frac{\partial \bu(x,\widetilde{\btheta)}}
    {\partial x}\Big|_{x=x_0}.
\end{equation*}
\end{lemma}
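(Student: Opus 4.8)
The plan is to prove this as the classical Implicit Function Theorem, reducing existence and uniqueness to the Banach fixed point theorem and then treating differentiability separately; the scalar nature of $x$ here (so that $\bT$ maps an interval into $\mathbb{R}^p$) makes the last step a little more transparent than in the fully multivariate version. First I would set $\bA := (\partial\bu/\partial\btheta)|_{(x_0,\widetilde{\btheta})}$, which is nonsingular by hypothesis, and introduce the auxiliary map $\bPhi(x,\btheta):=\btheta-\bA^{-1}\bu(x,\btheta)$. The point of this construction is that for fixed $x$ the fixed points of $\bPhi(x,\CD)$ are exactly the roots $\btheta$ of $\bu(x,\btheta)=\bzero$, so it suffices to produce a unique fixed point that depends continuously and then differentiably on $x$.

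Next I would establish the contraction property. Differentiating gives $\partial\bPhi/\partial\btheta=\bI-\bA^{-1}(\partial\bu/\partial\btheta)$, which vanishes at $(x_0,\widetilde{\btheta})$. Since $\partial\bu/\partial\btheta$ is continuous there, I can pick a closed ball $\overline{B}(\widetilde{\btheta},r)$ and an interval $N_1$ around $x_0$, both inside $N$, on which $\|\partial\bPhi/\partial\btheta\|\leqslant 1/2$; the mean value inequality then makes $\bPhi(x,\CD)$ a $\tfrac12$-contraction in $\btheta$ for every $x\in N_1$. Shrinking $N_1$ if needed, continuity of $\bu$ in $x$ together with $\bu(x_0,\widetilde{\btheta})=\bzero$ forces $\|\bPhi(x,\widetilde{\btheta})-\widetilde{\btheta}\|=\|\bA^{-1}\bu(x,\widetilde{\btheta})\|\leqslant r/2$, and the triangle inequality then shows $\bPhi(x,\CD)$ maps $\overline{B}(\widetilde{\btheta},r)$ into itself. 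The Banach fixed point theorem now yields, for each $x\in N_1$, a unique $\btheta=\bT(x)$ in the ball solving $\bu(x,\bT(x))=\bzero$; taking $N_p$ to be the open ball gives the asserted neighborhoods. Continuity of $\bT$ follows from the same contraction estimate, since $\bT(x)=\bPhi(x,\bT(x))$ gives $\|\bT(x)-\bT(x')\|\leqslant 2\|\bA^{-1}[\bu(x,\bT(x'))-\bu(x',\bT(x'))]\|\to 0$ as $x\to x'$ by continuity of $\bu$.

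It remains to establish differentiability and the derivative formula, which I expect to be the main obstacle, since the fixed point machinery delivers only existence, uniqueness, and continuity. Here I would use the full differentiability of $\bu$ at $(x_0,\widetilde{\btheta})$ together with the fact that $\partial\bu/\partial\btheta$ stays nonsingular in a neighborhood, its determinant being continuous and nonzero at the point. Expanding $\bzero=\bu(x,\bT(x))-\bu(x_0,\widetilde{\btheta})$ to first order along the curve $x\mapsto(x,\bT(x))$, and using the already-proved continuity of $\bT$ to control the remainder, I would show that the difference quotient $(\bT(x)-\widetilde{\btheta})/(x-x_0)$ converges, so $\bT$ is differentiable at $x_0$. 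Differentiating the identity $\bu(x,\bT(x))=\bzero$ by the chain rule then gives $\partial\bu/\partial x+(\partial\bu/\partial\btheta)\,\partial\bT/\partial x=\bzero$, and solving with the local inverse of $\partial\bu/\partial\btheta$ produces exactly the stated formula. The delicate point throughout is the bookkeeping of the remainder in the first-order expansion: one must verify it is $o(\|\bT(x)-\widetilde{\btheta}\|)+o(|x-x_0|)$ and then absorb it using the uniform invertibility of $\partial\bu/\partial\btheta$, rather than merely asserting the chain-rule identity formally.
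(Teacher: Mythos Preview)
Your argument is the classical Banach fixed-point proof of the Implicit Function Theorem and is correct. However, the paper does not give its own proof of this lemma: it is stated as a known result with a reference to Rio Branco de Oliveira (2012), and then used as a tool in the proof of Proposition~\ref{prop1}. There is therefore nothing in the paper to compare your proof against; your write-up would serve as a self-contained justification where the paper simply cites the literature.
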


\vspace{1mm}
\begin{proof}[\textbf{Proof of Proposition \ref{prop1}}]
The IFs of $\vect(\bV)$ and $\bmu$ at the 
distribution $H_0$ are given by 
\begin{equation*}
   \IFu(\bz,\vect(\bV),G_\eps)=\left.\frac{\partial}{\partial \eps} \vect(\bV(H(G_\eps,\bz)))\right|_{\eps=0}
\end{equation*}
and
\begin{equation*}
    \IFu(\bz,\bmu,G_\eps)=\left.\frac{\partial}{\partial \eps} \bmu(H(G_\eps,\bz))\right|_{\eps=0}.
\end{equation*}
For $\eps=0$ we obtain $\bV(H(G_0,\bz)) = \bV(H_0) = \bV_0$ 
and $\bmu(H(G_0,\bz)) = \bmu(H_0) = \bmu_0$.
Note that in general $\bV(H(G_\eps,\bz))$ and 
$\bmu(H(G_\eps,\bz))$ as well as $\bV_0$ and $\bmu_0$ are
not unique. However, we will show that, given $\bV_0$ and 
$\bmu_0$ and a fixed $\eps$, there exists unique 
$\bV(H(G_\eps,\bz))$ and $\bmu(H(G_\eps,\bz))$ in 
neighborhoods of $\bV_0$ and $\bmu_0$.

Note that $\bV(H(G_\eps,\bz))$ and 
$\bmu(H(G_\eps,\bz))$ have to satisfy the
first-order conditions \eqref{eq:C1} and 
\eqref{eq:C3} saying
\begin{align}
  \label{eq:reparamcond1}
  \E_{H}\left[ \bW\right(\bV\bu-\bx+\bmu\left)\bu^T\right]
   &=  \bzero, \\
\label{eq:reparamcond2}
  \E_{H}\left[\bW\left(\bV\bu-\bx+\bmu\right)\right] 
   &=  \bzero,
\end{align}
hence
\begin{equation}\label{eq:newg1}
  \bg_1(H(G_\eps,\bz),\bT_1(\eps),\bT_2(\eps),
  \bsigma(H(G_\eps,\bz))) =
  \vect\left(\E_{H(G_\eps,\bz)}\big[\bW\right(
  \bV\bu-\bx+\bmu\left)\bu^T\big]\right)=\bzero
\end{equation}
and
\begin{equation}\label{eq:newg2}
  \bg_2(H(G_\eps,\bz),\bT_1(\eps),\bT_2(\eps),
  \bsigma(H(G_\eps,\bz))) =
 \E_{H(G_\eps,\bz)}\big[\bW\left(
 \bV\bu-\bx+\bmu\right)\big]=\bzero
\end{equation}
where the $pk$-variate and $p$-variate column vectors $\bg_1$ and $\bg_2$ are written
as a function of the $pk$-variate and $p$-variate column vectors 
$\bT_1(\eps):=\vect(\bV(H(G_\eps,\bz)))$ and $\bT_2(\eps):=\bmu(H(G_\eps,\bz))$.
Then, $\bg_1$ and $\bg_2$ are combined in the $p(q+1)$-variate column vector $\bg$ as
\begin{equation*}
    \label{eq:newg}
  \bg(H(G_\eps,\bz),\bT(H(G_\eps,\bz)),\bsigma(H(G_\eps,\bz)))=\left[\begin{array}{cc}
       \bg_1(H(G_\eps,\bz),\bT_1(\eps),\bT_2(\eps),\bsigma(H(G_\eps,\bz)))  \\
       \bg_2(H(G_\eps,\bz),\bT_1(\eps),\bT_2(\eps),\bsigma(H(G_\eps,\bz))) 
  \end{array}\right],
  \end{equation*}
  where $\bT(H(G_\eps,\bz)):=\left(\bT_1(\eps),\bT_2(\eps)\right)^T$.

In order to compute $\vect\left(\frac{\partial}{\partial \eps} \bV(H(G_\eps,\bz))\left.\right|_{\eps=0}\right)$ and $\left.\frac{\partial}{\partial \eps} \bmu(H(G_\eps,\bz))\right|_{\eps=0}$ we would like to apply the implicit function theorem in the point $\eps = 0$, but the contaminated distribution $H(G_\eps,\bz)$ is only defined for $\eps > 0$. To circumvent this issue we extend the definition of $\bg$ to negative $\eps$ by defining a function
$\bh$ from $\mathbb{R} \times \mathbb{R}^{pk}$ to $\mathbb{R}^{pk}$ as 
\begin{equation*}
\bh(\eps, \btheta)= \begin{cases}
\bg(H(G_\eps,\bz),\btheta,\bsigma(H(G_\eps,\bz)))
& \mbox{for } \eps \geqslant 0 \\ 
2 \bg(H_0,\btheta,\bsigma(H_0))-\bg(H(G_{|\eps|},\bz),\btheta,\bsigma(H(G_{|\eps|},\bz))) & \mbox{for } \eps<0.\end{cases}
\end{equation*}
We now put $\eps_0=0$ and $\widetilde{\btheta}=\bT(0)=\left(\bT_1(0),\bT_2(0)\right)^T=\left(\vect(\bV_0),\bmu_0\right)^T$. 
Then $\bh(\eps_0,\btheta)=\bh(0,\bT(0))=\bzero$, and assuming that $\bg$ is sufficiently smooth for the differentiability requirements of the implicit function theorem, we can conclude that $\bT(H(G_\eps,\bz))$ and, thus, $\bV(H(G_\eps,\bz))$ and $\bmu(H(G_\eps,\bz))$ are uniquely defined for small $\eps$ in a neighborhood of $\bV_0$ and $\bmu_0$ and that
\begin{align} \label{eq:der}
  \frac{\partial \bT(H(G_\eps,\bz))}{\partial \eps}\Big|_{\eps=0} 
  &= -\Big(\frac{\partial} {\partial \bT} \bh(0,\bT)
   \Big|_{\bT=\bT(0)}\Big)^{-1}
    \;\frac{\partial} {\partial \eps} \bh(\eps,\bT(0))\Big|_{\eps=0}\nonumber\\
  &= -\Big(\frac{\partial} {\partial \bT} \bg(H_0,\bT,\bsigma_0)\Big|_{\bT=\bT(0)}\Big)^{-1}\\
  &\;\;\;\;\; \frac{\partial}{\partial \eps} \bg(H(G_\eps,\bz),\bT\left(0\right),
\bsigma(H(G_\eps,\bz)))\Big|_{\eps=0}\;.
\end{align}
Equivalently
\begin{gather*}
    \left(\begin{array}{cc}
         \frac{\partial} {\partial \eps} \bT_1(\eps)\Big|_{\eps=0}\\ 
          \frac{\partial} {\partial \eps} \bT_2(\eps)\Big|_{\eps=0}
    \end{array}\right)=\\
    \left[\begin{array}{cc}
     \frac{\partial} {\partial \bT_1}\bg_1(H_0,\bT_1,\bT_2(0),\bsigma_0)\Big|_{\bT_1=\bT_1(0)}  & \frac{\partial} {\partial \bT_2}\bg_1(H_0,\bT_1(0),\bT_2,\bsigma_0)\Big|_{\bT_2=\bT_2(0)}  \\
      \frac{\partial} {\partial \bT_1}\bg_2(H_0,\bT_1,\bT_2(0),\bsigma_0)\Big|_{\bT_1=\bT_1(0)}   & \frac{\partial} {\partial \bT_2}\bg_2(H_0,\bT_1(0),\bT_2,\bsigma_0)\Big|_{\bT_2=\bT_2(0)}
    \end{array}\right]^{-1}\\\left(\begin{array}{cc}
        \frac{\partial} {\partial \eps} \bg_1(H(G_\eps,\bz),\bT_1\left(0\right),\bT_2\left(0\right),
\bsigma(H(G_\eps,\bz)))\Big|_{\eps=0}\\
           \frac{\partial} {\partial \eps} \bg_2(H(G_\eps,\bz),\bT_1\left(0\right),\bT_2\left(0\right),
\bsigma(H(G_\eps,\bz)))\Big|_{\eps=0}          
    \end{array}\right).
\end{gather*}
Note that the left hand side is $\left(\vect\left(\frac{\partial}{\partial \eps} \bV(H(G_\eps,\bz))\left.\right|_{\eps=0}\right),\left.\frac{\partial}{\partial \eps} \bmu(H(G_\eps,\bz))\right|_{\eps=0}\right)^T$.
We now have to work out the right-hand side. For 
the first factor we denote the matrix
\begin{equation*}
 \begin{bmatrix}
      \bD_{1} \\
      \bD_{2}
 \end{bmatrix}:= \left(\frac{\partial} {\partial \bT} \bg(H_0,\bT,\sigma_0)\Big|_{\bT=\bT\left(0\right)}\right)^{-1}, 
\end{equation*}
which does not depend on $\bz$ and can be computed numerically. For the second factor, 
from \eqref{eq:newg1} and \eqref{eq:newg2}
we know that $\bg_1(H(G_\eps,\bz),\bT_1\left(0\right),\bT_2\left(0\right),
\bsigma(H(G_\eps,\bz)))$ and $\bg_2(H(G_\eps,\bz),\bT_1\left(0\right),\bT_2\left(0\right),
\bsigma(H(G_\eps,\bz)))$
are expectations over the mixture distribution
$H(G_\eps,\bz)$, so we can write them as linear 
combinations with coefficients $\delta_h(\eps)$. 
For the FDCM model we know that $G_\eps^D$ has 
$\delta_0(\eps)=(1-\eps),\; 
\delta_1(\eps)=\cdots=\delta_{p-1}(\eps)=0$ and 
$\delta_p(\eps)=\eps$, so $\bg_1$ can be written as
\begin{align*}
\bg_1&(H(G_\eps^D,\bz),\vect(\bV_0),\bmu_0,
\bsigma(H(G_\eps^D,\bz)))\\
&= \delta_0(\eps) \bg_1(H_0,\vect(\bV_0),\bmu_0,
\bsigma(H(G_\eps^D,\bz)))\\
&\;\;\; + 
\delta_p(\eps) \bg_1(H(\{1, \dots, p\},\bz),\vect(\bV_0),\bmu_0,
\bsigma(H(G_\eps^D,\bz)))\\
&= (1-\eps)\bg_1(H_0,\vect(\bV_0),\bmu_0,\bsigma(H(G_\eps^D,\bz))) + 
\eps\bg_1(\Delta_{\bz},\vect(\bV_0),\bmu_0,\bsigma(H(G_\eps^D,\bz))),
\end{align*}
which yields the derivative
\begin{align}
\frac{\partial}{\partial \eps}& \bg_1(H(G_\eps^D,\bz),\bT_1\left(0\right),\bT_2\left(0\right),
\bsigma(H(G_\eps^D,\bz)))\Big|_{\eps=0}  \nonumber\\
  &=-\bg_1(H_0,\vect(\bV_0),\bmu_0,\bsigma_0) 
  + \frac{\partial }{\partial \eps}\bg_1(H_0,
  \vect(\bV_0),\bmu_0,\bsigma(H(G_\eps^D,\bz)))
  \Big|_{\eps=0} \nonumber\\ 
  &\;\;\;\;\; + \bg_1(\Delta_{\bz},\vect(\bV_0),\bmu_0,\bsigma_0) \nonumber\\
  &= \bzero + \frac{\partial \bg_1}{\partial \bsigma} 
  (H_0, \vect(\bV_0),\bmu_0, \bsigma)\Big|_{\bsigma=\bsigma_0}
   \frac{\partial \bsigma(H(G_\eps^D,\bz))}{\partial \eps} 
   \Big|_{\eps=0} + 
   \bg_1(\Delta_{\bz},\vect(\bV_0),\bmu_0,
   \bsigma_0) \nonumber\\
   \label{eq:IFcaseV1}
  &= \bS_1\IFu_{\case}(\bz,\bsigma)+
  \bg_1(\Delta_{\bz},\vect(\bV_0),\bmu_0,\bsigma_0)
\end{align}
where $\bS_1:=\left.\frac{\partial}{\partial \bsigma} 
\bg_1(H_0, \vect(\bV_0),\bmu_0, \bsigma)\right|_{\bsigma=\bsigma_0}$ 
and  $\IFu_{\case}(\bz,\bsigma)$ is the influence
function of $\bsigma$ under FDCM. 
Analogously 
\begin{align}
\frac{\partial}{\partial \eps} \bg_2(H(G_\eps^D,\bz),\bT_1\left(0\right)&,\bT_2\left(0\right),
\bsigma(H(G_\eps^D,\bz)))\Big|_{\eps=0}  \nonumber\\
   \label{eq:IFcaseV2}
  &=\bS_2\IFu_{\case}(\bz,\bsigma)+
  \bg_2(\Delta_{\bz},\vect(\bV_0),\bmu_0,\bsigma_0),
\end{align}
where $\bS_2:=\left.\frac{\partial}{\partial \bsigma} 
\bg_2(H_0, \vect(\bV_0),\bmu_0, \bsigma)\right|_{\bsigma=\bsigma_0}$. 
Under the FICM model, the second factor is different. 
We have $\delta_0(\eps)=(1-\eps)^p$, $\delta_0(0)=1$, 
$\delta_1(\eps) = p (1-\eps)^{p-1}\eps$ so $\delta_1(0)=0$ 
and $\delta_1^{\prime}(0)=p$, and $\delta_i(0)=
\delta_i^{\prime}(0)=0$ for $i \geqslant 2$. Therefore 
$\bg_1$ can be written as the sum
\begin{align*}
\bg_1&(H(G_\eps^I,\bz),\vect(\bV_0),\bmu_0,
\bsigma(H(G_\eps^I,\bz)))\\
&= \delta_0(\eps) \bg_1(H_0,\vect(\bV_0),\bmu_0,
\bsigma(H(G_\eps^I,\bz)))
+ \delta_1(\eps) \sum_{j=1}^p 
\bg_1(H(j,\bz),\vect(\bV_0),\bmu_0,
\bsigma(H(G_\eps^I,\bz)))\\
&= (1-\eps)^p\bg_1(H_0,\vect(\bV_0),
   \bmu_0,\bsigma(H(G_\eps^I,\bz)))\\
&\;\;\;\;\; +p (1-\eps)^{p-1}\eps\sum_{j=1}^p 
    \bg_1(H(j,\bz),\vect(\bV_0),\bmu_0,
     \bsigma(H(G_\eps^I,\bz)))
\end{align*}
where $H(j,\bz)$ is the distribution of 
$X \sim H_0$ but with its $j$-th component fixed 
at the constant $z_j$\,. It is thus a degenerate 
distribution concentrated on the hyperplane 
$x_j = z_j$\,. 
The derivative now becomes
\begin{align}
\frac{\partial}{\partial \eps} \bg_1(H(G_\eps^I,\bz),\bT_1\left(0\right)&,\bT_2\left(0\right),
\bsigma(H(G_\eps^I,\bz)))\Big|_{\eps=0}  \nonumber\\
   \label{eq:IFcellV1}
  &= \bS_1\IFu_{\cell}(\bz,\bsigma)+p\sum_{j=1}^p 
  \bg_1(H\left(j,\bz\right),\vect(\bV_0),\bmu_0,\bsigma_0),
\end{align}
where $\bS$ is the same as before but 
$\IFu_{\cell}(\bz,\bsigma)$ is now the cellwise 
influence function of $\bsigma$. Similarly,
\begin{align}
\frac{\partial}{\partial \eps} \bg_2(H(G_\eps^I,\bz),\bT_1\left(0\right)&,\bT_2\left(0\right),
\bsigma(H(G_\eps^I,\bz)))\Big|_{\eps=0}  \nonumber\\
   \label{eq:IFcellV2}
  &= \bS_2\IFu_{\cell}(\bz,\bsigma)+p\sum_{j=1}^p 
  \bg_2(H\left(j,\bz\right),\vect(\bV_0),\bmu_0,\bsigma_0).
\end{align}
Let us define $\bS=\begin{bmatrix}
      \bS_{1}^T &
      \bS_{2}^T
 \end{bmatrix}^T$.
Combining \eqref{eq:IFcaseV1}, \eqref{eq:IFcaseV2}, 
\eqref{eq:IFcellV1} and \eqref{eq:IFcellV2} in 
\eqref{eq:der} yields \eqref{eq:IFPFDCM1}, 
\eqref{eq:IFPFDCM2}, \eqref{eq:IFPFICM1}, and 
\eqref{eq:IFPFICM2}.
\end{proof}

To derive the IF of cellRCov, another important 
piece is the IF of the functional  
$\bSigma^{\bu}_{\MCD}(H)$  corresponding to the
MCD estimator of covariance $\bSigma_{\MCD}$ 
applied to $\bhU$ under both FDCM and FICM.  
That is, $\bSigma^{\bu}_{\MCD}(H)$ corresponds 
to the MCD functional $\bSigma_{\MCD}(\cdot)$ 
of scatter with parameter $0.5<\alpha<1$, applied 
to the distribution $H^{\bu}(H, \bT(H))$ of 
$\bu$ when $\bx$ is distributed as $H$, for 
$\bT(H):=\left(\vect(\bV(H)), \bmu(H),
\bsigma(H)\right)^T$. We also define 
$\bmu^{\bu}_{\MCD}(H)$ as the MCD functional 
$\bmu_{\MCD}(\cdot)$ of location with parameter 
$\alpha$ applied to $H^{\bu}(H, \bT(H))$.

\begin{proposition}\label{prop2}
The casewise and cellwise influence functions 
of $\vect(\bSigma^{\bu}_{\MCD})$ are
\begin{align*}
\IFu_{\case}(\bz,\vect(\bSigma^{\bu}_{\MCD}),H_0)= 
  & -\bD^{\bu}_{1}\Big[\bS^{\bu}  
  \IFu_{\case}\left(\bz,\bsigma,H_0\right) 
  + \bB^{\bu}_{1} \IFu_{\case}
    \left(\bz,\vect(\bV),H_0\right)\\
 &\;\; + \bB^{\bu}_{2} \IFu_{\case}\left(
   \bz,\bmu,H_0\right)+\bg^{\bu}(\Delta_{\bz}, 
   \bT_0,\bSigma^{\bu}_{\MCD,0},
   \bmu^{\bu}_{\MCD,0},q_{\alpha,0})\Big]
\end{align*}
and
\begin{align*}
  \IFu_{\cell}(\bz,\vect(\bSigma^{\bu}_{\MCD}),H_0)= 
  & -\bD^{\bu}_{1}\Big[\bS^{\bu} 
    \IFu_{\cell}\left(\bz,\bsigma,H_0\right) +
    \bB^{\bu}_{1} \IFu_{\cell}
    \left(\bz,\vect(\bV),H_0\right)\\
  &\;\; +
    \bB^{\bu}_{2} \IFu_{\cell}\left(\bz,\bmu,H_0\right)\\
  &\;\; +
    p\sum_{j=1}^p\bg^{\bu}(H(j,\bz), 
    \bT_0,\bSigma^{\bu}_{\MCD,0},
    \bmu^{\bu}_{\MCD,0},q_{\alpha,0}) \Big]\,,
\end{align*}
with $\bsigma(H)=\left(\sigma_{1,1}(H), \dots, 
\sigma_{1,p}(H), \sigma_{2}(H)\right)^T$, 
$\bmu^{\bu}_{\MCD,0}:=\bmu^{\bu}_{\MCD}(H_0)$, 
$\bSigma^{\bu}_{\MCD,0}:=\bSigma^{\bu}_{\MCD}(H_0)$,
$\bT_0:=\left(\vect(\bV_0), \bmu_0,\bsigma_0\right)^T$ and
\begin{equation}
    \label{eq:newgf2}
  \bg^{\bu}(H,\bT,\bSigma^{\bu}_{\MCD},\bmu^{\bu}_{\MCD},q_{\alpha})=\left[\begin{array}{cc}
       \bg^{\bu}_1(H,\bT,\bSigma^{\bu}_{\MCD},\bmu^{\bu}_{\MCD},q_{\alpha}) \\
       \bg^{\bu}_2(H,\bT,\bSigma^{\bu}_{\MCD},\bmu^{\bu}_{\MCD},q_{\alpha}) \\
       g^{\bu}_3(H,\bT,\bSigma^{\bu}_{\MCD},\bmu^{\bu}_{\MCD},q_{\alpha})
  \end{array}\right].
\end{equation}
The functions $\bg^{\bu}_1$, $\bg^{\bu}_2$, and 
$g^{\bu}_3$ are defined as
\begin{align*}\label{eq:newgf1}
    \bg^{\bu}_1(H,\bT,\bSigma^{\bu}_{\MCD},\bmu^{\bu}_{\MCD},q_{\alpha})
  &=\vect\left(\E_{H^{\bu}(H, \bT)}\right.\Big[\ind\left(
    {\bx \in A(\bSigma^{\bu}_{\MCD},\bmu^{\bu}_{\MCD},q_{\alpha})}\right)\\
  &\;\;\;\;\; \left.\left(c_{\alpha}\left(\bx-\bmu^{\bu}_{\MCD}\right)\left(\bx-\bmu^{\bu}_{\MCD}
  \right)^T-\bSigma^{\bu}_{\MCD}\right)\Big]\right)\,,
\end{align*}
\begin{equation*}\label{eq:newgf2.b}
    \bg^{\bu}_2(H,\bT,\bSigma^{\bu}_{\MCD},\bmu^{\bu}_{\MCD},q_{\alpha})\nonumber=\E_{H^{\bu}(H, \bT)}\Big[\ind\left(
    {\bx \in A(\bSigma^{\bu}_{\MCD},\bmu^{\bu}_{\MCD},q_{\alpha})}\right)\left(\bx-\bmu^{\bu}_{\MCD}\right)\Big],
\end{equation*}
and
\begin{equation*}\label{eq:newgf2.c}
    g^{\bu}_3(H,\bT,\bSigma^{\bu}_{\MCD},\bmu^{\bu}_{\MCD},q_{\alpha})=\E_{H^{\bu}(H, \bT)}\Big[\ind\left(
    {\bx \in A(\bSigma^{\bu}_{\MCD},\bmu^{\bu}_{\MCD},q_{\alpha})}\right)-(1-\alpha)\Big]
\end{equation*}
where 
\begin{align*}
   A(\bSigma^{\bu}_{\MCD},\bmu^{\bu}_{\MCD},q_{\alpha})=\lbrace \bx \in \mathbb{R}^k:\left(\bx-\bmu^{\bu}_{\MCD}\right)^T\left(\bSigma^{\bu}_{\MCD}\right)^{-1}\left(\bx-\bmu^{\bu}_{\MCD}\right)\leqslant q_{\alpha}\rbrace.
\end{align*}
The functional $q_{\alpha}(H)$  satisfies
\begin{equation*}
    \int\ind\left(
    {\bx \in A( \bSigma^{\bu}_{\MCD}(H),\bmu^{\bu}_{\MCD}(H),q_{\alpha}(H))}\right)dH^{\bu}(H, \bT)(\bx)=1-\alpha,
\end{equation*}
in which $q_{\alpha,0}:=q_{\alpha}(H_0)$ and
$c_{\alpha}$ is chosen in such a way that 
consistency is obtained at a prespecified model. 
The matrices $\bB^{\bu}_{1}$, 
$\bB^{\bu}_{2}$, and $\bD^{\bu}_{1}$ are 
defined in the proof, and $H(j,\bz)$ is the distribution of 
$X \sim H_0$ but with its $j$-th component fixed 
at the constant $z_j$\,.
\end{proposition}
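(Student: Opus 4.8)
The plan is to repeat the implicit-function-theorem argument of Proposition~\ref{prop1}, now applied to the stationarity equations of the MCD functional rather than those of~\eqref{eq:objP}. First I would record that $\big(\vect(\bSigma^{\bu}_{\MCD}(H)),\bmu^{\bu}_{\MCD}(H),q_{\alpha}(H)\big)$ is characterised by $\bg^{\bu}\big(H,\bT(H),\bSigma^{\bu}_{\MCD}(H),\bmu^{\bu}_{\MCD}(H),q_{\alpha}(H)\big)=\bzero$, where $\bg^{\bu}=(\bg^{\bu}_1,\bg^{\bu}_2,g^{\bu}_3)^T$ collects the three classical MCD first-order conditions for scatter, location and trimming mass, all expectations being over the induced distribution $H^{\bu}(H,\bT(H))$ of the scores $\bu$. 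The essential new feature relative to Proposition~\ref{prop1} is that $H^{\bu}$ depends on the contaminated $H$ through two channels: \emph{directly}, because each score is $\bu=\bu(\bx,\bT)$ built from a draw $\bx\sim H$ via the constraint $(\bV^T\bW\bV)\bu=\bV^T\bW(\bx-\bmu)$, and \emph{indirectly}, because the plug-in parameters $\bT(H)=(\vect(\bV(H)),\bmu(H),\bsigma(H))$ are themselves functionals of $H$ whose influence functions are supplied by Proposition~\ref{prop1} together with the standard influence functions of the M-scales $\bsigma$.

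As in the proof of Proposition~\ref{prop1}, I would extend $\bg^{\bu}$ to negative $\eps$ by the same reflection device, so that the implicit function theorem applies at $\eps=0$ and yields
\begin{equation*}
  \frac{\partial}{\partial\eps}\,\vect\!\big(\bSigma^{\bu}_{\MCD}(H(G_\eps,\bz))\big)\Big|_{\eps=0}
  = -\bD^{\bu}_1\,\frac{\partial}{\partial\eps}\,
  \bg^{\bu}\big(H(G_\eps,\bz),\bT(H(G_\eps,\bz)),\bSigma^{\bu}_{\MCD,0},\bmu^{\bu}_{\MCD,0},q_{\alpha,0}\big)\Big|_{\eps=0},
\end{equation*}
where $\bD^{\bu}_1$ is the block acting on the scatter coordinates in the inverse Jacobian $\big(\partial\bg^{\bu}/\partial(\vect(\bSigma^{\bu}_{\MCD}),\bmu^{\bu}_{\MCD},q_{\alpha})\big)^{-1}$ evaluated at $H_0$ and the population MCD quantities; the rows corresponding to $\bmu^{\bu}_{\MCD}$ and $q_{\alpha}$ are not needed for the statement. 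Justifying the differentiability demanded by the implicit function theorem is the step I expect to be the main obstacle, since the integrands of $\bg^{\bu}$ contain the nonsmooth indicator $\ind\big(\bx\in A(\bSigma^{\bu}_{\MCD},\bmu^{\bu}_{\MCD},q_{\alpha})\big)$. Here I would invoke regularity conditions in the spirit of \cite{butler1993asymptotics} and \cite{cator2012central}: the induced distribution $H_0^{\bu}$ has a density that is continuous and positive in a neighbourhood of the boundary ellipsoid, so differentiation under the integral sign is legitimate and the boundary contributes a smooth term; this also needs the density of $\bu$ at $H_0$ to inherit enough smoothness from that of $\bx$ through the map $\bx\mapsto\bu(\bx,\bT_0)$.

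Finally I would evaluate $\partial\bg^{\bu}/\partial\eps|_{\eps=0}$ by the chain rule, which splits into the four contributions that match the four terms in the statement. Differentiating through the explicit $\bsigma$-dependence of $\bg^{\bu}$ (which enters via $\bW$ and $\bu$) gives $\bS^{\bu}:=\partial\bg^{\bu}/\partial\bsigma$ at $H_0$, times $\IFu_{\case}(\bz,\bsigma,H_0)$ or $\IFu_{\cell}(\bz,\bsigma,H_0)$; differentiating through the $\bV$- and $\bmu$-dependence of the scores gives $\bB^{\bu}_1:=\partial\bg^{\bu}/\partial\vect(\bV)$ and $\bB^{\bu}_2:=\partial\bg^{\bu}/\partial\bmu$, times the influence functions of $\vect(\bV)$ and $\bmu$ from Proposition~\ref{prop1}; and differentiating the explicit dependence of the expectation on the contaminating measure reproduces, exactly as in Proposition~\ref{prop1}, the term $\bg^{\bu}(\Delta_{\bz},\bT_0,\bSigma^{\bu}_{\MCD,0},\bmu^{\bu}_{\MCD,0},q_{\alpha,0})$ in the casewise case (because $\delta_0(\eps)=1-\eps$, $\delta_p(\eps)=\eps$ under the FDCM) and the sum $p\sum_{j=1}^p\bg^{\bu}(H(j,\bz),\bT_0,\bSigma^{\bu}_{\MCD,0},\bmu^{\bu}_{\MCD,0},q_{\alpha,0})$ in the cellwise case (because $\delta_0'(0)=-p$, $\delta_1'(0)=p$ under the FICM). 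Collecting the four pieces and premultiplying by $-\bD^{\bu}_1$ gives the two displayed formulas; the only genuinely new bookkeeping is propagating perturbations of $(\bV,\bmu,\bsigma)$ through the map $\bx\mapsto\bu(\bx,\bV,\bmu,\bsigma)$, which is a routine but lengthy use of the constraint $(\bV^T\bW\bV)\bu=\bV^T\bW(\bx-\bmu)$.
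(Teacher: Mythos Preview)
Your proposal is correct and follows essentially the same route as the paper: set up the MCD stationarity equations $\bg^{\bu}=\bzero$, differentiate at $\eps=0$ separating the dependence through $\bT(H)$ from the direct dependence through the contaminating mixture, and solve the resulting linear system for the scatter block via the inverse Jacobian $\bD^{\bu}_1$. The paper does not explicitly invoke the reflection-to-negative-$\eps$ device here (it simply differentiates the identity $\bg^{\bu}=\bzero$ and assumes the relevant derivatives and matrices exist, listing these as assumptions at the end), but the computation and the definitions of $\bS^{\bu}$, $\bB^{\bu}_1$, $\bB^{\bu}_2$, $\bD^{\bu}_1$ are exactly the ones you describe.
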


\begin{proof}[\textbf{Proof of Proposition \ref{prop2}}]
Consider the distribution $H(G_\eps,\bz)$. The IFs of 
$\vect(\bSigma^{\bu}_{\MCD})$ and $\bmu^{\bu}_{\MCD}$ 
are given by 
\begin{equation*}
  \IFu(\bz,\vect(\bSigma^{\bu}_{\MCD}),G_\eps)=
  \left.\frac{\partial}{\partial \eps}
  \vect(\bSigma^{\bu}_{\MCD}(H(G_\eps,\bz)))
  \right|_{\eps=0}
\end{equation*}
and
\begin{equation*}
  \IFu(\bz,\bmu^{\bu}_{MCD},G_\eps)=
  \left.\frac{\partial}{\partial \eps} 
  \bmu^{\bu}_{\MCD}(H(G_\eps,\bz))\right|_{\eps=0}.
\end{equation*}
For $\eps=0$ we obtain $\bSigma^{\bu}_{\MCD}(H(G_0,\bz)) = \bSigma^{\bu}_{\MCD}(H_0) = \bSigma^{\bu}_{\MCD,0}$ and $\bmu^{\bu}_{\MCD}(H(G_0,\bz)) = \bmu^{\bu}_{\MCD}(H_0) = \bmu^{\bu}_{\MCD,0}$.
  
The functionals $\bmu^{\bu}_{\MCD}(H)$ and $\bSigma^{\bu}_{\MCD}(H)$ with parameter $0.5<\alpha<1$ correspond to the MCD functionals 
$\bmu_{\MCD}(\cdot)$ and $\bSigma_{\MCD}(\cdot)$ 
of location and scatter applied to the the distribution $H^{\bu}(H, \bT(H))$ of $\bu$ when $\bx$ is distributed as $H$, for $\bT(H):=\left(\vect(\bV(H)), \bmu(H),\bsigma(H)\right)^T$. That is, $\bmu^{\bu}_{\MCD}(H)=\bmu_{\MCD}(H^{\bu}(H, \bT(H)))$ and $\bSigma^{\bu}_{\MCD}(H)=\bSigma_{\MCD}(H^{\bu}(H, \bT(H)))$.
Therefore $\bmu^{\bu}_{\MCD}(H)$ and $\bSigma^{\bu}_{\MCD}(H)$
are given by
\begin{align}\label{eq:funcfsigma}
    \bSigma^{\bu}_{\MCD}(H)=\frac{c_{\alpha}}{1-\alpha}\int&\ind\left(
    {\bx \in A(\bSigma^{\bu}_{\MCD}(H),\bmu^{\bu}_{\MCD}(H),q_{\alpha}(H))}\right)&\nonumber\\&\left(\bx-\bmu^{\bu}_{\MCD}(H)\right)\left(\bx-\bmu^{\bu}_{\MCD}(H)\right)^TdH^{\bu}(H, \bT(H))(\bx),&
\end{align}
\begin{align}\label{eq:funcfmu}
    \bmu^{\bu}_{\MCD}(H)=\frac{1}{1-\alpha}\int \ind\left(
    {\bx \in A(\bSigma^{\bu}_{\MCD}(H),\bmu^{\bu}_{\MCD}(H),q_{\alpha}(H))}\right)\bx dH^{\bu}(H, \bT(H))(\bx),&
\end{align}
with $q_{\alpha}(H)$ that satisfies
\begin{equation*}
    \int\ind\left(
    {\bx \in A( \bSigma^{\bu}_{\MCD}(H),\bmu^{\bu}_{\MCD}(H),q_{\alpha}(H))}\right)dH^{\bu}(H, \bT)(\bx)=1-\alpha,
\end{equation*}
where 
\begin{align*}
   A(\bSigma^{\bu}_{\MCD},\bmu^{\bu}_{\MCD},q_{\alpha})=\lbrace \bx \in \mathbb{R}^k:\left(\bx-\bmu^{\bu}_{\MCD}\right)^T\left(\bSigma^{\bu}_{\MCD}\right)^{-1}\left(\bx-\bmu^{\bu}_{\MCD}\right)\leqslant q_{\alpha}\rbrace,
\end{align*}
and $c_{\alpha}$ is chosen in such a way that 
consistency is obtained at a prespecified model. In 
order to use these expressions of the MCD functionals 
we have assumed that $H^{\bu}(H, \bT(H))$ has 
no mass on the boundary of $A$ 
\citep{cator2012central}. Then
\begin{align}\label{eq:newgf1}
    \bg^{\bu}_1&(H,\bT,\bSigma^{\bu}_{\MCD},\bmu^{\bu}_{\MCD},q_{\alpha})&\nonumber\\
  &=\vect\left(\E_{H^{\bu}(H, \bT)}\right.\Big[\ind\left(
    {\bx \in A(\bSigma^{\bu}_{\MCD},\bmu^{\bu}_{\MCD},q_{\alpha})}\right)\nonumber\\&\hspace{4cm}\left.\left(c_{\alpha}\left(\bx-\bmu^{\bu}_{\MCD}\right)\left(\bx-\bmu^{\bu}_{\MCD}\right)^T-\bSigma^{\bu}_{\MCD}\right)\Big]\right),&
    \end{align}
    \begin{align}\label{eq:newgf2.d}
    \bg^{\bu}_2&(H,\bT,\bSigma^{\bu}_{\MCD},\bmu^{\bu}_{\MCD},q_{\alpha})&\nonumber\\&=\E_{H^{\bu}(H, \bT)}\Big[\ind\left(
    {\bx \in A(\bSigma^{\bu}_{\MCD},\bmu^{\bu}_{\MCD},q_{\alpha})}\right)\left(\bx-\bmu^{\bu}_{\MCD}\right)\Big],&
\end{align}
and
\begin{align}\label{eq:newgf2.e}
    g^{\bu}_3&(H,\bT,\bSigma^{\bu}_{\MCD},\bmu^{\bu}_{\MCD},q_{\alpha})&\nonumber\\&=\E_{H^{\bu}(H, \bT)}\Big[\ind\left(
    {\bx \in A(\bSigma^{\bu}_{\MCD},\bmu^{\bu}_{\MCD},q_{\alpha})}\right)-(1-\alpha)\Big].&
\end{align}
We combine $\bg^{\bu}_1$, $\bg^{\bu}_2$ and $g^{\bu}_3$ 
in the $(k(k+1)+1)$-variate column vector $\bg^{\bu}$ as
\begin{equation}\label{eq:newgf2.f}
  \bg^{\bu}(H,\bT,\bSigma^{\bu}_{\MCD},\bmu^{\bu}_{\MCD},q_{\alpha})=\left[\begin{array}{cc}
       \bg^{\bu}_1(H,\bT,\bSigma^{\bu}_{\MCD},\bmu^{\bu}_{\MCD},q_{\alpha}) \\
       \bg^{\bu}_2(H,\bT,\bSigma^{\bu}_{\MCD},\bmu^{\bu}_{\MCD},q_{\alpha}) \\
       g^{\bu}_3(H,\bT,\bSigma^{\bu}_{\MCD},\bmu^{\bu}_{\MCD},q_{\alpha})
  \end{array}\right]=\bzero.
\end{equation}
  
The IFs of $\vect(\bSigma^{\bu}_{\MCD})$ and $\bmu^{\bu}_{\MCD}$ under the FDCM  and FICM are obtained by differentiating \eqref{eq:newgf2} in $\eps=0$  when $H$ coincides with $H(G_\eps^D,\bz)$ and $H(G_\eps^I,\bz)$, respectively, and $\bT=\bT(H)$.
Note that from \eqref{eq:cont_cell_z}, $H(G_\eps,\bz)$ propagates in the distribution $H^{\bu}(H(G_\eps,\bz), \bT(H(G_\eps,\bz)))$ as
\begin{equation}
\label{eq:contmod2f}
\delta_0(\eps)H^{\bu}(H_0, \bT(H(G_\eps,\bz)))+\sum_{j=1}^p\delta_j(\eps)\sum_{I\in\ell_j}H^{\bu}(H(I,\bz),\bT(H(G_\eps,\bz))),
\end{equation}
where $\ell_j=\lbrace I=\lbrace i_1, \dots, i_j\rbrace: i_1<\cdots<i_j\rbrace$.
Consider the distribution $H(G_\eps^D,\bz)$. Then we
know from \eqref{eq:newg1} and \eqref{eq:newg2} that 
$$\bg^{\bu}(H(G_\eps,\bz), \bT(H(G_\eps,\bz)),
\bSigma^{\bu}_{\MCD}(H(G_\eps,\bz)),\bmu^{\bu}_{\MCD}(H(G_\eps,\bz)),q_{\alpha}(H(G_\eps,\bz)))$$ 
is an expectation over the mixture distribution
$H^{\bu}(H(G_\eps,\bz), \bT(H(G_\eps,\bz)))$, so we 
can write it as a linear combination with coefficients 
$\delta_h(\eps)$ for $h=0,1,\ldots,p$\,.

For the FDCM model we know that $G_\eps^D$ has 
$\delta_0(\eps)=(1-\eps),\; 
\delta_1(\eps)=\cdots=\delta_{p-1}(\eps)=0$ and 
$\delta_p(\eps)=\eps$, so $\bg^{\bu}$ can be written as
\begin{align*}
\bg^{\bu}&(H^D_{\eps,\bz}, \bT(H^D_{\eps,\bz}),\bSigma^{\bu}_{\MCD}(H^D_{\eps,\bz}),\bmu^{\bu}_{\MCD}(H^D_{\eps,\bz}),q_{\alpha}(H^D_{\eps,\bz}))&\\
&= \delta_0(\eps) \bg^{\bu}(H_0, \bT(H^D_{\eps,\bz}),\bSigma^{\bu}_{\MCD}(H^D_{\eps,\bz}),\bmu^{\bu}_{\MCD}(H^D_{\eps,\bz}),q_{\alpha}(H^D_{\eps,\bz}))\\ 
&\;\;\;\;\; + 
\delta_p(\eps) \bg^{\bu}(H(\{1, \dots, p\},\bz), \bT(H^D_{\eps,\bz}),\bSigma^{\bu}_{\MCD}(H^D_{\eps,\bz}),\bmu^{\bu}_{\MCD}(H^D_{\eps,\bz}),q_{\alpha}(H^D_{\eps,\bz}))\\
&= (1-\eps) \bg^{\bu}(H_0, \bT(H^D_{\eps,\bz}),\bSigma^{\bu}_{\MCD}(H^D_{\eps,\bz}),\bmu^{\bu}_{\MCD}(H^D_{\eps,\bz}),q_{\alpha}(H^D_{\eps,\bz}))\\ 
&\;\;\;\;\; + 
\eps \bg^{\bu}(\Delta^{\bu}_{\bz}, \bT(H^D_{\eps,\bz}),\bSigma^{\bu}_{\MCD}(H^D_{\eps,\bz}),\bmu^{\bu}_{\MCD}(H^D_{\eps,\bz}),q_{\alpha}(H^D_{\eps,\bz}))
\end{align*}
where $H^D_{\eps,\bz}:=H(G_\eps^D,\bz)$, which yields the derivative
\begin{align}\label{eq:derFDCMf}
\frac{\partial}{\partial \eps}& \bg^{\bu}(H^D_{\eps,\bz}, \bT(H^D_{\eps,\bz}),\bSigma^{\bu}_{\MCD}(H^D_{\eps,\bz}),\bmu^{\bu}_{\MCD}(H^D_{\eps,\bz}),q_{\alpha}(H^D_{\eps,\bz}))\Big|_{\eps=0}  \nonumber\\
  &=-\bg^{\bu}(H_0, \bT_0,\bSigma^{\bu}_{\MCD,0},\bmu^{\bu}_{\MCD,0},q_{\alpha,0})\nonumber\\
  &\;\;\;\;\; + \frac{\partial }{\partial \eps}\bg^{\bu}(H_0, \bT(H^D_{\eps,\bz}),\bSigma^{\bu}_{\MCD}(H^D_{\eps,\bz}),\bmu^{\bu}_{\MCD}(H^D_{\eps,\bz}),q_{\alpha}(H^D_{\eps,\bz}))\Big|_{\eps=0} \nonumber\\
  &\;\;\;\;\; + \bg^{\bu}(\Delta^{\bu}_{\bz}, \bT_0,\bSigma^{\bu}_{\MCD,0},\bmu^{\bu}_{\MCD,0},q_{\alpha,0}) \nonumber\\
  &= \bzero + \frac{\partial }{\partial \eps}\bg^{\bu}(H_0, \bT(H^D_{\eps,\bz}),\bSigma^{\bu}_{\MCD,0},\bmu^{\bu}_{\MCD,0},q_{\alpha,0})\Big|_{\eps=0}\nonumber \\
  &\;\;\;\;\; +\frac{\partial }{\partial \eps}\bg^{\bu}(H_0, \bT_0,\bSigma^{\bu}_{\MCD}(H^D_{\eps,\bz}),\bmu^{\bu}_{\MCD}(H^D_{\eps,\bz}),q_{\alpha}(H^D_{\eps,\bz}))\Big|_{\eps=0}\nonumber\\
  &\;\;\;\;\; + \bg^{\bu}(\Delta^{\bu}_{\bz}, \bT_0,\bSigma^{\bu}_{\MCD,0},\bmu^{\bu}_{\MCD,0},q_{\alpha,0}). 
\end{align}
where $\bT_0=\bT(0):=\left(\vect(\bV(H_0)), \bmu(H_0),\bsigma(H_0)\right)^T$ and $q_{\alpha,0}=q_{\alpha}(H_0)$.
Note that 
\begin{align*}
  &\frac{\partial }{\partial \eps}\bg^{\bu}(H_0, \bT_0,\bSigma^{\bu}_{\MCD}(H^D_{\eps,\bz}),\bmu^{\bu}_{\MCD}(H^D_{\eps,\bz}),q_{\alpha}(H^D_{\eps,\bz}))\Big|_{\eps=0}\\
 &=\left(\begin{array}{ccc}
     \bC_{11}  &  \bC_{12}&  \bc_{13}\\
     \bC_{21}& \bC_{22}& \bc_{23}\\
     \bc_{31}^T& \bc_{32}^T& c_{33}
    \end{array}\right)  
    \left(\begin{array}{c}
    \frac{\partial} {\partial \eps}\vect \left(\bSigma^{\bu}_{\MCD}(H^D_{\eps,\bz})\right)\Big|_{\eps=0}\\ 
    \frac{\partial} {\partial \eps} \bmu^{\bu}_{\MCD}(H^D_{\eps,\bz})\Big|_{\eps=0}\\
    \frac{\partial} {\partial \eps}q_{\alpha}(H^D_{\eps,\bz})\Big|_{\eps=0}
    \end{array}\right),
  \end{align*}
where
\begin{gather*}
    \bC_{11}:=\frac{\partial} {\partial \vect\left(\bSigma\right)}\bg_1^{\bu}(H_0, \bT_0,\bSigma,\bmu^{\bu}_{\MCD,0},q_{\alpha,0})\Big|_{\bSigma=\bSigma^{\bu}_{\MCD,0}},\\
    \bC_{12}:=\frac{\partial} {\partial \bmu}\bg_1^{\bu}(H_0, \bT_0,\bSigma^{\bu}_{\MCD,0},\bmu,q_{\alpha,0})\Big|_{\bmu=\bmu^{\bu}_{\MCD,0}},\\
    \bC_{21}:=\frac{\partial} {\partial \vect\left(\bSigma\right)}\bg_2^{\bu}(H_0, \bT_0,\bSigma,\bmu^{\bu}_{\MCD,0},q_{\alpha,0})\Big|_{\bSigma=\bSigma^{\bu}_{\MCD,0}},\\
    \bC_{22}:=\frac{\partial} {\partial \bmu}\bg_2^{\bu}(H_0, \bT_0,\bSigma^{\bu}_{\MCD,0},\bmu,q_{\alpha,0})\Big|_{\bmu=\bmu^{\bu}_{\MCD,0}},\\
    \bc_{13}:=\frac{\partial} {\partial q_{\alpha}}\bg_1^{\bu}(H_0, \bT_0,\bSigma^{\bu}_{\MCD,0},\bmu^{\bu}_{\MCD,0},q_{\alpha})\Big|_{q_{\alpha}=q_{\alpha,0}},\\
     \bc_{23}:=\frac{\partial} {\partial q_{\alpha}}\bg_2^{\bu}(H_0, \bT_0,\bSigma^{\bu}_{\MCD,0},\bmu^{\bu}_{\MCD,0},q_{\alpha})\Big|_{q_{\alpha}=q_{\alpha,0}},\\
    \bc_{31}:=\frac{\partial} {\partial \vect\left(\bSigma\right)}g_3^{\bu}(H_0, \bT_0,\bSigma,\bmu^{\bu}_{\MCD,0},q_{\alpha,0})\Big|_{\bSigma=\bSigma^{\bu}_{\MCD,0}},\\
    \bc_{32}:=\frac{\partial} {\partial \bmu}\bg_3^{\bu}(H_0, \bT_0,\bSigma^{\bu}_{\MCD,0},\bmu,q_{\alpha,0})\Big|_{\bmu=\bmu^{\bu}_{\MCD,0}},\\
    c_{33}:=\frac{\partial} {\partial q_{\alpha}}g_3^{\bu}(H_0, \bT_0,\bSigma^{\bu}_{\MCD,0},\bmu^{\bu}_{\MCD,0},q_{\alpha})\Big|_{q_{\alpha}=q_{\alpha,0}}.
\end{gather*}
Moreover,
\begin{align*}
    &\frac{\partial }{\partial \eps}\bg^{\bu}(H_0, \bT(H^D_{\eps,\bz}),\bSigma^{\bu}_{\MCD,0},\bmu^{\bu}_{\MCD,0},q_{\alpha,0})\Big|_{\eps=0}\\
    &=\frac{\partial }{\partial \eps}\bg^{\bu}(H_0, \left[\vect(\bV(H^D_{\eps,\bz})), \bmu(H^D_{\eps,\bz}),\bsigma(H^D_{\eps,\bz})\right],\bSigma^{\bu}_{\MCD,0},\bmu^{\bu}_{\MCD,0},q_{\alpha,0})\Big|_{\eps=0}\\
     &=\left(\begin{array}{c}
          \frac{\partial }{\partial \bsigma}\bg^{\bu}_1(H_0, \left[\vect(\bV_0), \bmu_0,\bsigma\right],\bSigma^{\bu}_{\MCD,0},\bmu^{\bu}_{\MCD,0},q_{\alpha,0})\Big|_{\bsigma=\bsigma_0}  \\
          \frac{\partial }{\partial \bsigma}\bg^{\bu}_2(H_0, \left[\vect(\bV_0), \bmu_0,\bsigma\right],\bSigma^{\bu}_{\MCD,0},\bmu^{\bu}_{\MCD,0},q_{\alpha,0})\Big|_{\bsigma=\bsigma_0} \\
          \frac{\partial }{\partial \bsigma}g^{\bu}_3(H_0, \left[\vect(\bV_0), \bmu_0,\bsigma\right],\bSigma^{\bu}_{\MCD,0},\bmu^{\bu}_{\MCD,0},q_{\alpha,0})\Big|_{\bsigma=\bsigma_0} 
     \end{array}\right)\frac{\partial \bsigma(H^D_{\eps,\bz})}{\partial \eps}\Big|_{\eps=0}\\
     &\;\;+\left(\begin{array}{c}
          \frac{\partial }{\partial \vect(\bV)}\bg^{\bu}_1(H_0, \left[\vect(\bV), \bmu_0,\bsigma_0\right],\bSigma^{\bu}_{\MCD,0},\bmu^{\bu}_{\MCD,0},q_{\alpha,0})\Big|_{\bV=\bV_0}  \\
          \frac{\partial }{\partial \vect(\bV)}\bg^{\bu}_2(H_0, \left[\vect(\bV), \bmu_0,\bsigma_0\right],\bSigma^{\bu}_{\MCD,0},\bmu^{\bu}_{\MCD,0},q_{\alpha,0})\Big|_{\bV=\bV_0}\\
          \frac{\partial }{\partial \vect(\bV)}g^{\bu}_3(H_0, \left[\vect(\bV), \bmu_0,\bsigma_0\right],\bSigma^{\bu}_{\MCD,0},\bmu^{\bu}_{\MCD,0},q_{\alpha,0})\Big|_{\bV=\bV_0}
     \end{array}\right)\frac{\partial \vect(\bV(H^D_{\eps,\bz}))}{\partial \eps}\Big|_{\eps=0}\\
     &\;\;+\left(\begin{array}{c}
          \frac{\partial }{\partial \bmu}\bg^{\bu}_1(H_0, \left[\vect(\bV_0), \bmu,\bsigma_0\right],\bSigma^{\bu}_{\MCD,0},\bmu^{\bu}_{\MCD,0},q_{\alpha,0})\Big|_{\bmu=\bmu_0}   \\
        \frac{\partial }{\partial \bmu}\bg^{\bu}_2(H_0, \left[\vect(\bV_0), \bmu,\bsigma_0\right],\bSigma^{\bu}_{\MCD,0},\bmu^{\bu}_{\MCD,0},q_{\alpha,0})\Big|_{\bmu=\bmu_0}  \\
        \frac{\partial }{\partial \bmu}\bg^{\bu}_3(H_0, \left[\vect(\bV_0), \bmu,\bsigma_0\right],\bSigma^{\bu}_{\MCD,0},\bmu^{\bu}_{\MCD,0},q_{\alpha,0})\Big|_{\bmu=\bmu_0} 
     \end{array}\right)\frac{\partial \bmu(H^D_{\eps,\bz})}{\partial \eps}\Big|_{\eps=0}\\
     &= \bS^{\bu}\IFu_{\case}\left(\bz,\bsigma,H_0\right)
     + \bB^{\bu}_{1} \IFu_{\case}\left(\bz,\bV,H_0\right)
     + \bB^{\bu}_{2} \IFu_{\case}\left(\bz,\bmu,H_0\right).
\end{align*}
By setting \eqref{eq:derFDCMf} equal to $\bzero$, we obtain
\begin{align*}
\IFu_{\case}(\bz,\vect(\bSigma^{\bu}_{\MCD}),H_0)
   =& -\bD^{\bu}_{1}\Big[\bS^{\bu}  
      \IFu_{\case}\left(\bz,\bsigma,H_0\right)
       + \bB^{\bu}_{1} \IFu_{\case}
       \left(\bz,\vect(\bV),H_0\right)\\
   & +\bB^{\bu}_{2} \IFu_{\case}
     \left(\bz,\bmu,H_0\right)+\bg^{\bu}
     (\Delta^{\bu}_{\bz}, \bT_0,\bSigma^{\bu}_{\MCD,0},
     \bmu^{\bu}_{\MCD,0},q_{\alpha,0})\Big]
\end{align*}
and
\begin{align*}
\IFu_{\case}(\bz,\bmu^{\bu}_{\MCD},H_0)
  =& -\bD^{\bu}_{2}\Big[\bS^{\bu} 
     \IFu_{\case}\left(\bz,\bsigma,H_0\right)
     + \bB^{\bu}_{1} \IFu_{\case}
       \left(\bz,\vect(\bV),H_0\right)\\
  &+ \bB^{\bu}_{2} \IFu_{\case}\left(\bz,\bmu,H_0\right)
     +\bg^{\bu}(\Delta^{\bu}_{\bz}, 
     \bT_0,\bSigma^{\bu}_{\MCD,0},
     \bmu^{\bu}_{\MCD,0},q_{\alpha,0})\Big],
  \end{align*}
where 
\begin{equation*}
    \begin{bmatrix} 
\bD^{\bu}_{1}  \\ 
\bD^{\bu}_{2}  \\
(\bd^{\bu}_{3})^T 
\end{bmatrix}
=\left(\begin{array}{ccc}
 \bC_{11}  &  \bC_{12}&  \bc_{13}\\
 \bC_{21}& \bC_{22}& \bc_{23}\\
 \bc_{31}^T& \bc_{32}^T& c_{33}
 \end{array}\right)^{-1}.
\end{equation*}

Under the FICM model we have $\delta_0(\eps)=(1-\eps)^p$, $\delta_0(0)=1$, 
$\delta_1(\eps) = p (1-\eps)^{p-1}\eps$ so $\delta_1(0)=0$ 
and $\delta_1^{\prime}(0)=p$, and $\delta_i(0)=
\delta_i^{\prime}(0)=0$ for $i \geqslant 2$. Therefore 
$\bg^{\bu}$ can be written as the sum
\begin{align*}
\bg^{\bu}&(H^I_{\eps,\bz}, \bT(H^I_{\eps,\bz}),\bSigma^{\bu}_{\MCD}(H^I_{\eps,\bz}),\bmu^{\bu}_{\MCD}(H^I_{\eps,\bz}),q_{\alpha}(H^I_{\eps,\bz}))&\\
&= \delta_0(\eps) \bg^{\bu}(H_0, \bT(H^I_{\eps,\bz}),\bSigma^{\bu}_{\MCD}(H^I_{\eps,\bz}),\bmu^{\bu}_{\MCD}(H^I_{\eps,\bz}),q_{\alpha}(H^I_{\eps,\bz}))\\ 
&\;\;\;\;\; + 
\delta_1(\eps) \sum_{j=1}^p \bg^{\bu}(H(j,\bz), \bT(H^I_{\eps,\bz}),\bSigma^{\bu}_{\MCD}(H^I_{\eps,\bz}),\bmu^{\bu}_{\MCD}(H^I_{\eps,\bz}),q_{\alpha}(H^I_{\eps,\bz}))\\
&= (1-\eps)^p \bg^{\bu}(H_0, \bT(H^I_{\eps,\bz}),\bSigma^{\bu}_{\MCD}(H(G_\eps^D,\bz)),\bmu^{\bu}_{\MCD}(H^I_{\eps,\bz}),q_{\alpha}(H^I_{\eps,\bz}))\\ 
&\;\;\;\;\; + 
p (1-\eps)^{p-1}\eps \sum_{j=1}^p \bg^{\bu}(H(j,\bz), \bT(H^I_{\eps,\bz}),\bSigma^{\bu}_{\MCD}(H^I_{\eps,\bz}),\bmu^{\bu}_{\MCD}(H^I_{\eps,\bz}),q_{\alpha}(H^I_{\eps,\bz})).
\end{align*}
where $H^I_{\eps,\bz}:=H(G_\eps^I,\bz)$.
The derivative becomes
\begin{align}\label{eq:derFICMf}
\frac{\partial}{\partial \eps}& \bg^{\bu}(H^I_{\eps,\bz}, \bT(H^I_{\eps,\bz}),\bSigma^{\bu}_{\MCD}(H^I_{\eps,\bz}),\bmu^{\bu}_{\MCD}(H^I_{\eps,\bz}),q_{\alpha}(H^I_{\eps,\bz}))\Big|_{\eps=0}&  \nonumber\\
  &= \frac{\partial }{\partial \eps}\bg^{\bu}(H_0, \bT(H^I_{\eps,\bz}),\bSigma^{\bu}_{\MCD,0},\bmu^{\bu}_{\MCD,0},q_{\alpha,0})\Big|_{\eps=0}\nonumber \\
  &\;\;\;\;\;+\frac{\partial }{\partial \eps}\bg^{\bu}(H_0, \bT_0,\bSigma^{\bu}_{\MCD}(H^I_{\eps,\bz}),\bmu^{\bu}_{\MCD}(H^I_{\eps,\bz}),q_{\alpha}(H^I_{\eps,\bz}))\Big|_{\eps=0}\nonumber\\
  &\;\;\;\;\; + p\sum_{j=1}^p\bg^{\bu}(H(j,\bz), \bT_0,\bSigma^{\bu}_{\MCD,0},\bmu^{\bu}_{\MCD,0},q_{\alpha,0}). 
\end{align}
By setting \eqref{eq:derFICMf} equal to $\bzero$ we obtain
\begin{align*}
\IFu_{\cell}(\bz,\vect(\bSigma^{\bu}_{\MCD})&,H_0)
  = -\bD^{\bu}_{1}\Big[\bS^{\bu} 
     \IFu_{\cell}\left(\bz,\bsigma,H_0\right)
     + \bB^{\bu}_{1} \IFu_{\cell}
       \left(\bz,\vect(\bV),H_0\right)\\
  &+ \bB^{\bu}_{2} \IFu_{\cell}\left(\bz,\bmu,H_0\right)
      + p\sum_{j=1}^p\bg^{\bu}(H(j,\bz), \bT_0,
      \bSigma^{\bu}_{\MCD,0},
      \bmu^{\bu}_{\MCD,0},q_{\alpha,0}) \Big]
\end{align*}
and
\begin{align*}
\IFu_{\cell}(\bz,\bmu^{\bu}_{\MCD},H_0)
  =& -\bD^{\bu}_{2}\Big[\bS^{\bu} \IFu_{\cell}
      \left(\bz,\bsigma,H_0\right)
     +\bB^{\bu}_{1} \IFu_{\cell}
     \left(\bz,\vect(\bV),H_0\right)\\
  &+ \bB^{\bu}_{2} \IFu_{\cell}\left(\bz,\bmu,H_0\right)
     + p\sum_{j=1}^p\bg^{\bu}(H(j,\bz), 
     \bT_0,\bSigma^{\bu}_{\MCD,0},
     \bmu^{\bu}_{\MCD,0},q_{\alpha,0})\Big].
\end{align*}

In the proof we have assumed that 
$H^{\bu}(H(G_\eps^D,\bz), \bT(H(G_\eps^D,\bz)))$ and 
$H^{\bu}(H(G_\eps^I,\bz), \bT(H(G_\eps^D,\bz)))$ have
no mass on the boundary of $A$ in order to use the 
expressions of the MCD functionals in 
\eqref{eq:funcfsigma} and \eqref{eq:funcfmu}, and that $\bB^{\bu}_{1}$, $\bB^{\bu}_{2}$, $\bD^{\bu}_{1}$, $\bD^{\bu}_{2}$,  and  $\bS^{\bu}$ exist.
Moreover, we have assumed that $\bSigma^{\bu}_{\MCD,0}$ and $\bmu^{\bu}_{\MCD,0}$ are unique, which holds when $H^{\bu}(H_0, \bT(H_0))$ has a density (Cator and Lopuh\"aa 2012), that the IFs exist, and that $\left(\bT(H(G_\eps,\bz)),\bSigma^{\bu}_{\MCD}(H(G_\eps,\bz)),\bmu^{\bu}_{\MCD}(H(G_\eps,\bz))\right)$ converges to $\left(\bT_0,\bSigma^{\bu}_{\MCD,0},\bmu^{\bu}_{\MCD,0}\right)$ as $\eps \downarrow 0$.
\end{proof}

We now denote the functional corresponding to the  
estimator $\btSigma_{\xort}^{R}$ by $\btSigma_{\xort}^{R}(H)$.
\begin{proposition}\label{prop3}
The casewise and cellwise influence functions 
of $\vect(\btSigma_{\xort}^{R})$ are
\begin{align*}
  \IFu_{\case}\left(\bz,\vect(\btSigma_{\xort}^{R}),H_0\right)
   =&\;\bD^{\xort}\Big[\bB^{\xort}_{1}\IFu_{\case}
      \left(\bz,\bmu,H_0\right)+\bB^{\xort}_{2}\IFu_{\case}
      \left(\bz,\vect(\bV),H_0\right)\\
   &+\bS^{\xort}\IFu_{\case}\left(\bz,\bsigma,H_0\right)
      +\bg^{\xort}(\Delta_{\bz},
       \bT_0,\bSigma_{\xort,0}^{R})\Big]
\end{align*}
and
\begin{align*}
  \IFu_{\cell}\left(\bz,\vect(\btSigma_{\xort}^{R}),H_0\right)
  =&\;\bD^{\xort}\Big[\bB^{\xort}_{1}\IFu_{\cell}
     \left(\bz,\bmu,H_0\right)+\bB^{\xort}_{2}
     \IFu_{\cell}\left(\bz,\vect(\bV),H_0\right)\\
   &+\bS^{\xort}\IFu_{\cell}\left(\bz,\bsigma,H_0\right)
    +p\sum_{j=1}^p \bg^{\xort}(H\left(j,\bz\right),
    \bT_0,\bSigma_{\xort,0}^{R})\Big],
\end{align*}
with  $ \bSigma_{\xort,0}^{R}= \btSigma_{\xort}^{R}(H_0)$, 
$\bT_0=\left(\vect(\bV_0), \bmu_0,\bsigma_0\right)^T$ and
\begin{multline*}
  \bg^{\xort}(H,\bT,\btSigma_{\xort}^{R}):=
  \vect\Big(\E_{H}\Big[b\btSigma_{\xort}^{R}
  - (1-\delta)w^{\case}\btW(\bx - \bmu-\bV\bu)(\bx - \bmu-\bV\bu)^T
  \btW\\-\delta w^{\case}\left(\btW(\bx - \bmu-\bV\bu)(\bx - \bmu-\bV\bu)^T
  \btW\right)\odot\bI_p\Big]\Big),
\end{multline*}
where $\bT=\left(\vect(\bV), \bmu,\bsigma\right)^T$, 
$b=\sum_{j=1}^p\sum_{\ell=1}^p w^{\case}w_{j}^{\cell}w_{\ell}^{\cell}/p^2$\,, 
 $\bu$ depends on $\bx$ and $\bT$ through \eqref{eq:ux}, and $\odot$ is the Hadamard product.
The matrices $\bD^{\xort}$, $\bB_1^{\xort}$, 
$\bB_2^{\xort}$, and $\bS^{\xort}$  are 
defined in the proof. 
\end{proposition}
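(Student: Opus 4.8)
The plan is to follow the same implicit-function-theorem scheme used in the proofs of Propositions~\ref{prop1} and~\ref{prop2}, now applied to the equation that defines $\btSigma_{\xort}^{R}$. The functional $\btSigma_{\xort}^{R}(H)$ is characterized by the first-order condition $\bg^{\xort}(H,\bT(H),\btSigma_{\xort}^{R}(H))=\bzero$, which is the population analogue of the sample identities \eqref{eq:Sigmay} and \eqref{eq:ridge}; here $\bT(H)=\left(\vect(\bV(H)),\bmu(H),\bsigma(H)\right)^T$ collects the inputs whose casewise and cellwise influence functions were obtained in Proposition~\ref{prop1}, and inside $\bg^{\xort}$ the vector $\bu$ is the implicit function of $(\bx,\bT)$ defined by \eqref{eq:ux} while the weights $w^{\case}$ and $w_j^{\cell}$ are determined by $(\bmu,\bV,\bsigma)$ through the cellwise residuals $r_j=x_j-\mu_j-\bu^T\bv_j$ and the casewise deviation $\rt$. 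The crucial simplification is that $\btSigma_{\xort}^{R}$ enters $\bg^{\xort}$ only through the linear term $b\,\btSigma_{\xort}^{R}$, where the scalar $b$ depends on $\bx$ and on $\bT$ but not on $\btSigma_{\xort}^{R}$. Hence the Jacobian $\partial\bg^{\xort}/\partial\vect(\btSigma_{\xort}^{R})$ at $H_0$ equals $\E_{H_0}[b]\,\bI_{p^2}$, which is nonsingular; this is the matrix inverted in the implicit function theorem, and I would set $\bD^{\xort}:=-\bigl(\E_{H_0}[b]\bigr)^{-1}\bI_{p^2}$.

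First I would extend $\bg^{\xort}$ to negative $\eps$ by the reflection device of the proof of Proposition~\ref{prop1} (replacing $\bg^{\xort}(H(G_\eps,\bz),\CD)$ by $2\,\bg^{\xort}(H_0,\CD)-\bg^{\xort}(H(G_{|\eps|},\bz),\CD)$ for $\eps<0$), so that the implicit function theorem applies at $\eps_0=0$ with $\widetilde{\btheta}=\vect(\bSigma_{\xort,0}^{R})$, yielding
\begin{equation*}
  \frac{\partial}{\partial\eps}\vect\bigl(\btSigma_{\xort}^{R}(H(G_\eps,\bz))\bigr)\Big|_{\eps=0}
  = -\bigl(\E_{H_0}[b]\bigr)^{-1}\,
  \frac{\partial}{\partial\eps}\,\bg^{\xort}\bigl(H(G_\eps,\bz),\bT(H(G_\eps,\bz)),\bSigma_{\xort,0}^{R}\bigr)\Big|_{\eps=0}.
\end{equation*}
Then I would expand the right-hand derivative by the chain rule into two pieces. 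The first is the explicit dependence on the mixing distribution: writing the $H(G_\eps,\bz)$-expectation as a linear combination with coefficients $\delta_h(\eps)$ and differentiating at $\eps=0$ produces $\bg^{\xort}(\Delta_{\bz},\bT_0,\bSigma_{\xort,0}^{R})$ in the FDCM case and $p\sum_{j=1}^p\bg^{\xort}(H(j,\bz),\bT_0,\bSigma_{\xort,0}^{R})$ in the FICM case, the term proportional to $\bg^{\xort}(H_0,\CD)$ vanishing by the first-order condition. The second piece is the dependence through $\bT(H(G_\eps,\bz))$, which by a further chain rule through $\bmu$, $\vect(\bV)$ and $\bsigma$ contributes $\bB^{\xort}_{1}\,\IFu(\bz,\bmu,H_0)+\bB^{\xort}_{2}\,\IFu(\bz,\vect(\bV),H_0)+\bS^{\xort}\,\IFu(\bz,\bsigma,H_0)$, where $\bB^{\xort}_{1}$, $\bB^{\xort}_{2}$ and $\bS^{\xort}$ are the partial derivatives of $\bg^{\xort}(H_0,\CD,\bSigma_{\xort,0}^{R})$ with respect to $\bmu$, $\vect(\bV)$ and $\bsigma$, evaluated at $\bT_0$. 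Substituting the influence functions of $\bmu$, $\bV$ and $\bsigma$ from Proposition~\ref{prop1} then yields exactly the two displayed formulas, both under FDCM (casewise) and under FICM (cellwise).

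The main work, and the delicate bookkeeping step, is the evaluation of $\bB^{\xort}_{1}$, $\bB^{\xort}_{2}$ and $\bS^{\xort}$: each requires differentiating the weights $w^{\case}$, $w_j^{\cell}$ and the scalar $b$ with respect to $(\bmu,\bV,\bsigma)$, which act both directly through the residuals $r_j=x_j-\mu_j-\bu^T\bv_j$ and indirectly through $\bu=\bu(\bx,\bV,\bmu,\bsigma)$; differentiating $\bu$ needs an auxiliary implicit-function argument applied to the constraint \eqref{eq:C2}, exactly as in the proof of Proposition~\ref{prop2}. The ridge correction only adds the Hadamard-with-identity operator $\CD\odot\bI_p$, which is linear and commutes with differentiation, so it poses no extra difficulty. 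As in Proposition~\ref{prop2}, the argument rests on regularity conditions that I would state at the end of the proof: smoothness of $\rho_1$ and $\rho_2$ so that the weights and $\bu$ are differentiable, nonsingularity of $\E_{H_0}[b]\,\bI_{p^2}$ together with the existence of $\bB^{\xort}_{1}$, $\bB^{\xort}_{2}$ and $\bS^{\xort}$, and convergence of $\bigl(\bT(H(G_\eps,\bz)),\btSigma_{\xort}^{R}(H(G_\eps,\bz))\bigr)$ to $\bigl(\bT_0,\bSigma_{\xort,0}^{R}\bigr)$ as $\eps\downarrow 0$.
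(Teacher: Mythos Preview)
Your proposal is correct and follows essentially the same route as the paper: both recognize that $\btSigma_{\xort}^{R}(H)$ is characterized by $\bg^{\xort}(H,\bT(H),\btSigma_{\xort}^{R}(H))=\bzero$, observe that the Jacobian in $\vect(\btSigma_{\xort}^{R})$ is the scalar multiple $\E_{H_0}[b]\,\bI_{p^2}$, differentiate at $\eps=0$, and split the resulting derivative into the mixture piece (yielding $\bg^{\xort}(\Delta_{\bz},\CD)$ under FDCM and $p\sum_j\bg^{\xort}(H(j,\bz),\CD)$ under FICM) plus the chain-rule piece through $\bT$ that produces $\bB^{\xort}_1,\bB^{\xort}_2,\bS^{\xort}$. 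The paper does not invoke the implicit function theorem formally here but simply differentiates the identity $\bg^{\xort}\equiv\bzero$, which amounts to the same computation; it also leaves $\bB^{\xort}_1,\bB^{\xort}_2,\bS^{\xort}$ as abstract partial derivatives without the further analysis of $\bu$ and the weights that you sketch, so your extra bookkeeping paragraph goes beyond what the paper actually carries out.
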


\vspace{3mm}
\begin{proof}[\textbf{Proof of Proposition \ref{prop3}}]
The IF of $\vect(\btSigma_{\xort}^{R})$ at the 
distribution $H_0$ is given by 
\begin{equation*}
  \IFu(\bz,\vect(\btSigma_{\xort}^{R}),G_\eps)=
  \left.\frac{\partial}{\partial \eps}
  \vect(\btSigma_{\xort}^{R}(H(G_\eps,\bz)))
  \right|_{\eps=0}\,.
\end{equation*}
For $\eps=0$ we obtain $\btSigma_{\xort}^{R}(H(G_0,\bz))
= \btSigma_{\xort}^{R}(H_0) = \bSigma_{\xort,0}^{R}$.
Note that $\btSigma_{\xort}^{R}(H)$ satisfies 
\begin{equation}
\label{eq:functionalerrorfun}
  \bg^{\xort}(H,\bT,\btSigma_{\xort}^{R})=\bzero
\end{equation}
where $\bT=\bT(H):=\left(\vect(\bV(H)), 
\bmu(H),\bsigma(H)\right)^T$.

The IFs of $\vect(\btSigma_{\xort}^{R})$ under the FDCM
and FICM models are obtained by differentiating 
\eqref{eq:functionalerrorfun} in $\eps=0$ when $H$ 
is either $H(G_\eps^D,\bz)$ or $H(G_\eps^I,\bz)$.

For FDCM, we know that $G_\eps^D$ has 
$\delta_0(\eps)=(1-\eps),\; 
\delta_1(\eps)=\cdots=\delta_{p-1}(\eps)=0$ and 
$\delta_p(\eps)=\eps$, so $ \bg^{\xort}$ can be 
written as
\begin{align*}
\bg^{\xort}&(H(G_\eps^D,\bz), \bT(H(G_\eps^D,\bz)), \btSigma_{\xort}^{R}(H(G_\eps^D,\bz)))&\\
&= \delta_0(\eps)\bg^{\xort}(H_0, \bT(H(G_\eps^D,\bz)),  \btSigma_{\xort}^{R}(H(G_\eps^D,\bz)))\\
&\;\;\;\;\; + 
\delta_p(\eps)\bg^{\xort}(H(\{1, \dots, p\},\bz), \bT(H(G_\eps^D,\bz)),  \btSigma_{\xort}^{R}(H(G_\eps^D,\bz)))\\
&= (1-\eps)\bg^{\xort}(H_0, \bT(H(G_\eps^D,\bz)),  \btSigma_{\xort}^{R}(H(G_\eps^D,\bz)))\\
&\;\;\;\;\; + \eps\bg^{\xort}(\Delta_{\bz}, \bT(H(G_\eps^D,\bz)),
   \btSigma_{\xort}^{R}(H(G_\eps^D,\bz))),
\end{align*}
which yields the derivative
\begin{align}\label{eq:derFDCMf2}
\frac{\partial}{\partial \eps}&\bg^{\xort}(H(G_\eps^D,\bz), \bT(H(G_\eps^D,\bz)),  \btSigma_{\xort}^{R}(H(G_\eps^D,\bz)))\Big|_{\eps=0}  \nonumber\\
  &=-\bg^{\xort}(H_0, \bT_0,\bSigma_{\xort,0}^{R})
  + \frac{\partial }{\partial \eps}\bg^{\xort}(H_0, \bT(H(G_\eps^D,\bz)),  \btSigma_{\xort}^{R}(H(G_\eps^D,\bz)))\Big|_{\eps=0} \nonumber\\
  &\;\;\;\;\; +\bg^{\xort}(\Delta_{\bz}, \bT_0,\bSigma_{\xort,0}^{R}) \nonumber\\
  &= \bzero + \frac{\partial }{\partial \eps}\bg^{\xort}(H_0, \bT(H(G_\eps^D,\bz)),\bSigma_{\xort,0}^{R})\Big|_{\eps=0}\nonumber +\frac{\partial }{\partial \eps}\bg^{\xort}(H_0, \bT_0, \btSigma_{\xort}^{R}(H(G_\eps^D,\bz)))\Big|_{\eps=0}\nonumber\\
  &\;\;\;\;\; +\bg^{\xort}(\Delta_{\bz}, \bT_0,\bSigma_{\xort,0}^{R})
\end{align}
 \sloppy where $\bT_0=\left(\vect(\bV(H_0)), \bmu(H_0),\bsigma(H_0)\right)^T=\left(\vect(\bV_0), \bmu_0,\bsigma_0\right)^T$ and  $\bSigma_{\xort,0}^{R}=\vect(\bSigma_{\xort,0}^{R})$.
Note that
\begin{align*}
    \frac{\partial }{\partial \eps}&\bg^{\xort}(H_0, \bT(H(G_\eps^D,\bz)),\bSigma_{\xort,0}^{R})\Big|_{\eps=0}\\
      &=  \frac{\partial }{\partial \bsigma}\bg^{\xort}(H_0, \left[\vect(\bV_0), \bmu_0,\bsigma\right],\bSigma_{\xort,0}^{R})\Big|_{\bsigma=\bsigma_0}\IFu_{\case}\left(\bz,\bsigma,H_0\right)  \\
     &\;\;\;\;\; +  \frac{\partial }{\partial \vect(\bV)}\bg^{\xort}(H_0, \left[\vect(\bV), \bmu_0,\bsigma_0\right],\bSigma_{\xort,0}^{R})\Big|_{\bV=\bV_0}\IFu_{\case}\left(\bz,\bV,H_0\right)  \\
      &\;\;\;\;\; +  \frac{\partial }{\partial \bmu}\bg^{\xort}(H_0, \left[\vect(\bV_0), \bmu,\bsigma_0\right],\bSigma_{\xort,0}^{R})\Big|_{\bmu=\bmu_0}\IFu_{\case}\left(\bz,\bmu,H_0\right)  \\
      &=\bS^{\xort}\IFu_{\case}\left(\bz,\bsigma,H_0\right)+\bB^{\xort}_{1}\IFu_{\case}\left(\bz,\bV,H_0\right)+\bB^{\xort}_{2}\IFu_{\case}\left(\bz,\bmu,H_0\right).
\end{align*}
By setting \eqref{eq:derFDCMf2} equal to $\bzero$,
we obtain
\begin{align*}
   &\bS^{\xort}\IFu_{\case}\left(\bz,\bsigma,H_0\right)
     +\bB^{\xort}_{1}\IFu_{\case}
       \left(\bz,\vect(\bV),H_0\right)
     +\bB^{\xort}_{2}\IFu_{\case}
       \left(\bz,\bmu,H_0\right)\\
   &\;\;\;\;\; + \frac{\partial }{\partial \btSigma_{\xort}^{R}}
    \bg^{\xort}(H_0, \bT_0, \btSigma_{\xort}^{R})
    \Big|_{ \btSigma_{\xort}^{R}=\bSigma_{\xort,0}^{R}}
    \frac{\partial \btSigma_{\xort}^{R}(H(G_\eps^D,\bz))}
    {\partial \eps}\Big|_{\eps=0}
    +\bg^{\xort}(\Delta_{\bz}, \bT_0,\bSigma_{\xort,0}^{R})\\
   &=\bS^{\xort}\IFu_{\case}\left(\bz,\bsigma,H_0\right)
     +\bB^{\xort}_{1}\IFu_{\case}
      \left(\bz,\vect(\bV),H_0\right)
      +\bB^{\xort}_{2}\IFu_{\case}\left(\bz,\bmu,H_0\right)\\
   &\;\;\;\;\; + \bC_{\xort}\IFu_{\case}
     \left(\bz,\vect(\btSigma_{\xort}^{R}),H_0\right)
     +\bg^{\xort}(\Delta_{\bz},\bT_0,\bSigma_{\xort,0}^{R})
     =\bzero,
\end{align*}
where $\bC_{\xort}=\bI_{p^2}\E_{H_0}\left[b\right]$.
Finally,
\begin{align*}
  \IFu_{\case}\left(\bz,\vect(\btSigma_{\xort}^{R}),H_0\right)
  =&\; \bD^{\xort}\Big[\bS^{\xort}\IFu_{\case}
    \left(\bz,\bsigma,H_0\right)
    +\bB^{\xort}_{1}\IFu_{\case}
    \left(\bz,\vect(\bV),H_0\right)\\
   &\;\;\; +\bB^{\xort}_{2}\IFu_{\case}
     \left(\bz,\bmu,H_0\right)
     +\bg^{\xort}(\Delta_{\bz}, \bT_0,
     \bSigma_{\xort,0}^{R})\Big]
\end{align*}
where $\bD^{\xort}=\left(\bC_{\xort}\right)^{-1}$.

Under the FICM model we have $\delta_0(\eps)=(1-\eps)^p$, 
$\delta_0(0)=1$ and $\delta_1(\eps)=p(1-\eps)^{p-1}\eps$\,, 
so $\delta_1(0)=0$ and $\delta_1^{\prime}(0)=p$, and 
$\delta_i(0)=\delta_i^{\prime}(0)=0$ for $i \geqslant 2$. 
Therefore $\bg^{\xort}$ can be written as the sum
\begin{align*}
\bg^{\xort}&(H(G_\eps^I,\bz), \bT(H(G_\eps^I,\bz)), 
   \btSigma_{\xort}^{R}((H(G_\eps^I,\bz)))&\\
&= \delta_0(\eps)\bg^{\xort}(H_0, \bT(H(G_\eps^I,\bz)),  \btSigma_{\xort}^{R}(H(G_\eps^I,\bz)))\\
&\;\;\;\;\; + 
\delta_1(\eps)\sum_{j=1}^p\bg^{\xort}(H(j,\bz), \bT(H(G_\eps^I,\bz)),  \btSigma_{\xort}^{R}(H(G_\eps^I,\bz)))\\
&= (1-\eps)^p \bg^{\xort}(H_0, \bT(H(G_\eps^I,\bz)),  \btSigma_{\xort}^{R}(H(G_\eps^I,\bz)))\\ 
&\;\;\;\;\; + 
p (1-\eps)^{p-1}\eps \sum_{j=1}^p \bg^{\xort}(H(j,\bz), \bT(H(G_\eps^I,\bz)),  \btSigma_{\xort}^{R}(H(G_\eps^I,\bz))).
\end{align*}
Following the same steps as for the FDCM, we obtain
\begin{align*}
  \IFu_{\cell}\left(\bz,\vect(\btSigma_{\xort}^{R}),H_0\right)
  =&\; \bD^{\xort}\Big[\bS^{\xort}\IFu_{\cell}
     \left(\bz,\bsigma,H_0\right)
     +\bB^{\xort}_{1}\IFu_{\cell}
      \left(\bz,\vect(\bV),H_0\right)\\
   &\;\;\; +\bB^{\xort}_{2}\IFu_{\cell}
     \left(\bz,\bmu,H_0\right)
    +\bg^{\xort}(\Delta_{\bz},\bT_0,\bSigma_{\xort,0}^{R})\Big].
\end{align*}
This ends the proof of Proposition~\ref{prop3}. \end{proof}

\vspace{3mm}
\begin{proof}[\textbf{Proof of Theorem \ref{IFSigma}}]
Here we will compute the IF of $\vect(\bSigma)$ at $H_0$ 
as 
\begin{equation*}
  \IFu(\bz,\vect(\bSigma),G_\eps)=
  \vect\Big(\left.\frac{\partial}{\partial \eps} 
  \bSigma(H(G_\eps,\bz))\right|_{\eps=0}\Big)\,
\end{equation*}
where $G_\eps = G_\eps^D$ in the FDCM model, and
$G_\eps = G_\eps^I$ in the FICM model. In either model
$\bSigma(H)=\bSigma_{\bx^\sub}(H)+\btSigma_{\xort}^{R}(H)$ 
with $\bSigma_{\bx^\sub}(H)=\bV(H)\bSigma^{\bu}_{\MCD}(H)
\left(\bV(H)\right)^T$, so we obtain 
\begin{align*}
    &\left.\frac{\partial}{\partial \eps} \bSigma(H(G_\eps,\bz))\right|_{\eps=0}\\
    &=\left.\frac{\partial}{\partial \eps}\bV(H(G_\eps,\bz))\bSigma^{\bu}_{\MCD}(H(G_\eps,\bz))\left(\bV(H(G_\eps,\bz))\right)^T\right|_{\eps=0}+\left.\frac{\partial}{\partial \eps}\btSigma_{\xort}^{R}(H(G_\eps,\bz))\right|_{\eps=0}\\
    &=\left.\frac{\partial}{\partial \eps}\bV(H(G_\eps,\bz))\right|_{\eps=0}\bSigma^{\bu}_{\MCD,0}\bV_0^T+\bV_0\left.\frac{\partial}{\partial \eps}\bSigma^{\bu}_{\MCD}(H(G_\eps,\bz))\right|_{\eps=0}\bV_0^T\\
    &\;\;\;\;\; +\bV_0\bSigma^{\bu}_{\MCD,0}\left.\frac{\partial}{\partial \eps}\left(\bV(H(G_\eps,\bz))\right)^T\right|_{\eps=0}+\left.\frac{\partial}{\partial \eps}\btSigma_{\xort}^{R}(H(G_\eps,\bz))\right|_{\eps=0}\;,
\end{align*}
where $\bV_{0}=\bV(H_0)$, 
$\bSigma^{\bu}_{\MCD,0}=\bSigma^{\bu}_{\MCD}(H_0)$, 
and $\bSigma_{\xort,0}^{R}=\btSigma_{\xort}^{R}(H_0)$.
Applying $\vect(\CD)$ to both sides yields 
\begin{align*}
    &\vect\left(\left.\frac{\partial}{\partial \eps} \bSigma(H(G_\eps,\bz))\right|_{\eps=0}\right)\\
    &=\left(\bV_0\bSigma^{\bu}_{\MCD,0}\otimes\bI_p\right)\vect\left(\left.\frac{\partial}{\partial \eps}\bV(H(G_\eps,\bz))\right|_{\eps=0}\right)\\
    &\;\;\;\;\; +\left(\bV_0\otimes\bV_0\right)\vect\left(\left.\frac{\partial}{\partial \eps}\bSigma^{\bu}_{\MCD}(H(G_\eps,\bz))\right|_{\eps=0}\right)\\
    &\;\;\;\;\; +\left(\bI_p\otimes\bV_0\bSigma^{\bu}_{\MCD,0}\right)\bK_{p,k}\vect\left(\left.\frac{\partial}{\partial \eps}\left(\bV(H(G_\eps,\bz))\right)\right|_{\eps=0}\right)
      +\IFu(\bz,\vect(\btSigma_{\xort}^{R}),H_0)\\
    &=\left(\left(\bV_0\bSigma^{\bu}_{\MCD,0}\otimes\bI_p\right)
      +\left(\bI_p\otimes\bV_0\bSigma^{\bu}_{\MCD,0}\right)
      \bK_{p,k}\right)\IFu(\bz,\vect(\bV),H_0)\\
    &\;\;\;\;\; +\left(\bV_0\otimes\bV_0\right)
       \IFu(\bz,\vect(\bSigma^{\bu}_{\MCD}),H_0)
       +\IFu(\bz,\vect(\btSigma_{\xort}^{R}),H_0)\\
    &=\bR_{1}\IFu(\bz,\vect(\bV),H_0)
      +\bR_{2}\IFu(\bz,\vect(\bSigma^{\bu}_{\MCD}),H_0)
      +\IFu(\bz,\vect(\btSigma_{\xort}^{R}),H_0),
\end{align*}
where the matrix $\bK_{p,k}$ is a $pk \times pk$ permutation 
matrix that rearranges the entries of the column 
vector $\vect\left(\left.\frac{\partial}{\partial \eps}
\left(\bV(H(G_\eps,\bz))\right)\right|_{\eps=0}\right)$ 
to become those of 
$\vect\left(\left.\frac{\partial}{\partial \eps}
\left(\bV(H(G_\eps,\bz))\right)^T\right|_{\eps=0}\right)$.
The matrices $\bR_{1}$ and $\bR_{2}$ are
\begin{align*}
    \bR_{1}&=\left(\bV_0\bSigma^{\bu}_{\MCD,0}\otimes\bI_p\right)+\left(\bI_p\otimes\bV_0\bSigma^{\bu}_{\MCD,0}\right)\bK_{p,k}\\
    \bR_{2}&=\bV_0\otimes\bV_0\
\end{align*}
in both the casewise and the cellwise settings. \end{proof}

\section{\large Consistency and asymptotic normality}
\label{app:consistency}

Here we will prove that $\bSigma_{n}$ is a 
consistent estimator of $\bSigma_{0}\,:=\,\bSigma(H_0)$\,, that is,  
$\bSigma_{n}$ converges to $\bSigma_{0}$ in 
probability.
To obtain this result we first study the 
consistency properties of cellPCA. 
Note that, for every $p$-dimensional vector 
$\bmu_0$, we can represent $\bhx=\bmu+\bV\bu$ 
equivalently as 
$\bhx=\left(\bmu+\bV\bu_0\right) +
\bV\left(\bu-\bu_0\right)$. Moreover, for 
every $k\times k$ nonsingular matrix $\bO$ we 
can represent $\bhx$ as 
$\bhx=\bmu+(\bV\bO)(\bO^{-1}\bu)$. To resolve this 
arbitrariness of the representation of the fitted 
point $\bhx$ we consider an equivalent 
parametrization of model~\eqref{eq:model} by
the  principal subspace model used in 
\cite{centofanti2025multivariate}, given by
\begin{equation} \label{eq:model2}
  \bx =  \bmu+\bP\bx^0 + \be
\end{equation}
where $\bx=\left(x_1,\dots,x_p\right)^T\sim H$, 
$\bx^0$ is a $p$-dimensional vector, $\bP$ is 
a $p \times p$ projection matrix of 
rank $\rk$, that is,
$\bP^T=\bP$, $\bP^2=\bP$ and $\rank(\bP)=\rk$,
which projects $\bx^0$ on itself, i.e.
$\bP\bx^0 = \bx^0$, and with $\bmu$ such
that $\bP\bmu = \bzero$. The image of $\bP$ 
is a $\rk$-dimensional linear subspace 
$\bPi_0$ through the origin. The 
predicted datapoints $\bhx = \bx^0 + \bmu$
lie on the affine subspace
$\bPi = \bPi_0 + \bmu$, which is called the
{\it principal subspace}. 
Note that any $\rk$-dimensional
affine subspace $\bPi$ determines a 
unique $\bP$ and $\bmu$ satisfying the 
constraints $\rank(\bP)=\rk$, $\bP^T=\bP$, 
$\bP^2=\bP$, and $\bP\bmu = \bzero$, and that
any such $\bP$ and $\bmu$ determine a unique 
subspace $\bPi$ of dimension $\rk$. The
cellPCA method estimates a principal subspace 
$\bPi$ in the form of a couple $(\bP,\bmu)$.
The functional version $(\bP(H),\bmu(H))$ of 
cellPCA is given by 
\begin{align} \label{eq:ux2}
\hspace{-6mm}
  \left(\bP(H),\bmu(H)\right)&=\argmin_{\bP,\bmu}
  \E_H\left[\rho_2 \left(\frac{1}{\sigma_2(H)}\sqrt{\frac{1}{p}
  \sum_{j=1}^{p} \sigma_{1,j}^2(H)\rho_1\left(\frac{
  x_{j}-\mu_j-\bp_j^T\bx^0}{\sigma_{1,j}(H)}\right)}
  \right)\right]\nonumber\\
  \text{such that}\quad \bx^0&= \argmin_{\bx^0}
  \rho_2 
  \left(\frac{1}{\sigma_2(H)}\sqrt{\frac{1}{p}\sum_{j=1}^{p} 
  \sigma_{1,j}^2(H)\rho_1 \left(\frac
  {x_{j}-\mu_j-\bp_j^T\bx^0}{\sigma_{1,j}(H)}
  \right)} \right)
\end{align}
where $\bx=\left(x_1,\dots,x_p\right)^T\sim H$,
$\bx^0=(x^0_{1},\dots,x^0_{p})^T$, $(\bP,\bmu)$ 
satisfies the above constraints, and
$\bp_{1},\dots,\bp_{p}$ are the columns of $\bP$.
Let us rewrite the objective function  
\eqref{eq:ux2} as
\begin{equation} \label{eq:G}
G(\btheta,\bsigma, H):=\E_{H} \left[
  g_{\btheta,\bsigma}(\bx) \right]
\end{equation}
where
\begin{equation} \label{eq:g}
g_{\btheta,\bsigma}(\bx) :=\rho_2 \left(
  \frac{1}{\sigma_2}\sqrt{\frac{1}{p}
  \sum_{j=1}^{p} \sigma_{1,j}^2\rho_1\left(
  \frac{x_{j}-\mu_j-\bp_j^T\bx^0}{\sigma_{1,j}}
  \right)}\right),
\end{equation}
in which $\bx^0$ is 
defined as in~\eqref{eq:ux2}. The parameters are 
$\btheta=(\vect(\bP),\bmu)^T \in \bTheta$, where 
\begin{multline*}
\bTheta=\lbrace (\vect(\bP),\bmu)\in \mathbb{R}^{p(p+1)}:
   \rank(\bP)=\rk, \bP^T=\bP, \bP^2=\bP, \bmu\in\bTheta_{\bmu} \text{ and } \bP\bmu = \bzero
\rbrace,
\end{multline*}
with $\bTheta_{\bmu}$ a compact subset of $\mathbb{R}^{p}$.
Note that $\bTheta$ is a compact subset of $\mathbb{R}^{p(p+1)}$, as the set of projection matrices $\bP$ is a compact subset of $\mathbb{R}^{p^2}$ and $\bTheta_{\bmu}$ is compact.
Moreover,  $\bsigma=\left(\sigma_{1,1}, \dots, \sigma_{1,p}, \sigma_{2}\right)^T\in\bTheta_{\bsigma}$ with
$\bTheta_{\bsigma}$ a compact subset of
$\left(\mathbb{R}^{+}\right)^{p+1}$. We denote 
$M_n(\btheta,\bsigma):=G(\btheta,\bsigma, H_n)$, 
$\bsigma_n:=\bsigma(H_n)$, and 
$M(\btheta,\bsigma):=G(\btheta,\bsigma, H_0)$. We 
aim to study the convergence properties of a minimizer 
$\btheta_n=\left(\vect(\bP_n),\bmu_n\right)^T$ of 
$ M_n(\btheta,\bsigma_n)$ where $\bP_n:=\bP(H_n)$ 
and $\bmu_n:=\bmu(H_n)$, to a minimizer 
$\btheta_0=\left(\vect(\bP_0),\bmu_0\right)^T$ of 
the population quantity $M(\btheta,\bsigma_0)$. 
Since $M(\btheta,\bsigma_0)$ need not have a unique
minimizer, we consider the set 
$\bTheta_0=\lbrace \btheta_0\in \bTheta: 
M(\btheta_0,\bsigma_0)=\inf_{\btheta\in\bTheta}
M(\btheta,\bsigma_0)\rbrace $, which is nonempty by 
the compactness of $\bTheta$ if 
$\btheta\rightarrow M(\btheta,\bsigma_0)$ is a 
continuous function.

\begin{proposition}\label{prop4}
Let us assume that $\bsigma_n\rightarrow_p \bsigma_0$
with $\bsigma_n, \bsigma_0\in \bTheta_{\bsigma}$
and that
\begin{equation}\label{assu_cons2}
\sup _{\btheta\in \bTheta: d\left(\btheta, \bTheta_0
\right) \geqslant \varepsilon} M(\btheta,\bsigma_0)
> M\left(\btheta_0,\bsigma_0\right) , \quad 
\text{for } \btheta_0\in \bTheta_0\,.
\end{equation}
Then it holds for any sequence of estimators 
$\btheta_n$ in $\bTheta$ with 
$M_n\left(\btheta_n,\bsigma_n\right) \leqslant 
M_n\left(\btheta_0,\bsigma_n\right)+o_p(1)$ for some 
$\btheta_0\in \bTheta_0 $ that 
$d(\btheta_n,\bTheta_0):=\inf_{\btheta\in\bTheta_0}
||\btheta_n-\btheta||\rightarrow_p 0$\,.
\end{proposition}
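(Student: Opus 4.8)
The plan is to treat this as the classical argmin consistency result (in the spirit of Theorem~5.7 of \cite{van2000asymptotic}), modified to carry along the estimated nuisance scales $\bsigma_n$. Two ingredients drive the argument: a uniform law of large numbers for $M_n$ over the compact set $\bTheta\times\bTheta_{\bsigma}$, and the well-separation hypothesis~\eqref{assu_cons2}.

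First I would establish that $\sup_{(\btheta,\bsigma)\in\bTheta\times\bTheta_{\bsigma}}|M_n(\btheta,\bsigma)-M(\btheta,\bsigma)|\to_p 0$. The integrand $g_{\btheta,\bsigma}(\bx)$ of~\eqref{eq:g} is bounded by $\max(\rho_2)$ uniformly in $(\btheta,\bsigma,\bx)$, since $\rho_1$ and $\rho_2$ are bounded, so a constant dominates the family. For each fixed $\bx$ the map $(\btheta,\bsigma)\mapsto g_{\btheta,\bsigma}(\bx)$ is continuous: the inner minimization defining $\bx^0$ runs over the image of $\bP$, whose feasible correspondence varies continuously, with a jointly continuous objective, so Berge's maximum theorem (see e.g.\ \cite{sundaram1996first}) gives continuity of the optimal value, and composition with the continuous $\rho_1,\rho_2$ preserves it. Compactness of $\bTheta\times\bTheta_{\bsigma}$ then makes $\{g_{\btheta,\bsigma}\}$ a Glivenko--Cantelli class, which yields the displayed uniform convergence. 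The same boundedness and dominated convergence show that $M$ is jointly continuous; in particular $\btheta\mapsto M(\btheta,\bsigma_0)$ is continuous, $\bTheta_0$ is nonempty, and for each $\eps>0$ the infimum of $M(\cdot,\bsigma_0)$ over the compact set $\{\btheta:d(\btheta,\bTheta_0)\geqslant\eps\}$ is attained. Joint uniform continuity on the compact parameter set also gives $\sup_{\btheta}|M(\btheta,\bsigma_n)-M(\btheta,\bsigma_0)|\to 0$ whenever $\bsigma_n\to\bsigma_0$.

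Next I would run the standard chain of inequalities. Fix $\btheta_0\in\bTheta_0$ with $M_n(\btheta_n,\bsigma_n)\leqslant M_n(\btheta_0,\bsigma_n)+o_p(1)$. Writing $|M_n(\btheta,\bsigma_n)-M(\btheta,\bsigma_0)|\leqslant\sup_{\btheta,\bsigma}|M_n(\btheta,\bsigma)-M(\btheta,\bsigma)|+\sup_{\btheta}|M(\btheta,\bsigma_n)-M(\btheta,\bsigma_0)|$, which is $o_p(1)$ by the previous step and $\bsigma_n\to_p\bsigma_0$, we obtain
\[
  M(\btheta_n,\bsigma_0)\leqslant M_n(\btheta_n,\bsigma_n)+o_p(1)\leqslant M_n(\btheta_0,\bsigma_n)+o_p(1)\leqslant M(\btheta_0,\bsigma_0)+o_p(1)\,.
\]
Since $M(\btheta_n,\bsigma_0)\geqslant M(\btheta_0,\bsigma_0)$ by definition of $\bTheta_0$, this forces $M(\btheta_n,\bsigma_0)-M(\btheta_0,\bsigma_0)\to_p 0$. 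Finally I would invoke~\eqref{assu_cons2}: for fixed $\eps>0$ set $\eta:=\inf\{M(\btheta,\bsigma_0):\btheta\in\bTheta,\ d(\btheta,\bTheta_0)\geqslant\eps\}-M(\btheta_0,\bsigma_0)$, which is strictly positive by~\eqref{assu_cons2} and the attainment noted above. Then $\{d(\btheta_n,\bTheta_0)\geqslant\eps\}\subseteq\{M(\btheta_n,\bsigma_0)-M(\btheta_0,\bsigma_0)\geqslant\eta\}$, whose probability tends to $0$; since $\eps>0$ was arbitrary, $d(\btheta_n,\bTheta_0)\to_p 0$.

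The main obstacle is the uniform law of large numbers, and within it the joint continuity of $g_{\btheta,\bsigma}(\bx)$: the inner quantity $\bx^0$ is itself the output of a possibly non-unique optimization, so one must argue via Berge's theorem of the maximum rather than by differentiating an implicitly defined minimizer, taking care that the feasible correspondence $\bP\mapsto\mathrm{image}(\bP)$ is continuous and that the inner minimum is attained. One must also verify that the reparametrization in~\eqref{eq:model2} genuinely renders $\bTheta$ compact, so that the Glivenko--Cantelli and attainment arguments apply.
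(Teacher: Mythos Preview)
Your proof is correct and follows essentially the same route as the paper: both adapt Theorem~5.7 of \cite{van2000asymptotic} by splitting $|M_n(\btheta,\bsigma_n)-M(\btheta,\bsigma_0)|$ into a Glivenko--Cantelli piece and a continuity-in-$\bsigma$ piece, then run the standard chain and finish with the well-separation hypothesis~\eqref{assu_cons2}. If anything you are more careful than the paper in one spot: the paper simply asserts that $(\btheta,\bsigma)\mapsto g_{\btheta,\bsigma}(\bx)$ is continuous, whereas you explicitly justify this for the inner-optimization part via Berge's maximum theorem, which is the right tool given that $\bx^0$ is defined only implicitly.
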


\begin{proof}[\textbf{Proof of Proposition \ref{prop4}}]
Note that the function $\bTheta\times \bTheta_{\bsigma}
\rightarrow \mathbb{R}:(\btheta,\bsigma)\rightarrow 
g_{\btheta,\bsigma}(\bx)$ is continuous for every 
$\bx$ because $\rho_1$ and $\rho_2$ in~\eqref{eq:g}
are continuous and the denominators are in compact 
sets away from zero. We know that $\btheta_n$ satisfies 
$M_n\left(\btheta_n,\bsigma_n\right) \leqslant 
M_n\left(\btheta_0,\bsigma_n\right)+o_p(1)$.
We now adapt the proof of Theorem 5.7 in 
Van der Vaart (2000) by noting that
\begin{align*}
  \sup_{\btheta \in \Theta}
  |M_n(\btheta,\bsigma_n)-M(\btheta,\bsigma_0)|
  \leqslant
  \sup_{\btheta \in \Theta}
  |M_n(\btheta,\bsigma_n)-M(\btheta,\bsigma_n)|+
  \sup_{\btheta \in \Theta}
  |M(\btheta,\bsigma_n)-M(\btheta,\bsigma_0)|\\
  \leqslant
  \sup _{\btheta \in \Theta}
  |M_n(\btheta,\bsigma_n)-M(\btheta,\bsigma_n)|+
  |M(\btheta^*,\bsigma_n)-M(\btheta^*,\bsigma_0)|,
\end{align*}
where $\btheta^*:=\argmax_{\btheta \in \Theta}
|M(\btheta,\bsigma_n)-M(\btheta,\bsigma_0)|$
which exists since the supremum is a maximum 
because the function 
$\bTheta\times \bTheta_{\bsigma}\rightarrow 
\mathbb{R}:(\btheta,\bsigma)\rightarrow 
M(\btheta,\bsigma)$ is continuous. 
Indeed, consider a sequence 
$(\btheta_l,\bsigma_l)_{l \in \mathbb{N}}$ 
in $\bTheta\times \bTheta_{\bsigma}$ that 
converges to $(\btheta,\bsigma)$ for 
$l \rightarrow \infty$. Then
\begin{align*}
  \lim_{l \rightarrow \infty} 
  M(\btheta_l,\bsigma_l) 
&= \lim_{l \rightarrow \infty} 
   \int g_{\btheta_l,\bsigma_l}(\bx) 
   d(H_0)(\bx)\\
&= \int \lim _{l \rightarrow \infty} 
    g_{\btheta_l,\bsigma_l}(\bx) d(H_0)(\bx) \\
&= \int g_{\btheta,\bsigma}(\bx) d(H_0)(\bx) \\
&= M(\btheta,\bsigma),
\end{align*}
where we have used the dominated convergence theorem 
in the second equality, which is possible because 
the function $\mathbb{R}^p\rightarrow \mathbb{R}:
\bx\rightarrow g_{\btheta,\bsigma}(\bx)$ is 
bounded due to the form of $\rho_2$ 
in~\eqref{eq:rhotanh}, so it is dominated by an 
integrable function for every 
$\btheta\in\bTheta$ and 
$\bsigma\in \bTheta_{\bsigma}$\,. The third 
equality uses the fact that the function 
$\bTheta\times \bTheta_{\bsigma}\rightarrow 
\mathbb{R}:(\btheta,\bsigma)\rightarrow 
g_{\btheta,\bsigma}(\bx)$ is continuous for every 
$\bx$. By using the continuous mapping theorem 
(Theorem 2.3 in Van der Vaart (2000))
it follows that 
$|M(\btheta^*,\bsigma_n)-M(\btheta^*,\bsigma_0)|
\rightarrow_p 0$ as  $\bsigma_n\rightarrow_p 
\bsigma_0$\,. Moreover,  
$\sup_{\btheta \in \Theta}|M_n(\btheta,\bsigma_n)-
M(\btheta,\bsigma_n)|\rightarrow_p 0$ because the 
set of functions $\lbrace g_{\btheta,\bsigma}: 
\bsigma\in \bTheta_{\bsigma}\rbrace$ is of 
Glivenko-Cantelli type (Example 19.8 in 
Van der Vaart (2000))
because 
$\bTheta_{\bsigma}$ is compact, and the function 
$\bTheta_{\bsigma}\rightarrow \mathbb{R}:
\bsigma\rightarrow g_{\btheta,\bsigma}(\bx)$ is
continuous and dominated by an integrable 
function for every $\bx$ and $\btheta$.
Therefore $M_n(\cdot,\bsigma_n)$ converges 
uniformly to $M(\cdot,\bsigma_0)$, i.e., 
$\sup_{\btheta \in \Theta}|M_n(\btheta,\bsigma_n)-
M(\btheta,\bsigma_0)| \rightarrow_p 0$.
This implies 
$M_n(\btheta_0,\bsigma_n) \rightarrow_p 
M(\btheta_0,\bsigma_0)$. From the assumption we 
have $M_n(\btheta_n,\bsigma_n) \leqslant 
M_n(\btheta_0,\bsigma_n)+o_p(1)$, and thus 
$M_n(\btheta_n,\bsigma_n) \leqslant 
M(\btheta_0,\bsigma_0)+o_p(1)$. Therefore
\begin{align}\label{eq:proofcon}
M(\btheta_n,\bsigma_0)-M(\btheta_0,\bsigma_0)& 
\leqslant M(\btheta_n,\bsigma_0)-
M_n(\btheta_n,\bsigma_n)+o_p(1)\nonumber \\
& \leqslant \sup_{\btheta \in \Theta}
|M_n(\btheta,\bsigma_n)-M(\btheta,\bsigma_0)|+
o_p(1) \rightarrow_p 0\,.
\end{align}
Moreover, by~\eqref{assu_cons2}, for every 
$\varepsilon>0$ and $\bttheta_0\in \bTheta_0$ 
there exists a number $\eta>0$ such that 
$M(\btheta,\bsigma_0) > M(\bttheta_0,
\bsigma_0)+\eta$ for every $\btheta$ with 
$d\left(\btheta, \bTheta_0\right) \geqslant 
\varepsilon$. Therefore the event 
$\lbrace d\left(\btheta_n, \bTheta_0\right) 
\geqslant \varepsilon\rbrace$ is contained in 
the event $\lbrace M(\btheta_n,\bsigma_0) > 
M\left(\btheta_0,\bsigma_0\right)+\eta\rbrace$. 
The probability of the latter event converges 
to 0 in view of \eqref{eq:proofcon}.
\end{proof}

However, Proposition~\ref{prop4} does not yet
ensure that the sequence $\btheta_n$ converges 
in probability to some $\btheta_0 \in \bTheta_0$. 
For this we need the additional assumption that 
there exists a compact subset $\bTheta_s$ of 
$\bTheta$ on which $M(\btheta,\bsigma_0)$ has 
a unique global minimum point $\btheta^*_0$.

\begin{proposition}\label{prop5}
Assume that $\bsigma_n\rightarrow_p \bsigma_0$  
with $\bsigma_n, \bsigma_0\in \bTheta_{\bsigma}$
and that $M(\cdot,\bsigma_0)$ has a unique global 
minimum point $\btheta^*_0=\left(\vect(\bP^*_0),
\bmu^*_0\right)^T$ in the compact set $\bTheta_s$.
Then for any sequence of estimators 
$\btheta_n=(\vect(\bP_n),\bmu_n)^T$ in 
$\bTheta_s$ with $M_n(\btheta_n,
\bsigma_n) \leqslant M_n(\btheta^*_0,
\bsigma_n)+o_p(1)$ we have that 
$(\vect(\bP_n),\bmu_n) \rightarrow_p 
(\vect(\bP^*_0),\bmu^*_0)$. 
\end{proposition}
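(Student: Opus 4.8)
The plan is to obtain Proposition~\ref{prop5} by running the argument of Proposition~\ref{prop4} in a localized form, with the compact set $\bTheta_s$ playing the role of $\bTheta$ and with the minimizer set replaced by the singleton $\{\btheta^*_0\}$. A direct appeal to Proposition~\ref{prop4} is not quite enough, because there $\bTheta_0$ is the set of global minimizers of $M(\cdot,\bsigma_0)$ over the whole of $\bTheta$, which may be non-singleton and need not contain $\btheta^*_0$. So I would instead reuse the machinery inside its proof. The first observation is that every ingredient of that proof survives the restriction to $\bTheta_s$: since $\bTheta_s\subseteq\bTheta$, the uniform convergence established there yields $\sup_{\btheta\in\bTheta_s}|M_n(\btheta,\bsigma_n)-M(\btheta,\bsigma_0)|\rightarrow_p 0$, the continuity of $\btheta\mapsto M(\btheta,\bsigma_0)$ restricts to $\bTheta_s$, and in particular $M_n(\btheta^*_0,\bsigma_n)\rightarrow_p M(\btheta^*_0,\bsigma_0)$.

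Next I would verify the separation condition~\eqref{assu_cons2} with $\bTheta_0=\{\btheta^*_0\}$ relative to $\bTheta_s$. Fix $\varepsilon>0$ and put $K_\varepsilon:=\{\btheta\in\bTheta_s:\|\btheta-\btheta^*_0\|\geqslant\varepsilon\}$, a closed and hence compact subset of $\bTheta_s$ (if $K_\varepsilon$ is empty the condition, and the relevant event below, are vacuous). The continuous function $M(\cdot,\bsigma_0)$ attains its infimum over $K_\varepsilon$ at some $\bttheta_\varepsilon$; since $\bttheta_\varepsilon\neq\btheta^*_0$ while $\btheta^*_0$ is by hypothesis the \emph{unique} global minimizer of $M(\cdot,\bsigma_0)$ on $\bTheta_s$, we obtain $\inf_{\btheta\in K_\varepsilon}M(\btheta,\bsigma_0)=M(\bttheta_\varepsilon,\bsigma_0)>M(\btheta^*_0,\bsigma_0)$, so there is $\eta>0$ with $M(\btheta,\bsigma_0)>M(\btheta^*_0,\bsigma_0)+\eta$ for every $\btheta\in K_\varepsilon$. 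This is exactly~\eqref{assu_cons2} for the singleton.

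Finally I would assemble these facts as in the proof of Proposition~\ref{prop4}: from the hypothesis $M_n(\btheta_n,\bsigma_n)\leqslant M_n(\btheta^*_0,\bsigma_n)+o_p(1)$ together with $M_n(\btheta^*_0,\bsigma_n)\rightarrow_p M(\btheta^*_0,\bsigma_0)$ one gets $M(\btheta_n,\bsigma_0)-M(\btheta^*_0,\bsigma_0)\leqslant\sup_{\btheta\in\bTheta_s}|M_n(\btheta,\bsigma_n)-M(\btheta,\bsigma_0)|+o_p(1)\rightarrow_p 0$, so the event $\{\|\btheta_n-\btheta^*_0\|\geqslant\varepsilon\}$ is contained in $\{M(\btheta_n,\bsigma_0)>M(\btheta^*_0,\bsigma_0)+\eta\}$, whose probability tends to $0$. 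Since $\varepsilon$ was arbitrary this gives $\|\btheta_n-\btheta^*_0\|\rightarrow_p 0$, i.e.\ $(\vect(\bP_n),\bmu_n)\rightarrow_p(\vect(\bP^*_0),\bmu^*_0)$. I expect the only real friction to be the bookkeeping in the first step — checking that the proof of Proposition~\ref{prop4} nowhere uses a property of $\bTheta$ beyond its compactness, the continuity and uniform boundedness of $g_{\btheta,\bsigma}$ (which drive the Glivenko--Cantelli and dominated-convergence arguments), and condition~\eqref{assu_cons2} for whatever minimizer set is in play — so that the localization to $\bTheta_s$ with $\bTheta_0=\{\btheta^*_0\}$ is legitimate; everything after that is routine.
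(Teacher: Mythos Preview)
Your proposal is correct and follows essentially the same approach as the paper: the paper's proof simply states that the argument of Proposition~\ref{prop4} carries over once one notes that~\eqref{assu_cons2} is automatically satisfied because $\bTheta_s$ is compact and $\btheta\mapsto M(\btheta,\bsigma_0)$ is continuous, which is precisely the separation step you spell out. Your write-up is in fact more careful than the paper's, explicitly noting why a direct appeal to Proposition~\ref{prop4} does not suffice and checking that nothing beyond compactness, continuity, and boundedness is used in the localized argument.
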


\begin{proof}[\textbf{Proof of Proposition \ref{prop5}}]
The proof is analogous to the proof of Proposition 
\ref{prop4}, by noting that \eqref{assu_cons2} is 
automatically satisfied as $\bTheta_s$ is compact, 
and the function $\bTheta\rightarrow \mathbb{R}:
\btheta\rightarrow M(\btheta,\bsigma)$  is continuous 
for every $\bsigma\in \bTheta_{\bsigma}$ 
(Problem 5.27, Van der Vaart (2000)).
\end{proof}

For the next step in the consistency of cellRCov 
we need the consistency of $\bSigma_{\MCD}(\bhU)$. 
We use~\eqref{eq:ux} which obtains $\bu$ as
\begin{equation*}
\bu(\bx,\btheta^r,\bsigma)=\argmin_{\bu\in \mathbb{R}^k}g^{\bu}_{\btheta^r,\bsigma}(\bx,\bu),
\end{equation*}
where
\begin{equation*}
g^{\bu}_{\btheta^r,\bsigma}(\bx,\bu) :=
 \rho_2 \left(\frac{1}{\sigma_2}\sqrt{\frac{1}{p}
 \sum_{j=1}^{p} \sigma_{1,j}^2\rho_1\left(\frac{
 x_{j}-\mu_j-\bu^T\bv_j}{\sigma_{1,j}}\right)}
 \right),
\end{equation*}
where $\btheta^r=(\vect(\bV),\bmu)^T \in \bTheta^r$ 
with
\begin{equation*}
\bTheta^r=\lbrace (\vect(\bV),\bmu)\in 
  \mathbb{R}^{p(\rk+1)}: 
  \bmu\in\bTheta_{\bmu} \text{ and } 
  \bV(\bV^T\bV)^{-1}\bV^T\bmu = \bzero
\rbrace.
\end{equation*} 
Here $\bV$ is a parametrization of the projection
matrix $\bP$ in the sense that 
$\bV(\bV^T\bV)^{-1}\bV^T = \bP$. We know that
such a matrix $\bV$ is not unique, but we will 
see later on that different choices of $\bV$
lead to the same result for cellRCov. Denote by 
$H^{\bu}_n=H^{\bu}(H_n,\btheta^r_n,\bsigma_n)$ 
the distribution of 
$\bu(\bx,\btheta^r_n,\bsigma_n)$ when $\bx$ is 
distributed as $H_n$\,, with $\btheta^r_n=\left(\vect(\bV_n),\bmu_n\right)^T$ 
and with 
$H^{\bu}_0=H^{\bu}(H_0,\btheta^r_0,\bsigma_0)$ the 
distribution of $\bu(\bx,\btheta^r_0,\bsigma_0)$.
In Proposition~\ref{prop6} we will assume that
$\btheta^r_n\rightarrow_p \btheta^r$, and in
the proof of Proposition~\ref{prop7} it will be
shown that such a sequence of representations
$\bV_n$ of $\bP_n$ for which
$\vect(\bV_n) \rightarrow_p \vect(\bV)$ always
exists, using the fact that 
$\vect(\bP_n) \rightarrow_p \vect(\bP)$
from Proposition~\ref{prop5}. We denote 
$\bSigma^{\bu}_{\MCD,n}=\bSigma_{\MCD}(H^{\bu}_n)$, 
$\bmu^{\bu}_{\MCD,n}=\bmu_{\MCD}(H^{\bu}_n)$ and 
$\bSigma^{\bu}_{\MCD,0}=\bSigma_{\MCD}(H^{\bu}_0)$, 
$\bmu^{\bu}_{\MCD,0}=\bmu_{\MCD}(H^{\bu}_0)$, where 
$\bmu_{\MCD}(\cdot)$ and $\bSigma_{\MCD}(\cdot)$  
are the MCD functionals of location and scatter 
with parameter $0.5<\alpha<1$.

\begin{proposition}\label{prop6} 
Suppose $H^{\bu}_0$ has a unimodal elliptic
density, $H^{\bu}_0(A)<\alpha$ for 
every hyperplane $A \subset \mathbb{R}^k$, and 
that $H^{\bu}_0$  and $H^{\bu}_n$ have no mass on 
the boundary of the minimizing ellipsoid generated 
by $\bSigma^{\bu}_{\MCD,0}$, $\bmu^{\bu}_{\MCD,0}$ 
and $\bSigma^{\bu}_{\MCD,n}$, $\bmu^{\bu}_{\MCD,n}$.
Moreover, consider the sequences $\bsigma_n$ 
and $\btheta^r_n$ with 
$\bsigma_n\rightarrow_p \bsigma_0$ and 
$\btheta^r_n\rightarrow_p \btheta^r_0$, and assume 
that there exists a set 
$\bTheta_s^{\bu}\subseteq\mathbb{R}^k$ such that $\bTheta_s^{\bu}\rightarrow \mathbb{R}:\bu
\rightarrow g^{\bu}_{\btheta^r_0,\bsigma_0}(\bx,\bu)$ 
is a unique global minimizer for each 
$\bx\in \mathbb{R}^p$,  
$\bu(\bx,\btheta^r_0,\bsigma_0)$ and the sequence $\bu(\bx,\btheta^r_n,\bsigma_n)$ are in 
$\bTheta_s^{\bu}$, and
$\mathbb{R}^p\times\bTheta_s^{\bu}\rightarrow 
\mathbb{R}:(\bx,\bu)\rightarrow 
g^{\bu}_{\btheta^r_0,\bsigma_0}(\bx,\bu)$ is 
continuous. Then
$(\bSigma^{\bu}_{\MCD,n},\bmu^{\bu}_{\MCD,n})
\rightarrow_p 
(\bSigma^{\bu}_{\MCD,0},\bmu^{\bu}_{\MCD,0})$.
\end{proposition}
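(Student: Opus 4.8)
The plan is to show that the empirical distribution $H^{\bu}_n$ of the cellPCA scores converges weakly (in probability) to $H^{\bu}_0$, and then to invoke the continuity of the MCD location and scatter functionals at an elliptical law, as established by \cite{butler1993asymptotics} and \cite{cator2012central}. First I would handle the score map. Writing $\bu_n(\bx):=\bu(\bx,\btheta^r_n,\bsigma_n)=\argmin_{\bu\in\bTheta_s^{\bu}}g^{\bu}_{\btheta^r_n,\bsigma_n}(\bx,\bu)$ and $\bu_0(\bx):=\bu(\bx,\btheta^r_0,\bsigma_0)$, I note that $g^{\bu}_{\btheta^r,\bsigma}(\bx,\bu)$ is jointly continuous in all its arguments, because $\rho_1,\rho_2$ are continuous and the $\sigma_{1,j},\sigma_2$ stay bounded away from $0$ on the compact $\bTheta_{\bsigma}$. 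Since the minimizer over $\bTheta_s^{\bu}$ is unique at $(\btheta^r_0,\bsigma_0)$ for every $\bx$, Berge's maximum theorem (see e.g.\ \cite{sundaram1996first}) gives that $(\bx,\btheta^r,\bsigma)\mapsto\bu(\bx,\btheta^r,\bsigma)$ is continuous near $(\bx,\btheta^r_0,\bsigma_0)$, uniformly for $\bx$ in compact sets. As $\btheta^r_n\rightarrow_p\btheta^r_0$ and $\bsigma_n\rightarrow_p\bsigma_0$, it follows that, on an event of probability tending to $1$, $\bu_n(\bx)\rightarrow\bu_0(\bx)$ for every $\bx$ and uniformly on compacts.

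Next I would deduce that $H^{\bu}_n$ converges weakly to $H^{\bu}_0$ in probability. For any bounded continuous test function $f$,
\begin{align*}
  \int f\,dH^{\bu}_n
  &= \frac{1}{n}\sum_{i=1}^n f\big(\bu_n(\bx_i)\big)\\
  &= \frac{1}{n}\sum_{i=1}^n f\big(\bu_0(\bx_i)\big)
     + \frac{1}{n}\sum_{i=1}^n\Big(f\big(\bu_n(\bx_i)\big)-f\big(\bu_0(\bx_i)\big)\Big).
\end{align*}
The first term converges almost surely to $\E_{H_0}[f(\bu_0(\bx))]=\int f\,dH^{\bu}_0$ by the strong law of large numbers. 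For the remainder, given $\varepsilon>0$ I would choose a compact $K\subset\mathbb{R}^p$ with $H_0(K^c)<\varepsilon$: the indices with $\bx_i\in K$ contribute $o_p(1)$ by the uniform convergence on $K$ from the previous step together with the uniform continuity of $f$, while those with $\bx_i\notin K$ contribute at most $2\|f\|_\infty H_n(K^c)\leqslant 2\|f\|_\infty(\varepsilon+o(1))$ almost surely by the law of large numbers. Letting $\varepsilon\downarrow 0$ shows the remainder is $o_p(1)$, hence $\int f\,dH^{\bu}_n\rightarrow_p\int f\,dH^{\bu}_0$ for every bounded continuous $f$, which is the desired weak convergence in probability.

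Finally I would invoke the stability of the MCD. The assumptions --- $H^{\bu}_0$ has a unimodal elliptic density, $H^{\bu}_0(A)<\alpha$ for every hyperplane $A\subset\mathbb{R}^k$, and neither $H^{\bu}_0$ nor $H^{\bu}_n$ charges the boundary of the relevant minimizing ellipsoids --- are exactly those under which \cite{butler1993asymptotics} and \cite{cator2012central} guarantee that $(\bmu^{\bu}_{\MCD,0},\bSigma^{\bu}_{\MCD,0})$ is the unique MCD minimizer and that the functionals $\bmu_{\MCD}(\cdot)$ and $\bSigma_{\MCD}(\cdot)$ are continuous at $H^{\bu}_0$ for the weak topology. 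Since a map that is continuous at a fixed point turns weak convergence in probability of its argument into convergence in probability of its image, combining this with the previous step yields $(\bSigma^{\bu}_{\MCD,n},\bmu^{\bu}_{\MCD,n})\rightarrow_p(\bSigma^{\bu}_{\MCD,0},\bmu^{\bu}_{\MCD,0})$.

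\emph{The main obstacle} will be the middle step: upgrading the pointwise, locally uniform convergence of the \emph{random} score map $\bu_n(\cdot)$ to weak convergence in probability of the induced empirical measures, when both the map and the sampling law change with $n$. The compact-plus-tail split above is precisely what makes this work, and it relies on the neighbourhood on which the argmin map is continuous being uniform in $\bx$ over compacts, which in turn rests on the joint continuity of $g^{\bu}$ and Berge's theorem. A secondary technicality is checking the ``no mass on the boundary ellipsoid'' condition for the random $H^{\bu}_n$, needed so that the integral representation of the MCD functionals used in \cite{cator2012central} applies.
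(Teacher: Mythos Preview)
Your proposal is correct and follows essentially the same route as the paper: both establish weak convergence $H^{\bu}_n\Rightarrow H^{\bu}_0$ via the maximum theorem (Berge) for the continuity of the score map, and then invoke the weak continuity of the MCD functionals from \cite{cator2012central} and \cite{butler1993asymptotics}. Your compact-plus-tail argument for the middle step is in fact more explicit than the paper's, which appeals to the continuous mapping theorem somewhat informally; the paper also cites Corollary~4.1 and Lemma~A.7 of \cite{cator2012central} directly rather than phrasing the last step as ``continuity of the MCD functionals,'' but the content is the same.
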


\begin{proof}[\textbf{Proof of Proposition \ref{prop6}}]
The proof is directly derived from the application of Corollary~4.1 of  Cator and Lopuh\"aa (2012).
Indeed, $H^{\bu}_0$ satisfies (3.1) and (4.1) of Cator and Lopuh\"aa (2012), and guarantees that $ \bSigma^{\bu}_{\MCD,0}$, $ \bmu^{\bu}_{\MCD,0}$ are unique as it has a unimodal elliptical contour density \citep{butler1993asymptotics}.
Moreover, to prove that $H^{\bu}_n$ converges weakly to $H^{\bu}_0$, we have to show that $\bu(\bx,\btheta^r_n,\bsigma_n)$ converges in distribution to $\bu(\bx,\btheta^r_0,\bsigma_0)$.
Let us define
\begin{equation*}
\bu_s(\bx,\btheta^r,\bsigma)=\argmin_{\bu\in \bTheta_s^{\bu}}g^{\bu}_{\btheta^r,\bsigma}(\bx,\bu),
\end{equation*}
then by assumption $\bu_s(\bx,\btheta^r_n,\bsigma_n)=\bu(\bx,\btheta^r_n,\bsigma_n)$ and $\bu_s(\bx,\btheta^r_0,\bsigma_0)=\bu(\bx,\btheta^r_0,\bsigma_0)$ and thus convergences in distribution of $\bu_s(\bx,\btheta^r_n,\bsigma_n)$ to $\bu_s(\bx,\btheta^r_0,\bsigma_0)$ implies  that $\bu(\bx,\btheta^r_n,\bsigma_n)$ converges in distribution to $\bu(\bx,\btheta^r_0,\bsigma_0)$.
To use the continuous mapping theorem (Theorem 2.3, 
Van der Vaart (2000)),
we have to show that the function $\mathbb{R}^p\rightarrow \bTheta_s^{\bu}:\bx\rightarrow \bu_s(\bx,\btheta^r_0,\bsigma_0)$ is continuous. This comes from the application of the maximum theorem (Theorem 9.14, \cite{sundaram1996first})  by using the fact that   by assumption
$\mathbb{R}^p\times\bTheta_s^{\bu}\rightarrow \mathbb{R}:(\bx,\bu)\rightarrow g^{\bu}_{\btheta^r_0,\bsigma_0}(\bx,\bu)$ is continuous, and  $\bu_s(\bx,\btheta^r_0,\bsigma_0)$ is unique for each  $\bx\in \mathbb{R}^p$. Indeed $\mathbb{R}^p\rightarrow \bTheta_s^{\bu}:\bx\rightarrow \bu_s(\bx,\btheta^r_0,\bsigma_0)$ is a single-valued correspondence, i.e., a correspondence where each element in the domain is associated with at most one element in the codomain  \citep{sundaram1996first}. 
From the maximum theorem, we have that the mapping 
$\mathbb{R}^p \to \bTheta_s^{\bu} : \bx \mapsto \bu_s(\bx, \btheta^r_0, \bsigma_0)$
is a semicontinuous correspondence. Since it is also single-valued, it is continuous when viewed as a function (Theorem 9.12, \cite{sundaram1996first}).
Thus, $\bu_s(\bx,\btheta^r_n,\bsigma_n)$ converges in distribution to $\bu_s(\bx,\btheta^r_0,\bsigma_0)$. 
Because $H^{\bu}_0$  and $H^{\bu}_n$  have no mass on the boundary of the minimizing ellipsoids generated by $ \bSigma^{\bu}_{\MCD,0}$, $ \bmu^{\bu}_{\MCD,0}$ and $ \bSigma^{\bu}_{\MCD,n}$, $ \bmu^{\bu}_{\MCD,n}$,  we have that the indicator function corresponding to these ellipsoids are the MCD functional minimizing functions by Theorem 3.2 of Cator and Lopuh\"aa (2012). By Lemma A.7 of Cator and Lopuh\"aa (2012), there exists a ball with centre zero that contains the set of points where the indicator function corresponding to the minimizing ellipsoid is different from zero. Then the convergence result follows.
\end{proof}
Let define $\bSigma^{\bx^0}_n:=\bSigma^{\bx^0}(H_n)=\bV_n\bSigma^{\bu}_{\MCD,n}\bV_n^T$ and $\bSigma^{\bx^0}_0:=\bSigma^{\bx^0}(H_0)=\bV_0\bSigma^{\bu}_{\MCD,0}\bV_0^T$.

\begin{proposition}\label{prop7}
Under the assumptions of Propositions~\ref{prop5} and 
\ref{prop6}, we have that 
$\bSigma^{\bx^0}_n\rightarrow_p\bSigma^{\bx^0}_0$ 
regardless of the choice of the sequence $\bV_n$\,, and 
where $\bSigma^{\bx^0}_0$ does not depend on $\bV_0$.
\end{proposition}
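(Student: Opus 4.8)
The plan is to deduce the result from the continuous mapping theorem, once two points are settled: (i) the product $\bV\bSigma^{\bu}_{\MCD}\bV^T$ depends on the representation $\bV$ only through the projection $\bP=\bV(\bV^T\bV)^{-1}\bV^T$, and (ii) the empirical projection $\bP_n$ admits a measurable sequence of representations $\bV_n$ with $\vect(\bV_n)\rightarrow_p\vect(\bV_0)$, which is exactly the hypothesis under which Proposition~\ref{prop6} is stated.

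For (i), suppose $\bV$ and $\bV'$ both satisfy $\bV(\bV^T\bV)^{-1}\bV^T=\bV'((\bV')^T\bV')^{-1}(\bV')^T=\bP$. Then they have the same $\rk$-dimensional column space and full column rank, so $\bV'=\bV\bO$ for a nonsingular $\rk\times\rk$ matrix $\bO$. For every $\bx$, the fitted point $\bx^0$ in model~\eqref{eq:model2} is determined by $(\bP,\bmu,\bsigma)$ alone, and the coordinate vector solving $\bV\bu=\bx^0$ is $\bu=(\bV^T\bV)^{-1}\bV^T\bx^0$; hence the coordinate vector attached to $\bV'$ is $\bu'=((\bV')^T\bV')^{-1}(\bV')^T\bx^0=\bO^{-1}\bu$. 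Thus the law $H^{\bu'}$ of $\bu'$ is the image of $H^{\bu}$ under the invertible linear map $\bu\mapsto\bO^{-1}\bu$, and by affine equivariance of the MCD scatter functional (see e.g.\ \cite{cator2012central,butler1993asymptotics}) $\bSigma_{\MCD}(H^{\bu'})=\bO^{-1}\bSigma_{\MCD}(H^{\bu})\bO^{-T}$, so $\bV'\bSigma_{\MCD}(H^{\bu'})(\bV')^T=\bV\bO\bO^{-1}\bSigma_{\MCD}(H^{\bu})\bO^{-T}\bO^T\bV^T=\bV\bSigma_{\MCD}(H^{\bu})\bV^T$. Applied at $H_0$ this shows $\bSigma^{\bx^0}_0$ does not depend on the choice of $\bV_0$, and applied at $H_n$ it shows $\bSigma^{\bx^0}_n=\bV_n\bSigma^{\bu}_{\MCD,n}\bV_n^T$ takes the same value for every admissible sequence $\bV_n$; so it suffices to prove convergence for one convenient choice.

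For (ii), I would fix $\bV_0$ with orthonormal columns spanning the range of $\bP^*_0$, so that $\bV_0\bV_0^T=\bP^*_0$ and, since $\bP^*_0\bmu^*_0=\bzero$, the pair $\btheta^r_0=(\vect(\bV_0),\bmu^*_0)^T$ lies in $\bTheta^r$. By Proposition~\ref{prop5}, $\bP_n\rightarrow_p\bP^*_0$, hence with probability tending to $1$ the matrix $\bV_0^T\bP_n\bV_0$ is close to $\bI_{\rk}$ and therefore symmetric positive definite; on that event set $\bV_n:=\bP_n\bV_0(\bV_0^T\bP_n\bV_0)^{-1/2}$, and $\bV_n:=\bV_0$ otherwise. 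Using $\bP_n=\bP_n^T=\bP_n^2$ one verifies $\bV_n^T\bV_n=\bI_{\rk}$ and that the columns of $\bV_n$ span the range of $\bP_n$, so $\bV_n\bV_n^T=\bP_n$ and $\bV_n(\bV_n^T\bV_n)^{-1}\bV_n^T\bmu_n=\bP_n\bmu_n=\bzero$, i.e.\ $\btheta^r_n=(\vect(\bV_n),\bmu_n)^T\in\bTheta^r$. Since $\bP\mapsto\bP\bV_0(\bV_0^T\bP\bV_0)^{-1/2}$ is continuous near $\bP^*_0$ and equals $\bV_0$ at $\bP^*_0$, the continuous mapping theorem together with $\bmu_n\rightarrow_p\bmu^*_0$ gives $\btheta^r_n\rightarrow_p\btheta^r_0$.

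With this $\btheta^r_n\rightarrow_p\btheta^r_0$ and $\bsigma_n\rightarrow_p\bsigma_0$, the hypotheses of Proposition~\ref{prop6} hold, so $\bSigma^{\bu}_{\MCD,n}\rightarrow_p\bSigma^{\bu}_{\MCD,0}$; combining this with $\vect(\bV_n)\rightarrow_p\vect(\bV_0)$ and applying the continuous mapping theorem to the continuous map $(\bV,\bSigma)\mapsto\bV\bSigma\bV^T$ yields $\bSigma^{\bx^0}_n\rightarrow_p\bSigma^{\bx^0}_0$, and by step (i) neither the limit nor the statement depends on the choices of $\bV_n$ and $\bV_0$. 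The main obstacle is step (ii): Proposition~\ref{prop6} is formulated in terms of convergence of the representative $\bV_n$ rather than of the intrinsic object $\bP_n$, so one has to manufacture a measurable, convergent sequence of representatives; the Procrustes-type formula above does the job, but some care is required to check that it yields orthonormal columns spanning $\operatorname{range}(\bP_n)$ and is well defined on an event of probability tending to one.
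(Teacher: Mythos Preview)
Your proof is correct and follows the same three-step skeleton as the paper's: establish that $\bV\bSigma^{\bu}_{\MCD}\bV^T$ depends only on $\bP$ via affine equivariance of the MCD, manufacture a convergent sequence of representations $\bV_n\rightarrow_p\bV_0$ from $\bP_n\rightarrow_p\bP^*_0$, and then invoke Proposition~\ref{prop6} together with the continuous mapping theorem. The only substantive difference is in the alignment step: the paper takes arbitrary orthonormal bases $\btV_n,\btV_0$ (via QR) and rotates $\btV_n$ by the Procrustes solution obtained from the SVD of $\btV_0^T\btV_n$, whereas you use the direct formula $\bV_n=\bP_n\bV_0(\bV_0^T\bP_n\bV_0)^{-1/2}$; both are standard ways to produce a continuously varying orthonormal basis of $\mathrm{range}(\bP_n)$ near $\bP^*_0$, and your version has the minor advantage of being an explicit continuous function of $\bP_n$, which makes the measurability and continuous-mapping arguments completely transparent.
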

\begin{proof}[\textbf{Proof of Proposition \ref{prop7}}]
Under the assumption of Proposition \ref{prop5}, we have that $\bP_n\rightarrow_p\bP^*_0$ and $\bmu_n\rightarrow_p\bmu_0$. 
However, to apply the results of Proposition \ref{prop6}, we 
additionally need that $\bV_n\rightarrow_p\bV_0$.
We will first show that $\bP_n\rightarrow_p\bP^*_0$ implies that there we can find a sequence $\bV_n$ and $\bV_0$ such that $\bV_n\rightarrow_p\bV_0$. Then, we will show that consistency results are valid regardless of which sequence $\bV_n$ and $\bV_0$ is selected.
Note that $\bP_n=\bV_n(\bV_n^T\bV_n)^{-1}\bV_n^T$ and $\bP^*_0=\bV_0(\bV_0^T\bV_0)^{-1}\bV_0^T$. Let us consider the QR decompostion of $\bV_n=\btV_n\bR_n$ and  $\bV_n=\btV_0\bR_0$, where $\btV_n$ and $\btV_0$ are orthonormal matrices with the same dimension as $\bV_n$, and $\bR_n$ and $\bR_0$ are upper triangular invertible $\rk \times \rk$ matrices.
Then, $\bP_n=\btV_n\btV_n^T$ and $\bP^*_0=\btV_0\btV_0^T$. The condition $\bP_n\rightarrow_p\bP^*_0$ implies that the singular values of the matrix $\bA_n=\btV_0^T\btV_n$ tend to 1 in probability. Then, by applying the singular value decomposition to $\bA_n$, we have that $\bA_n=\bL_n\bS_n\bR_n$,where $\bL_n$ and $\bR_n$ are $\rk \times \rk$ orthonormal matrices and $\bS_n=\diag(s_{1,n},\dots,s_{\rk,n})$, with $s_{\ell,n}\rightarrow_p 1$. Let us define $\bQ_n:=\bR_n\bL_n^T$, then $\bA_n\bQ_n=\bL_n\bS_n\bL_n^T$ and  $\btV_0^T\btV_n\bQ_n\rightarrow_p \bI_{\rk}$. Thus  $||\btV_n\bQ_n-\btV_0||_F^2\rightarrow_p 0$, which implies $\btV_n\bQ_n\rightarrow_p\btV_0$.
For every sequence $\bV_n$ and $\bV_0$ we can find a new sequence $\btV_n\bQ_n$ and $\btV_0$ such that $\btV_n\bQ_n\rightarrow_p\btV_0$, which can be used to apply the results of Proposition \ref{prop6}. Then by direct application of the continuous mapping theorem (Theorem 2.3,
Van der Vaart (2000))
and the results of Proposition \ref{prop6}, we have that $\btV_n\bQ_n\btSigma^{\bu}_{\MCD,n}(\btV_n\bQ_n)^T\rightarrow_p\btV_0\btSigma^{\bu}_{\MCD,0}\btV_0^T$, where $\btSigma^{\bu}_{\MCD,n}$ and $\btSigma^{\bu}_{\MCD,0}$ are computed based on the distribution of $\bu$ corresponding to $\btV_n\bQ_n$ and $\btV_0$.
Note that, $\btV_n\bQ_n=\btV_n\bR_n\bR_n^{-1}\bQ_n=\bV_n\bR_n^{-1}\bQ_n$, and the distribution of $\bu$ corresponding to $\bV_n$ is equal to $\bR_n^{-1}\bQ_n$ the distribution of $\bu$ corresponding to $\btV_n\bQ_n$. Then, $\btV_n\bQ_n\btSigma^{\bu}_{\MCD,n}(\btV_n\bQ_n)^T=\bV_n\bR_n^{-1}\bQ_n\btSigma^{\bu}_{\MCD,n}(\bV_n\bR_n^{-1}\bQ_n)^T=\bV_n\bSigma^{\bu}_{\MCD,n}\bV_n^T$, where the last equality comes from the affine equivariance of the MCD covariance estimator. Similarly, we have that $\btV_0\btSigma^{\bu}_{\MCD,0}\btV_0^T=\bV_0\bSigma^{\bu}_{\MCD,0}\bV_0^T$. Thus, $\bV_n\bSigma^{\bu}_{\MCD,n}\bV_n^T\rightarrow_p\bV_0\bSigma^{\bu}_{\MCD,0}\bV_0^T$.
This result is the same for any sequence $\bV_n$ and $\bV_0$.
To prove that $\bSigma^{rs}_0$ is independent of $\bV_0$ let consider two different $\bV_0$, i.e., $\bV^1_0$ and  $\bV_{0,2}$. By assumption $\bP^*_0$ is unique, then, we have $\bV_{0,1}(\bV_{0,1}^T\bV_{0,1})^{-1}\bV_{0,1}^T=\bV_{0,2}(\bV_{0,2}^T\bV_{0,2})^{-1}\bV_{0,2}^T$ and $\bV_{0,1}=\bV_{0,2}(\bV_{0,2}^T\bV_{0,2})^{-1}\bV_{0,2}^T\bV_{0,1}$. Let us denote with $\bSigma^{\bu,1}_{\MCD,0}$ and $\bSigma^{\bu,2}_{\MCD,0}$, the MCD covariance estimator applied to $\bu$ corresponding to $\bV_{0,1}$ and $\bV_{0,2}$. Then $\bV_{0,1}\bSigma^{\bu,1}_{\MCD,0}\bV_{0,1}^T=\bV_{0,2}(\bV_{0,2}^T\bV_{0,2})^{-1}\bV_{0,2}^T\bV_{0,1}\bSigma^{\bu,1}_{\MCD,0}(\bV_{0,2}(\bV_{0,2}^T\bV_{0,2})^{-1}\bV_{0,2}^T\bV_{0,1})^T=\bV_{0,2}\bSigma^{\bu,2}_{\MCD,0}\bV_{0,2}^T$, where last equality comes from the affine equivariance of the MCD covariance estimator and the fact that the distribution of $\bu$ corresponding to $\bV_{0,1}$ is equal to $(\bV_{0,2}^T\bV_{0,2})^{-1}\bV_{0,2}^T\bV_{0,1}$ the distribution of $\bu$ corresponding to $\bV_{0,2}$.
Thus the proposition follows.
\end{proof}

The final step in the construction of cellRCov
is orthogonal to the estimated principal subspace.
Let us put $\bSigma_{\xort,n}^R:=\bSigma_{\xort}^R(H_n)$. 
In the following we will prove that 
$\bSigma_{\xort,n}^R$ is a consistent estimator of $\bSigma_{\xort,0}^R:=\bSigma_{\xort}^R(H_0)$, that is,  
$\bSigma_{\xort,n}^R$ converges in probability to 
$\bSigma_{\xort,0}^R$.

Consider the function 
\begin{equation*}
G^{\xort}(\bSigma_{\xort}^R,\bT,H):=
  \Big|\Big|\bSigma_{\xort}^R-
  \frac{\E_{H}\left[\bg^{\xort}_{1,\bT}(\bx)\right]}
  {\E_{H}\left[g^{\xort}_{2,\bT}(\bx)\right]}
  \Big|\Big|,
\end{equation*}
where $\bT=\left(\vect(\bP), \bmu,\bsigma\right)^T$ and $\bSigma_{\xort}^R\in \mathbb{R}^{p^2}$,  
\begin{multline*}
\bg^{\xort}_{1,\bT}(\bx) :=\vect\Big( (1-\delta)w^{\case}\btW(\bx - \bmu-\bV\bu)(\bx - \bmu-\bV\bu)^T
  \btW\\+\delta w^{\case}\left(\btW(\bx - \bmu-\bV\bu)(\bx - \bmu-\bV\bu)^T
  \btW\right)\odot\bI_p\Big),
\end{multline*}
and \begin{equation*}
g^{\xort}_{2,\bT}(\bx) := \frac{1}{p^2}
  \sum_{j=1}^p\sum_{\ell=1}^p 
  w^{\case}w_{j}^{\cell}w_{\ell}^{\cell},
\end{equation*}
and $\bx^0$ is 
defined as in~\eqref{eq:ux2}.
Then let us indicate with $M^{\xort}_n(\bSigma_{\xort}^R,\bT):=G^{\xort}(\bSigma_{\xort}^R,\bT, H_n)$, and  $M^{\xort}(\bSigma_{\xort}^R,\bT):=G^{\xort}(\bSigma_{\xort}^R,\bT, H_0)$.  Let us further consider compact subsets $\bTheta_{\bSigma_{\xort}^R}$ of $\mathbb{R}^{p^2}$ and $\bTheta_{\bT}$ of $\mathbb{R}^{(p(p+1))}\times \bTheta_{\bsigma}$.
We aim to study the convergence properties of a minimizing value $\bsigma_{\xort,n}^R:=\vect(\bSigma_{\xort,n}^R)^T$ of $M^{\xort}_n(\bSigma_{\xort}^R,\bT_n)$  to a minimizing value $\bsigma_{\xort,0}^R:=\vect(\bSigma_{\xort,0}^R)^T$ of $M^{\xort}(\cdot,\bT_0)$, where $\bT_n:=\left(\vect(\bP_n), \bmu_n,\bsigma_n\right)^T$ and   $\bT_0:=\left(\vect(\bP_0), \bmu_0,\bsigma_0\right)^T$.

\begin{proposition}\label{prop8}
Assume that $\bT_n\rightarrow_p \bT_0$, with 
$\bT_n, \bT_0\in \bTheta_{\bT}$, the functions 
$\bTheta_{\bT}\rightarrow \mathbb{R}^{p^2}:
\bT\rightarrow \bg^{\xort}_{1,\bT}(\bx)$ and 
$\bTheta_{\bT}\rightarrow \mathbb{R}:
\bT\rightarrow g^{\xort}_{2,\bT}(\bx)$ are 
continuous for every $\bx$, and that the functions
$\mathbb{R}^p\rightarrow \mathbb{R}^{p^2}:
\bx\rightarrow \bg^{\xort}_{1,\bT}(\bx)$ and 
$\mathbb{R}^p\rightarrow \mathbb{R}:
\bx\rightarrow g^{\xort}_{2,\bT}(\bx)$ are 
dominated by an integrable function for every 
$\bT\in\bTheta_{\bT}$.
Then we have for any sequence of estimators 
$\bsigma_{\xort,n}^R$ in $\bTheta_{\bsigma_{\xort}}$ with $M_n\left(\bsigma_{\xort,n}^R,\bT_n\right) \leqslant M_n\left(\bsigma_{\xort,0}^R,\bT_n\right)+o_p(1)$ that 
$||\bsigma_{\xort,n}^R-\bsigma_{\xort,0}^R||
\rightarrow_p 0$.
\end{proposition}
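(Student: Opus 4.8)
The plan is to prove Proposition~\ref{prop8} by the argmin-consistency argument already used for Propositions~\ref{prop4} and~\ref{prop5}, which adapts Theorem~5.7 of \cite{van2000asymptotic}. The helpful feature here is that the population criterion is a Euclidean distance to a single point: from the proof of Proposition~\ref{prop3}, the functional $\bSigma_{\xort,0}^R$ is characterized by $\bg^{\xort}(H_0,\bT_0,\bSigma_{\xort,0}^R)=\bzero$, which (since the factor $b$ there equals $g^{\xort}_{2,\bT_0}$) means exactly $\bsigma_{\xort,0}^R=\E_{H_0}[\bg^{\xort}_{1,\bT_0}(\bx)]/\E_{H_0}[g^{\xort}_{2,\bT_0}(\bx)]$, so $M^{\xort}(\bSigma_{\xort}^R,\bT_0)=||\bSigma_{\xort}^R-\bsigma_{\xort,0}^R||$. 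Hence $\bsigma_{\xort,0}^R$ is the unique global minimizer of $M^{\xort}(\cdot,\bT_0)$, with minimal value $0$, and the well-separatedness condition analogous to~\eqref{assu_cons2} holds automatically, since for every $\varepsilon>0$ one has $\inf\{M^{\xort}(\bSigma_{\xort}^R,\bT_0):||\bSigma_{\xort}^R-\bsigma_{\xort,0}^R||\geqslant\varepsilon\}=\varepsilon>0$. No extra identifiability assumption is needed beyond $\bsigma_{\xort,0}^R\in\bTheta_{\bSigma_{\xort}^R}$.

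The substantive step is the uniform convergence
\[
  \sup_{\bSigma_{\xort}^R\in\bTheta_{\bSigma_{\xort}^R}}
  \big|M^{\xort}_n(\bSigma_{\xort}^R,\bT_n)-M^{\xort}(\bSigma_{\xort}^R,\bT_0)\big|\;\rightarrow_p\;0\,.
\]
Following the proof of Proposition~\ref{prop4}, I would split this by the triangle inequality into $\sup_{\bSigma_{\xort}^R}|M^{\xort}_n(\bSigma_{\xort}^R,\bT_n)-M^{\xort}(\bSigma_{\xort}^R,\bT_n)|$ plus $\sup_{\bSigma_{\xort}^R}|M^{\xort}(\bSigma_{\xort}^R,\bT_n)-M^{\xort}(\bSigma_{\xort}^R,\bT_0)|$. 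Writing $A_n(\bT)=\frac{1}{n}\sum_{i=1}^n\bg^{\xort}_{1,\bT}(\bx_i)$ and $a_n(\bT)=\frac{1}{n}\sum_{i=1}^n g^{\xort}_{2,\bT}(\bx_i)$, so that $M^{\xort}_n(\bSigma_{\xort}^R,\bT)=||\bSigma_{\xort}^R-A_n(\bT)/a_n(\bT)||$, the continuity-in-$\bT$ and domination assumptions make the classes $\{\bg^{\xort}_{1,\bT}:\bT\in\bTheta_{\bT}\}$ and $\{g^{\xort}_{2,\bT}:\bT\in\bTheta_{\bT}\}$ Glivenko--Cantelli (Example~19.8 in \cite{van2000asymptotic}) over the compact $\bTheta_{\bT}$, hence $\sup_{\bT}||A_n(\bT)-\E_{H_0}[\bg^{\xort}_{1,\bT}]||\rightarrow_p0$ and $\sup_{\bT}|a_n(\bT)-\E_{H_0}[g^{\xort}_{2,\bT}]|\rightarrow_p0$. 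For the second term, dominated convergence gives continuity of $\bT\mapsto\E_{H_0}[\bg^{\xort}_{1,\bT}]$ and $\bT\mapsto\E_{H_0}[g^{\xort}_{2,\bT}]$, and then the continuous mapping theorem (Theorem~2.3 in \cite{van2000asymptotic}) with $\bT_n\rightarrow_p\bT_0$ sends it to $0$ in probability; the same continuous mapping argument applied to $(A_n(\bT_n),a_n(\bT_n))$ handles the first term, yielding $M^{\xort}_n(\cdot,\bT_n)\rightarrow_p M^{\xort}(\cdot,\bT_0)$ uniformly on $\bTheta_{\bSigma_{\xort}^R}$.

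With the uniform convergence in hand, the conclusion follows exactly as in~\eqref{eq:proofcon}: from $M^{\xort}_n(\bsigma_{\xort,n}^R,\bT_n)\leqslant M^{\xort}_n(\bsigma_{\xort,0}^R,\bT_n)+o_p(1)$ and $M^{\xort}_n(\bsigma_{\xort,0}^R,\bT_n)\rightarrow_p M^{\xort}(\bsigma_{\xort,0}^R,\bT_0)=0$ we obtain $M^{\xort}(\bsigma_{\xort,n}^R,\bT_0)\leqslant M^{\xort}_n(\bsigma_{\xort,n}^R,\bT_n)+\sup_{\bSigma_{\xort}^R}|M^{\xort}_n-M^{\xort}|=o_p(1)$, and since $M^{\xort}(\bsigma_{\xort,n}^R,\bT_0)=||\bsigma_{\xort,n}^R-\bsigma_{\xort,0}^R||$, this is the claim.

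The step I expect to be the main obstacle is making the ratio structure of $M^{\xort}$ rigorous: for the continuous mapping theorem to apply to $(A,a)\mapsto||\bSigma_{\xort}^R-A/a||$ one needs the denominator bounded away from zero, i.e.\ $\E_{H_0}[g^{\xort}_{2,\bT}(\bx)]\geqslant c>0$ uniformly over $\bTheta_{\bT}$ (and $a_n(\bT_n)\geqslant c/2$ with probability tending to one). This holds because the $\tanh$ weights $w^{\case}$ and $w_j^{\cell}$ of~\eqref{eq:rhotanh} are strictly positive for finite arguments, so $g^{\xort}_{2,\bT}>0$ pointwise and $\bT\mapsto\E_{H_0}[g^{\xort}_{2,\bT}]$ is continuous and strictly positive on the compact $\bTheta_{\bT}$; but establishing this positivity, together with verifying that the Glivenko--Cantelli hypotheses (measurability, an integrable envelope, continuity in $\bT$) genuinely hold for these weight-based integrands, is where the real work lies, the remainder being routine and parallel to the proof of Proposition~\ref{prop4}.
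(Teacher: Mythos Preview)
Your proposal is correct and follows essentially the same route as the paper's proof: both show $\bPsi_{1,n}(\bT_n)\rightarrow_p\bPsi_{1,0}(\bT_0)$ and $\Psi_{2,n}(\bT_n)\rightarrow_p\Psi_{2,0}(\bT_0)$ via the Glivenko--Cantelli argument of Example~19.8 in \cite{van2000asymptotic} together with dominated convergence and the continuous mapping theorem, and then finish with the argmin-consistency step as in~\eqref{eq:proofcon}. Your explicit remark that the ratio structure requires $\E_{H_0}[g^{\xort}_{2,\bT}]$ to stay bounded away from zero is a point the paper's proof leaves implicit when it invokes ``$G^{\xort}$ is a continuous function of $\bPsi_1$ and $\Psi_2$''.
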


\begin{proof}[\textbf{Proof of Proposition \ref{prop8}}]
Let us define 
\begin{equation*}
\bPsi_1(\bT,H)=\E_{H}\left[\bg^{\xort}_{1,\bT}(\bx)\right]
\end{equation*}
and 
\begin{equation*}
\Psi_2(\bT,H)=\E_{H}\left[g^{\xort}_{2,\bT}(\bx)\right].
\end{equation*}
Moreover, let us further consider
$\bPsi_{1,n}(\bT):=\bPsi_1(\bT,H_n)$, $\bPsi_{1,0}(\bT):=\bPsi_1(\bT,H_0)$, $\Psi_{2,n}(\bT):=\Psi_2(\bT,H_n)$, and $\Psi_{2,0}(\bT):=\Psi_2(\bT,H_0)$.
Note that \begin{equation*}
    |\Psi_{2,n}(\bT_n)-\Psi_{2,0}(\bT_0)|\leqslant
    |\Psi_{2,n}(\bT_n)-\Psi_{2,0}(\bT_n)|+|\Psi_{2,0}(\bT_n)-\Psi_{2,0}(\bT_0)|.
\end{equation*}
The function $\bTheta_{\bT}\rightarrow \mathbb{R}:\bT\rightarrow \Psi_{2,0}(\bT)$  is continuous. In fact, denote $(\bT_l)_{l \in \mathbb{N}}$ a sequence in $\bTheta_{\bT}$ that converges to $\bT^*$ for $l \rightarrow \infty$. We have
\begin{align}
\lim _{l \rightarrow \infty} \Psi_{2,0}(\bT_l) & =\lim _{l \rightarrow \infty} \int g^{\xort}_{2,\bT_l}(\bx) d(H_0)(\bx) \\
& =\int \lim _{l \rightarrow \infty} g^{\xort}_{2,\bT_l}(\bx) d(H_0)(\bx) \\
& =\int  g^{\xort}_{2,\bT^*}(\bx) d(H_0)(\bx) \\
& =\Psi_{2,0}(\bT^*),
\end{align}
where we have used the dominated convergence theorem in the second equality, which is possible because the function $\mathbb{R}^p\rightarrow \mathbb{R}:\bx\rightarrow g^{\xort}_{2,\bT}(\bx)$, is dominated by an integrable function for every $\bT\in\bTheta_{\bT}$, and the fact that the function $\bTheta_{\bT}\rightarrow \mathbb{R}:\bT\rightarrow g^{\xort}_{2,\bT}(\bx)$ is continuous for every $\bx$ in the third equality.
Thus, by using the continuous mapping theorem, we have that $|\Psi_{2,0}(\bT_n)-\Psi_{2,0}(\bT_0)|\rightarrow_p 0$ as  $\bT_n\rightarrow_p \bT_0$.

Moreover,  $ |\Psi_{2,n}(\bT_n)-\Psi_{2,0}(\bT_n)|\rightarrow_p 0$ being the set of functions   $\lbrace g^{\xort}_{2,\bT}: \bT\in \bTheta_{\bT}\rbrace$ \textit{Glivenko-Cantelli} (Example 19.8, 
Van der Vaart (2000))
because $\bTheta_{\bT}$ is compact, the function $\bTheta_{\bT}\rightarrow \mathbb{R}:\bT\rightarrow g^{\xort}_{2,\bT}(\bx)$ is continuous and is dominated by an integrable function for every $\bx$.
Thus, $\Psi_{2,n}(\bT_n) \rightarrow_p \Psi_{2,0}(\bT_0)$.
For the same arguments, we have $\bPsi_{1,n}(\bT_n) \rightarrow_p \bPsi_{1,0}(\bT_0)$.

As for each $\bsigma_{\xort}^R \in \bTheta_{\bsigma_{\xort}} $, $G^{\xort}(\bsigma_{\xort}^R,\bT, H)$ is a continuous function of $\bPsi_1(\bT,H)$ and $\Psi_2(\bT,H)$, by using the continuous mapping theorem, we have that $|M^{\xort}_n(\bsigma_{\xort}^R,\bT_n)-M^{\xort}(\bsigma_{\xort}^R,\bT_0)|\rightarrow_p 0$ as  $n\rightarrow \infty$ and, thus,  $M^{\xort}_n(\bsigma_{\xort}^R,\bT_n) \rightarrow_p M^{\xort}(\bsigma_{\xort}^R,\bT_0)$. By hypothesis, we have $M^{\xort}_n\left(\bsigma_{\xort,n}^R,\bT_n\right) \leqslant M_n\left(\bsigma_{\xort,0}^R,\bT_n\right)+o_p(1)$, and, thus, $M^{\xort}_n\left(\bsigma_{\xort,n}^R,\bT_n\right) \leqslant M_n\left(\bsigma_{\xort,0}^R,\bT_0\right)+o_p(1)$. Hence
\begin{align}\label{eq:proofconerror}
M^{\xort}\left(\bsigma_{\xort,n}^R,\bT_0\right)-&M^{\xort}\left(\bsigma_{\xort,0}^R,\bT_0\right) \leqslant M^{\xort}\left(\bsigma_{\xort,n}^R,\bT_0\right)-M^{\xort}_n\left(\bsigma_{\xort,n}^R,\bT_n\right)+o_p(1)\nonumber \\
& \leqslant  \sup _{\bsigma_{\xort}^R \in \Theta_{\bsigma_{\xort}^R}}|M^{\xort}_n(\bsigma_{\xort}^R,\bT_n)-M^{\xort}(\bsigma_{\xort}^R,\bT_0)|+o_p(1) \rightarrow_p 0 .
\end{align}
As $\bTheta_{\bsigma_{\xort}}\rightarrow \mathbb{R}:\bsigma_{\xort}^R\rightarrow M^{\xort}(\bsigma_{\xort}^R,\bT_0)$  is a continuous function in the compact set $\bTheta_{\bsigma_{\xort}}$ and $\bsigma_{\xort,0}^R$ is unique, we have that for every $\varepsilon>0$, there exists a number $\eta>0$ such that $M^{\xort}(\bsigma_{\xort}^R,\bT_0)>M^{\xort}\left(\bsigma_{\xort,0}^R,\bT_0\right)+\eta$ for every $\bsigma_{\xort}^R$ with $d\left(\bsigma_{\xort}^R, \bsigma_{\xort,0}^R\right) := ||\bsigma_{\xort,n}^R-\bsigma_{\xort,0}^R|| \geqslant \varepsilon$. Thus, the event $\lbrace d\left(\bsigma_{\xort,n}^R, \bsigma_{\xort,0}^R\right) \geqslant \varepsilon\rbrace$ is contained in the event $\lbrace M^{\xort}(\bsigma_{\xort,n}^R,\bT_0)>M^{\xort}\left(\bsigma_{\xort,0}^R,\bT_0\right)+\eta\rbrace$. The probability of the latter event converges to 0 in view of \eqref{eq:proofconerror}.
\end{proof}

\begin{proof}[\textbf{Proof of 
Theorem~\ref{consistSigma}}]
Note that 
$\bSigma_n=\bSigma^{\bx^0}_{\MCD,n}+\bSigma_{\xort,n}^R$ and $\bSigma_n=\bSigma^{\bx^0}_{\MCD,0}+\bSigma_{\xort,0}^R$\,.
Under the assumptions of Propositions~\ref{prop5} 
and~\ref{prop6} we have that $\bSigma^{\bx^0}_{\MCD,n}
\rightarrow_p\bSigma^{\bx^0}_{\MCD,0}$. 
Under the assumptions of Proposition \ref{prop8} we have 
that $\bSigma_{\xort,n}^R\rightarrow_p\bSigma_{\xort,0}^R$\,.
The theorem follows by direct application of the 
continuous mapping theorem.
\end{proof}
We now look at asymptotic normality, where the
casewise IF of $\Sigma$ provides the expression
of the asymptotic covariance matrix.

\begin{proof}[\textbf{Proof of Theorem \ref{asynormSigma}}]
Let us consider a multivariate distribution $G$.
Then the first order von Mises expansion 
\citep{fernholz2012mises} around $H_0$
of the functional $\bSigma$ at $G$ is 
\begin{equation*}
  \vect\left(\bSigma(G)\right)=
  \vect\left(\bSigma_0\right)
  +\int \IFu_{\case}\left(\bx,\bSigma,H_0\right)
  d(G-H_0)(\bx)+\mbox{Rem}(G-H_0)\,,
\end{equation*}
where $\mbox{Rem}(G-H_0)$ is a remainder term. 
Replacing $G$ by the empirical distribution
$H_n$ and using the fact that 
$\int \IFu_{\case}\left(\bx,\bSigma,
H_0\right)dH_0(\bx)=0$, we have
\begin{equation*}
    \vect\left(\bSigma_n \right)=\vect\left(\bSigma_0\right)+\int \IFu_{\case}\left(\bx,\bSigma,H_0\right)dH_n(\bx)+\mbox{Rem}(H_n-H_0).
\end{equation*}
Since $\int \IFu_{\case}\left(\bx,\bSigma,H_0\right)dH_n(\bx)=\frac{1}{n}\sum_{i=1}^{n}\IFu_{\case}\left(\bx_i,\bSigma,H_0\right)$, we have
\begin{equation*}
    \sqrt{n}\left[\vect\left(\bSigma_n \right)-\vect\left(\bSigma_0\right)\right]=\frac{\sqrt{n}}{n}\sum_{i=1}^{n}\IFu_{\case}\left(\bx_i,\bSigma,H_0\right)+\sqrt{n}\,\mbox{Rem}(H_n-H_0).
\end{equation*}
If $\sqrt{n}\,\mbox{Rem}(H_n-H_0)$ becomes negligible 
as $n\rightarrow \infty$, the multivariate 
central limit theorem yields \eqref{eq:asym}.

The assumption that $\sqrt{n}\,\mbox{Rem}(H_n - H_0)$ 
becomes negligible is a reasonable regularity 
condition because it contains an additional 
factor of $1/\sqrt{n}$ compared to the leading 
term in the expansion, and $H_n$ converges to 
$H_0$ by the Glivenko-Cantelli theorem 
(Van der Vaart, 2000).
Rigorous treatment of this condition has been 
addressed in various ways in the literature. 
For example, \cite{mises1947asymptotic} 
assumed that the remainder could be expressed 
as a second derivative term combined with a 
higher-order remainder. Other authors have 
explored specific forms of differentiation, 
such as the Hadamard (or compact) derivative 
\citep{fernholz2012mises} and the Fréchet 
derivative \citep{clarke2018robustness}.
\end{proof}

\newpage

\section{\large Robust Parallel Analysis for 
 Selecting the Rank \texorpdfstring{$\rk$}{k}}
\label{sec:supp-rank}
The robust PA procedure is presented in 
Algorithm~\ref{alg:robustPA}. In our implementation 
we use $B=100$ reference samples and set
$\alpha=0.01$, so that the threshold is the 
empirical $(1-\alpha)$-quantile of the reference 
distribution. The maximum rank $k_{\max}$ is 
chosen large enough to contain all relevant 
components, while remaining computationally
reasonable.
\begin{algorithm}[H]
\caption{Robust Parallel Analysis for selecting $\rk$}
\label{alg:robustPA}
\begin{algorithmic}[1]
\REQUIRE The standardized data $\bZ$, 
  maximum rank $k_{\max}$, number of reference samples $B$.
\STATE Fit cellPCA to $\bZ$ for ranks $s=1,\ldots,k_{\max}$\,.
\STATE Compute the robust objective values $\nu_s^{\mathrm{rob}}$
  from~\eqref{eq:objP}.
\STATE Compute $\nu_0^{\mathrm{rob}}$ from~\eqref{eq:objP} 
  using medians of the columns of $\bZ$.
\STATE Compute the $\ell_s^{\mathrm{rob}}$ for $s=1,\ldots,k_{\max}$\,.
   
\FOR{$b=1,\ldots,B$}
    \STATE Generate an $n\times p$ reference matrix $\bZ_b^0$ with independent
    standard normal entries.
    \STATE Standardize each variable of $\bZ_b^0$\,.
    \STATE Apply CPCA to $\bZ_b^0$ for ranks $s=1,\ldots,k_{\max}$\,.
    \STATE For each $s$, evaluate the objective~\eqref{eq:objP} at the
    CPCA rank-$s$ fit, obtaining $\nu_{s,b}^{\mathrm{rob}}$.
    \STATE Compute $\nu_{0,b}^{\mathrm{rob}}$ from~\eqref{eq:objP} using
    medians.
    \STATE Compute $\ell_{s,b}^{\mathrm{rob}}
        =
        \nu_{s-1,b}^{\mathrm{rob}}
        -
        \nu_{s,b}^{\mathrm{rob}}$.
\ENDFOR
    \STATE Let $c_s$ be the empirical $(1-\alpha)$-quantile of
$\ell_{s,1}^{\mathrm{rob}},\ldots,\ell_{s,B}^{\mathrm{rob}}$.
\STATE Select the rank $\rk$ as the number of consecutive components satisfying $
    \ell_s^{\mathrm{rob}} > c_s$.
If no such $s$ exists, set $\rk$ as $k_{\max}$.
\end{algorithmic}
\end{algorithm}

\section{\large Additional simulation results}
\label{app:addsim}

In addition to the A09 covariance model in 
Section~\ref{sec:simulation}, we also consider
the A06 covariance matrix with  
$\sigma_{j\ell}=(-0.6)^{|j-\ell\,|}$.
The {\it planar} covariance matrix is the 
correlation matrix with the eigenvectors of the 
A09 covariance and eigenvalues such that the first 
component explains 53\p of the total variance, 
the first two together explain 90\p, and all 
subsequent eigenvalues are tiny. Finally, 
the {\it dense} covariance matrix has
$\sigma_{j\ell}=1$ for $j=\ell$ and 
$\sigma_{j\ell}=0.8$ otherwise. 

Figures~\ref{fig:results_A06}, 
\ref{fig:results_planar} and 
\ref{fig:results_dense} show the average KL in 
the presence of either cellwise outliers, 
casewise outliers, or both, again for 
$p=\lbrace30,60,120\rbrace$ but now for the 
covariance models A06, planar, and dense.
Figures~\ref{fig:results_A06_NA}, 
\ref{fig:results_planar_NA}, and
\ref{fig:results_dense_NA} show the 
corresponding results when 20\p of randomly 
selected cells were made NA. All of these 
curves are qualitatively similar to those 
for the A09 covariance model in
Section~\ref{sec:simulation}.\\

\vspace{5mm}

\begin{figure}[!ht]
\centering
 
\begin{tabular}{M{0.0005\textwidth}M{0.29\textwidth}M{0.29\textwidth}M{0.32\textwidth}}
   &\large \textbf{Cellwise}  & \large \textbf{Casewise} &\large{\textbf{Casewise \& Cellwise}} \\
   [-4mm]
   \rotatebox{90}{\textbf{\footnotesize{$p=30$}}}&\includegraphics[width=.31\textwidth]
  {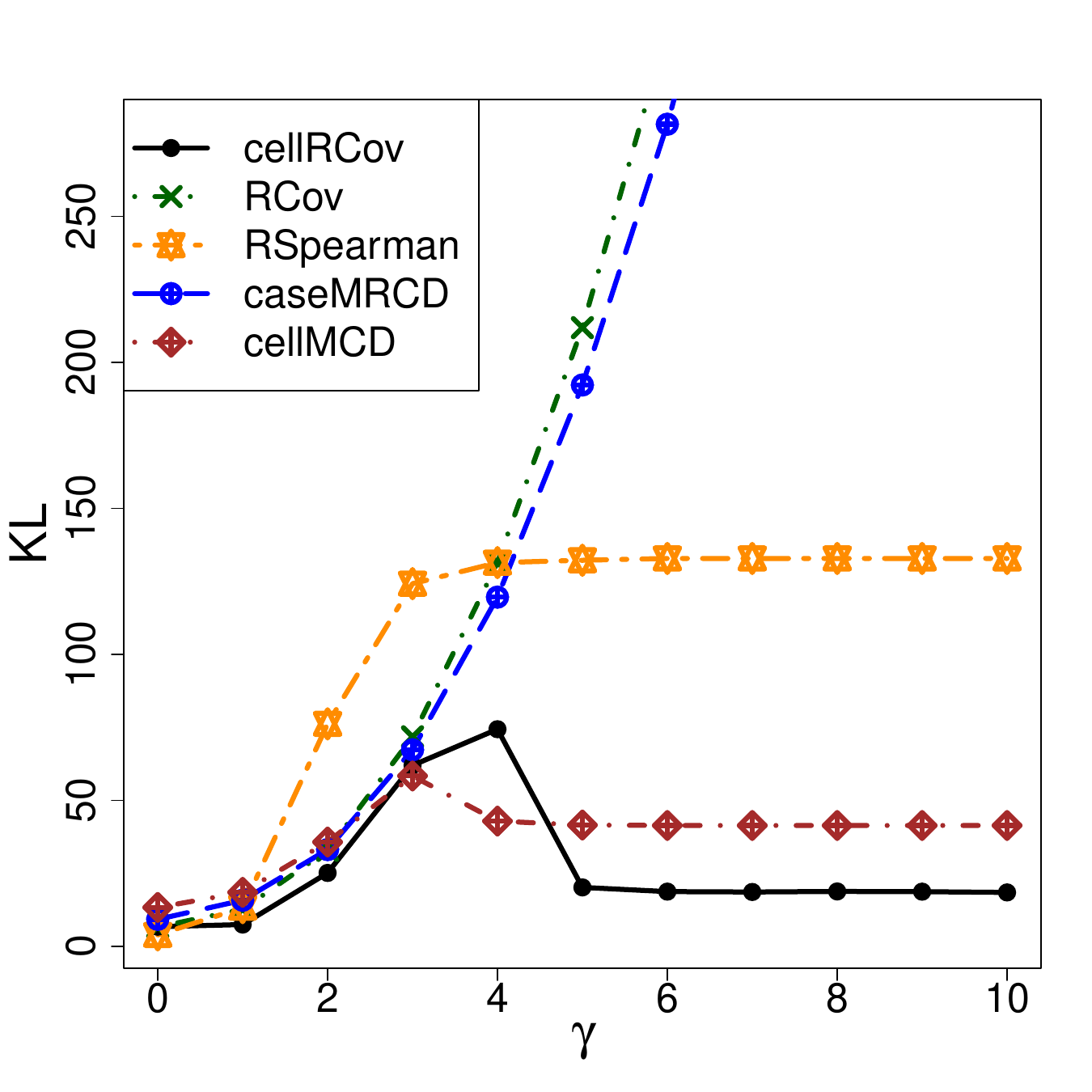}&\includegraphics[width=.31\textwidth]
  {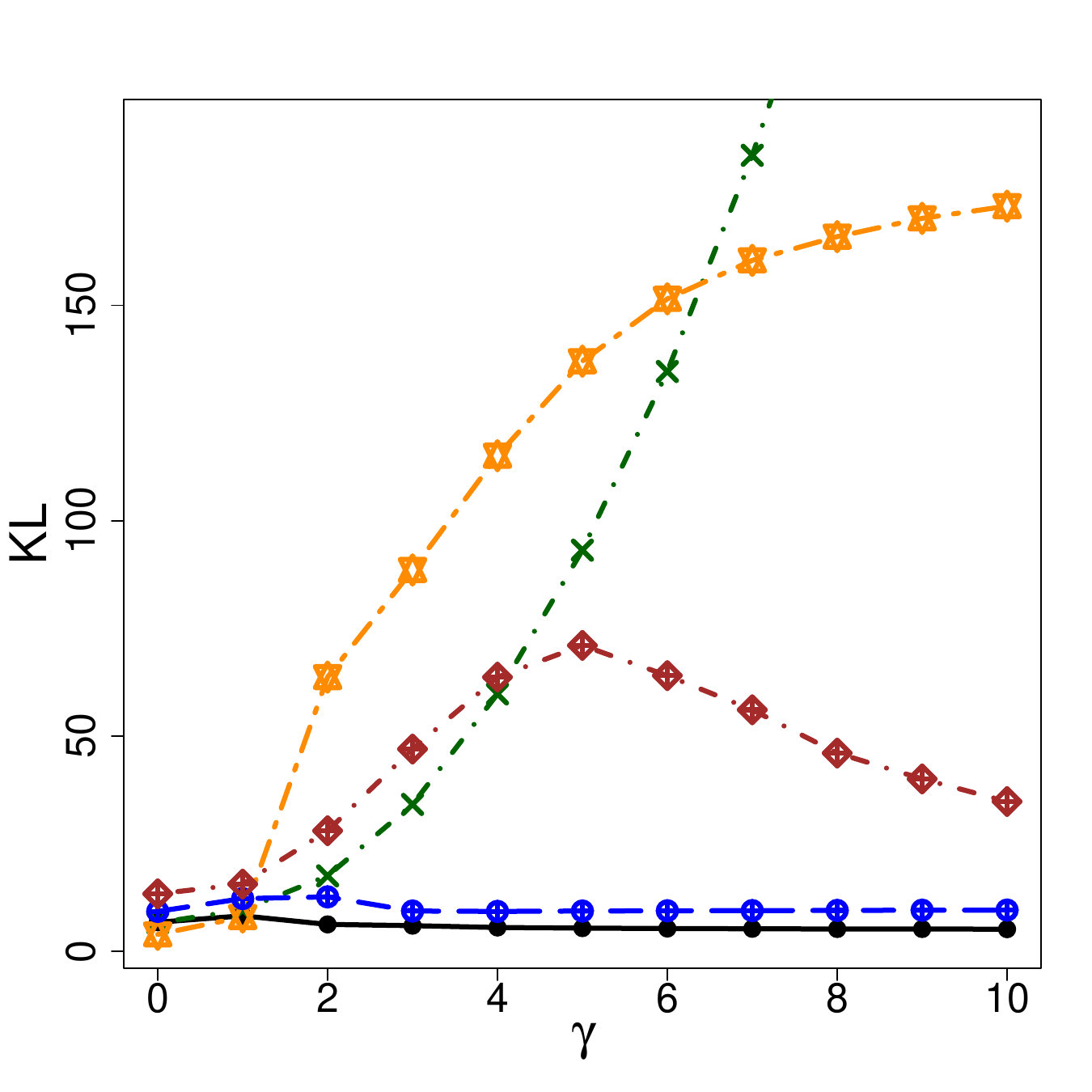}&\includegraphics[width=.31\textwidth]
  {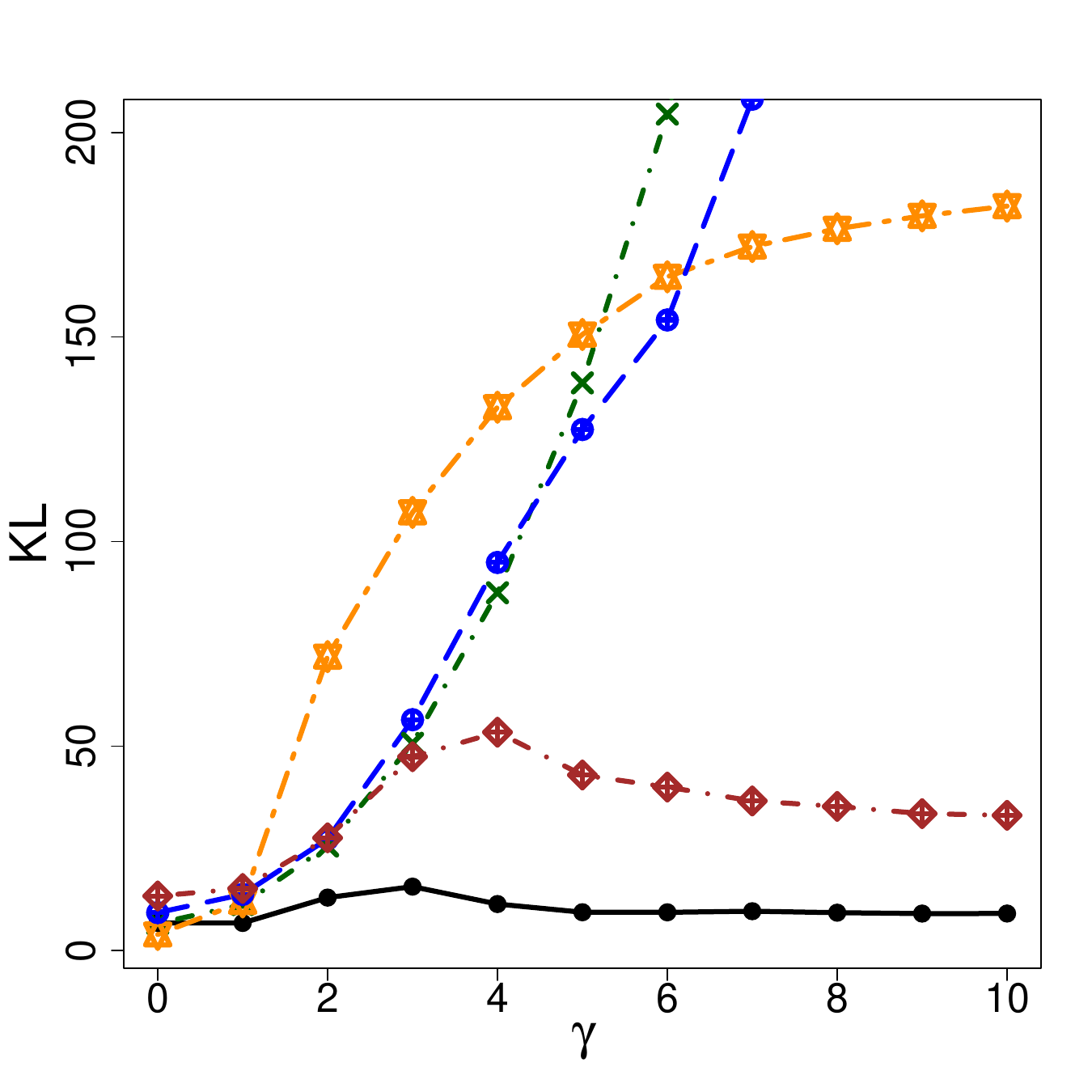}   \\ [-4mm]  \rotatebox{90}{\textbf{\footnotesize{$p=60$}}}&\includegraphics[width=.31\textwidth]
  {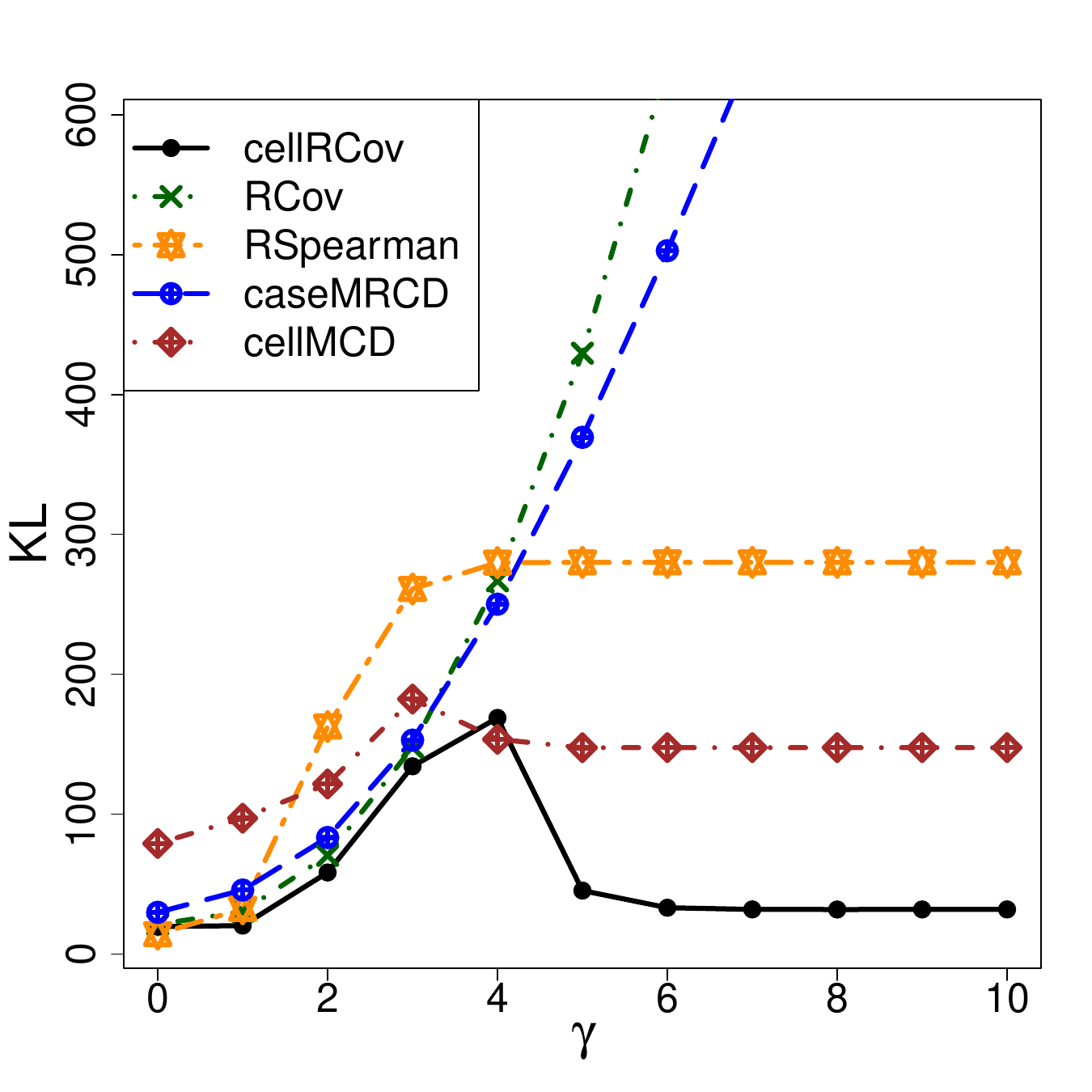}&\includegraphics[width=.31\textwidth]
  {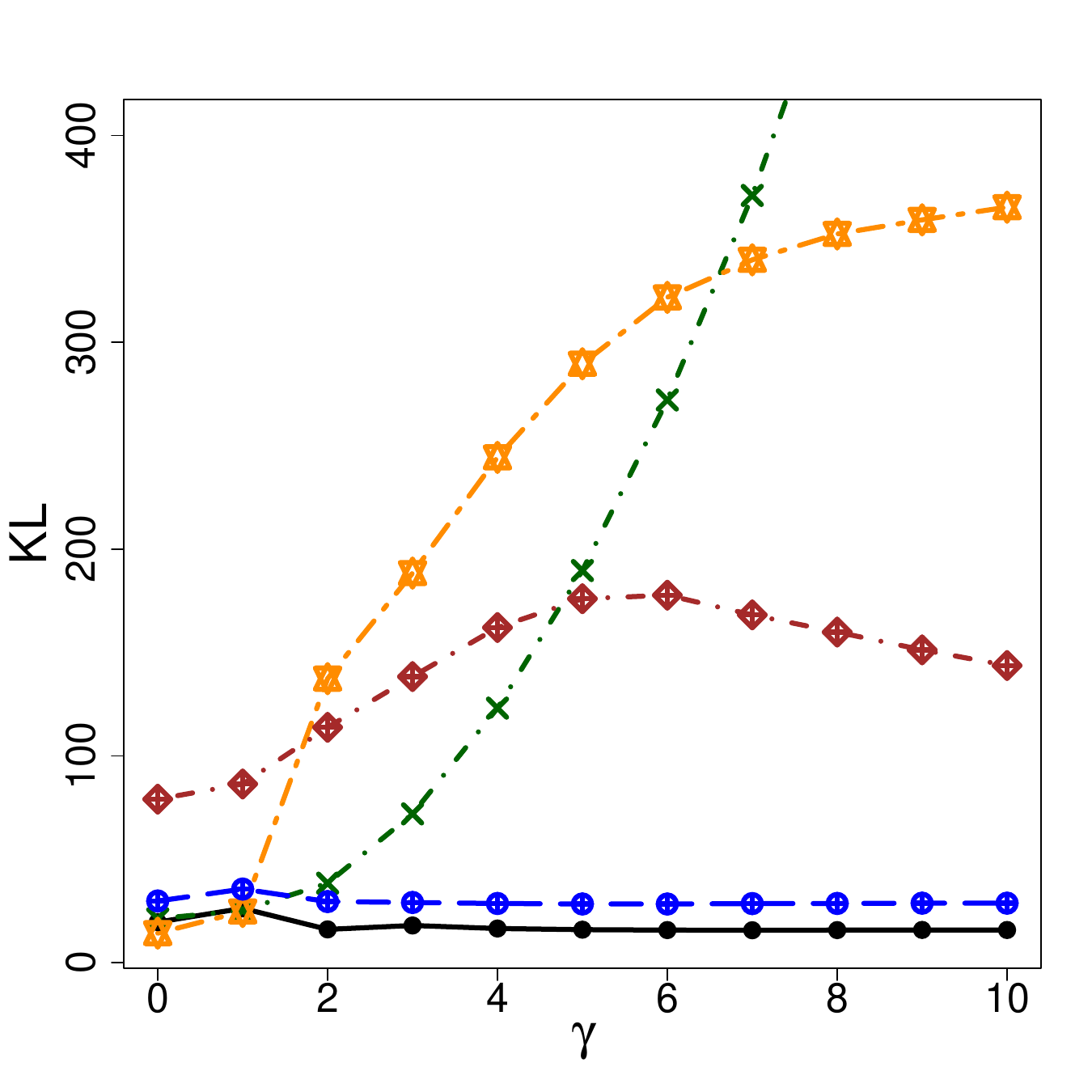}&\includegraphics[width=.31\textwidth]
  {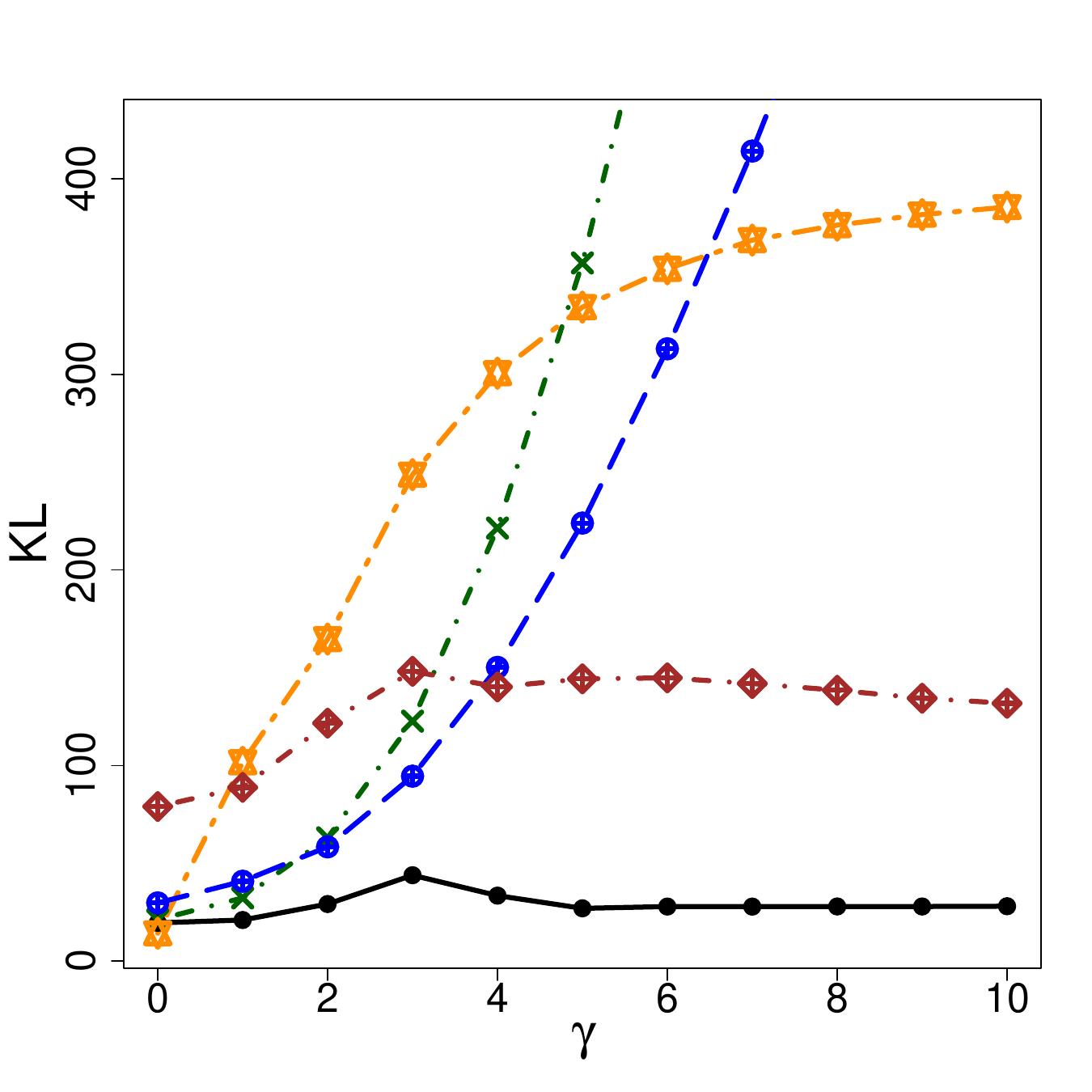}   \\ [-4mm]  \rotatebox{90}{\textbf{\footnotesize{$p=120$}}}&\includegraphics[width=.31\textwidth]
  {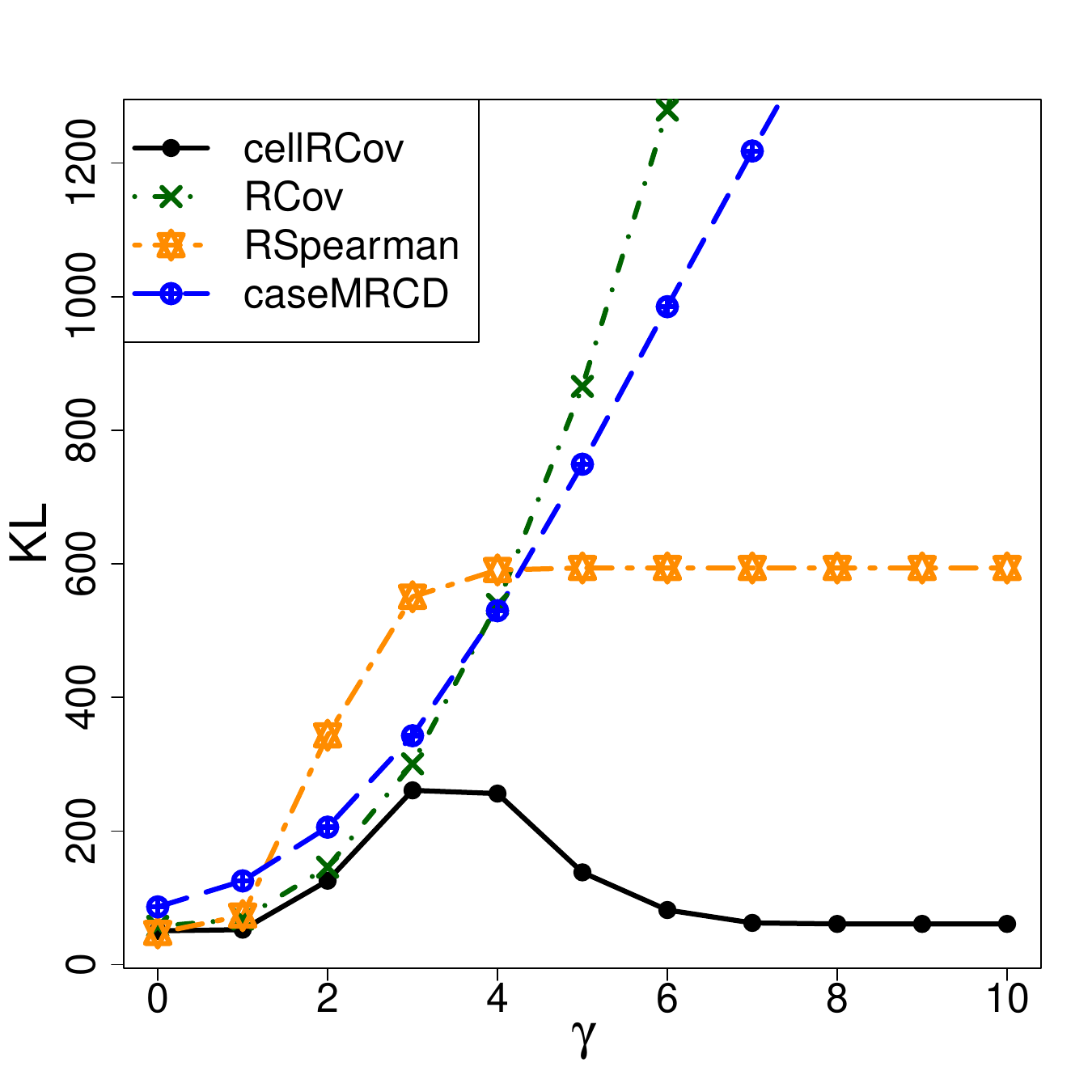}&\includegraphics[width=.31\textwidth]
  {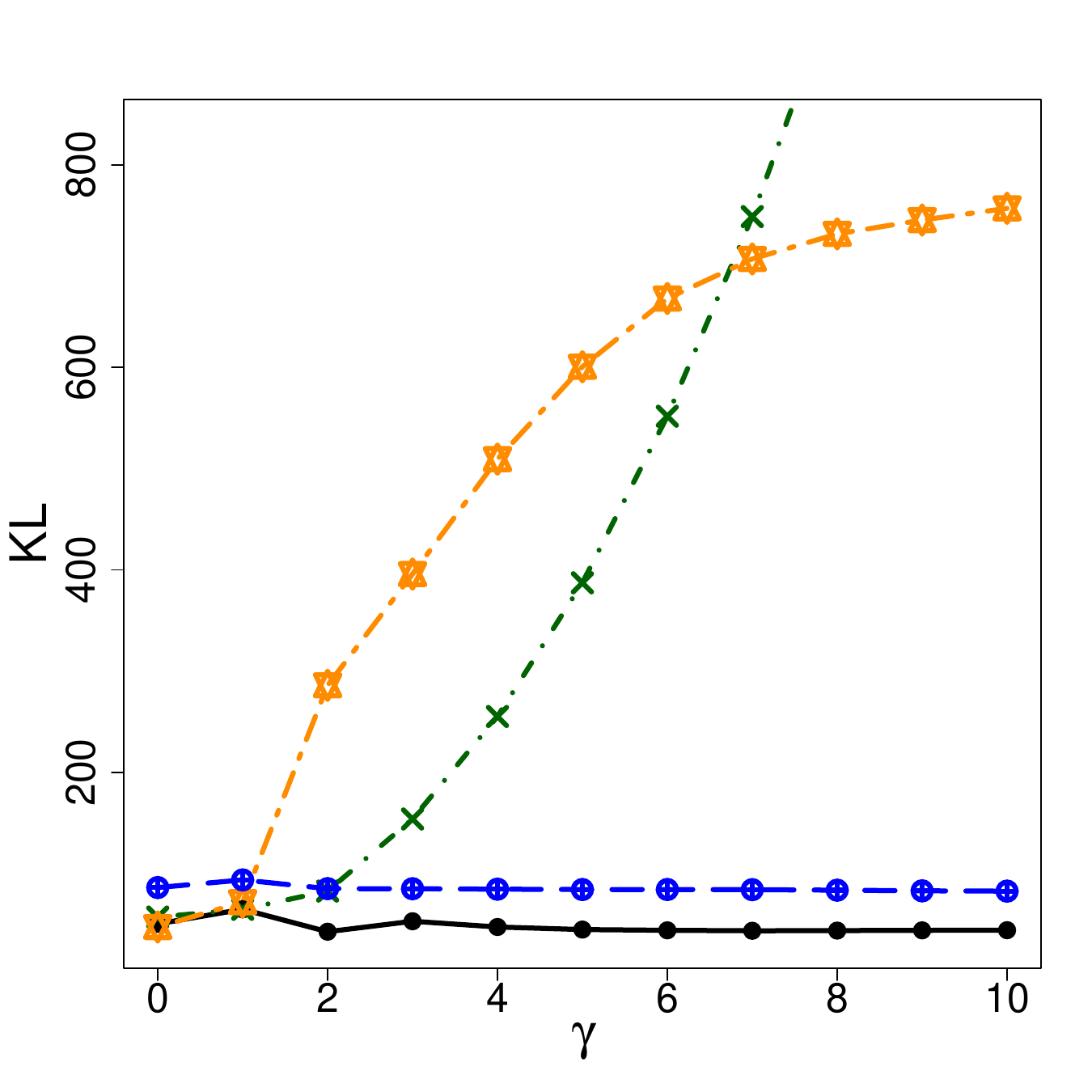}&\includegraphics[width=.31\textwidth]
  {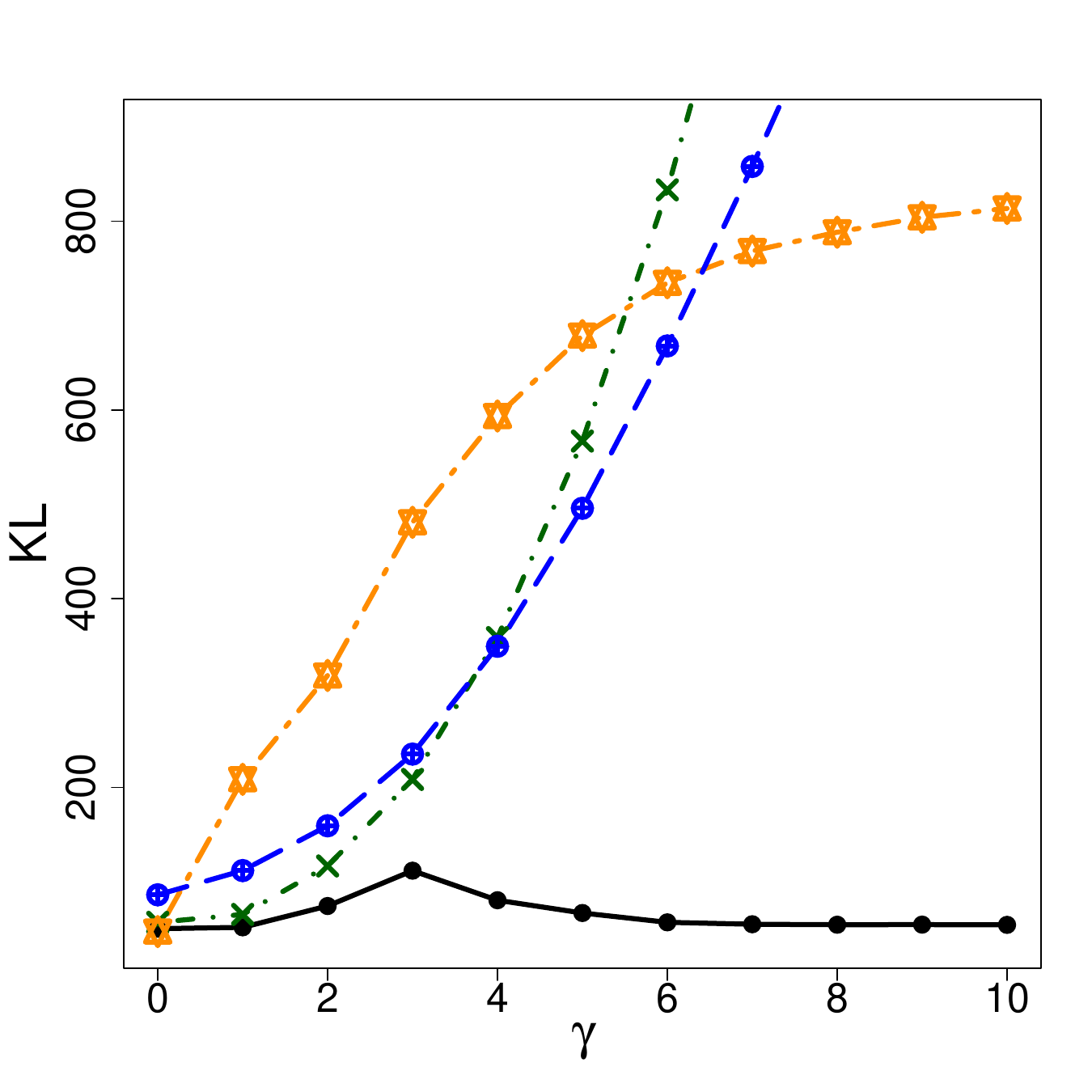}   
\end{tabular}
\caption{Average KL attained by cellRCov, RCov, 
RSpearman, caseMRCD, and cellMCD in the presence of either 
cellwise outliers, casewise outliers, or both, 
for the A06 covariance model in dimensions 
$p$ in $\lbrace30,60,120\rbrace$.}
\label{fig:results_A06}
\end{figure}

\begin{figure}[!ht]
\centering
 
 \begin{tabular}{M{0.0005\textwidth}M{0.29\textwidth}M{0.29\textwidth}M{0.32\textwidth}}
   &\large \textbf{Cellwise}  & \large \textbf{Casewise} &\large{\textbf{Casewise \& Cellwise}} \\
   [-4mm]
   \rotatebox{90}{\textbf{\footnotesize{$p=30$}}}&\includegraphics[width=.31\textwidth]
  {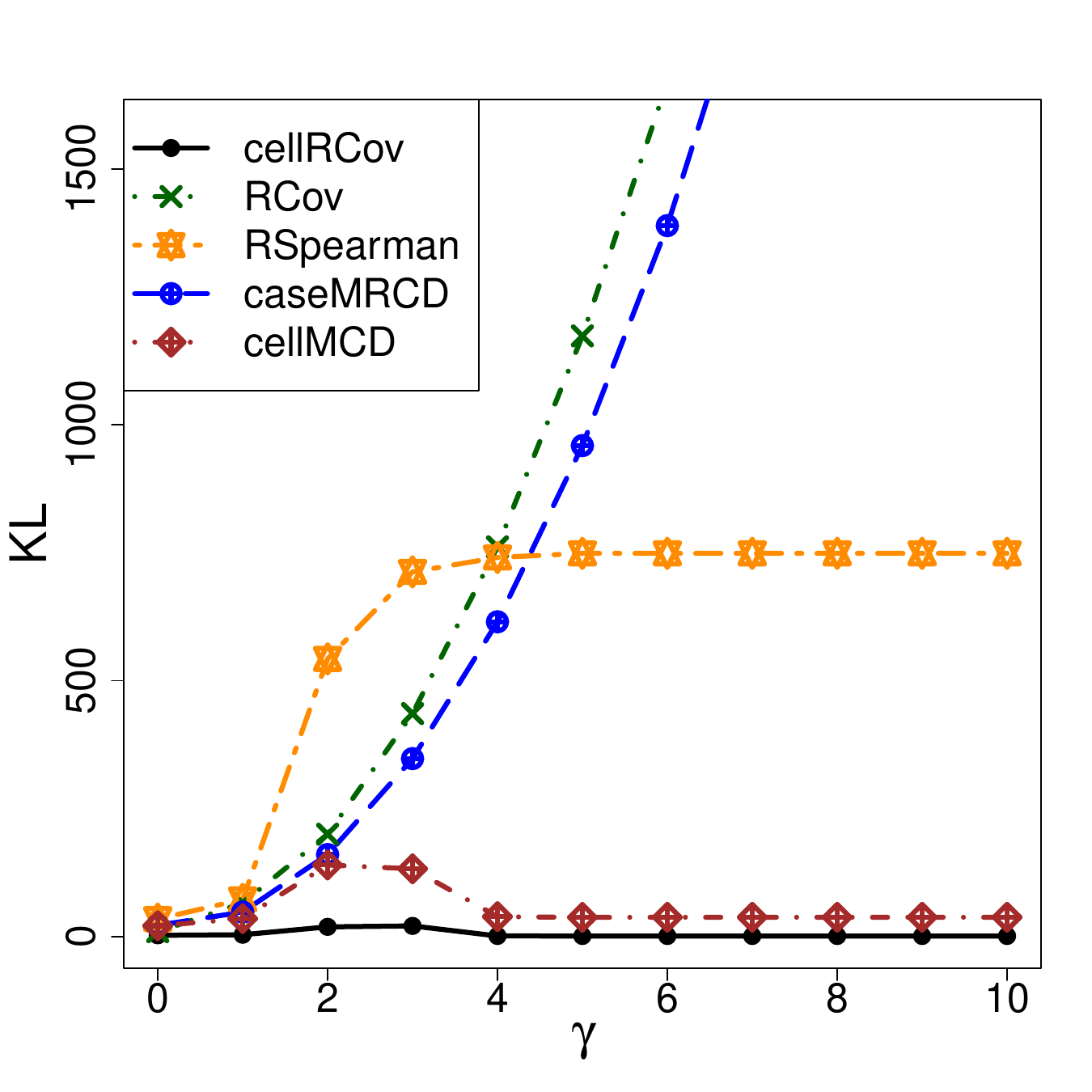}&\includegraphics[width=.31\textwidth]
  {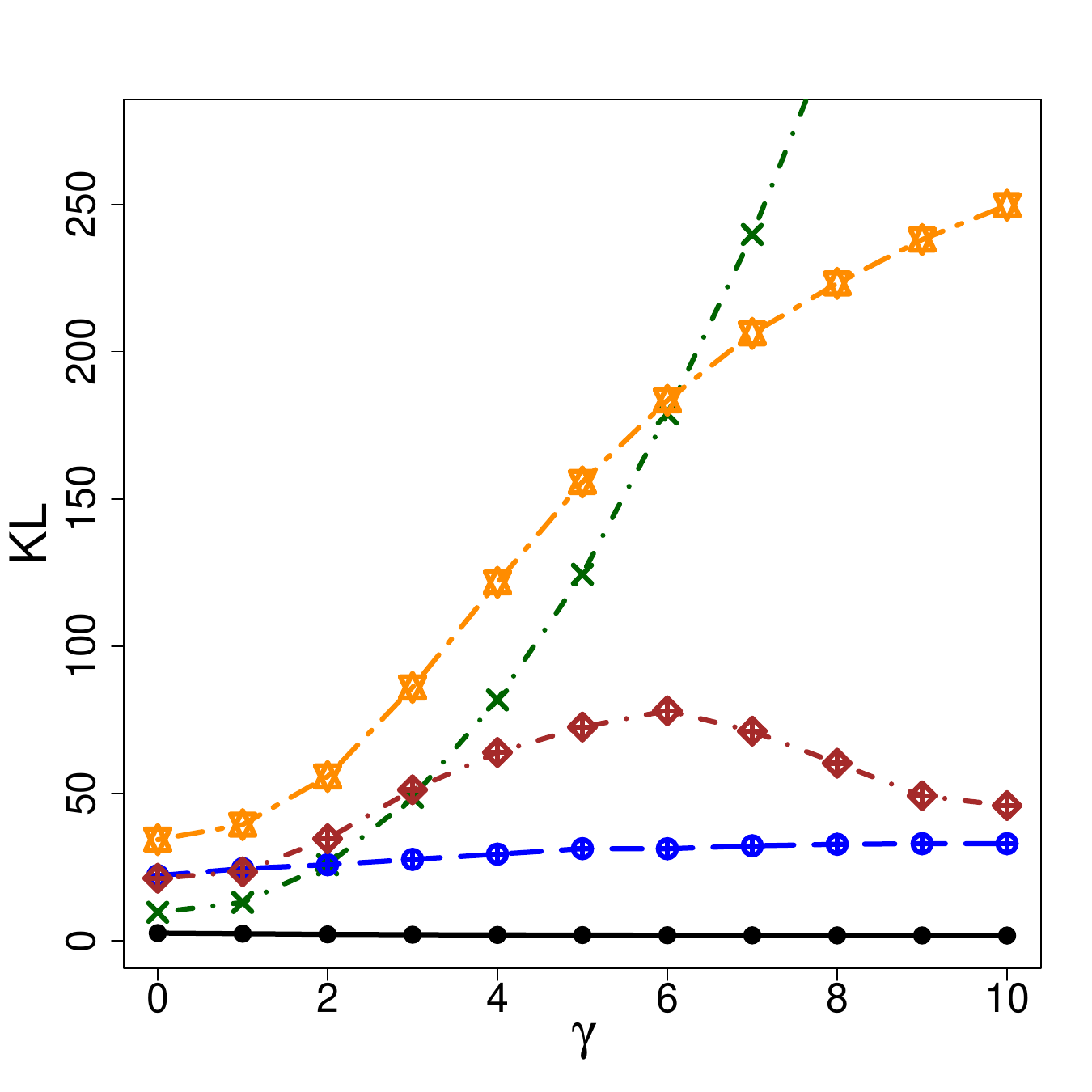}&\includegraphics[width=.31\textwidth]
  {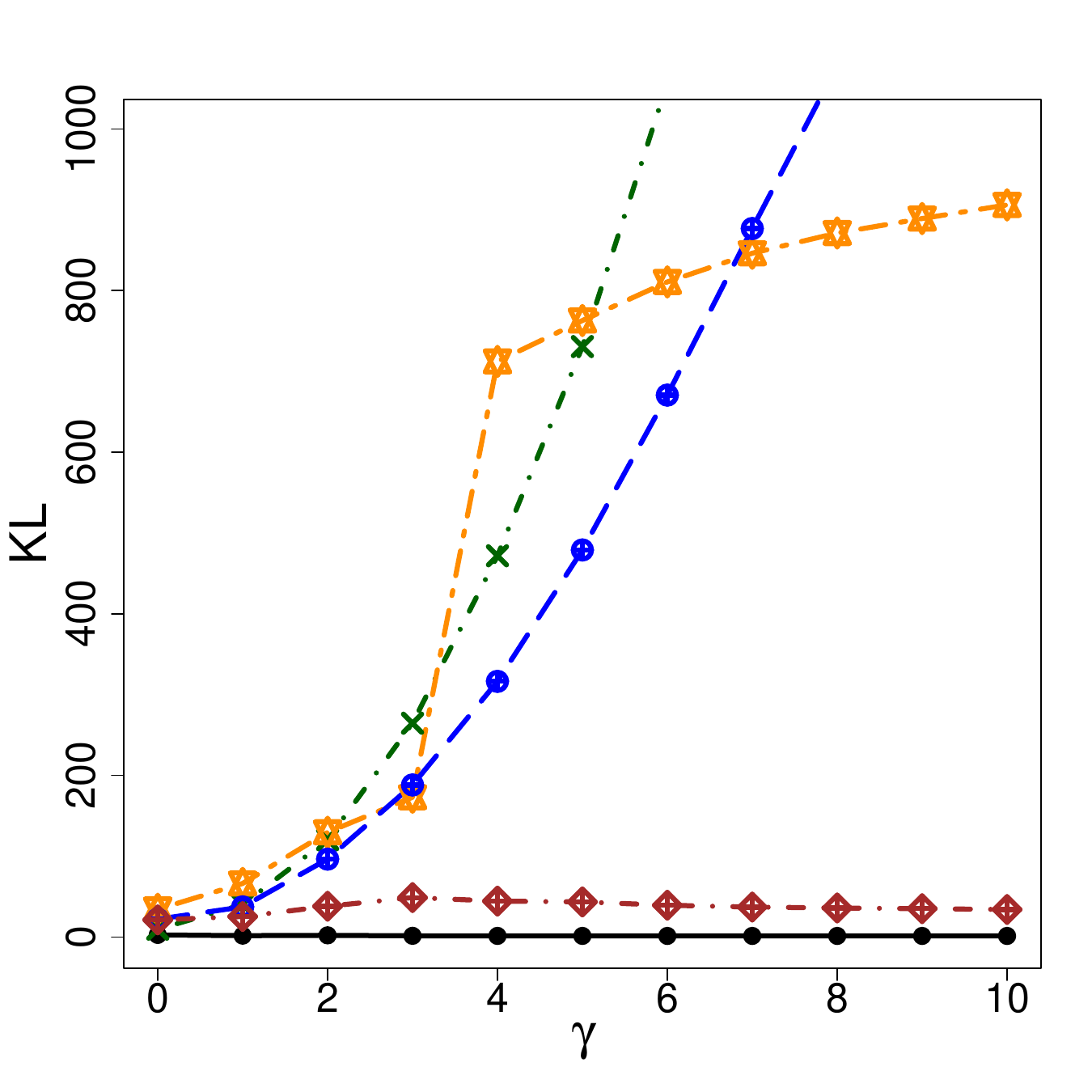}   \\ [-4mm]  \rotatebox{90}{\textbf{\footnotesize{$p=60$}}}&\includegraphics[width=.31\textwidth]
  {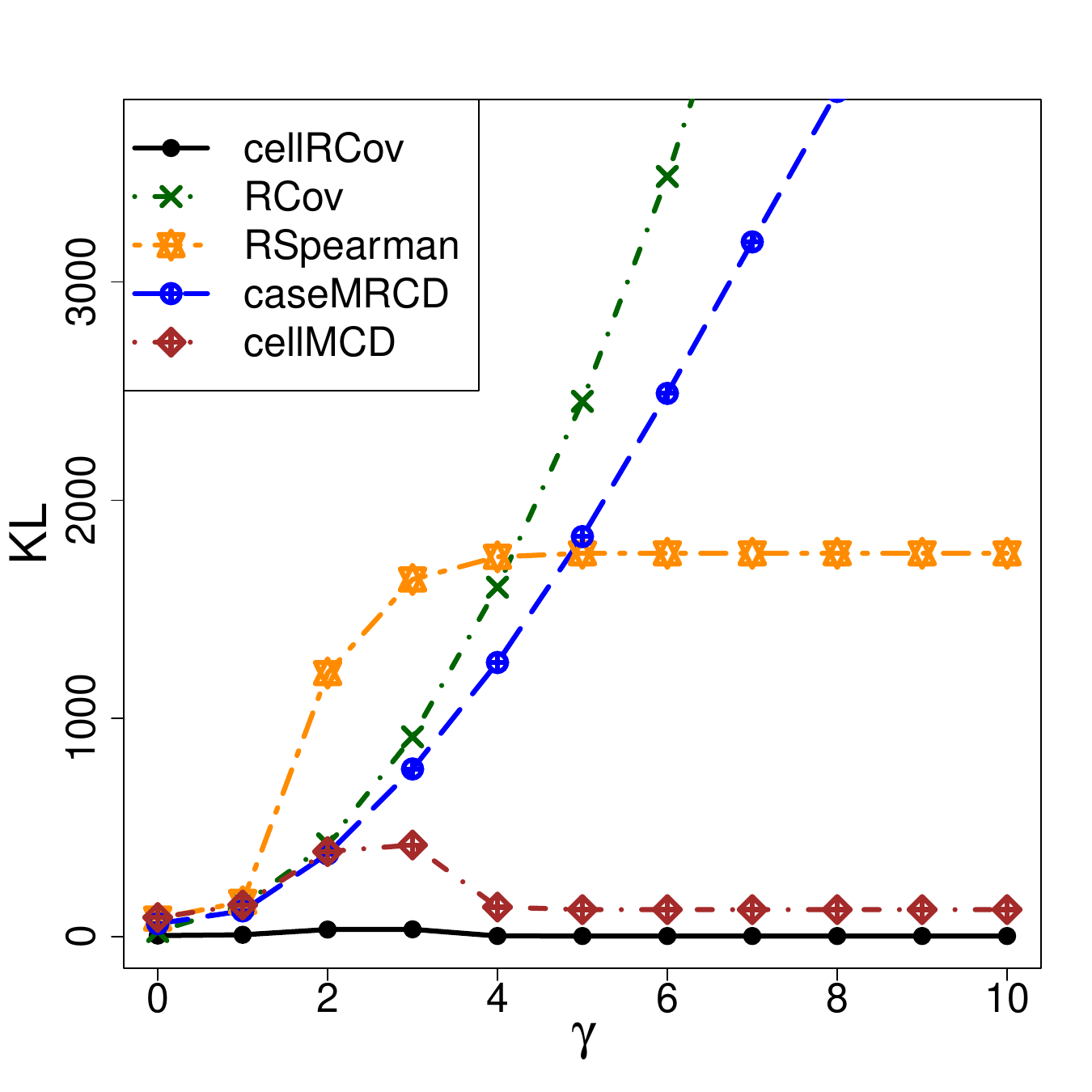}&\includegraphics[width=.31\textwidth]
  {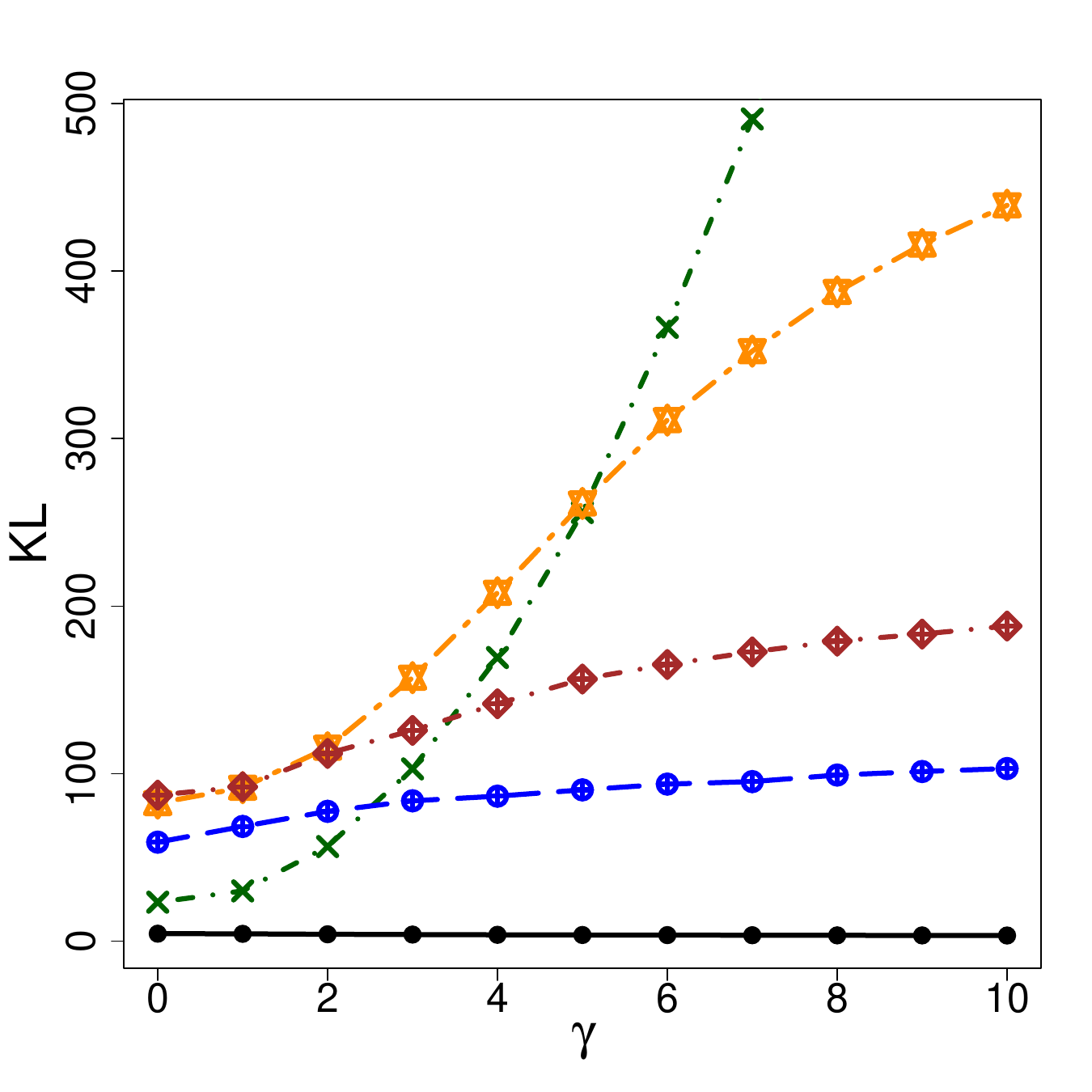}&\includegraphics[width=.31\textwidth]
  {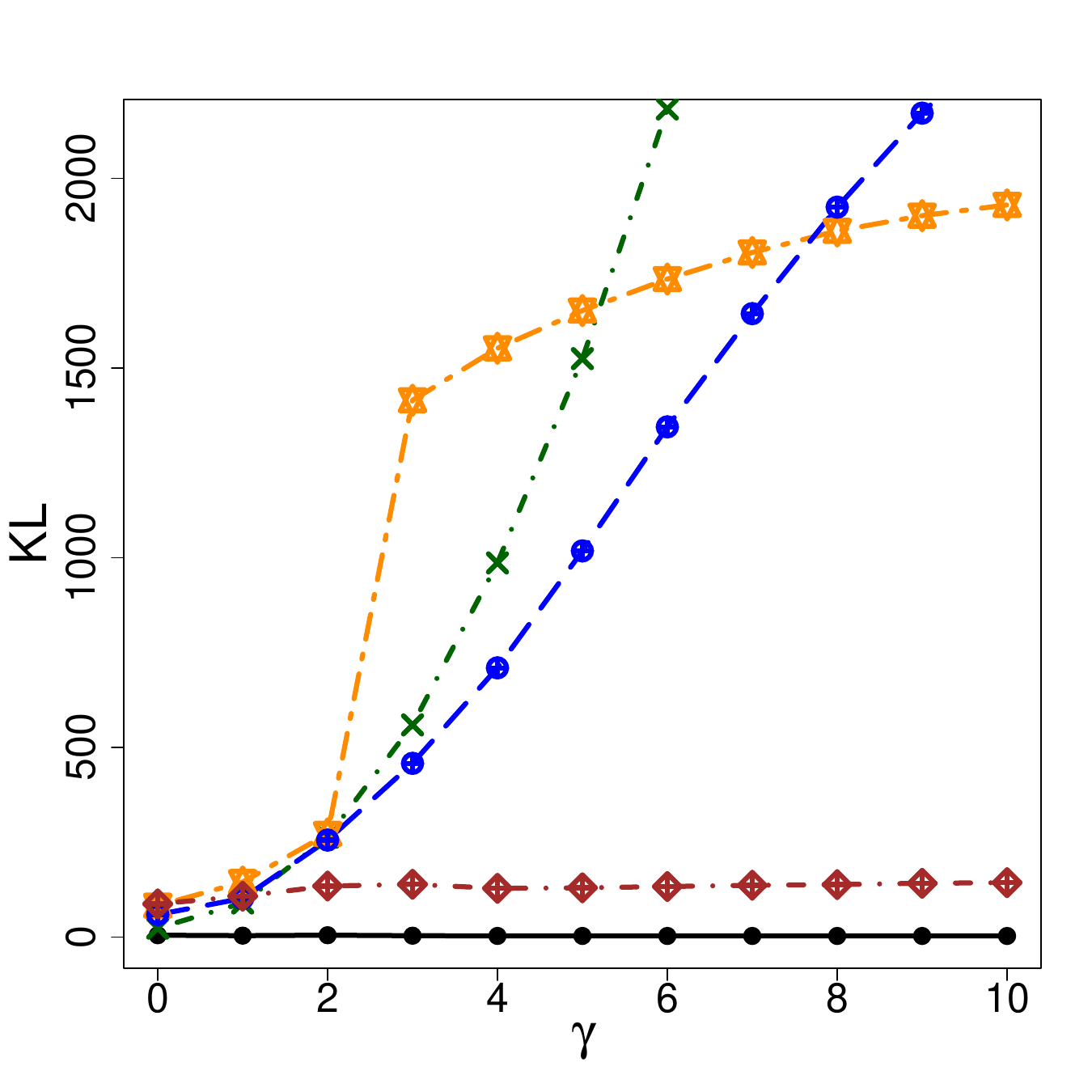}   \\ [-4mm]  \rotatebox{90}{\textbf{\footnotesize{$p=120$}}}&\includegraphics[width=.31\textwidth]
  {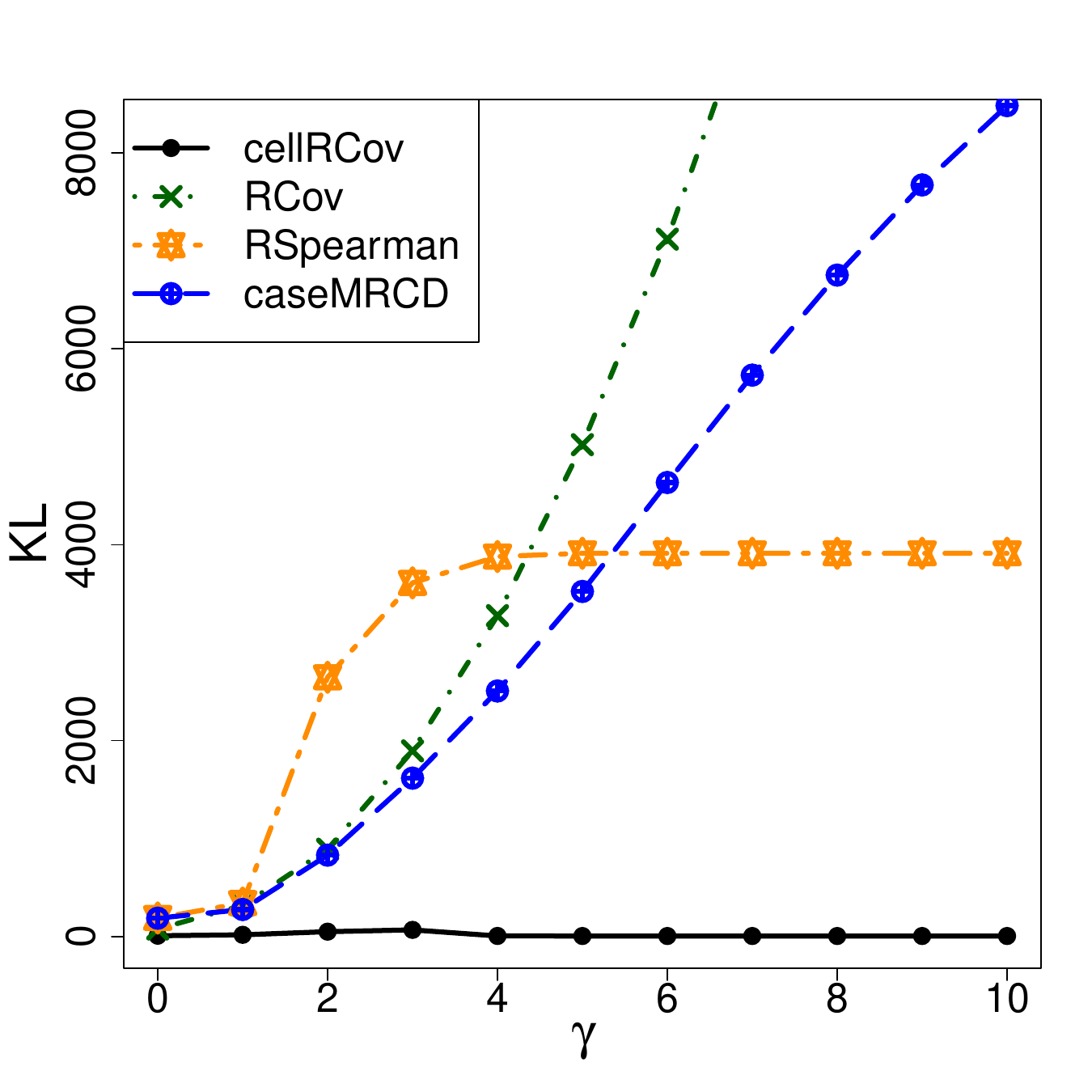}&\includegraphics[width=.31\textwidth]
  {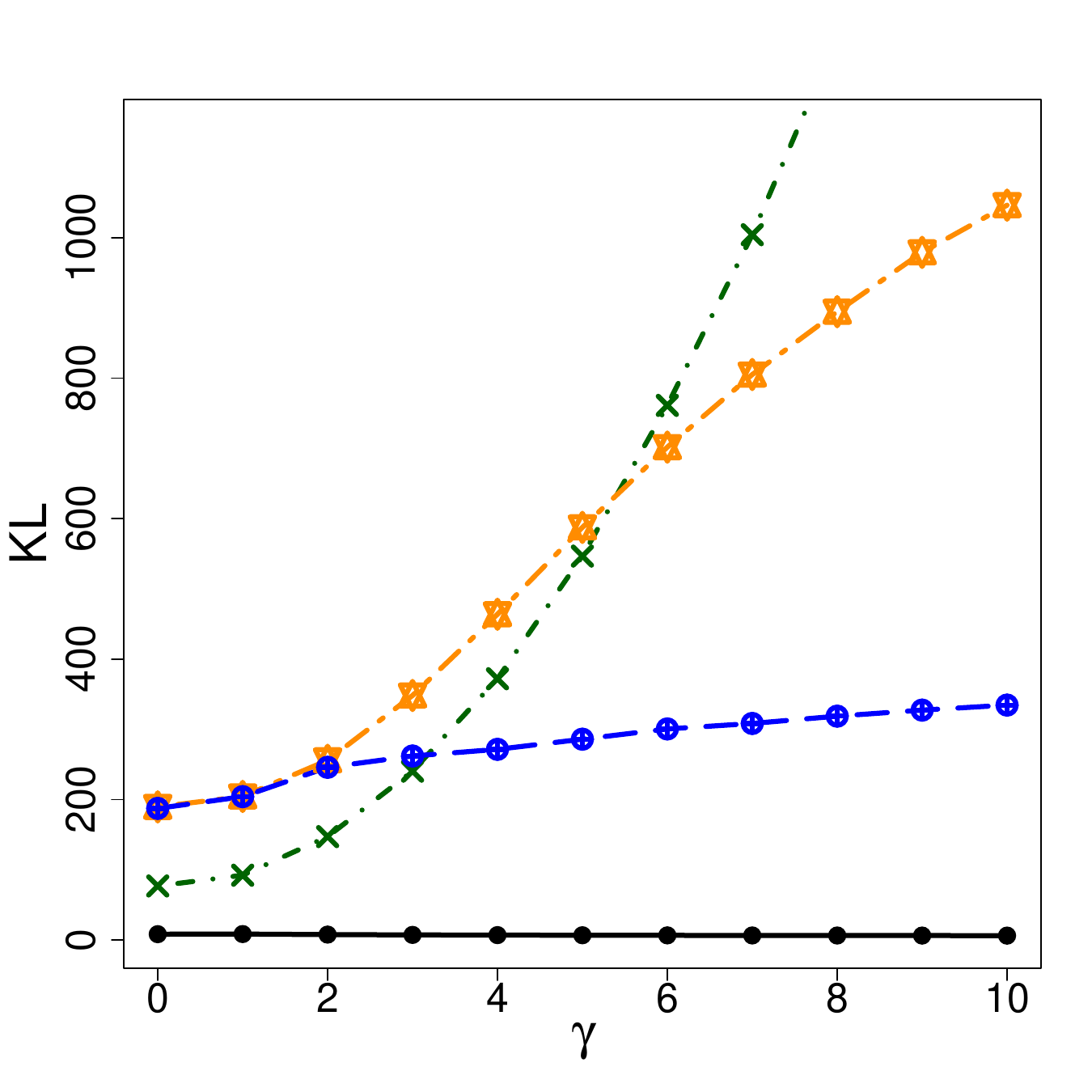}&\includegraphics[width=.31\textwidth]
  {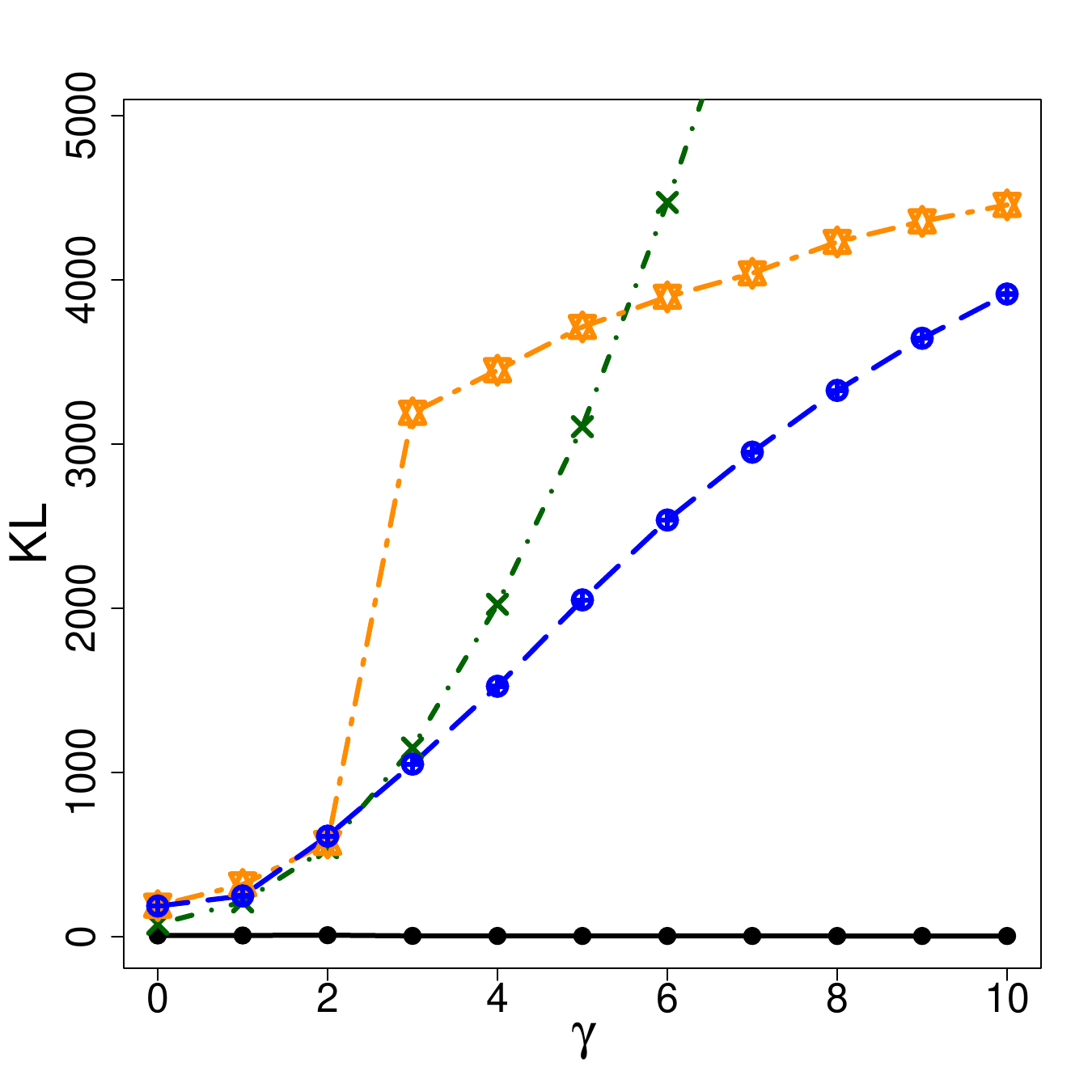}   
\end{tabular}
\caption{Average KL attained by cellRCov, RCov,
RSpearman, caseMRCD, and cellMCD in the presence of either 
cellwise outliers, casewise outliers, or both, 
for the planar covariance model in dimensions 
$p$ in $\lbrace30,60,120\rbrace$.}
\label{fig:results_planar}
\end{figure}

\begin{figure}[!ht]
\centering
 
 \begin{tabular}{M{0.0005\textwidth}M{0.29\textwidth}M{0.29\textwidth}M{0.32\textwidth}}
   &\large \textbf{Cellwise}  & \large \textbf{Casewise} &\large{\textbf{Casewise \& Cellwise}} \\
   [-4mm]
   \rotatebox{90}{\textbf{\footnotesize{$p=30$}}}&\includegraphics[width=.31\textwidth]
  {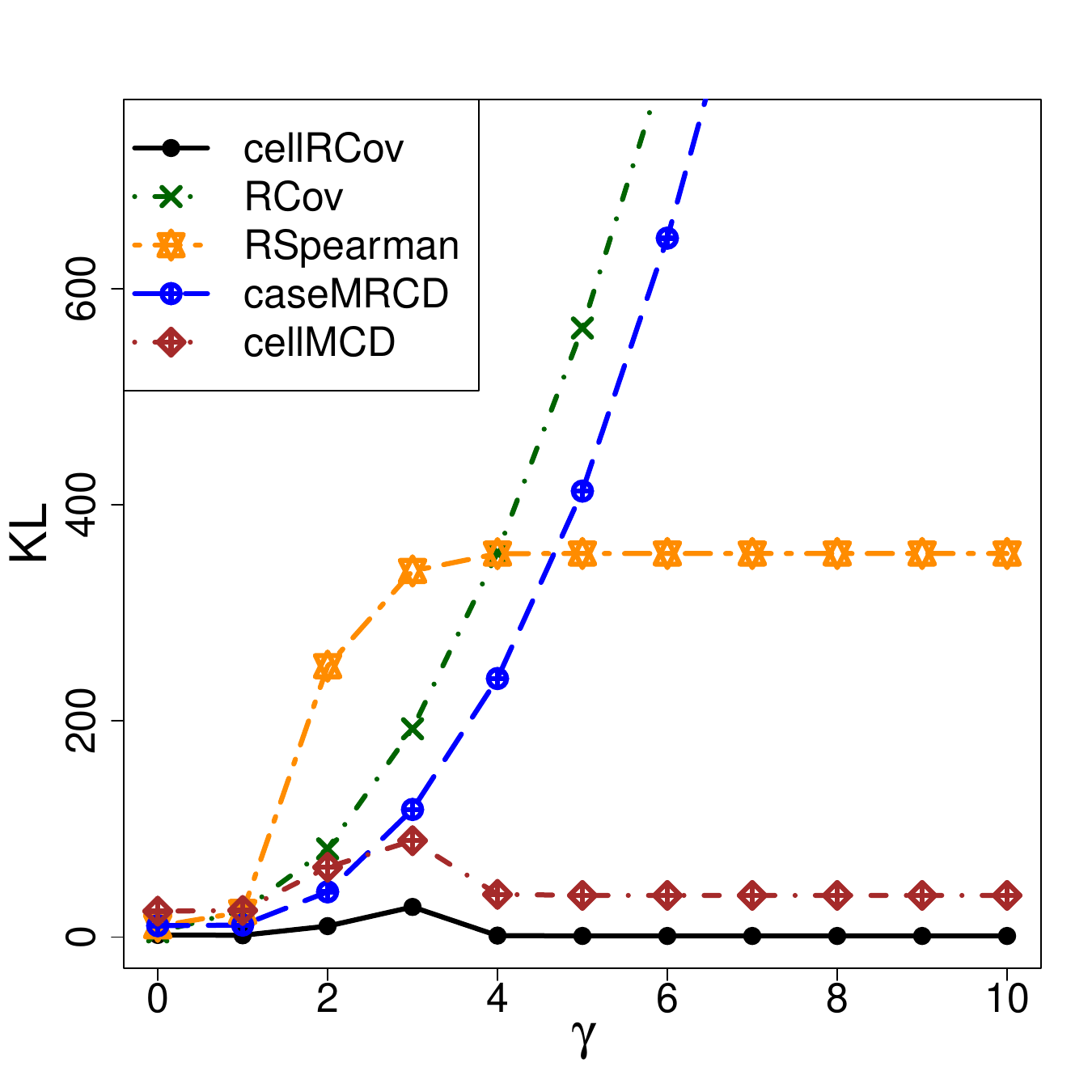}&\includegraphics[width=.31\textwidth]
  {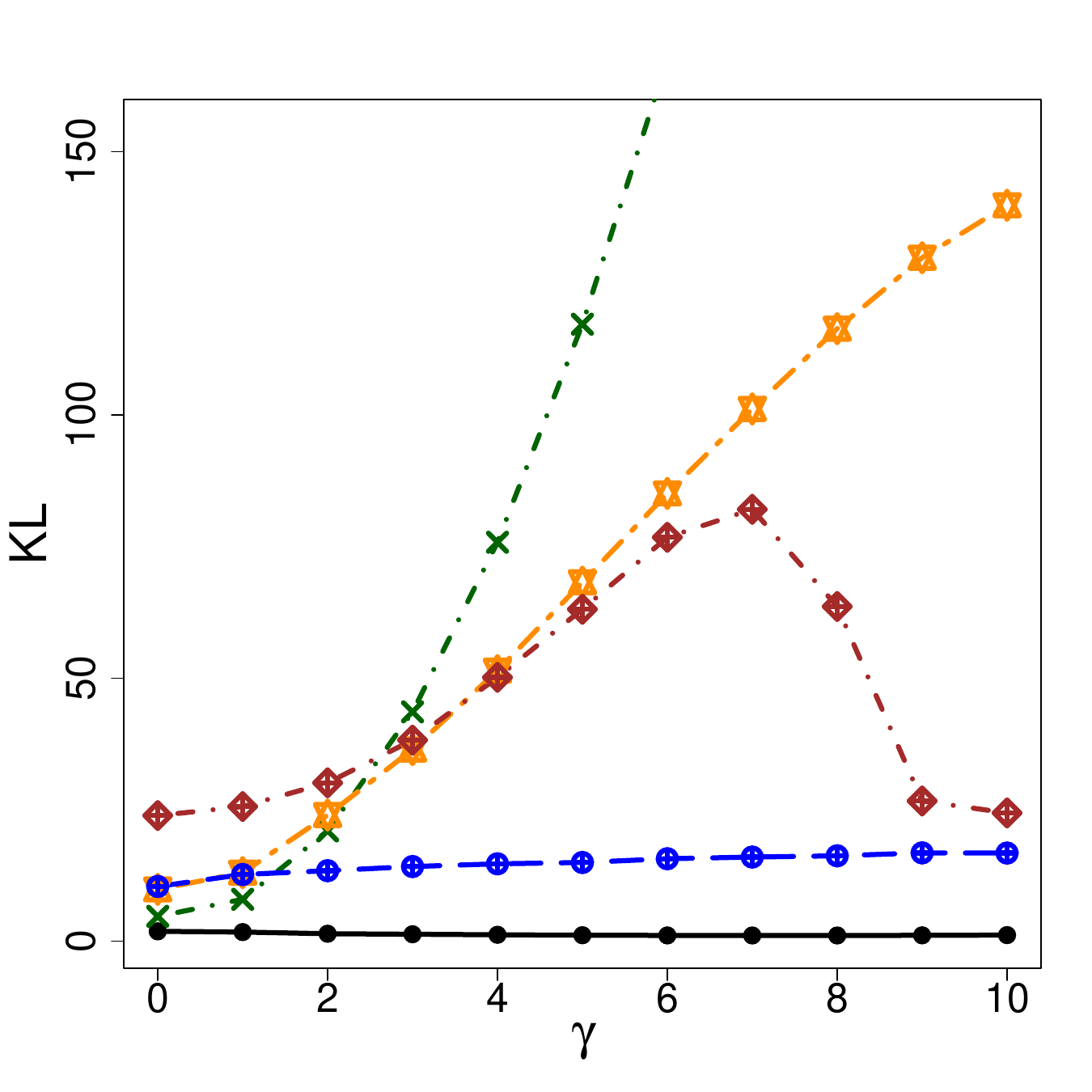}&\includegraphics[width=.31\textwidth]
  {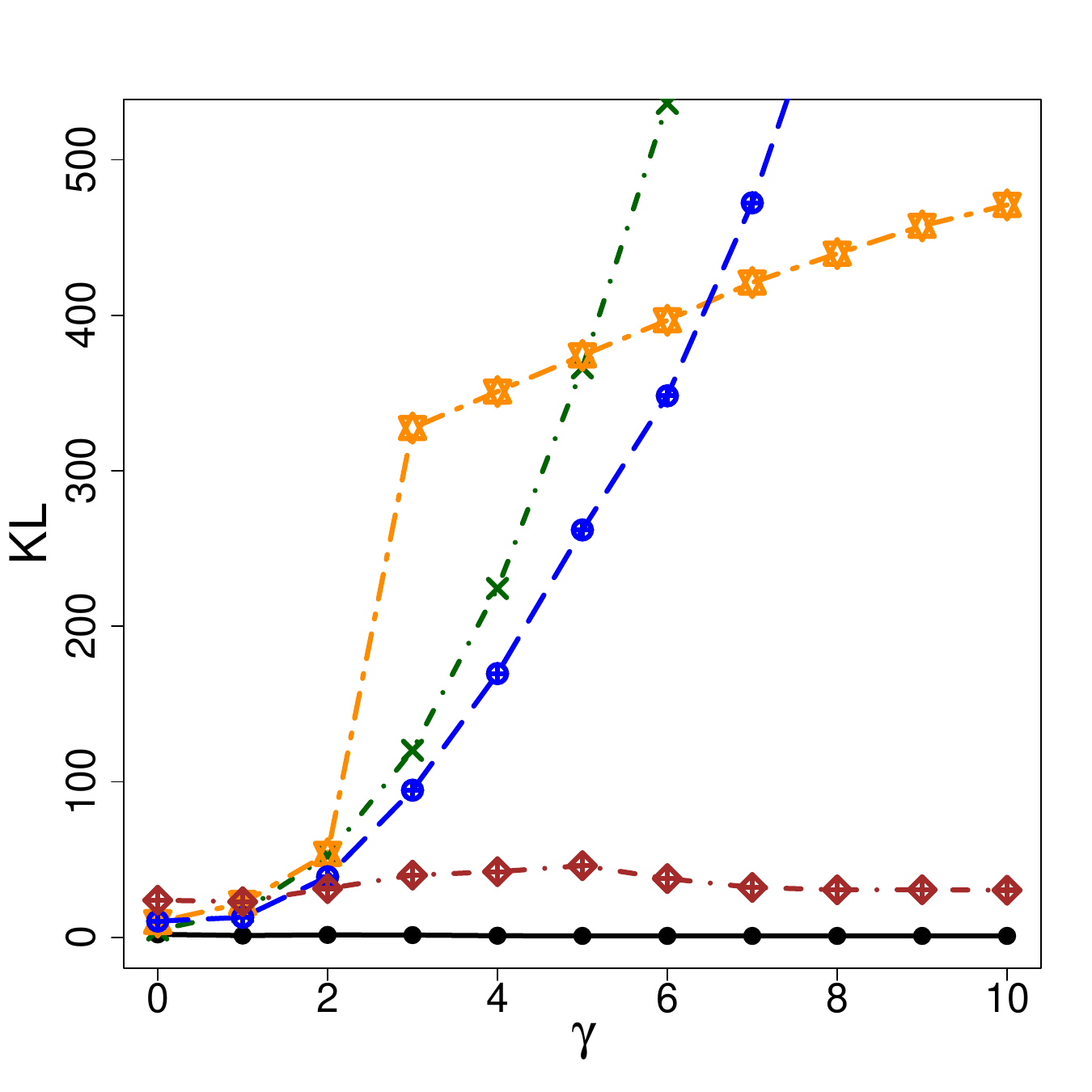}   \\ [-4mm]  \rotatebox{90}{\textbf{\footnotesize{$p=60$}}}&\includegraphics[width=.31\textwidth]
  {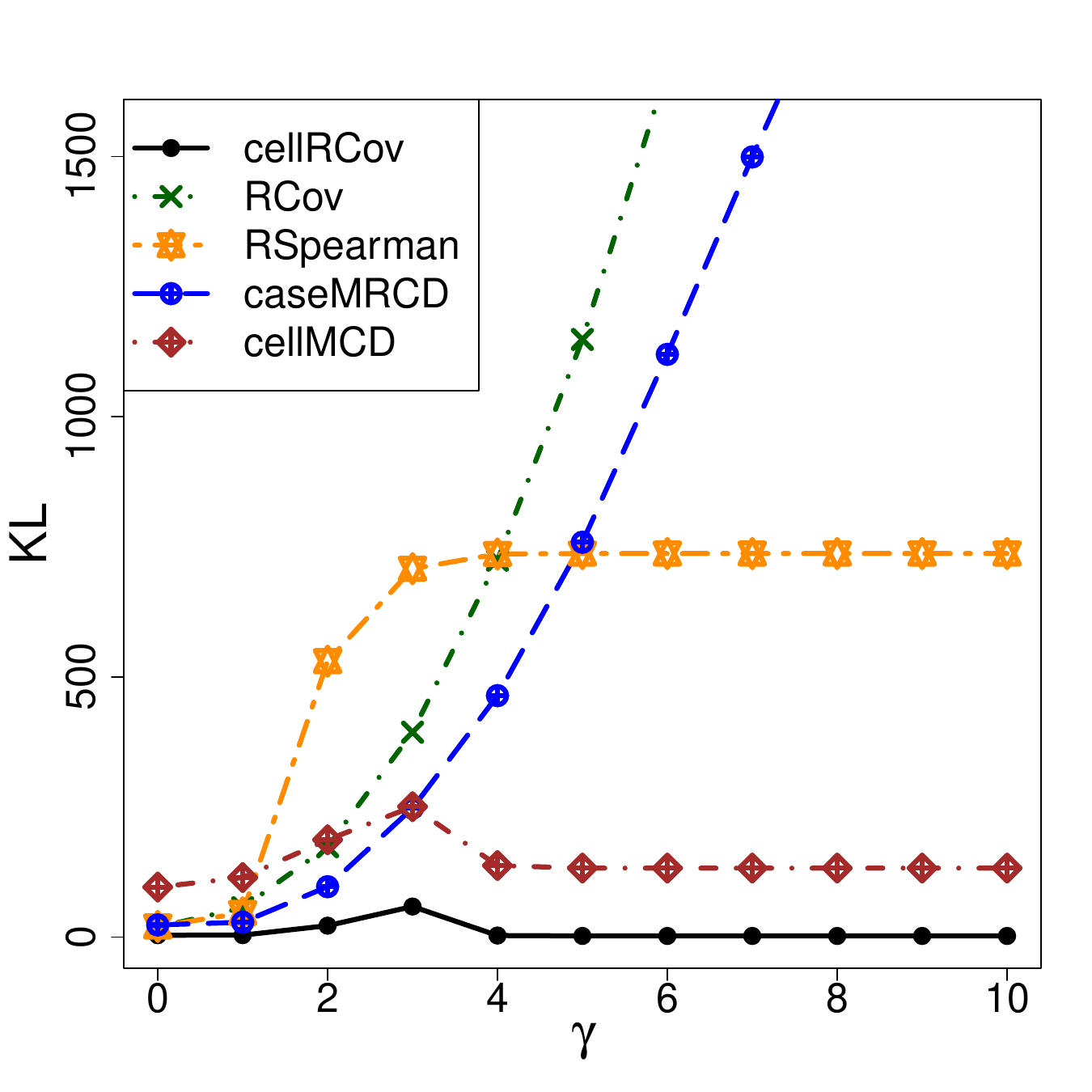}&\includegraphics[width=.31\textwidth]
  {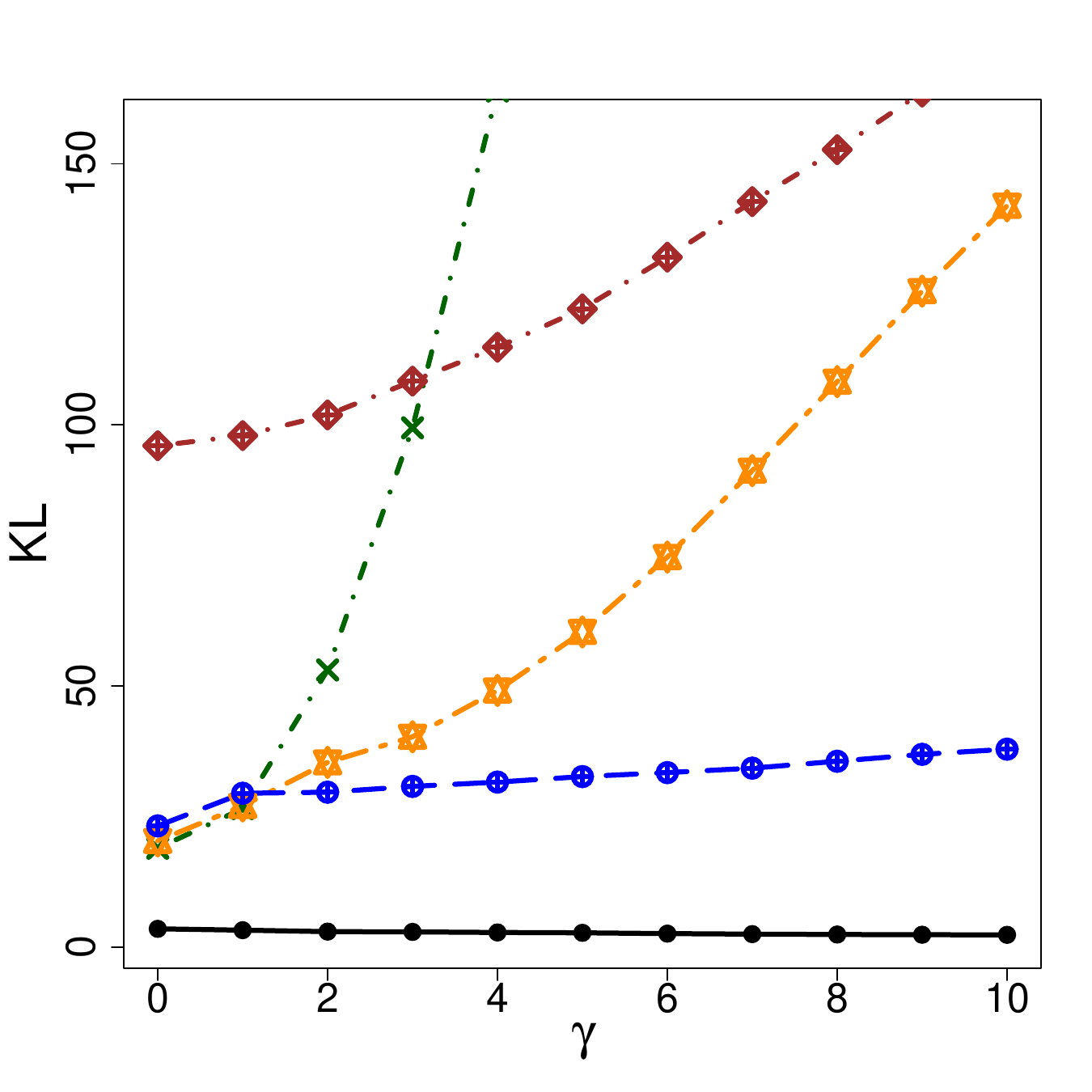}&\includegraphics[width=.31\textwidth]
  {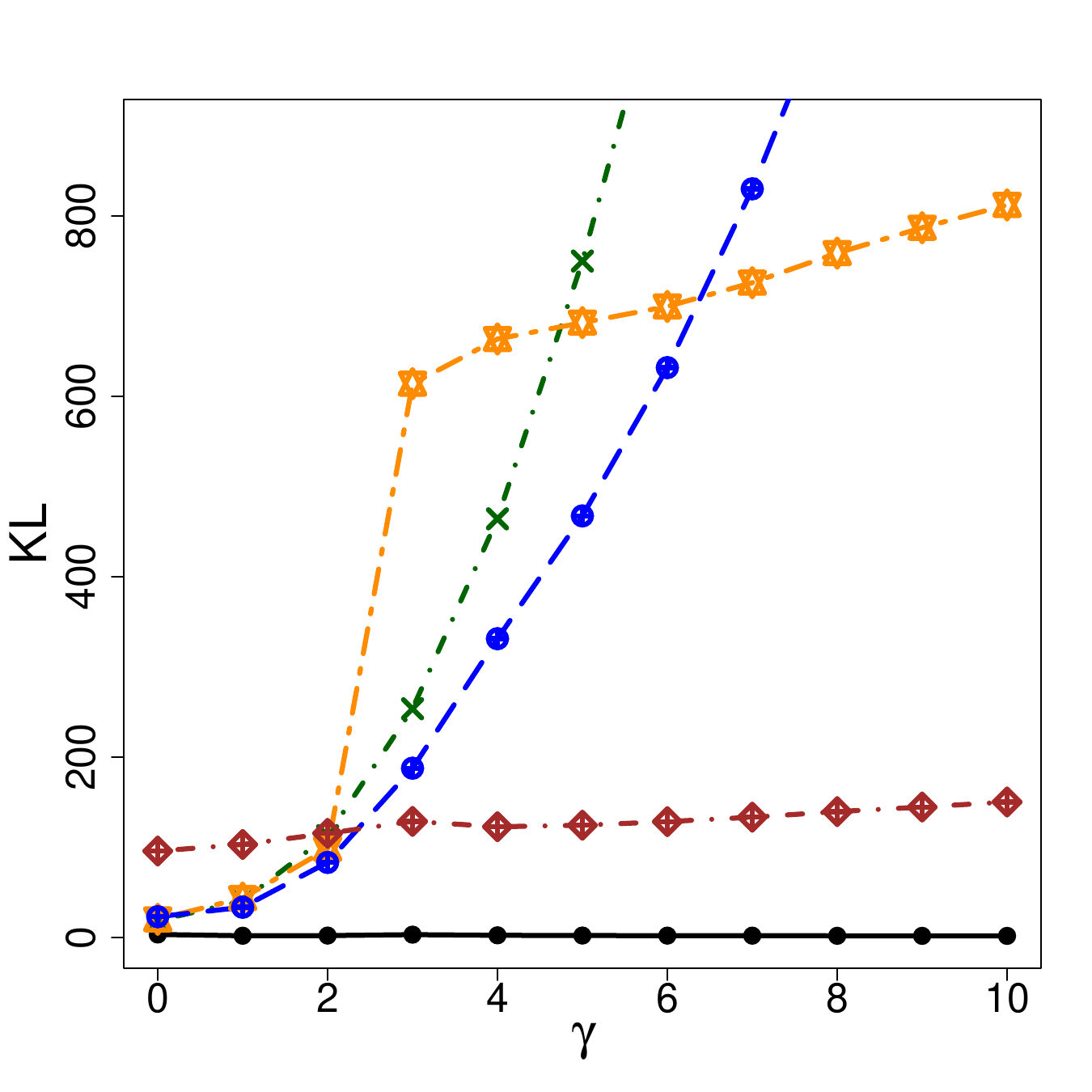}   \\ [-4mm]  \rotatebox{90}{\textbf{\footnotesize{$p=120$}}}&\includegraphics[width=.31\textwidth]
  {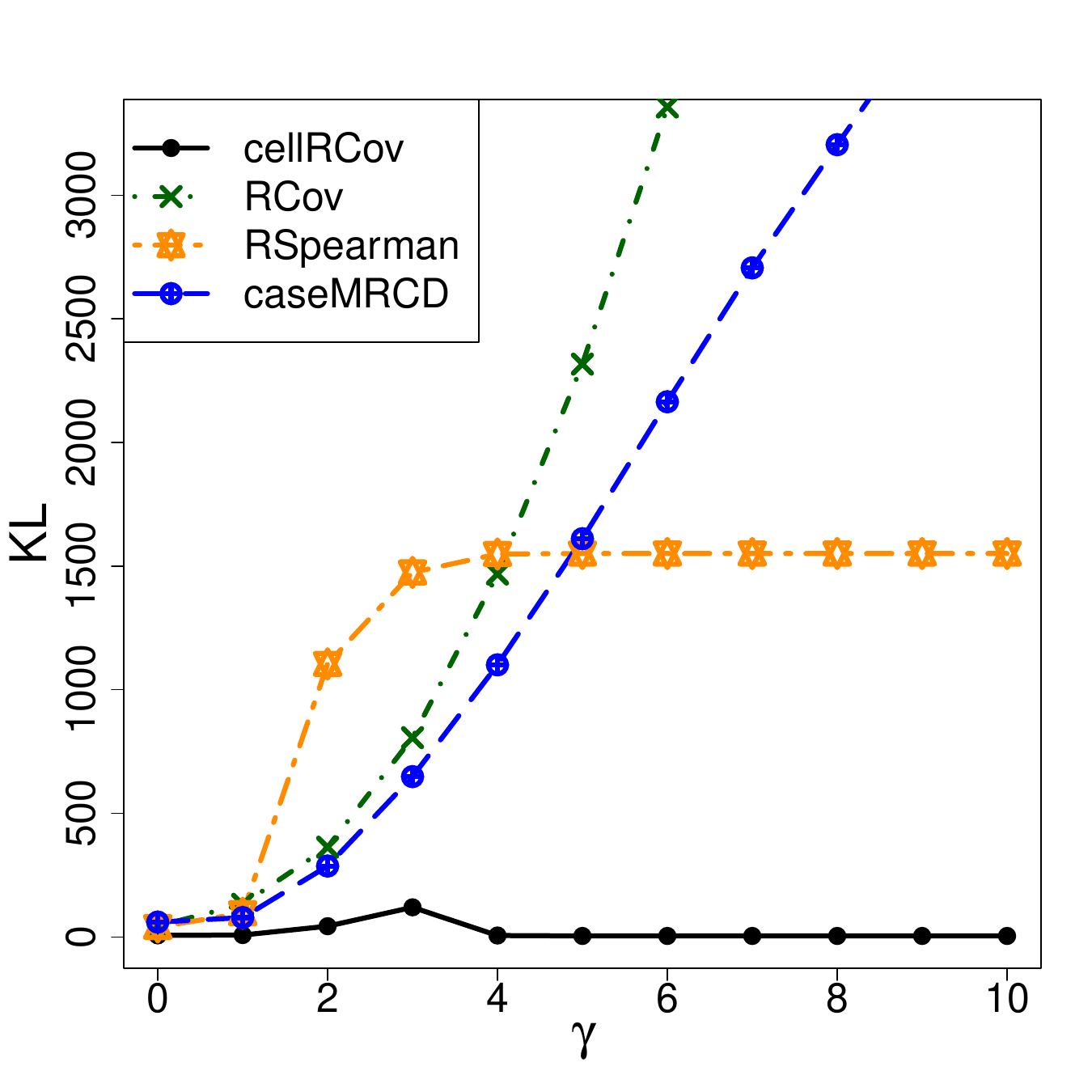}&\includegraphics[width=.31\textwidth]
  {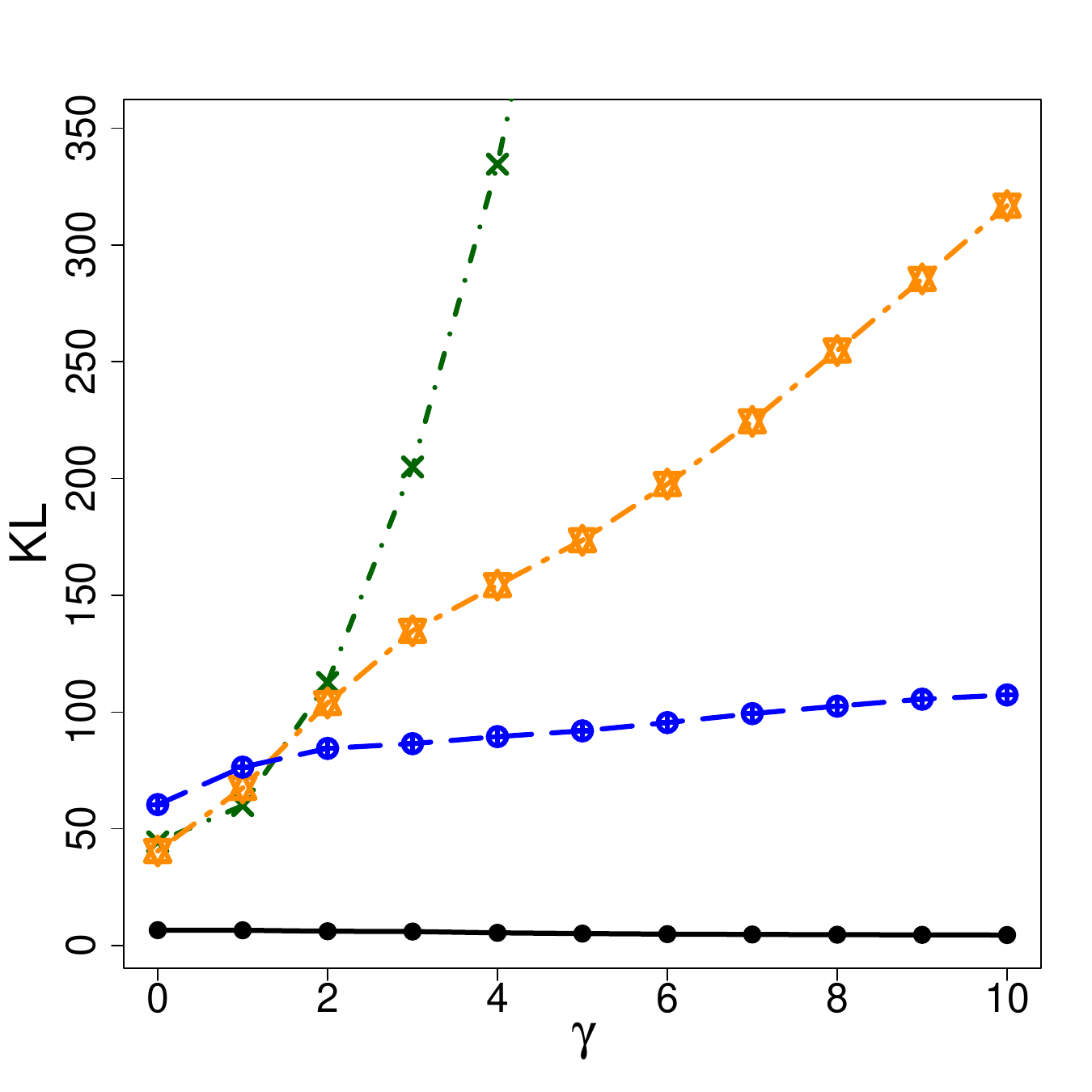}&\includegraphics[width=.31\textwidth]
  {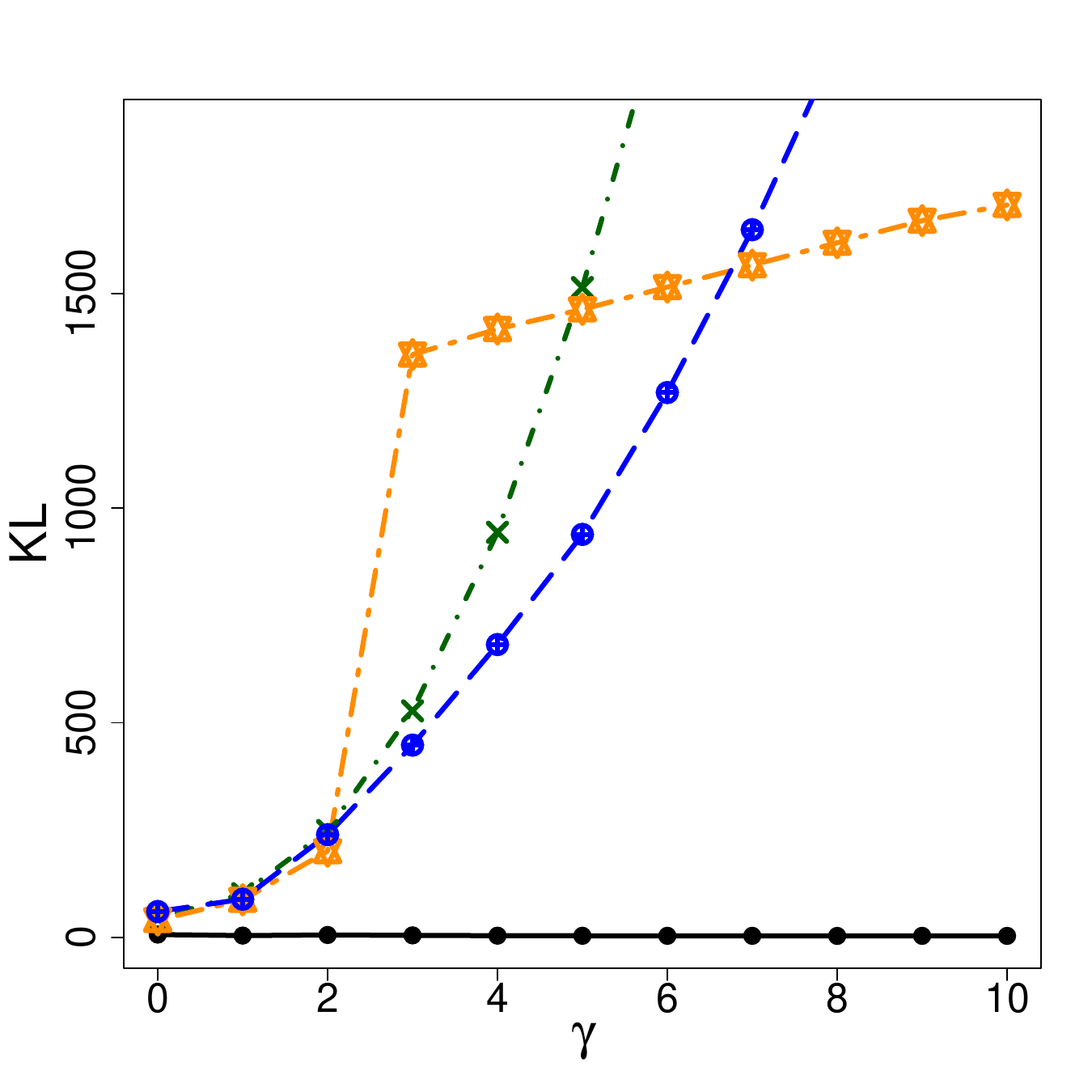}   
\end{tabular}
\caption{Average KL attained by cellRCov, RCov, 
RSpearman, caseMRCD, and cellMCD in the presence of either 
cellwise outliers, casewise outliers, or both, 
for the dense covariance model in dimensions 
$p$ in $\lbrace30,60,120\rbrace$.}
\label{fig:results_dense}
\end{figure}

\begin{figure}[!ht]
\centering
 
 \begin{tabular}{M{0.0005\textwidth}M{0.29\textwidth}M{0.29\textwidth}M{0.32\textwidth}}
   &\large \textbf{Cellwise}  & \large \textbf{Casewise} &\large{\textbf{Casewise \& Cellwise}} \\
   [-4mm]
   \rotatebox{90}{\textbf{\footnotesize{$p=30$}}}&\includegraphics[width=.31\textwidth]
  {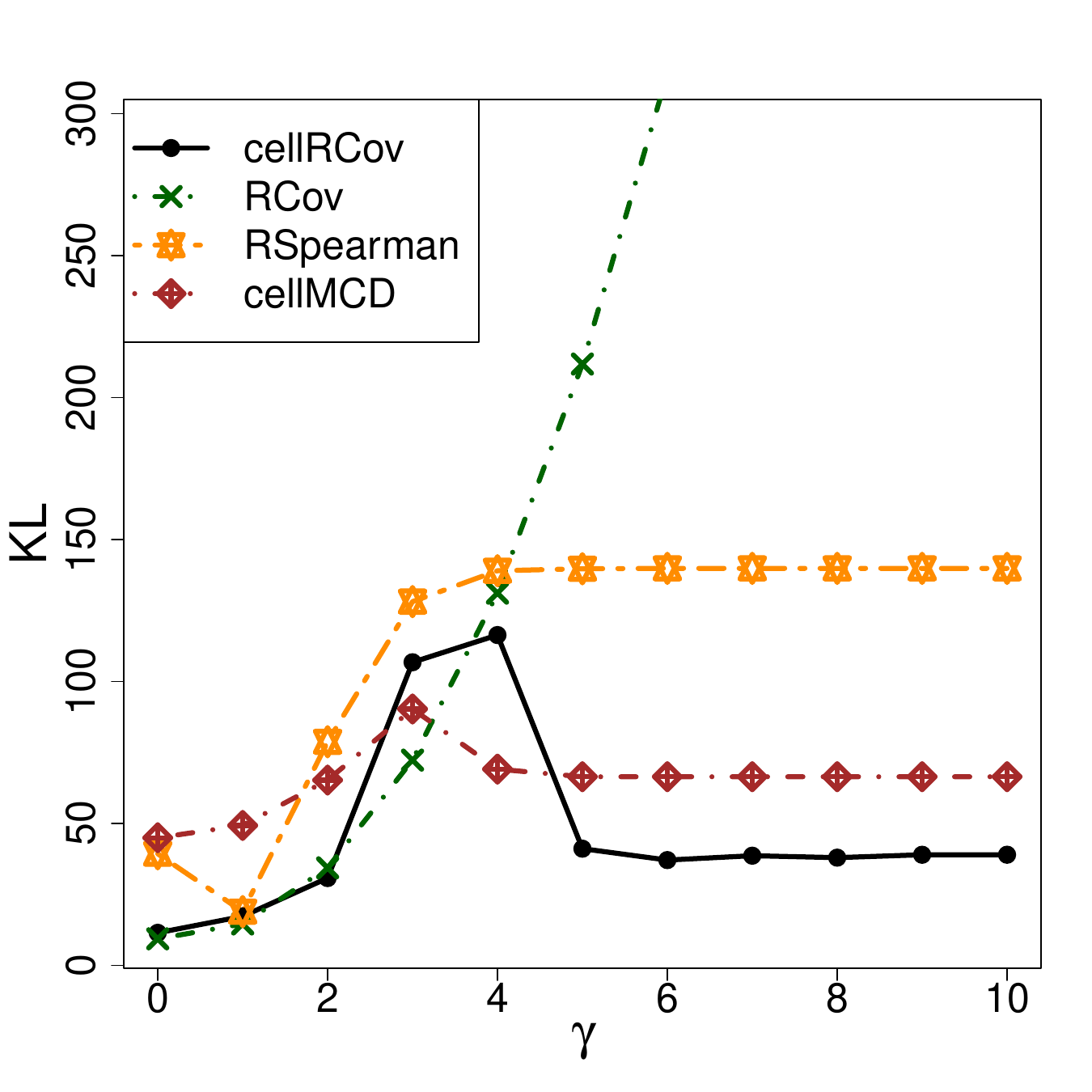}&\includegraphics[width=.31\textwidth]
  {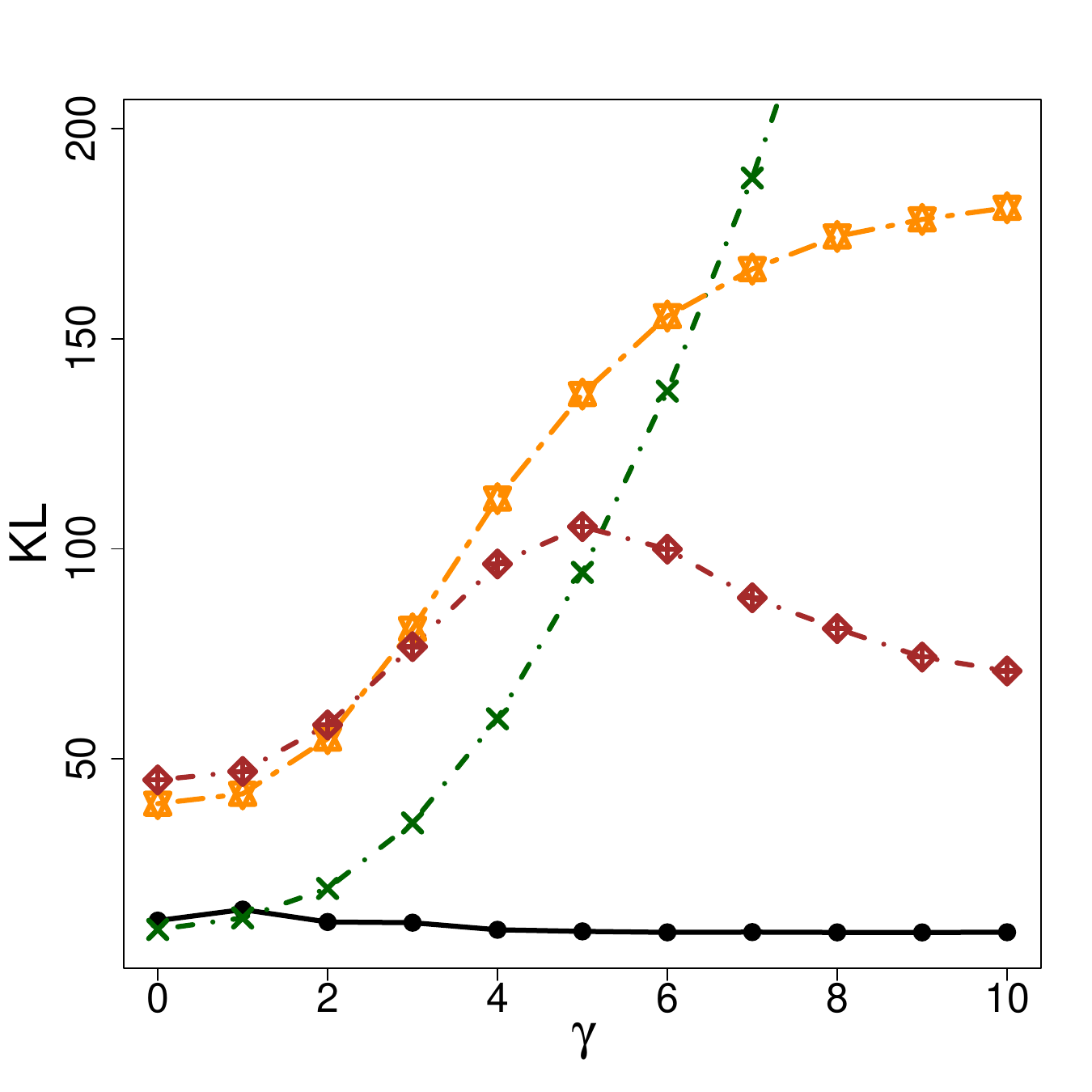}&\includegraphics[width=.31\textwidth]
  {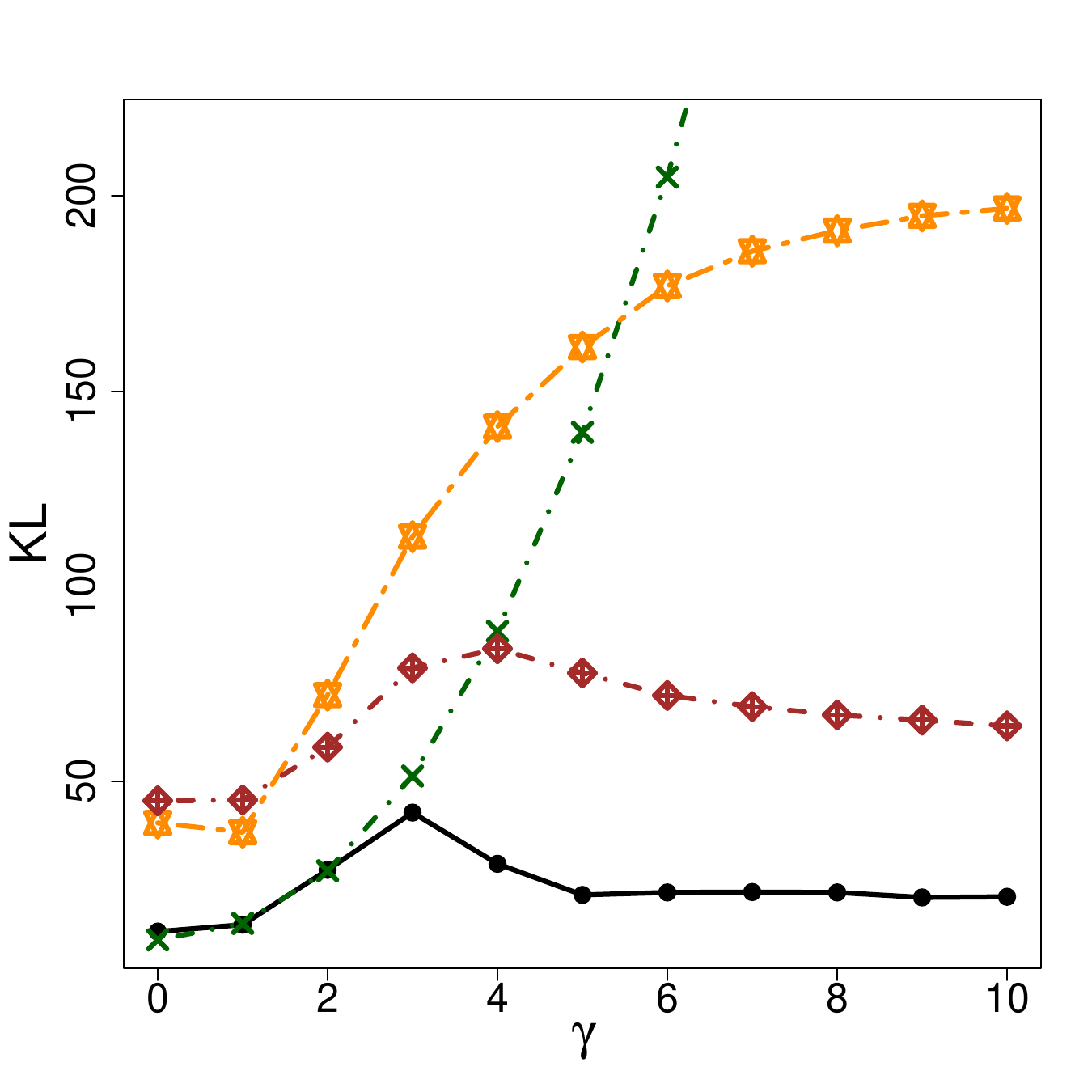}   \\ [-4mm]  \rotatebox{90}{\textbf{\footnotesize{$p=60$}}}&\includegraphics[width=.31\textwidth]
  {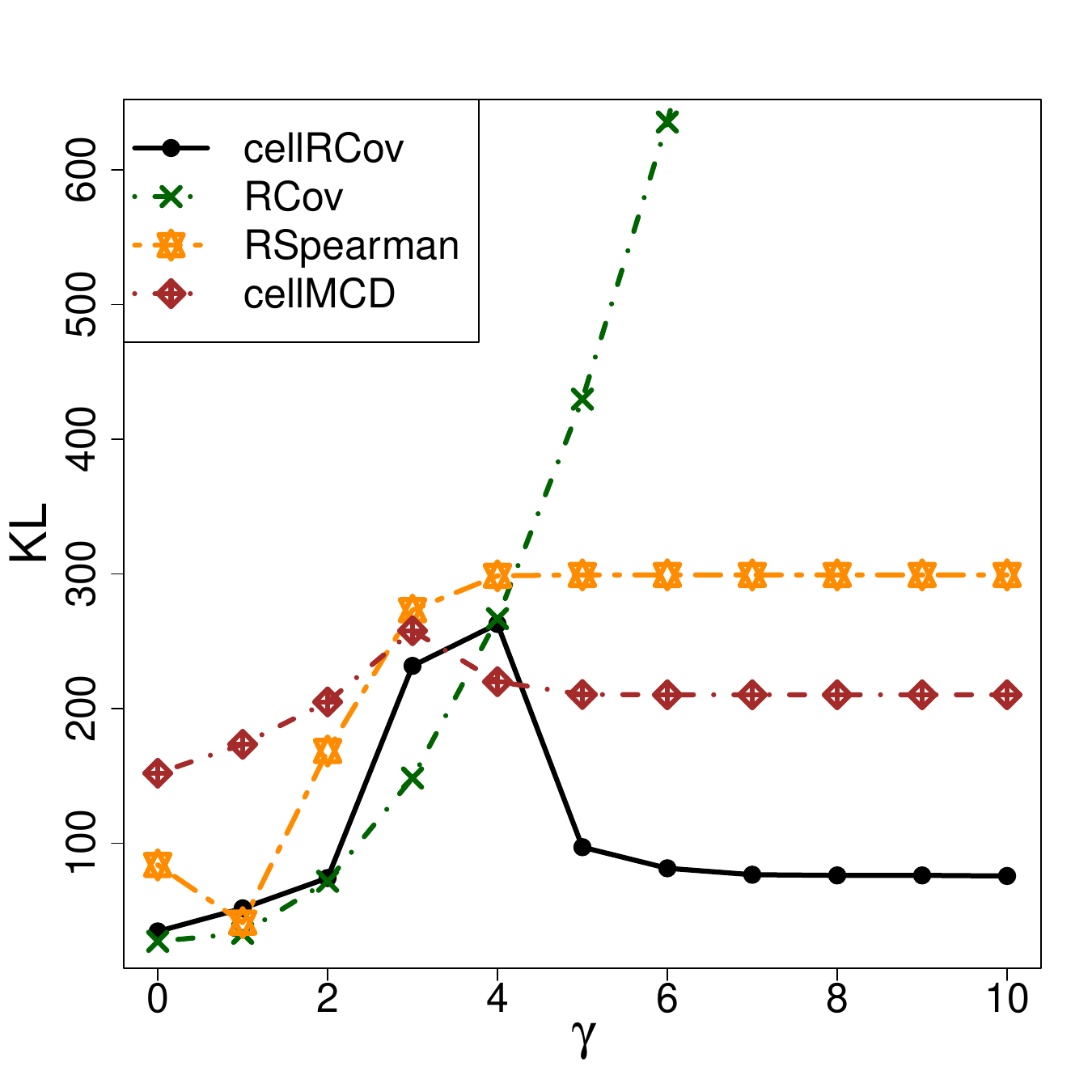}&\includegraphics[width=.31\textwidth]
  {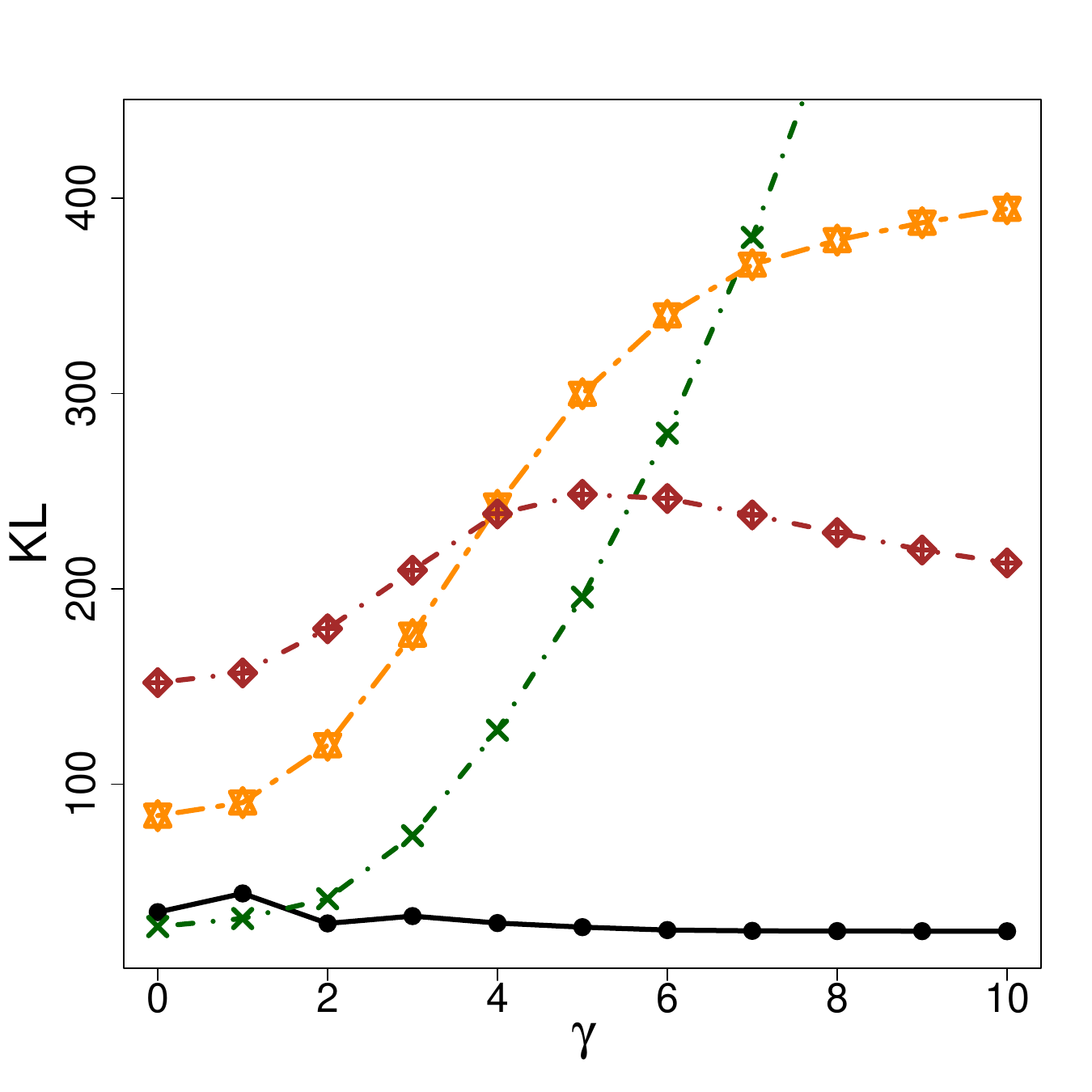}&\includegraphics[width=.31\textwidth]
  {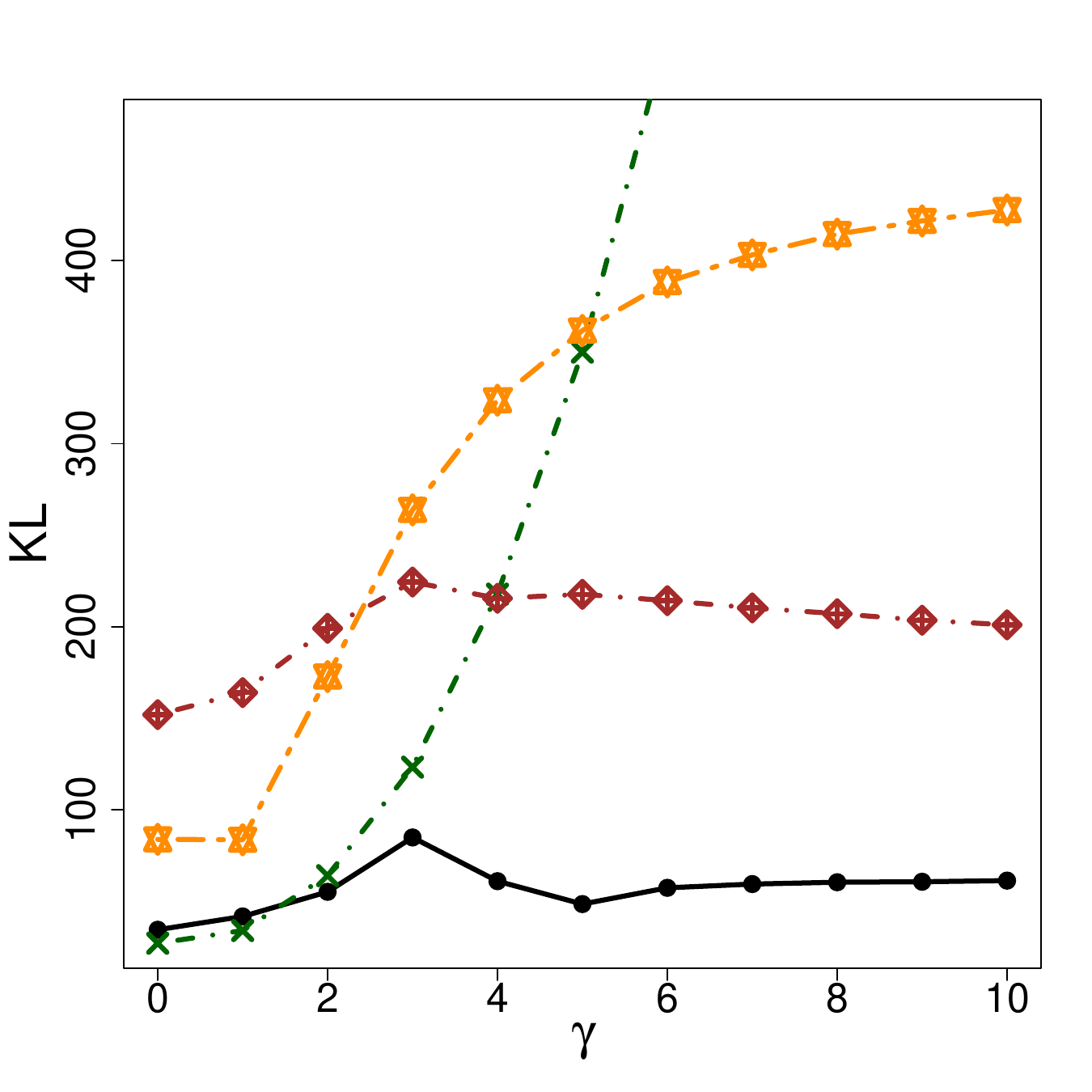}   \\ [-4mm]  \rotatebox{90}{\textbf{\footnotesize{$p=120$}}}&\includegraphics[width=.31\textwidth]
  {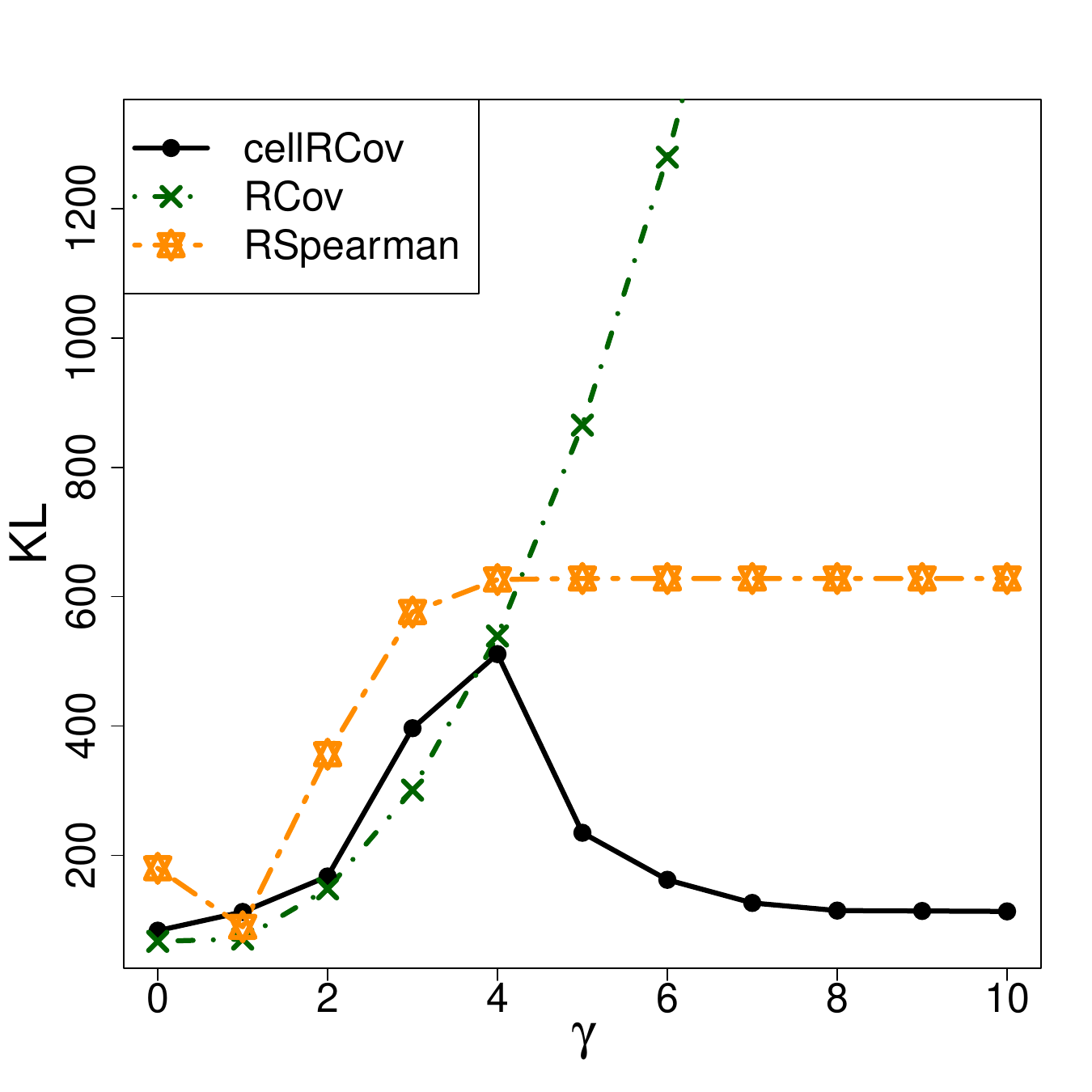}&\includegraphics[width=.31\textwidth]
  {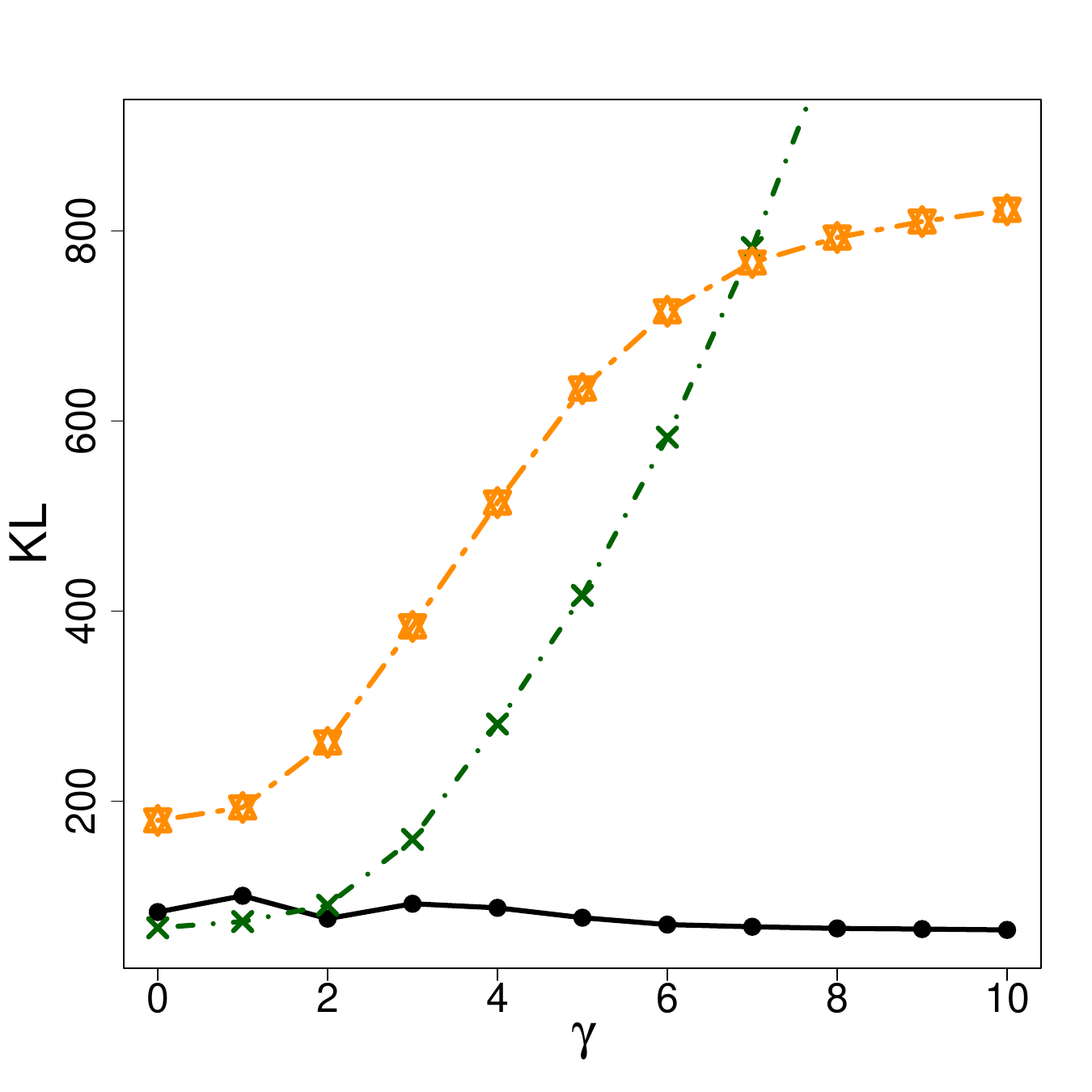}&\includegraphics[width=.31\textwidth]
  {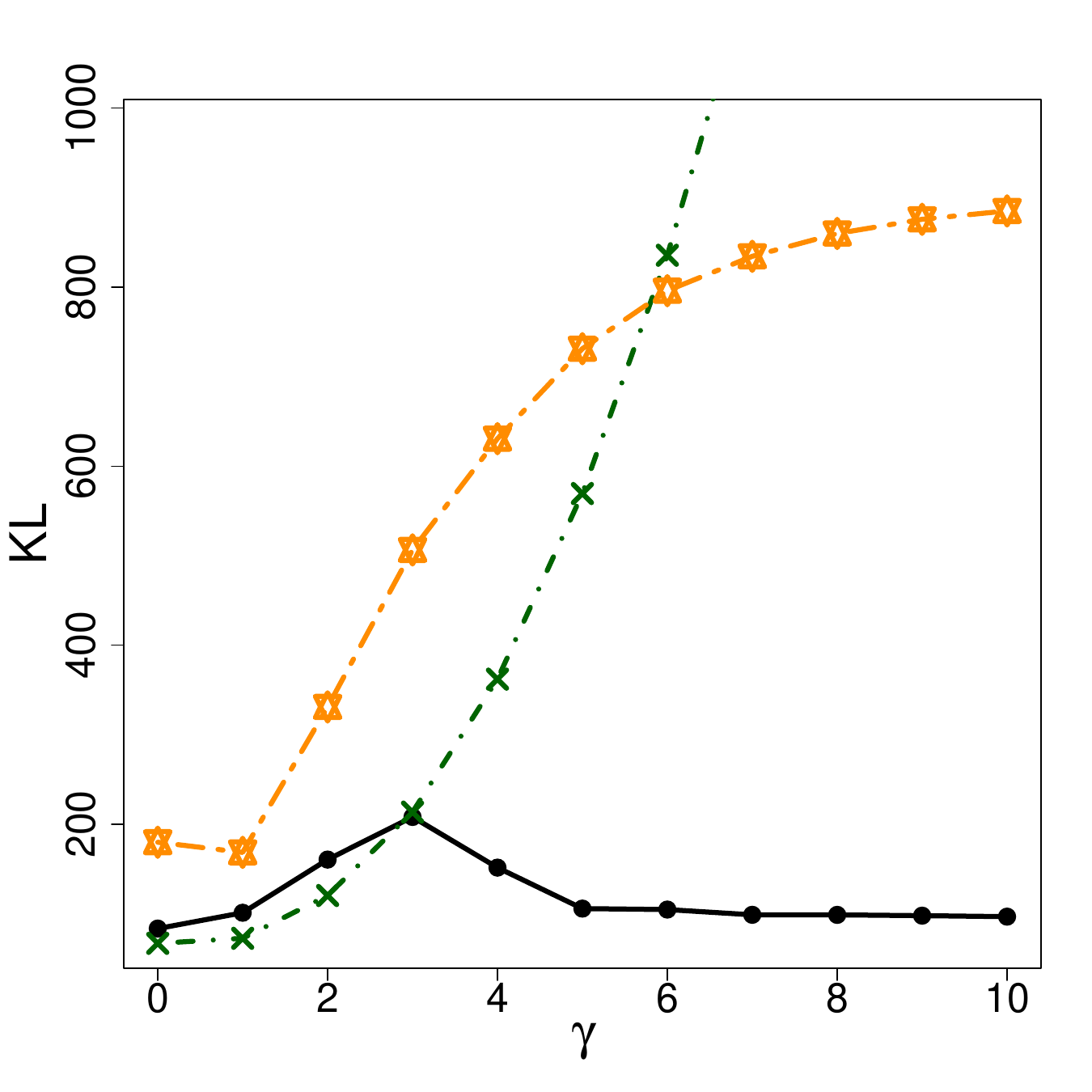}   
\end{tabular}
\caption{Average KL attained by cellRCov, 
RCov, RSpearman, and cellMCD in the presence of either 
cellwise outliers, casewise outliers, or both, 
for the A06 covariance model in dimensions 
$p$ in $\lbrace30,60,120\rbrace$, with 20\p of 
missing cells.}
\label{fig:results_A06_NA}
\end{figure}

\begin{figure}[!ht]
\centering
 
\begin{tabular}{M{0.0005\textwidth}M{0.29\textwidth}M{0.29\textwidth}M{0.32\textwidth}}
   &\large \textbf{Cellwise}  & \large \textbf{Casewise} &\large{\textbf{Casewise \& Cellwise}} \\
   [-4mm]
   \rotatebox{90}{\textbf{\footnotesize{$p=30$}}}&\includegraphics[width=.31\textwidth]
  {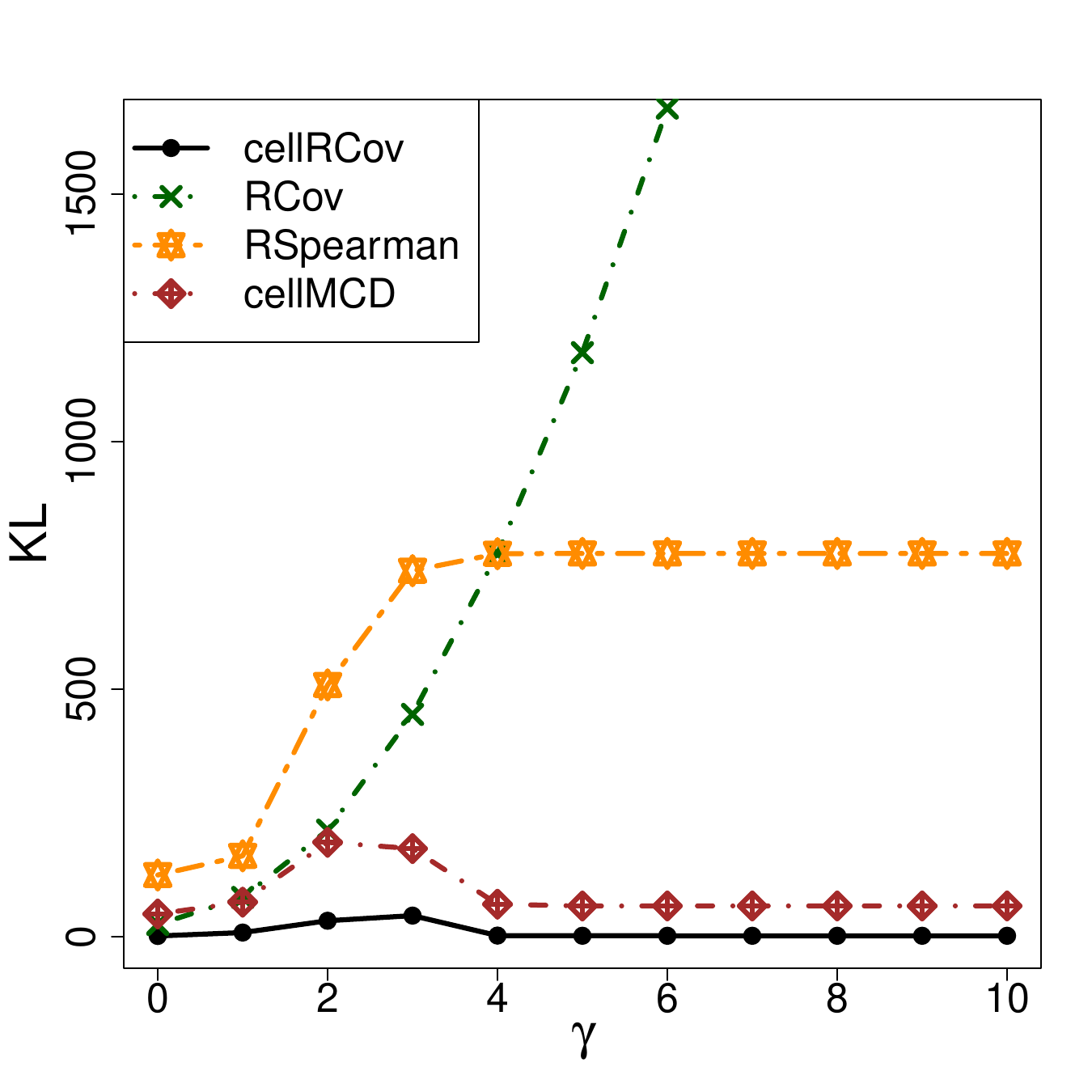}&\includegraphics[width=.31\textwidth]
  {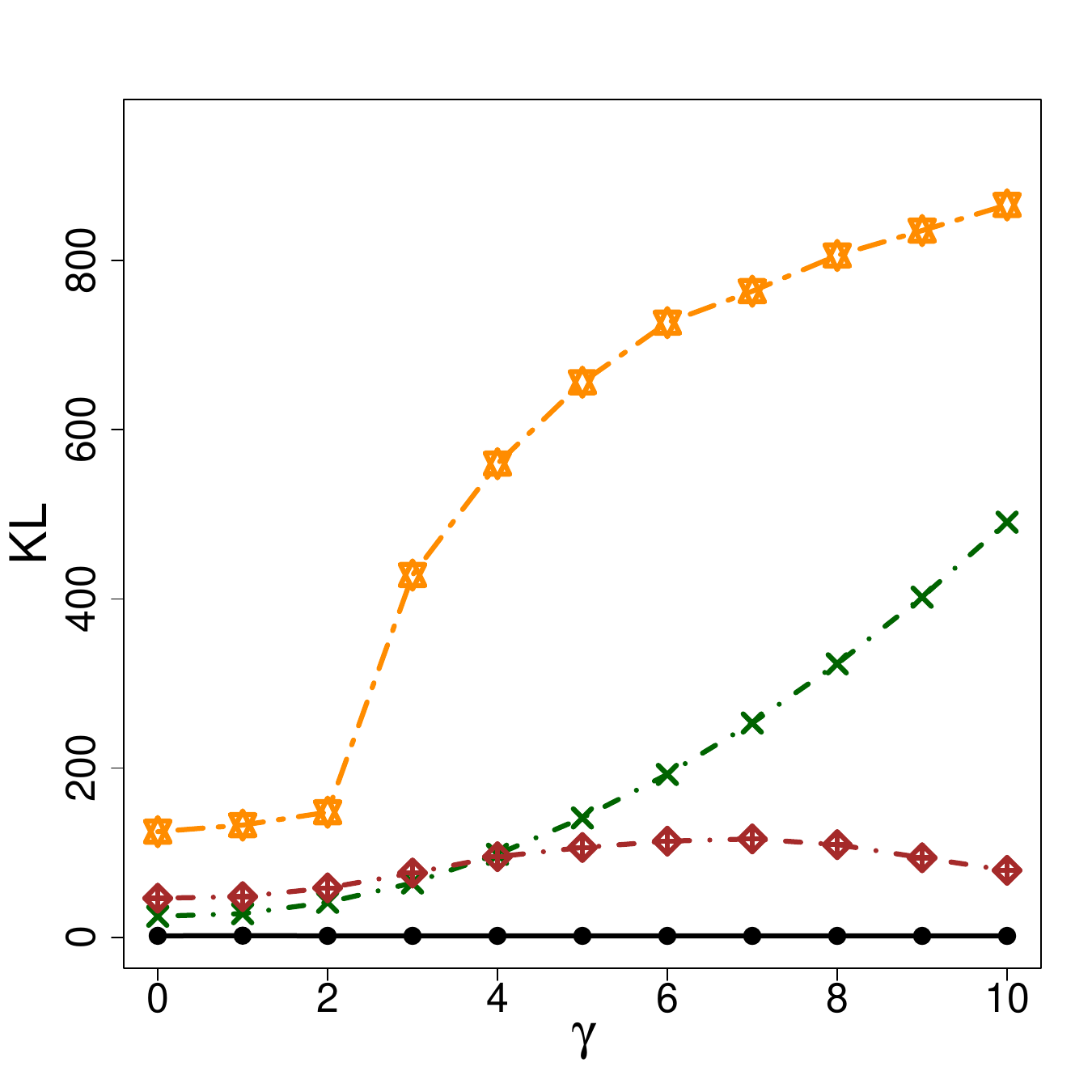}&\includegraphics[width=.31\textwidth]
  {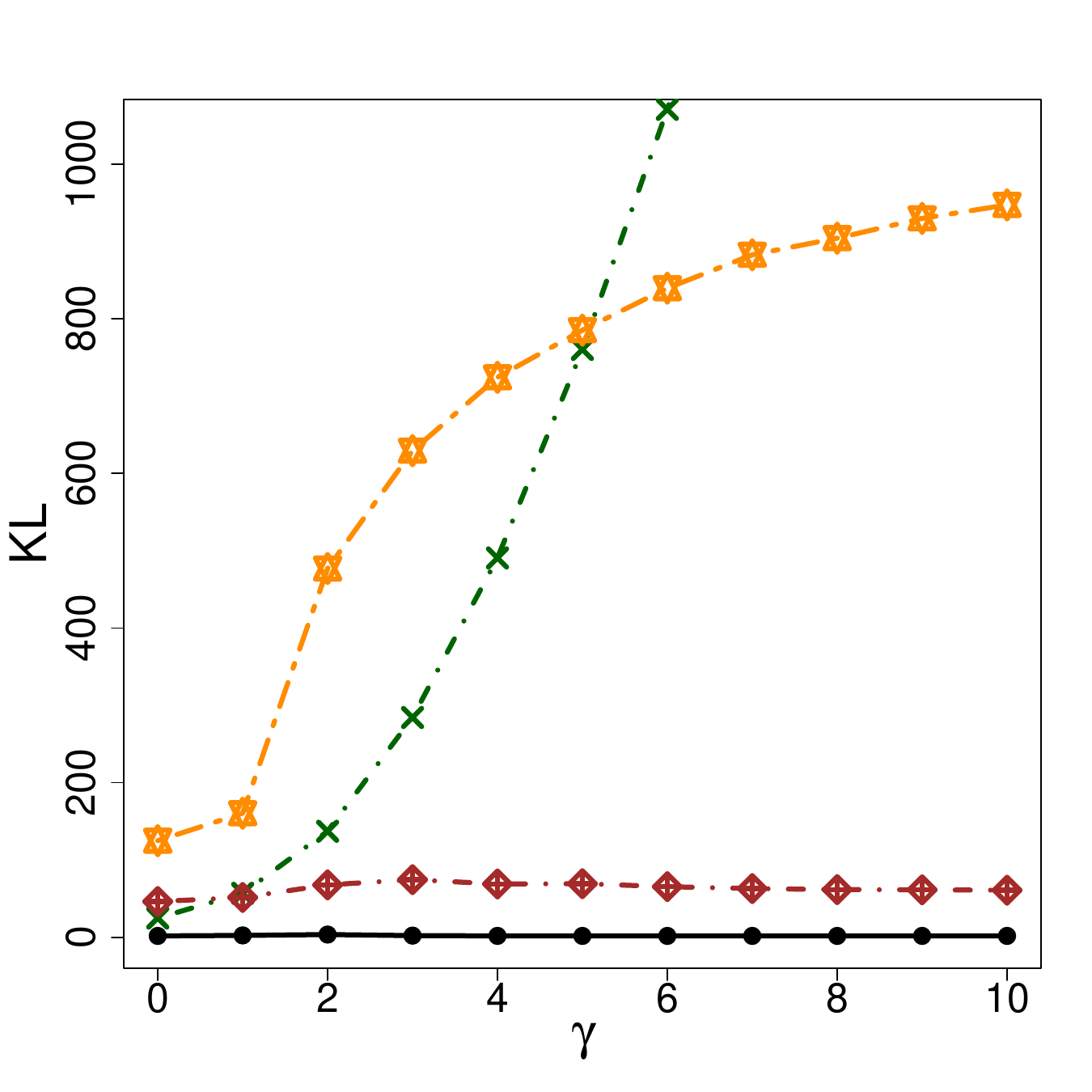}   \\ [-4mm]  \rotatebox{90}{\textbf{\footnotesize{$p=60$}}}&\includegraphics[width=.31\textwidth]
  {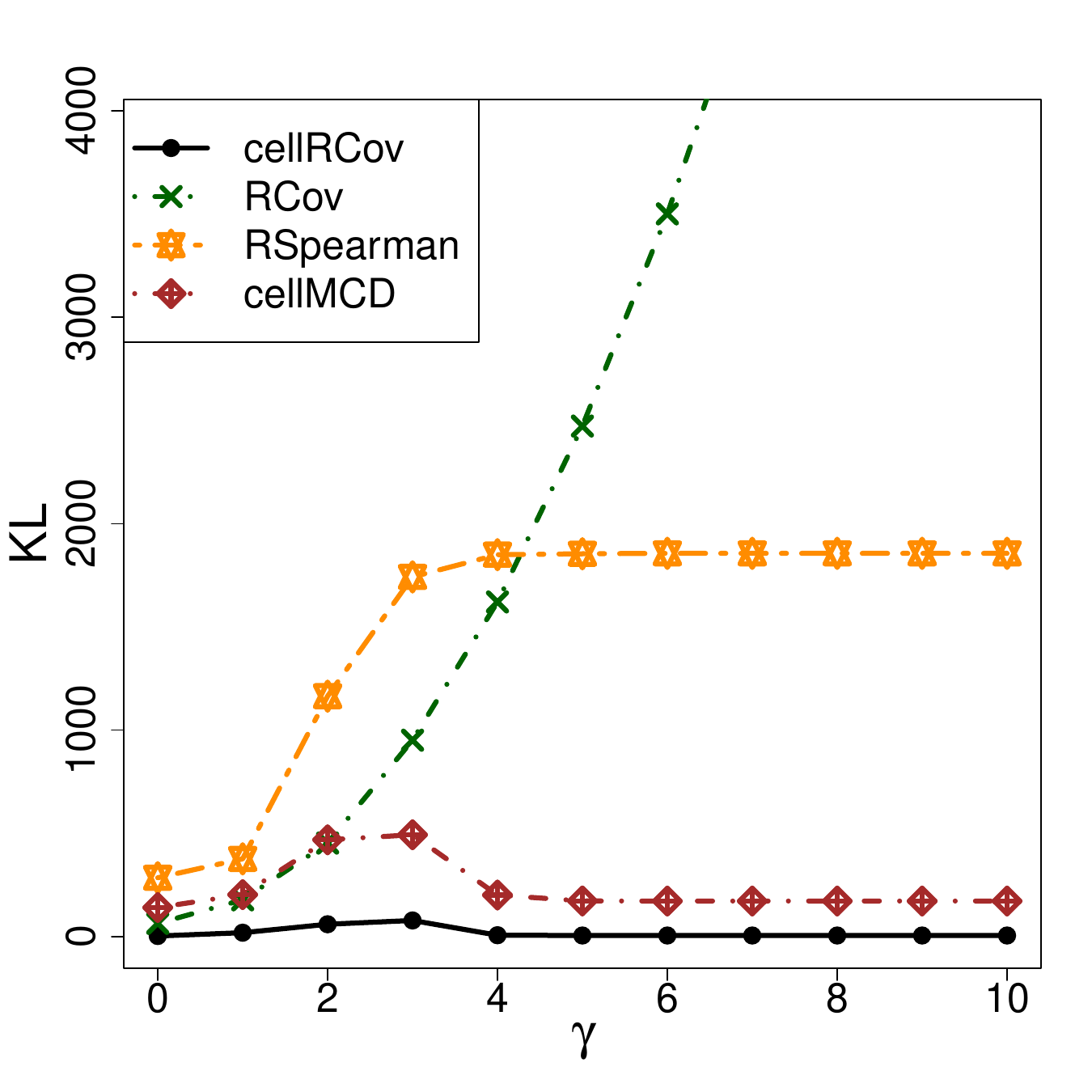}&\includegraphics[width=.31\textwidth]
  {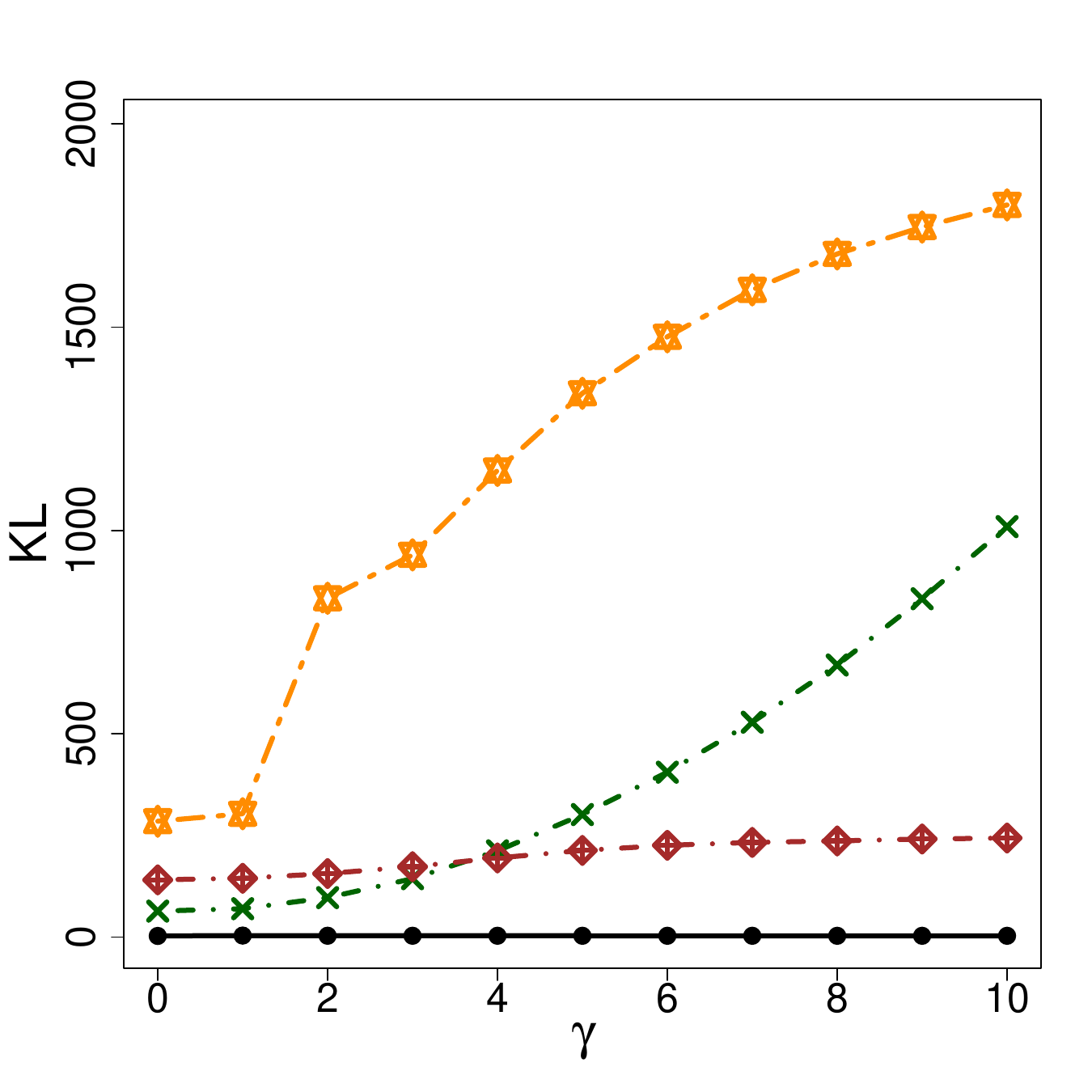}&\includegraphics[width=.31\textwidth]
  {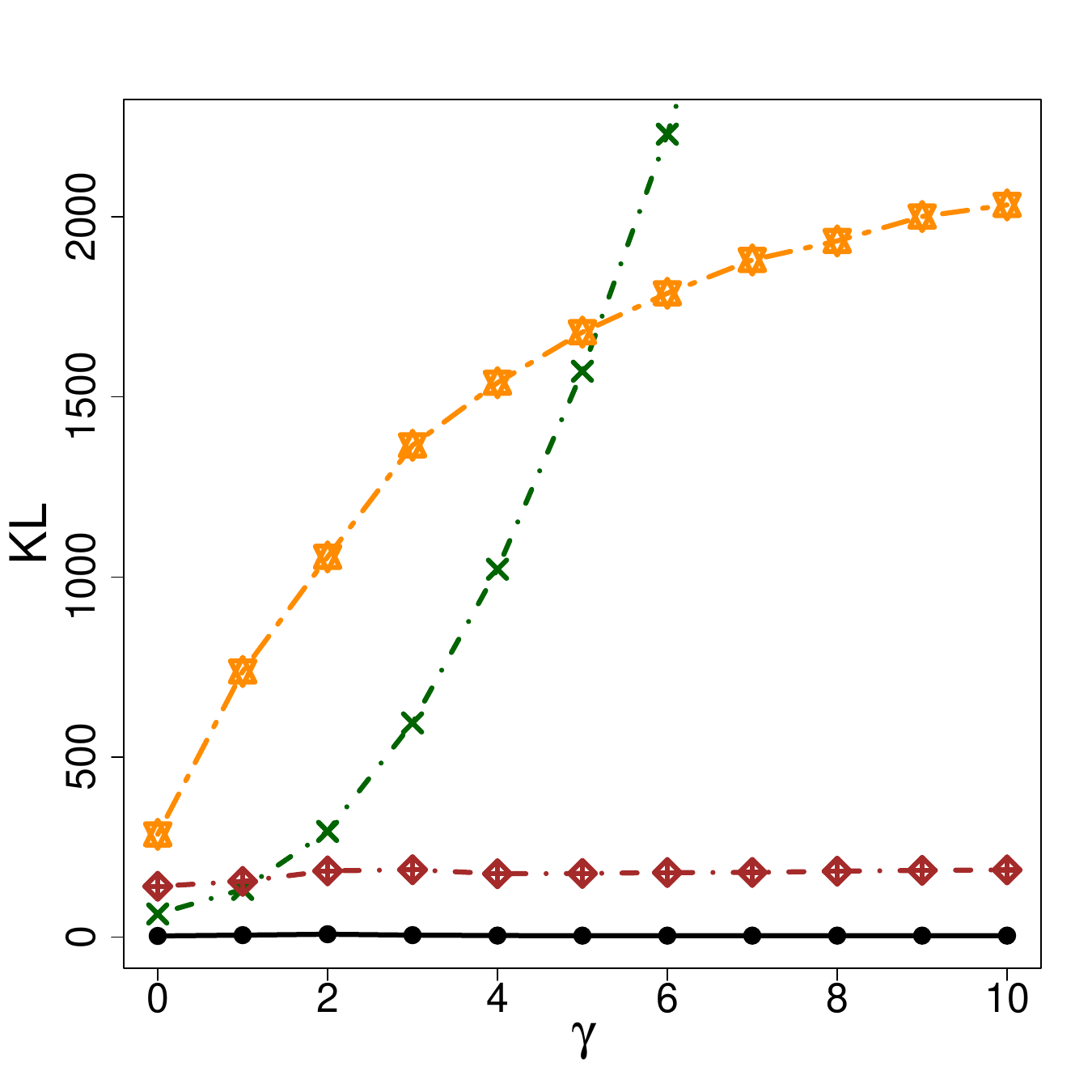}   \\ [-4mm]  \rotatebox{90}{\textbf{\footnotesize{$p=120$}}}&\includegraphics[width=.31\textwidth]
  {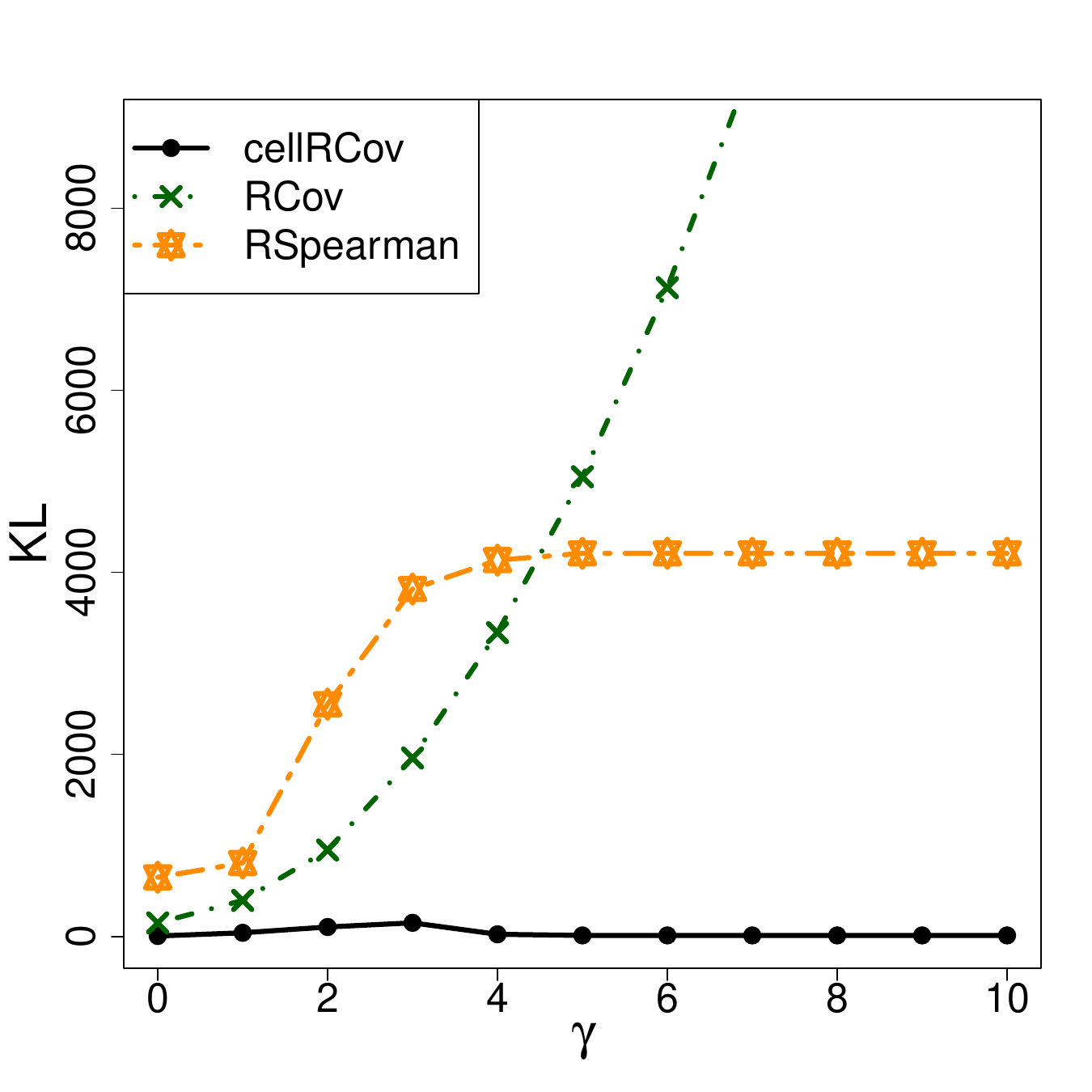}&\includegraphics[width=.31\textwidth]
  {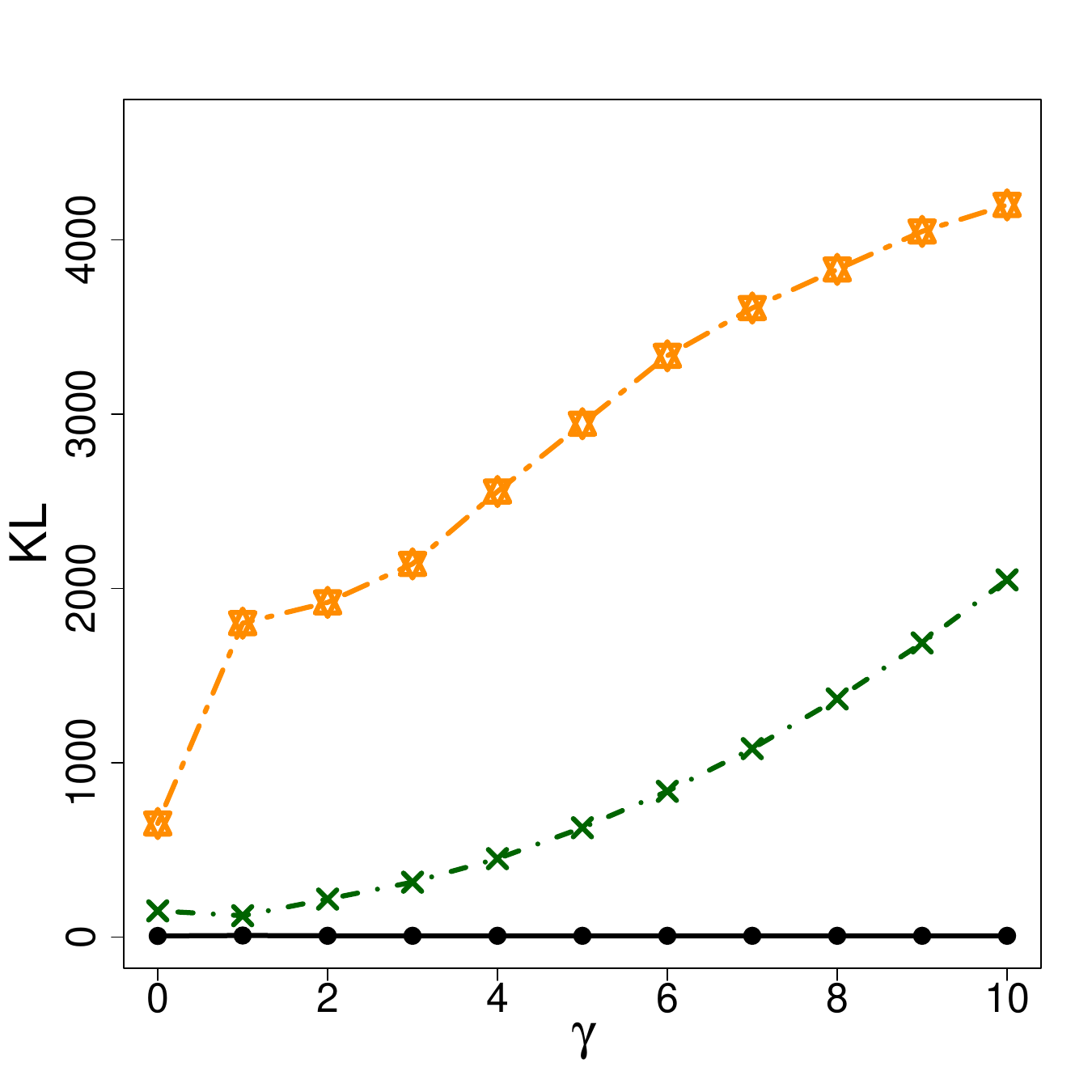}&\includegraphics[width=.31\textwidth]
  {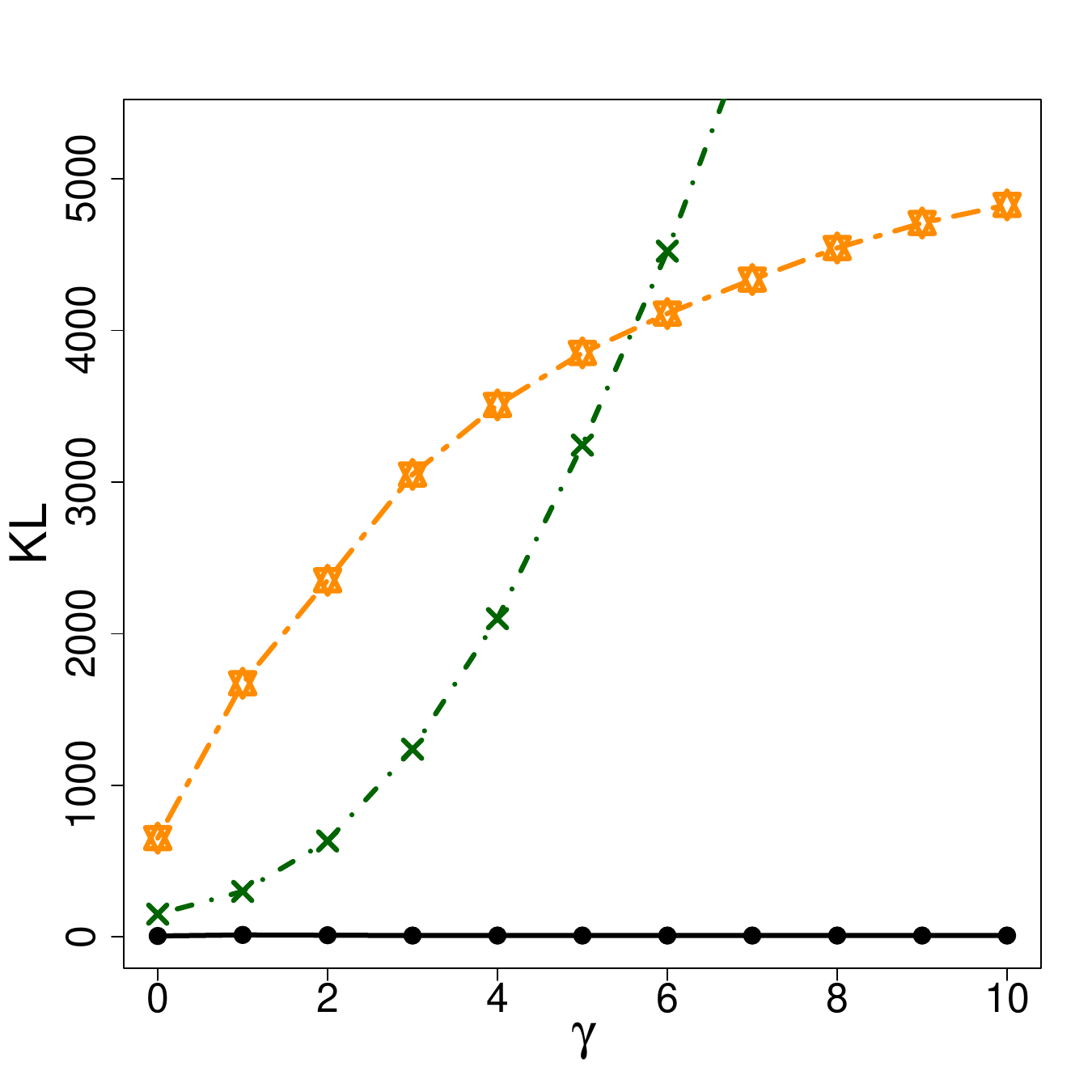}   
\end{tabular}
\caption{Average KL attained by cellRCov, 
RCov, RSpearman, and cellMCD in the presence of either 
cellwise outliers, casewise outliers, or both, 
for the planar covariance model in dimensions  
$p$ in $\lbrace30,60,120\rbrace$, with 20\p 
of missing cells.}
\label{fig:results_planar_NA}
\end{figure}

\begin{figure}[!ht]
\centering
 
\begin{tabular}{M{0.0005\textwidth}M{0.29\textwidth}M{0.29\textwidth}M{0.32\textwidth}}
   &\large \textbf{Cellwise}  & \large \textbf{Casewise} &\large{\textbf{Casewise \& Cellwise}} \\
   \rotatebox{90}{\textbf{\footnotesize{$p=30$}}}&\includegraphics[width=.31\textwidth]
  {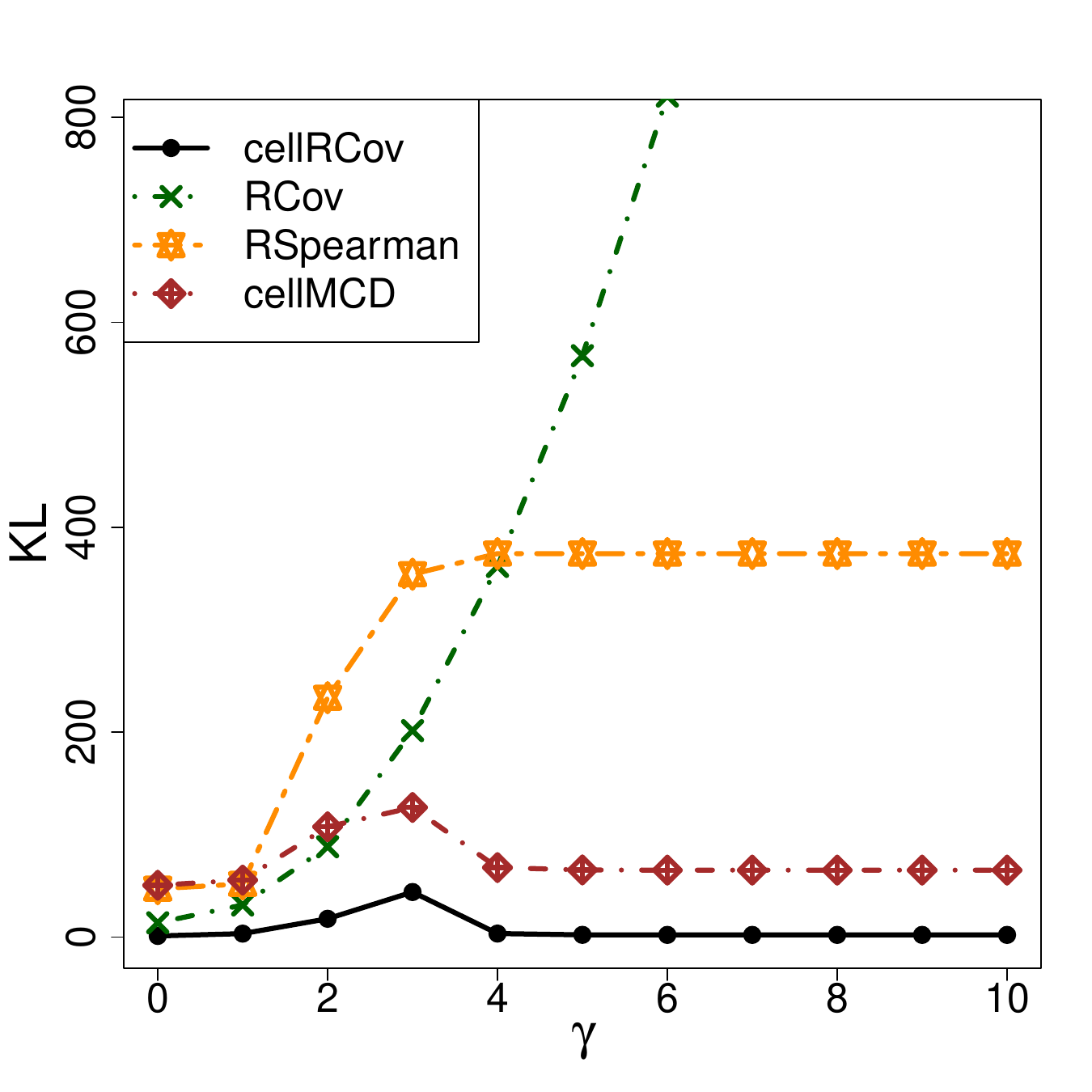}&\includegraphics[width=.31\textwidth]
  {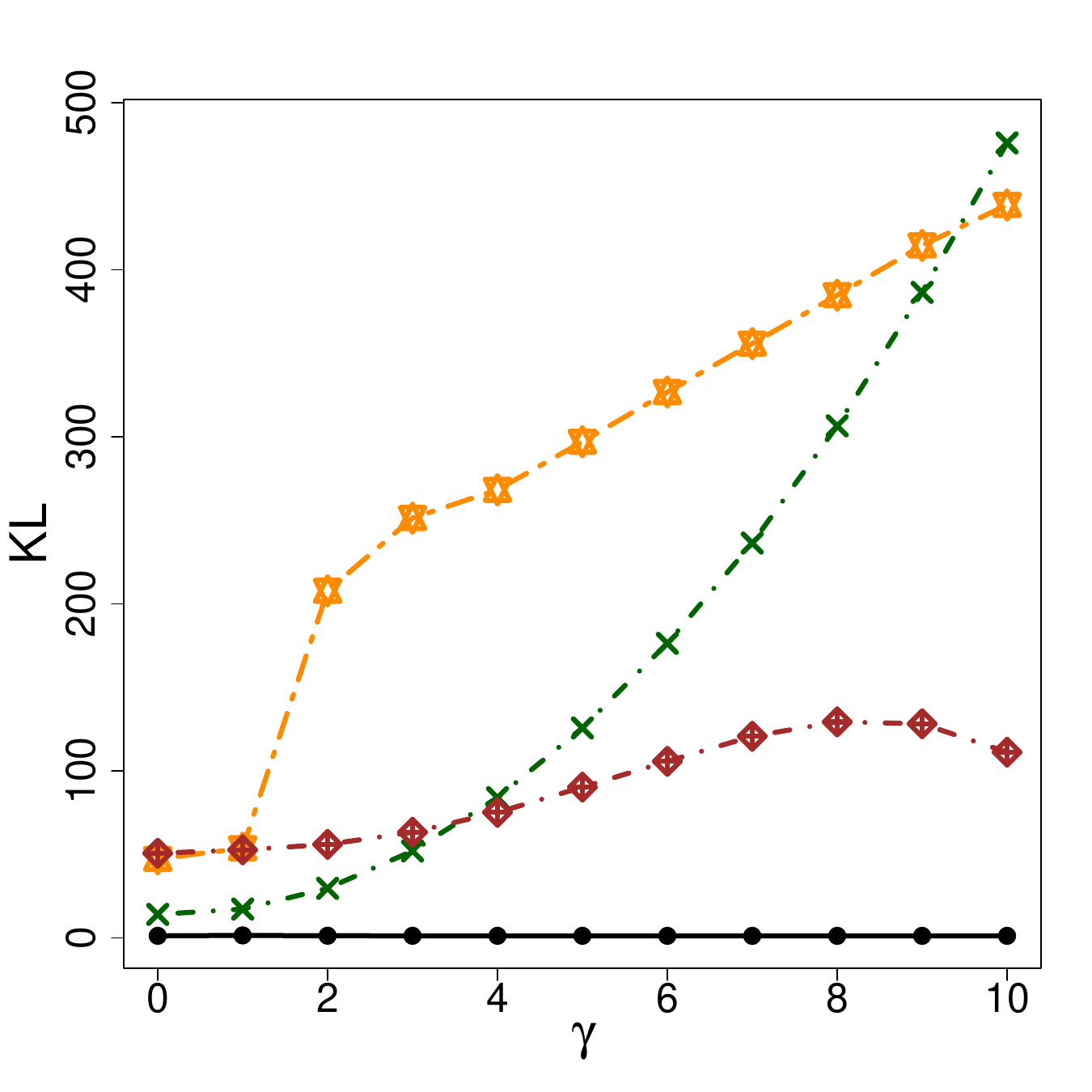}&\includegraphics[width=.31\textwidth]
  {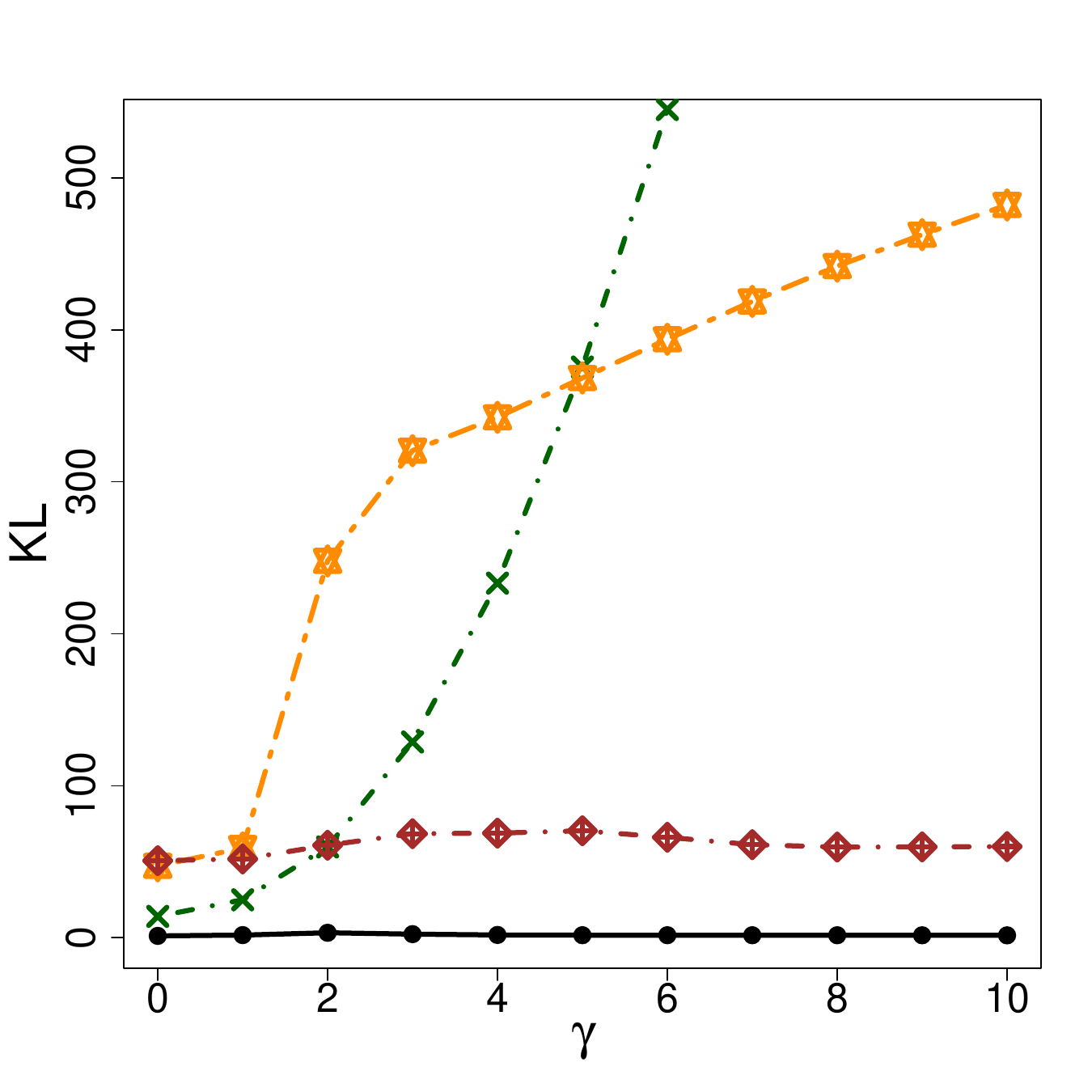}   \\ [-4mm]  \rotatebox{90}{\textbf{\footnotesize{$p=60$}}}&\includegraphics[width=.31\textwidth]
  {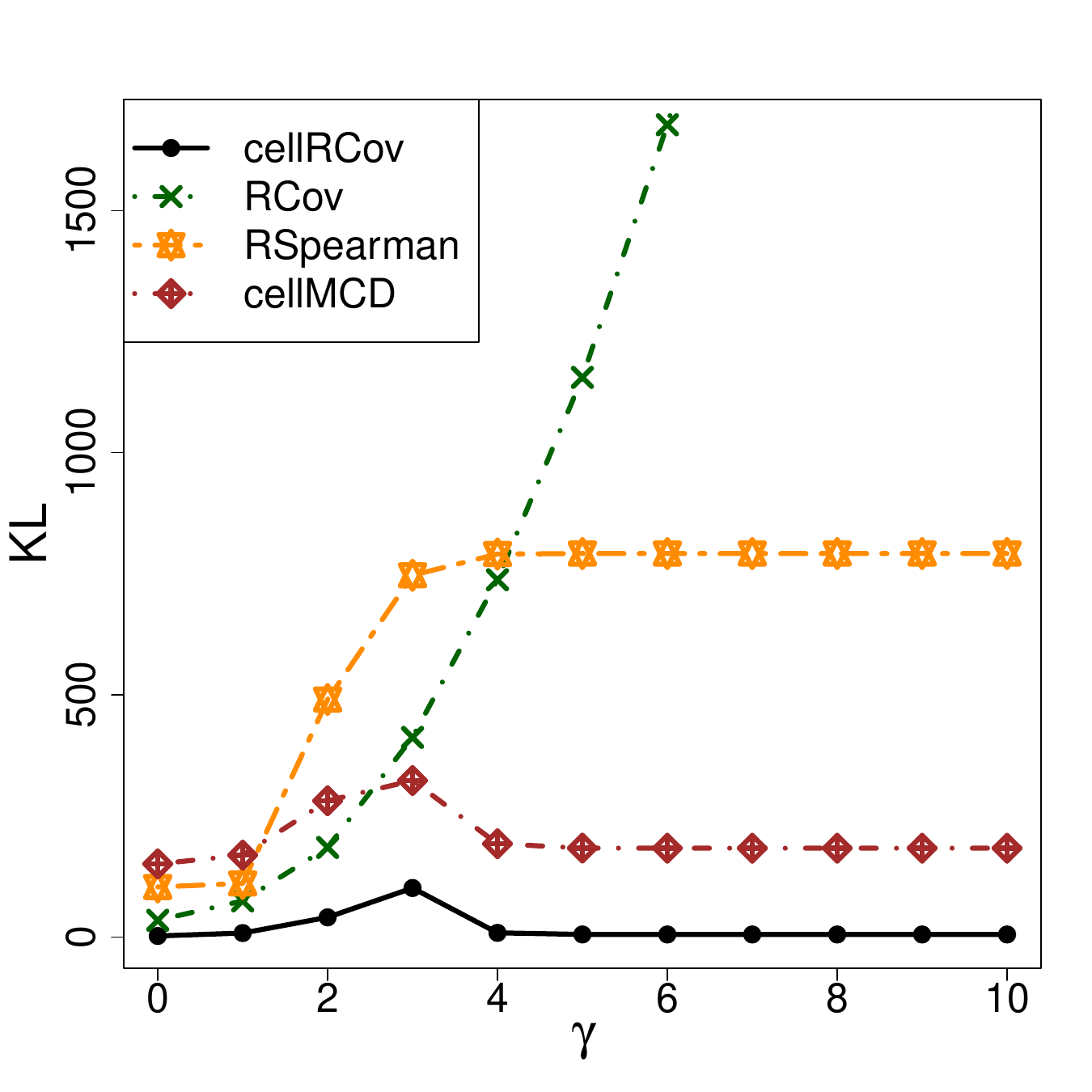}&\includegraphics[width=.31\textwidth]
  {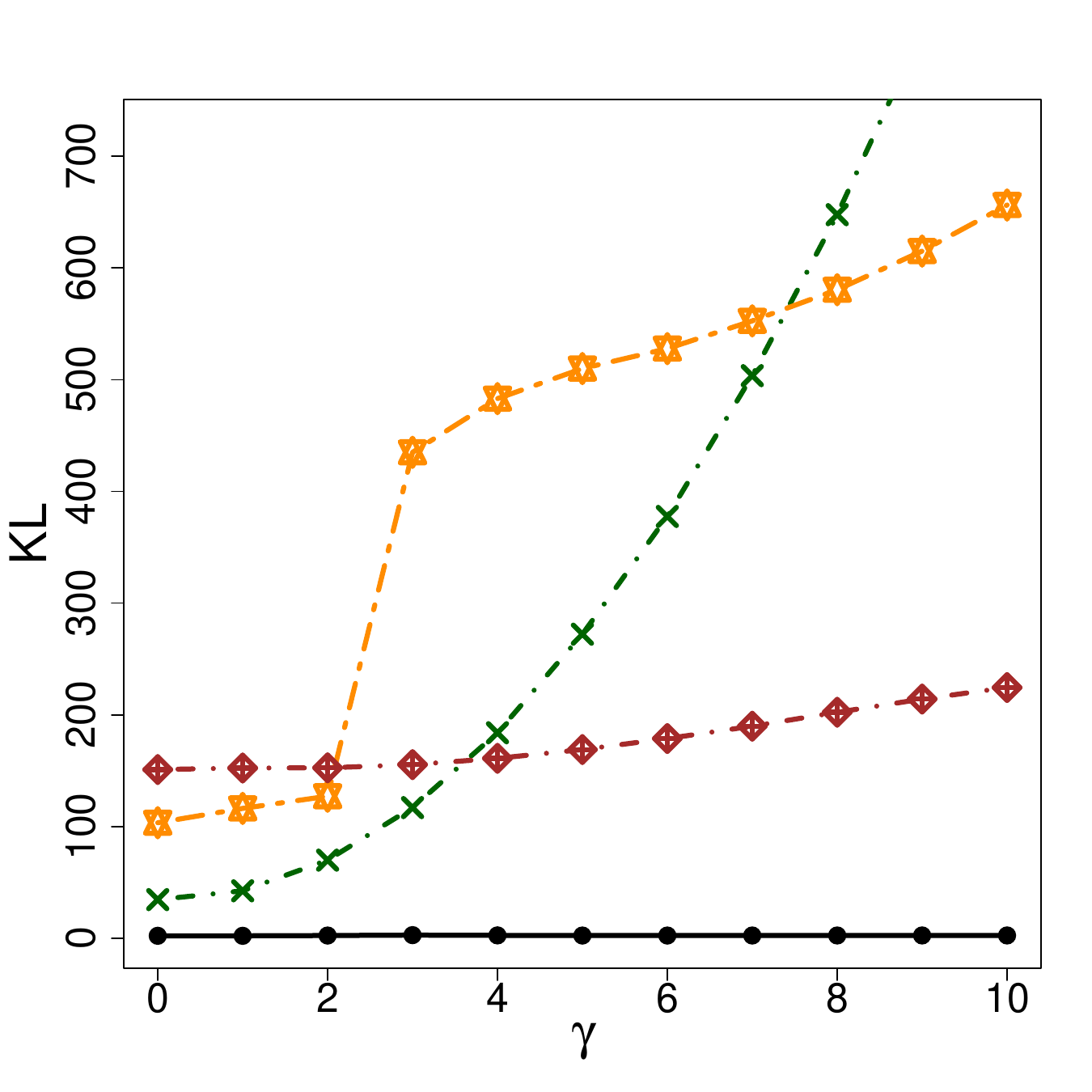}&\includegraphics[width=.31\textwidth]
  {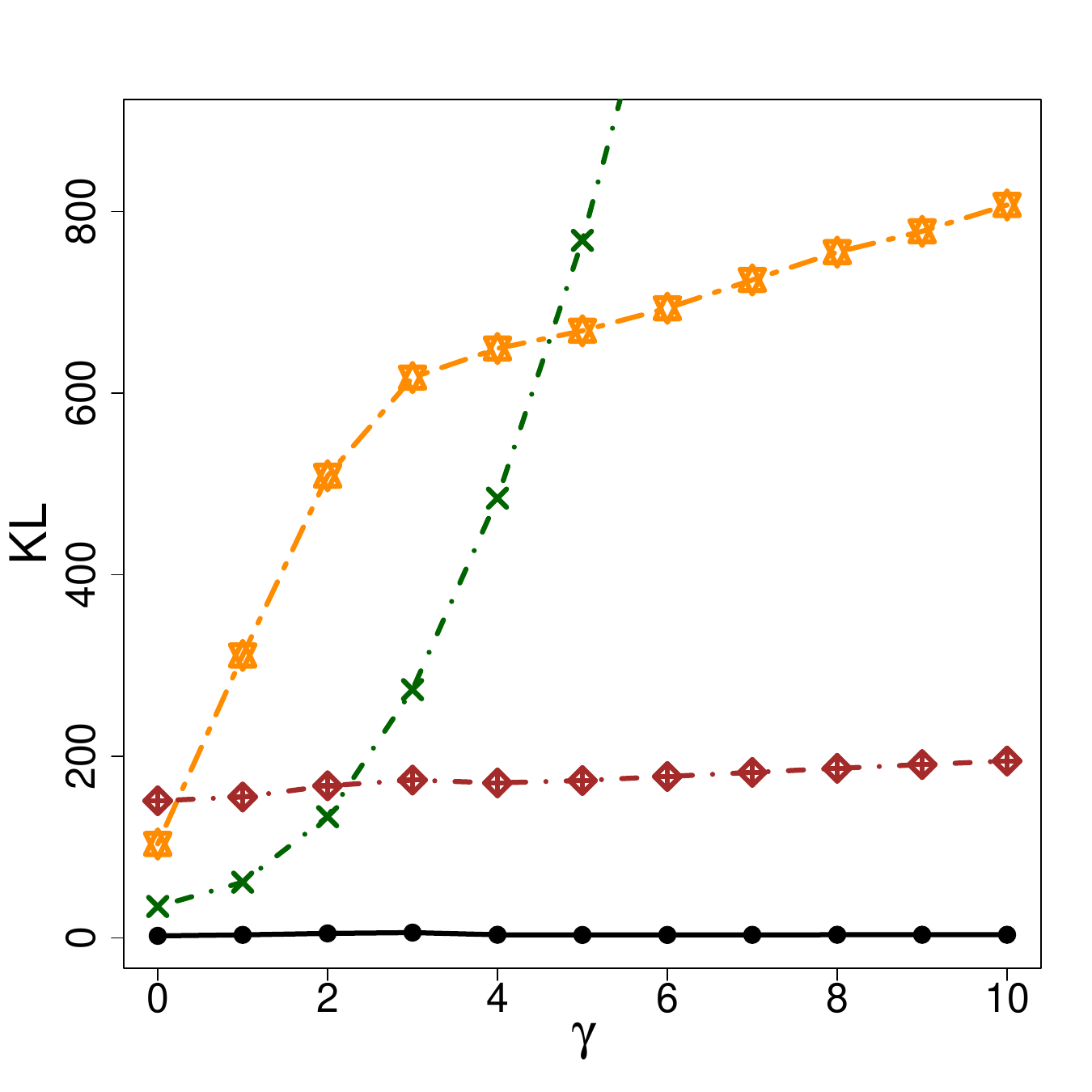}   \\ [-4mm]  \rotatebox{90}{\textbf{\footnotesize{$p=120$}}}&\includegraphics[width=.31\textwidth]
  {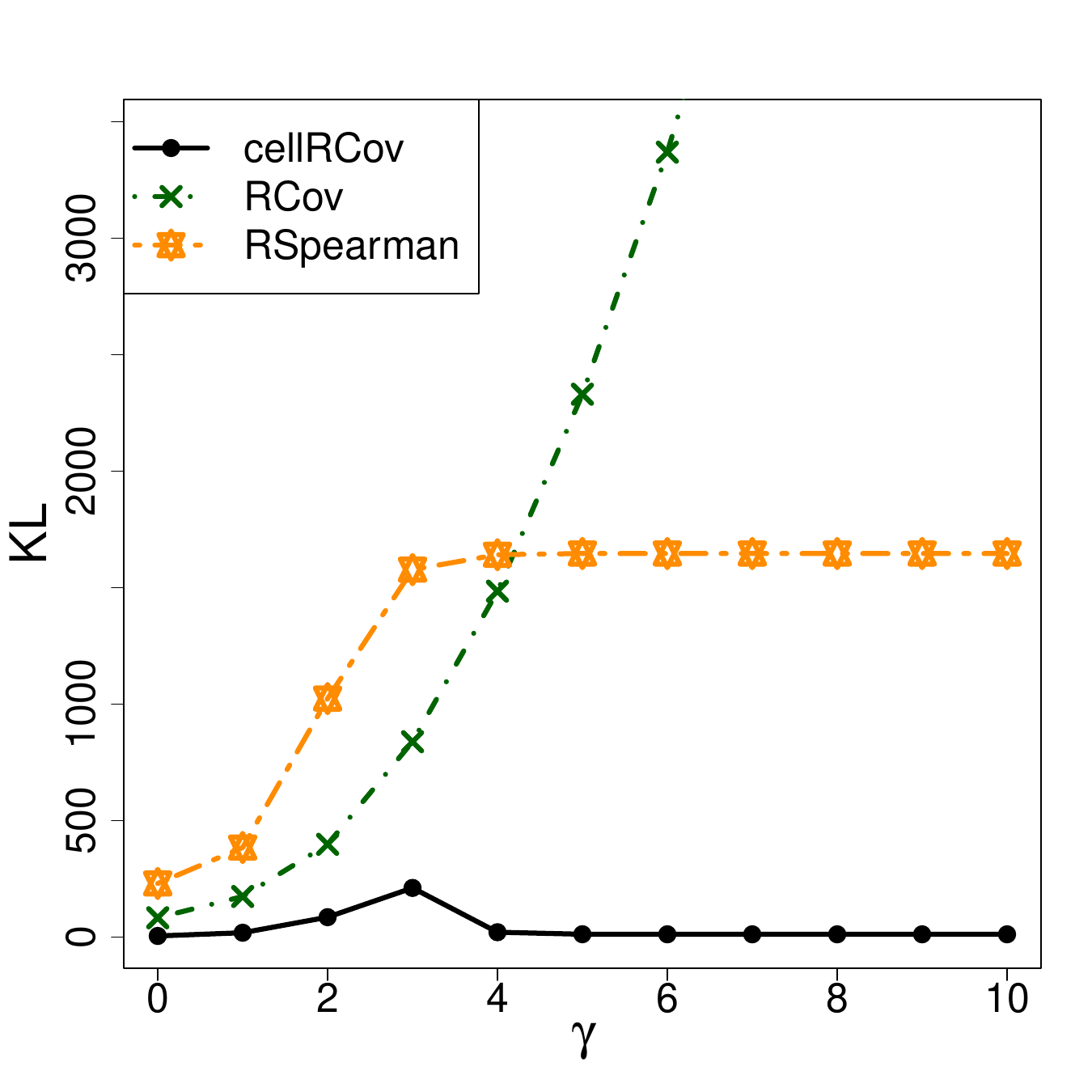}&\includegraphics[width=.31\textwidth]
  {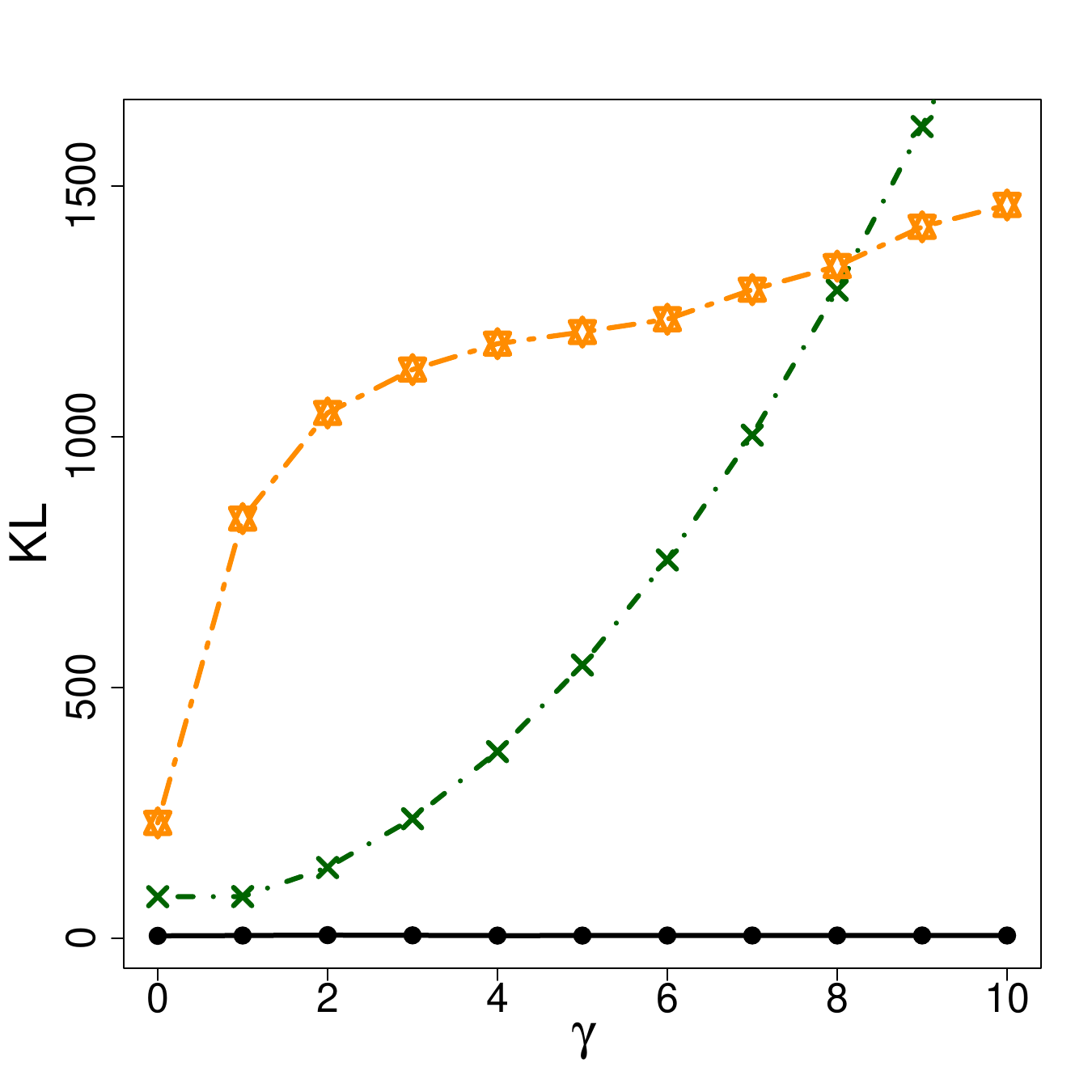}&\includegraphics[width=.31\textwidth]
  {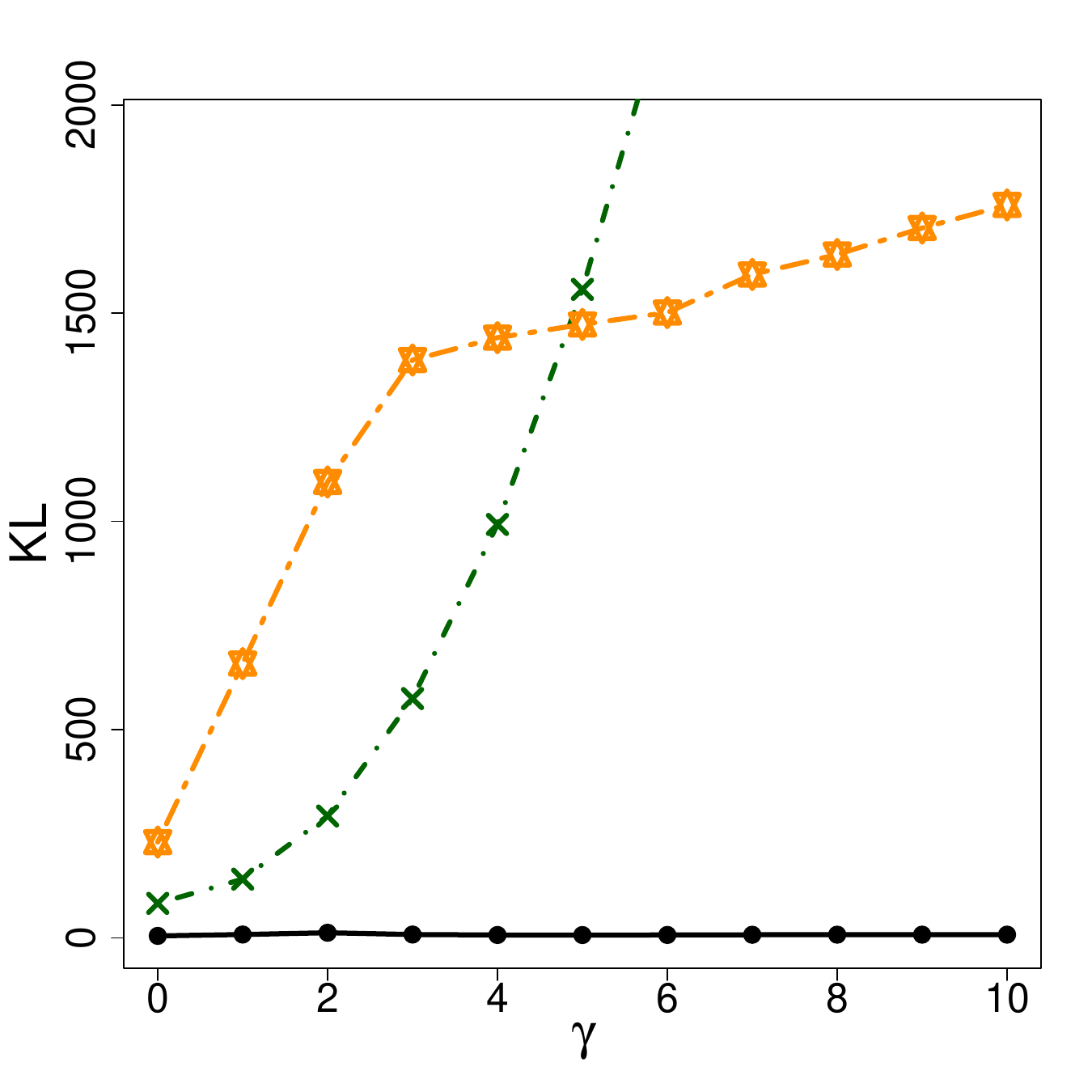}   
\end{tabular}
\caption{Average KL attained by cellRCov, 
RCov, RSpearman, and cellMCD in the presence of either 
cellwise outliers, casewise outliers, or both,
for the dense covariance model in dimensions  
$p$ in $\lbrace30,60,120\rbrace$, with 20\p 
of missing cells.}
\label{fig:results_dense_NA}
\end{figure}

\clearpage
\section{\large Sensitivity to tuning parameters}
\label{app:sensitivity}

To assess the sensitivity of cellRCov to the 
tuning parameters we perform an additional 
simulation study, where we investigate the 
cellRCov performance for different values 
of $\rk$ and $\delta$. 
Figures~\ref{fig:sens_A09}--\ref{fig:sens_dense} 
report the sensitivity of cellRCov to the 
tuning parameters $\rk$ and $\delta$ for the 
A09, A06, planar, and dense covariance models 
for $p=30,60,120$. In each figure, the rows
correspond to uncontaminated data ($\gamma=0$) 
and contaminated data with both casewise and 
cellwise outliers with $\gamma=6$. Each panel
reports the average $\log(\mathrm{KL})$ as a 
function of $\rk$, with different curves 
corresponding to different values of $\delta$.

The figures show that the optimal choice of 
$\rk$ depends on the underlying covariance 
structure. For the A09 and A06 covariance 
models, the curves are typically U-shaped: 
very small values of $\rk$ do not capture 
enough of the dominant low-dimensional 
structure, whereas overly large values of 
$\rk$ start to include unnecessary variation 
in the subspace component, which deteriorates 
the covariance estimate. 

For the planar covariance model, the best 
performance is obtained for very small values 
of $\rk$, which is consistent with the fact 
that this model has a strong low-dimensional 
structure. For the dense covariance model, 
increasing $\rk$ generally does not improve 
the performance and often leads to larger KL 
values, suggesting that the dependence is 
better handled by the residual covariance 
component than by forcing a larger 
low-rank fit.

The sensitivity to $\delta$ depends on the 
covariance model and on the dimension. Small 
values of $\delta$ leave the residual 
covariance close to the unregularized 
weighted estimate, whereas larger values 
shrink it more strongly toward the diagonal 
target. The plots show that the optimal 
amount of shrinkage varies. In 
several settings, especially for larger $p$, 
some regularization improves stability, 
but excessive shrinkage can increase the 
KL loss.

When the data are contaminated, the best 
performance tends to be achieved for slightly 
larger values of $\rk$, suggesting that a 
somewhat richer low-rank fit can be useful 
for computing robust fitted points and 
residuals. In contrast, the optimal value 
of $\delta$ appears broadly similar between 
the contaminated and uncontaminated settings.

\begin{figure}[H]
\centering
\begin{tabular}{M{0.0005\textwidth}M{0.31\textwidth}M{0.31\textwidth}M{0.31\textwidth}}
   & \large \textbf{$p=30$}  & \large \textbf{$p=60$} & \large{\textbf{$p=120$}} \\ [-4mm]
   \rotatebox{90}{\textbf{\footnotesize{$\gamma=0$}}}
   &\includegraphics[width=.31\textwidth]{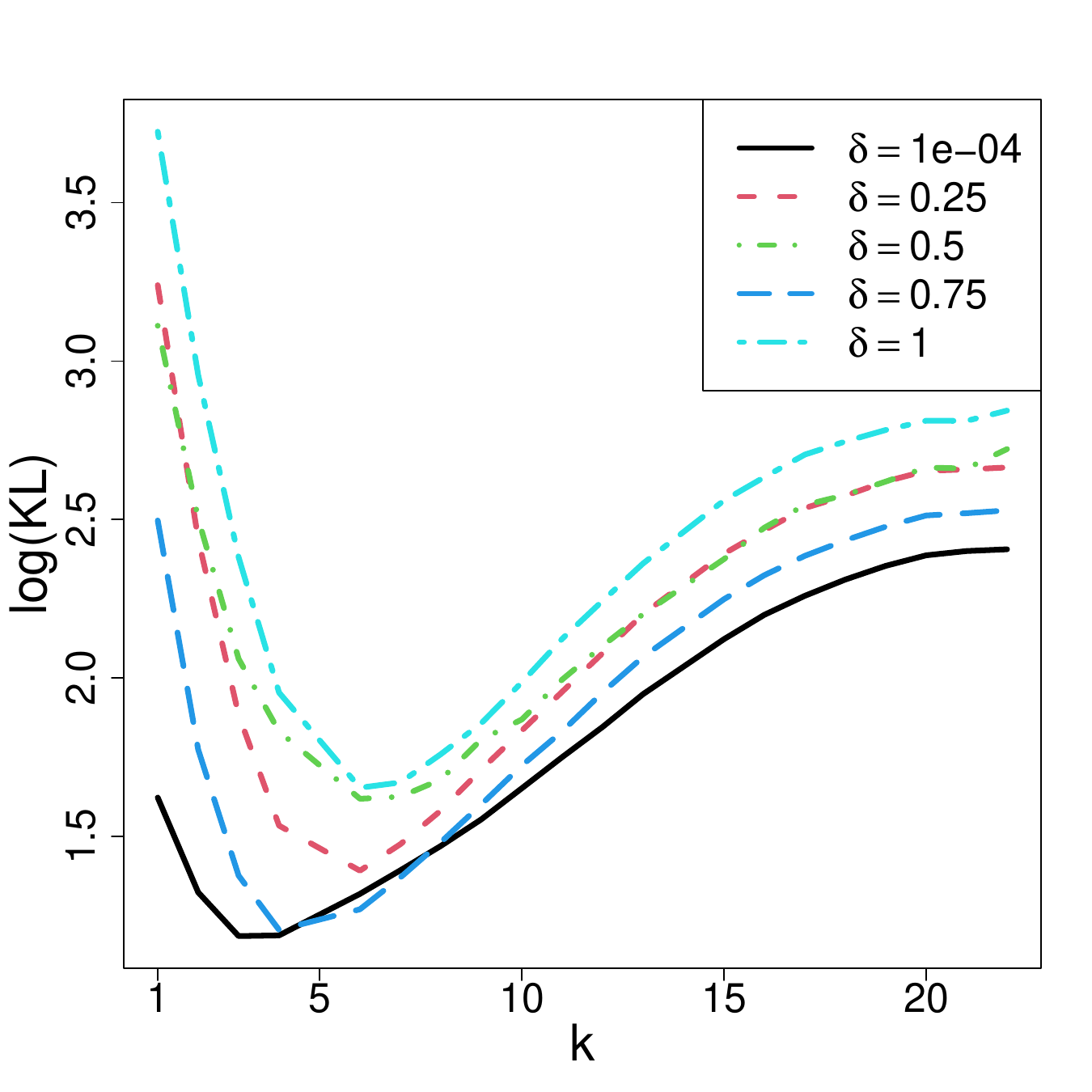}
   &\includegraphics[width=.31\textwidth]{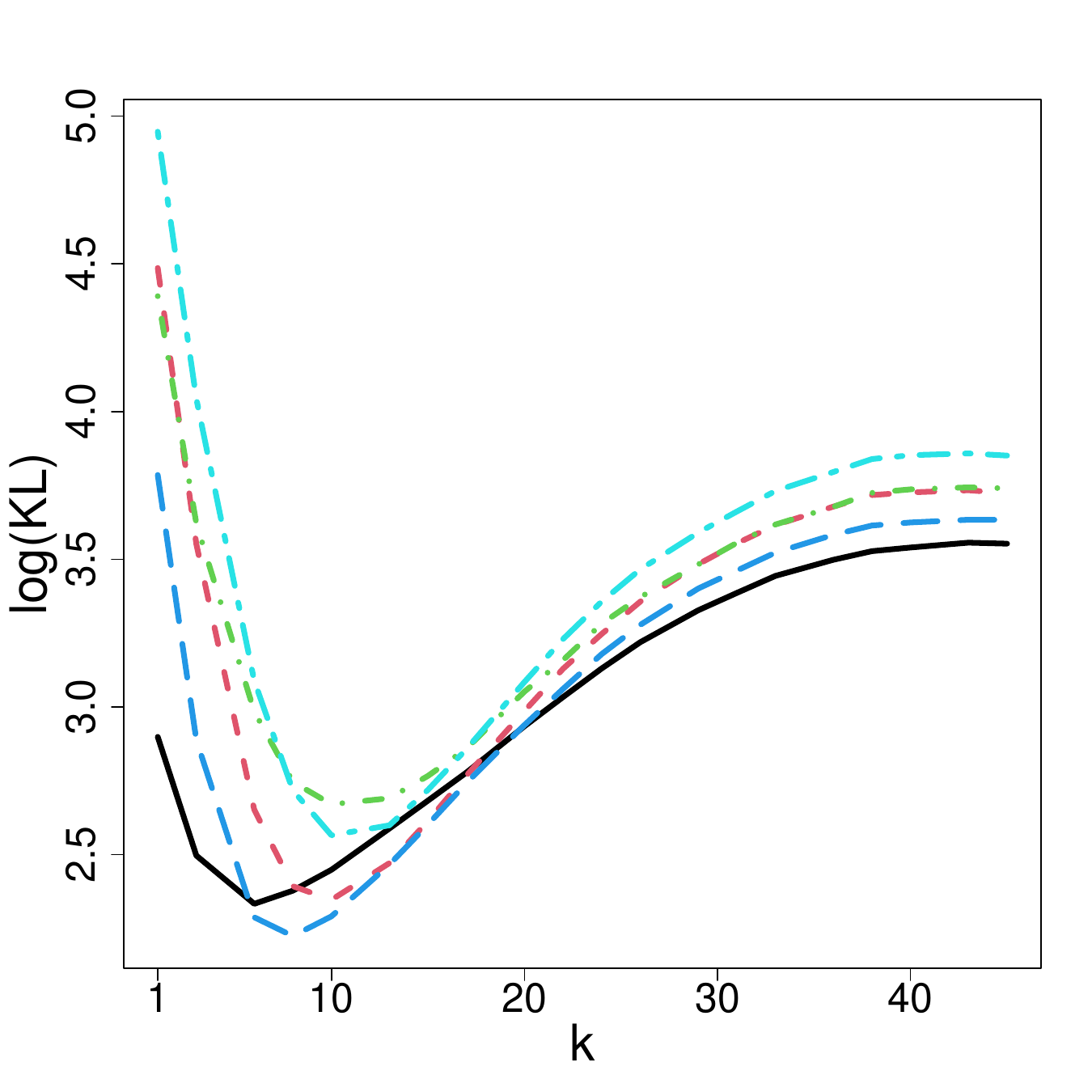}
   &\includegraphics[width=.31\textwidth]{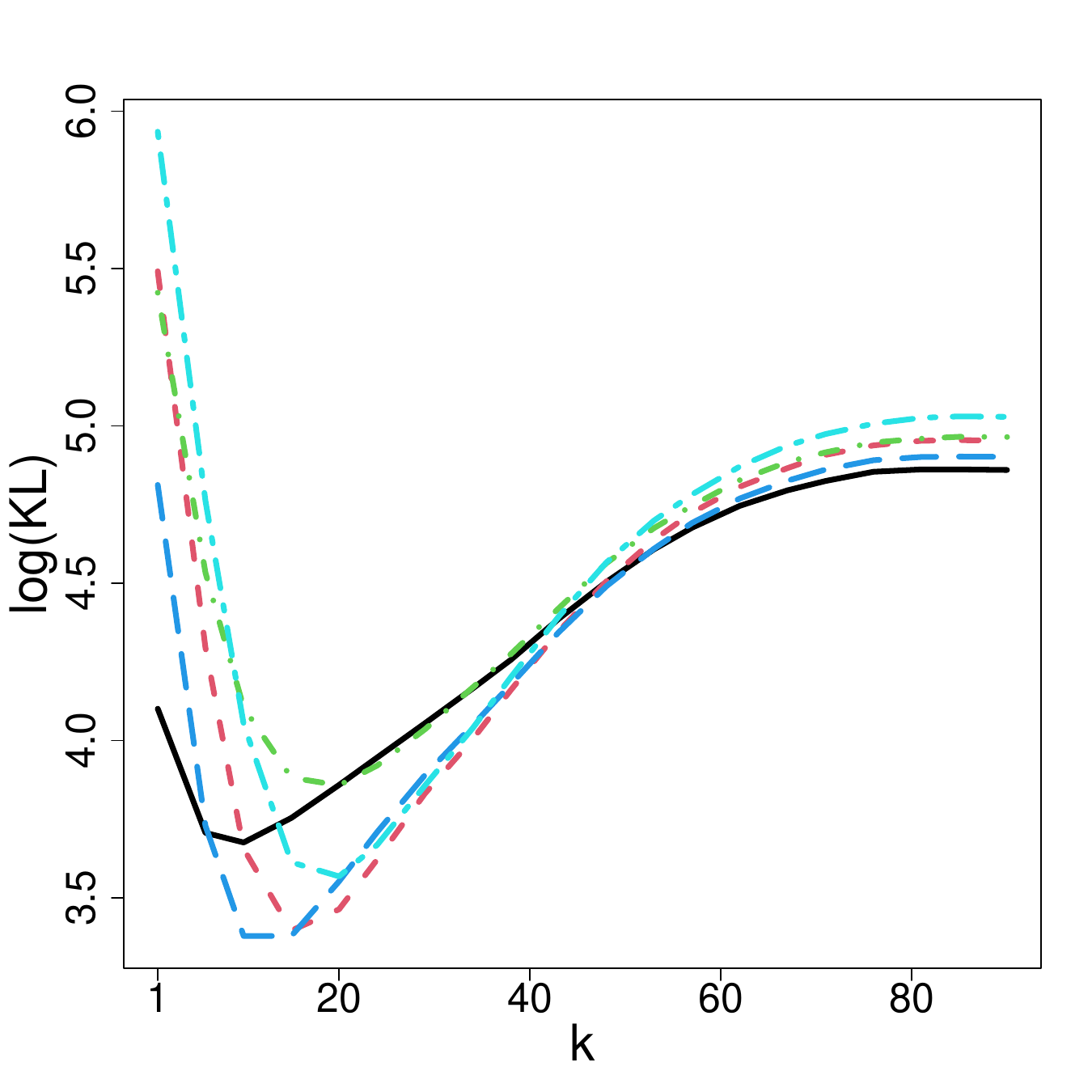} \\ [-4mm]
   \rotatebox{90}{\textbf{\footnotesize{$\gamma=6$}}}
   &\includegraphics[width=.31\textwidth]{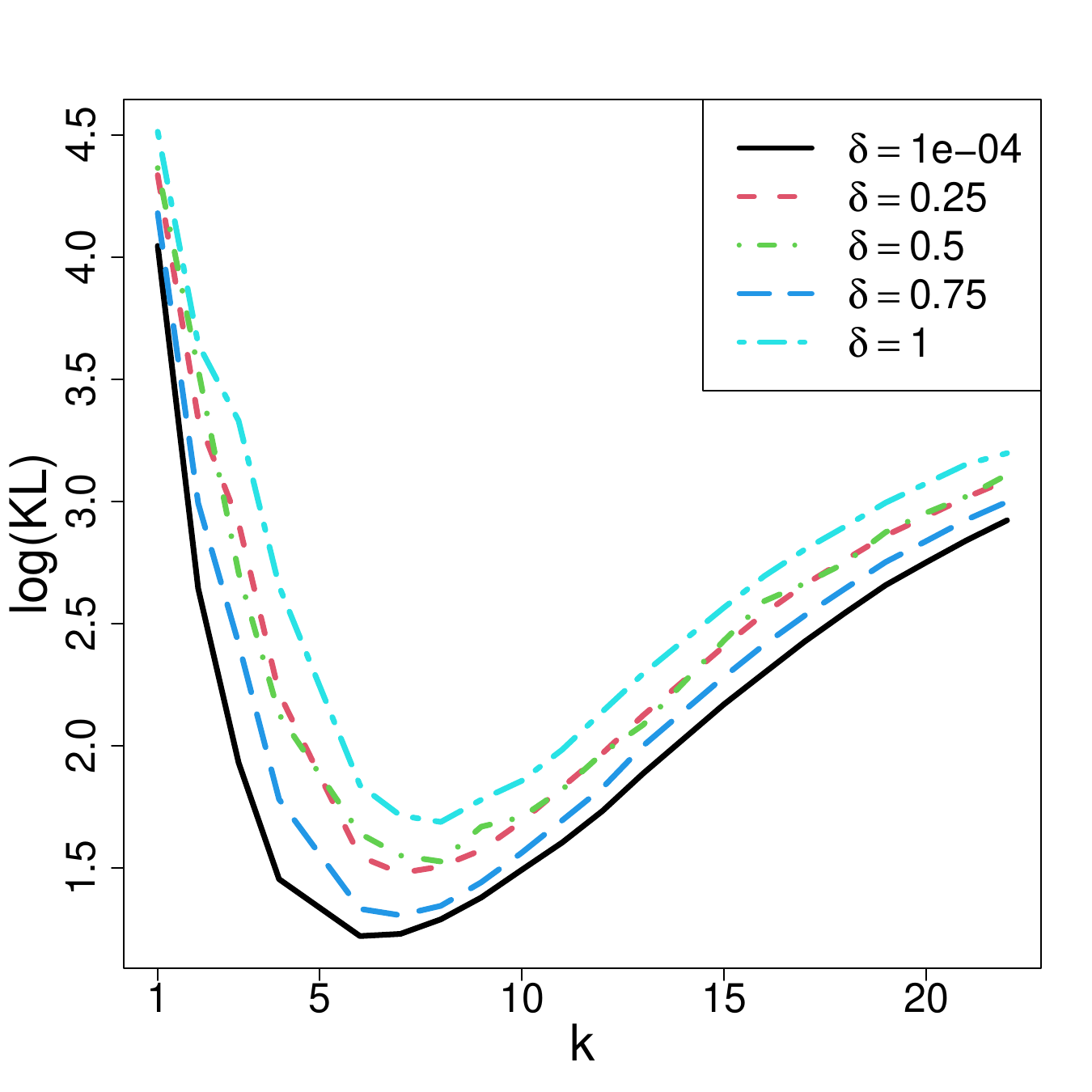}
   &\includegraphics[width=.31\textwidth]{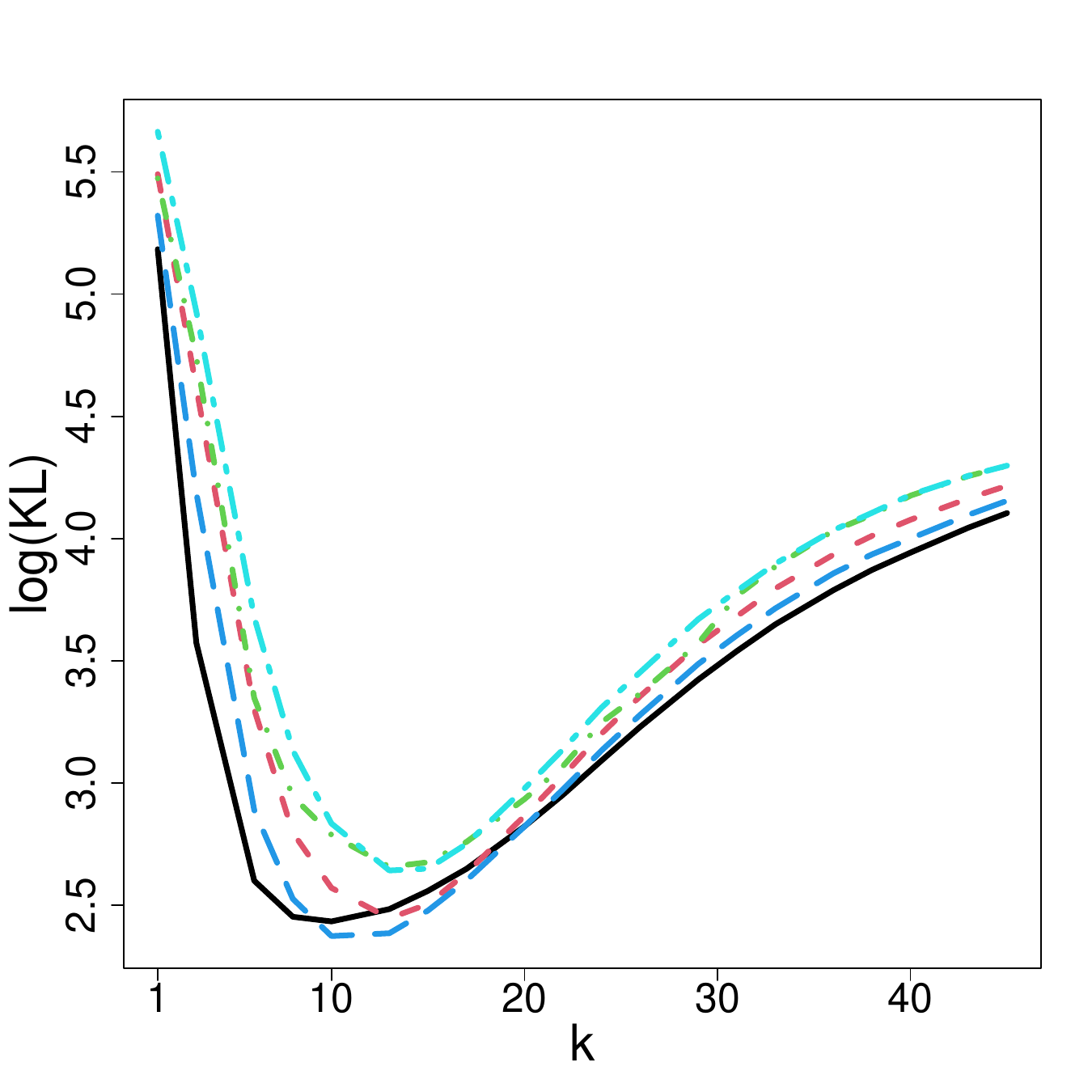}
   &\includegraphics[width=.31\textwidth]{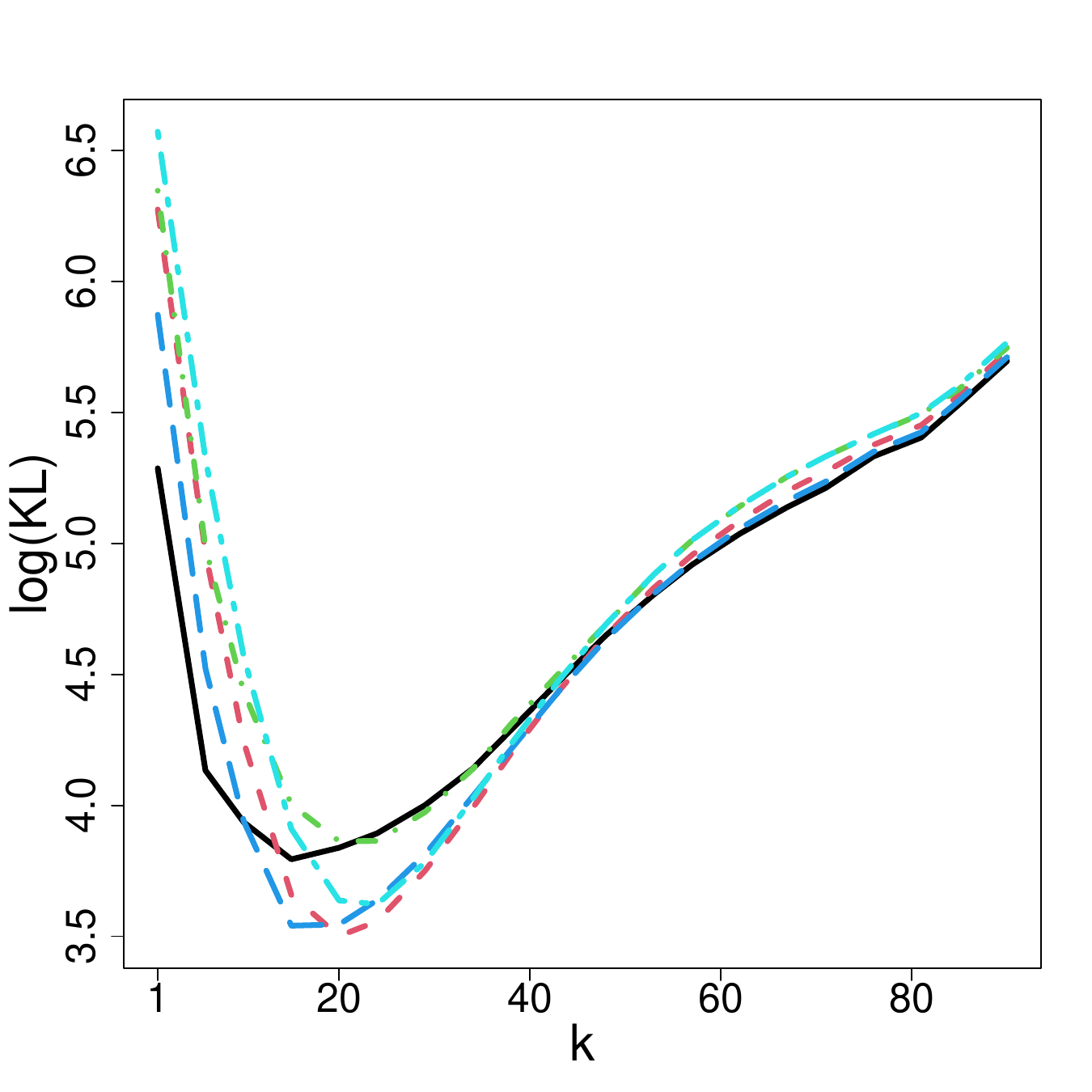}
\end{tabular}
\caption{Average log(KL) attained by cellRCov as a function of $\rk$ and $\delta$ for the A09 covariance
model for to uncontaminated ($\gamma=0$) and contaminated data
($\gamma=6$) in dimensions 
$p$ in $\lbrace30,60,120\rbrace$.}
\label{fig:sens_A09}
\end{figure}

\begin{figure}[H]
\centering
\begin{tabular}{M{0.0005\textwidth}M{0.31\textwidth}M{0.31\textwidth}M{0.31\textwidth}}
   & \large \textbf{$p=30$}  & \large \textbf{$p=60$} & \large{\textbf{$p=120$}} \\ [-4mm]
   \rotatebox{90}{\textbf{\footnotesize{$\gamma=0$}}}
   &\includegraphics[width=.31\textwidth]{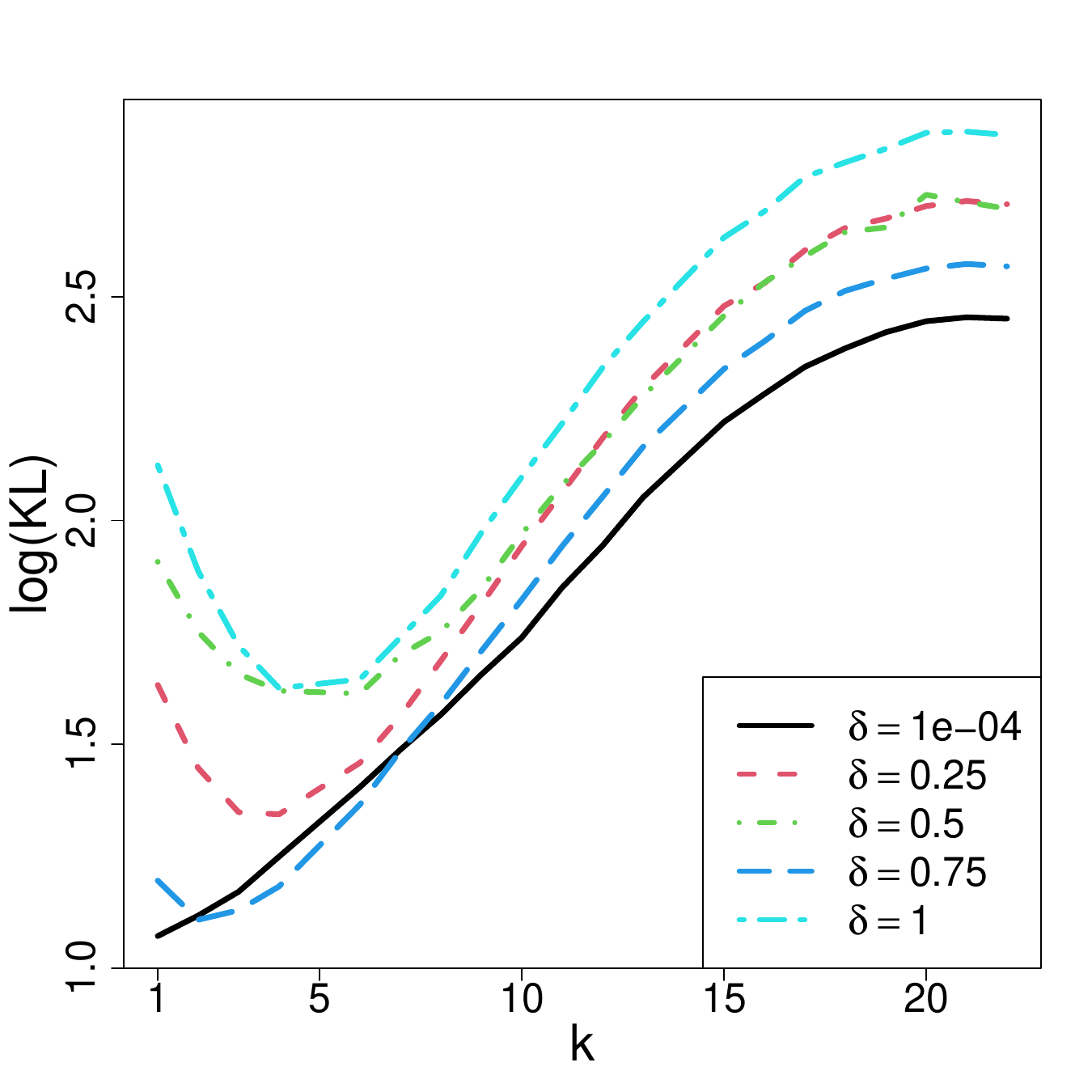}
   &\includegraphics[width=.31\textwidth]{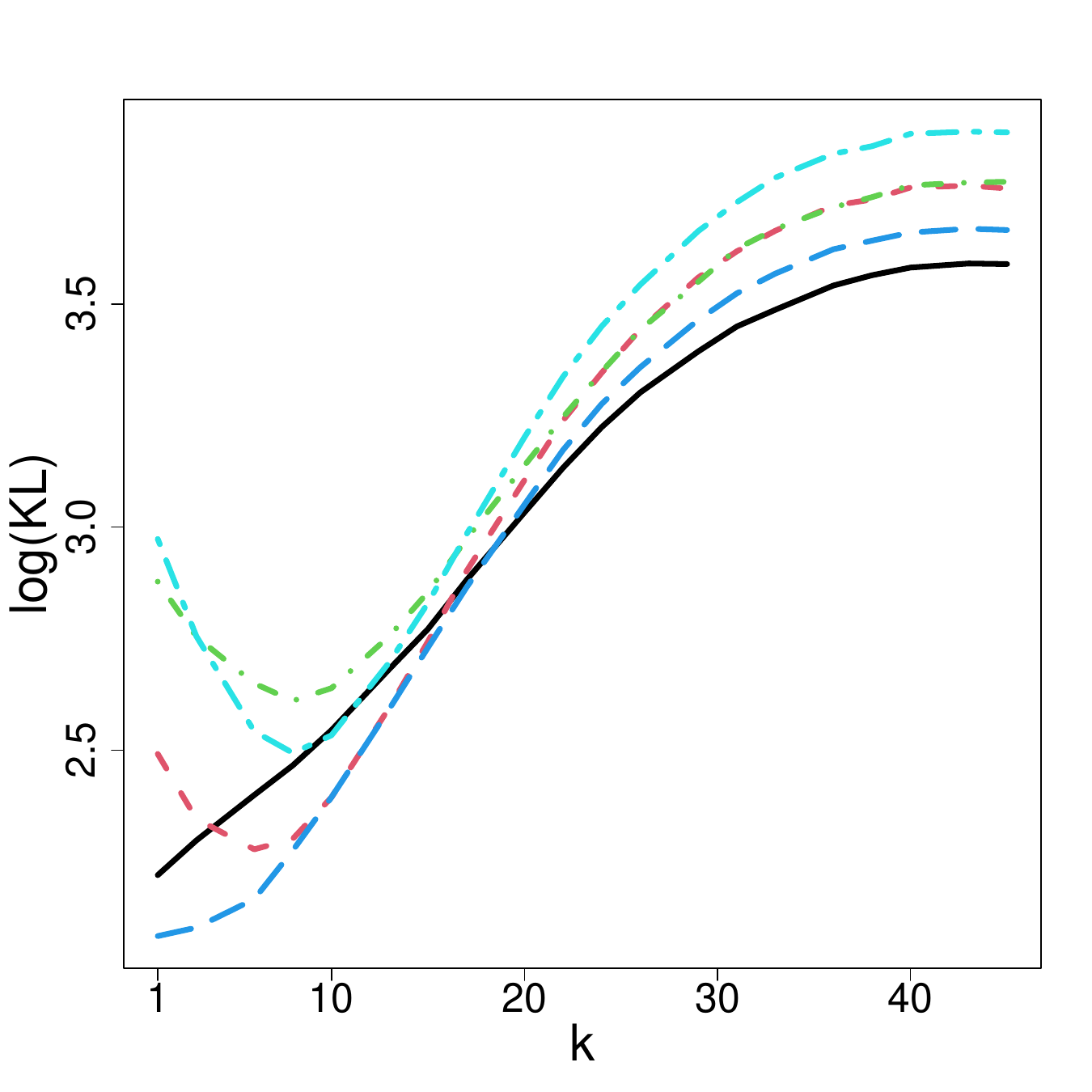}
   &\includegraphics[width=.31\textwidth]{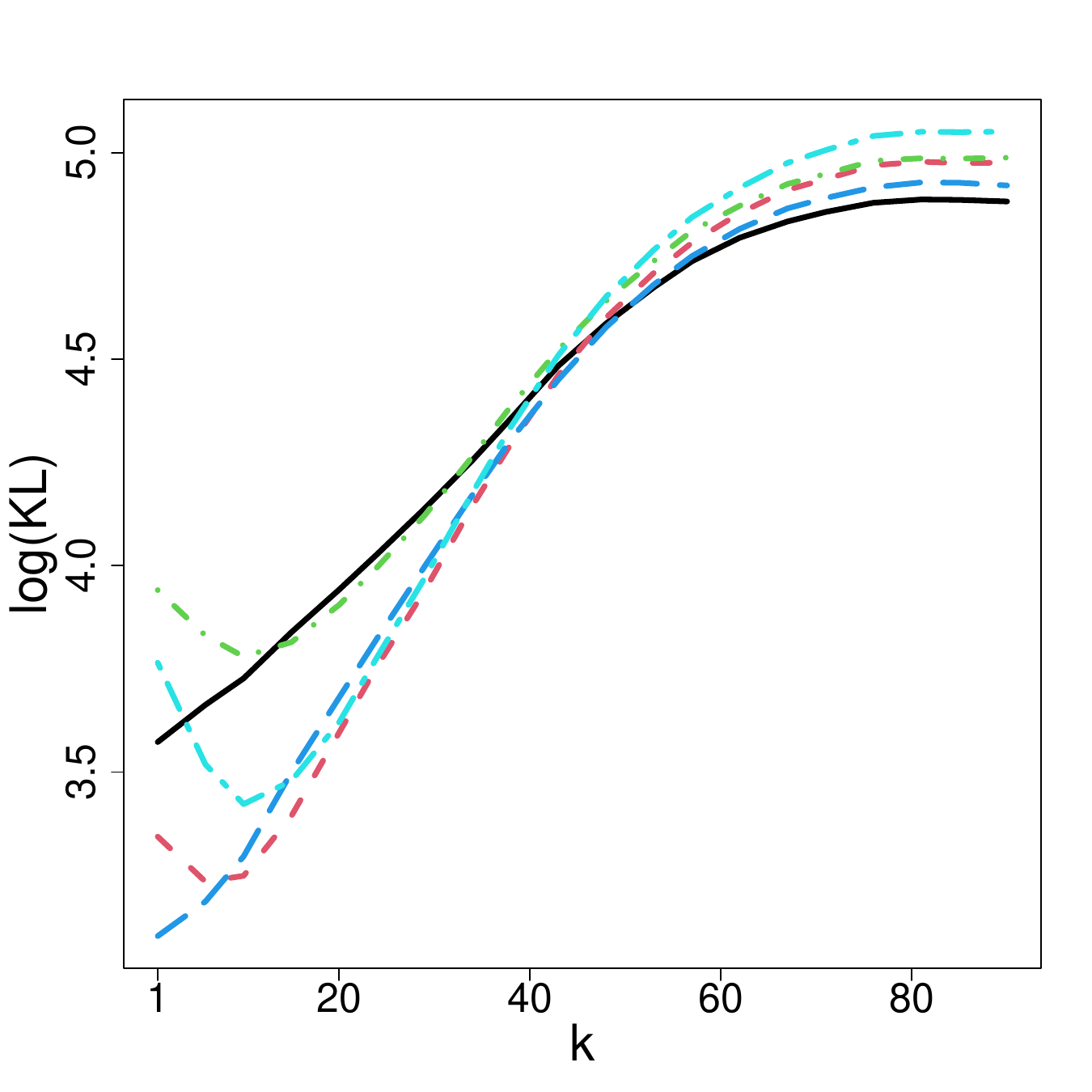} \\ [-4mm]
   \rotatebox{90}{\textbf{\footnotesize{$\gamma=6$}}}
   &\includegraphics[width=.31\textwidth]{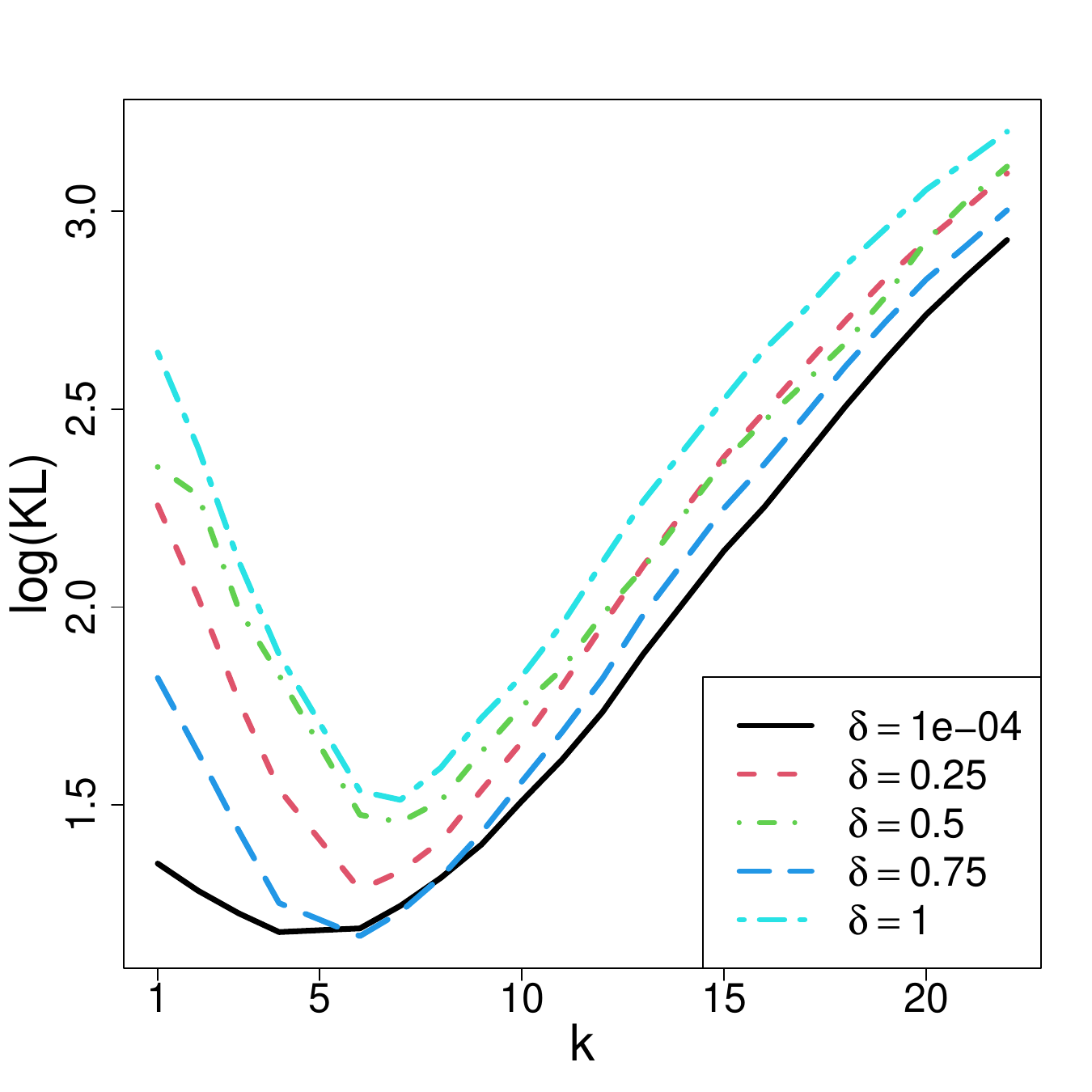}
   &\includegraphics[width=.31\textwidth]{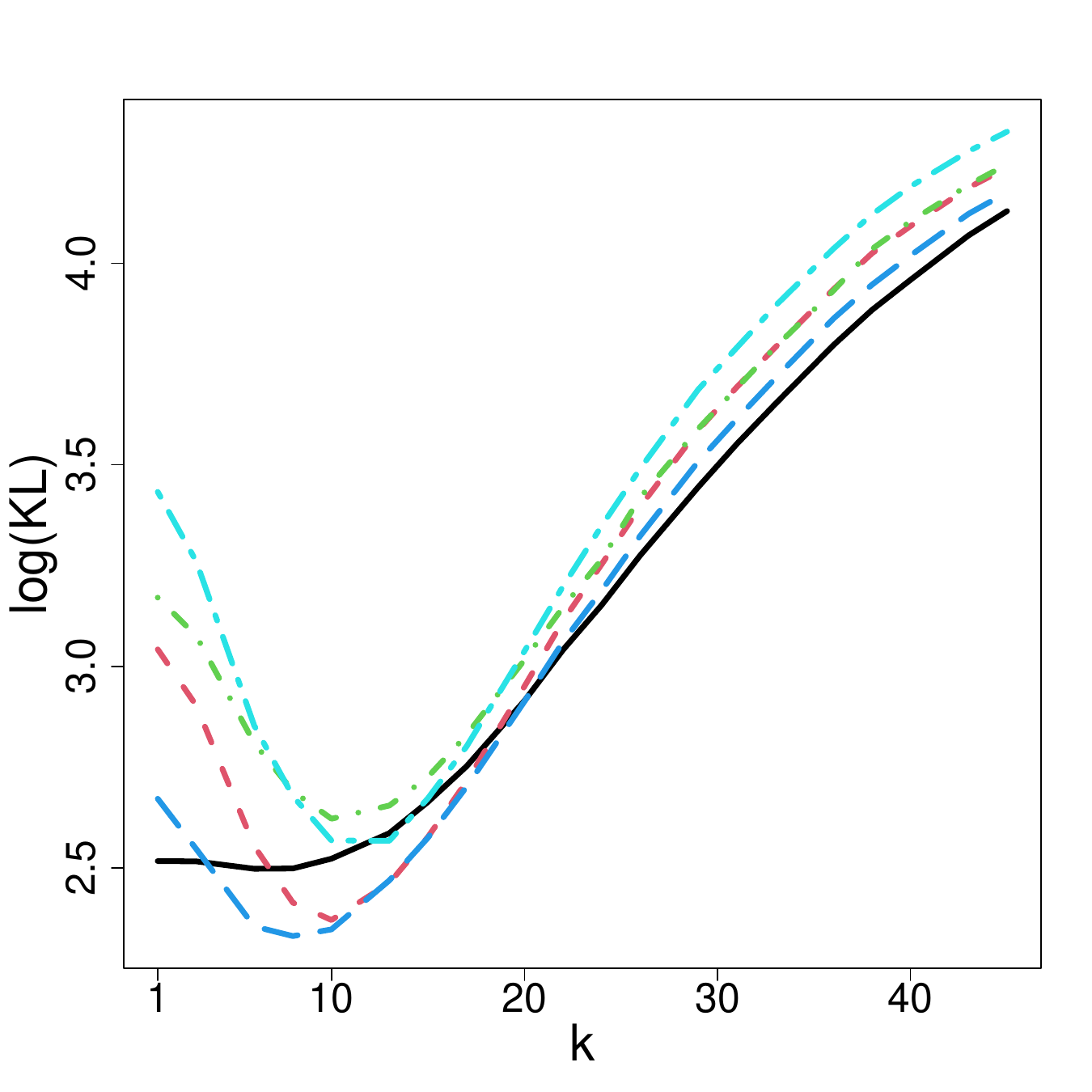}
   &\includegraphics[width=.31\textwidth]{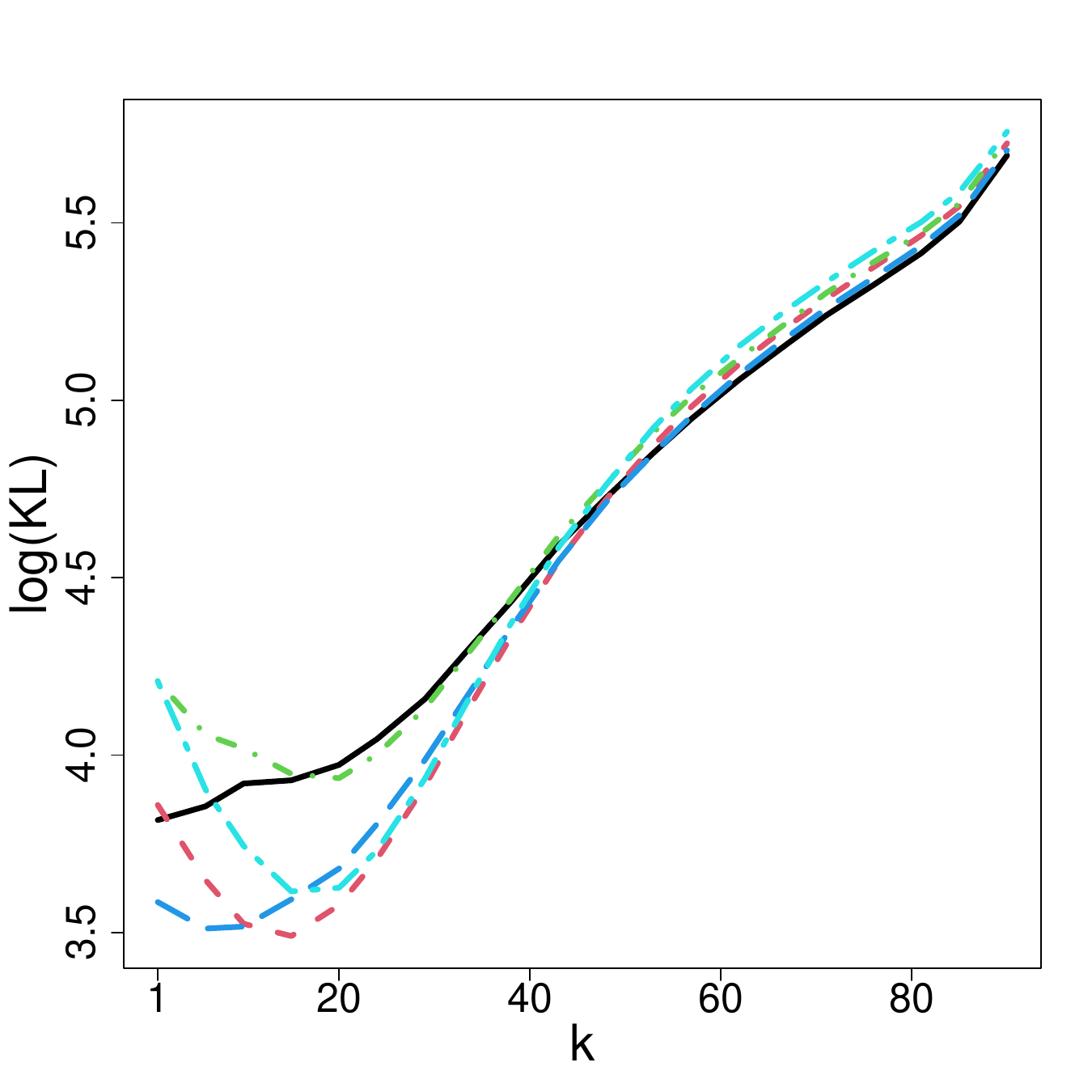}
\end{tabular}
\caption{Average log(KL) attained by cellRCov as a function of $\rk$ and $\delta$ for the A06 covariance
model for to uncontaminated ($\gamma=0$) and contaminated data
($\gamma=6$) in dimensions 
$p$ in $\lbrace30,60,120\rbrace$.}
\label{fig:sens_A06}
\end{figure}

\begin{figure}[H]
\centering
\begin{tabular}{M{0.0005\textwidth}M{0.31\textwidth}M{0.31\textwidth}M{0.31\textwidth}}
   & \large \textbf{$p=30$}  & \large \textbf{$p=60$} & \large{\textbf{$p=120$}} \\ [-4mm]
   \rotatebox{90}{\textbf{\footnotesize{$\gamma=0$}}}
   &\includegraphics[width=.31\textwidth]{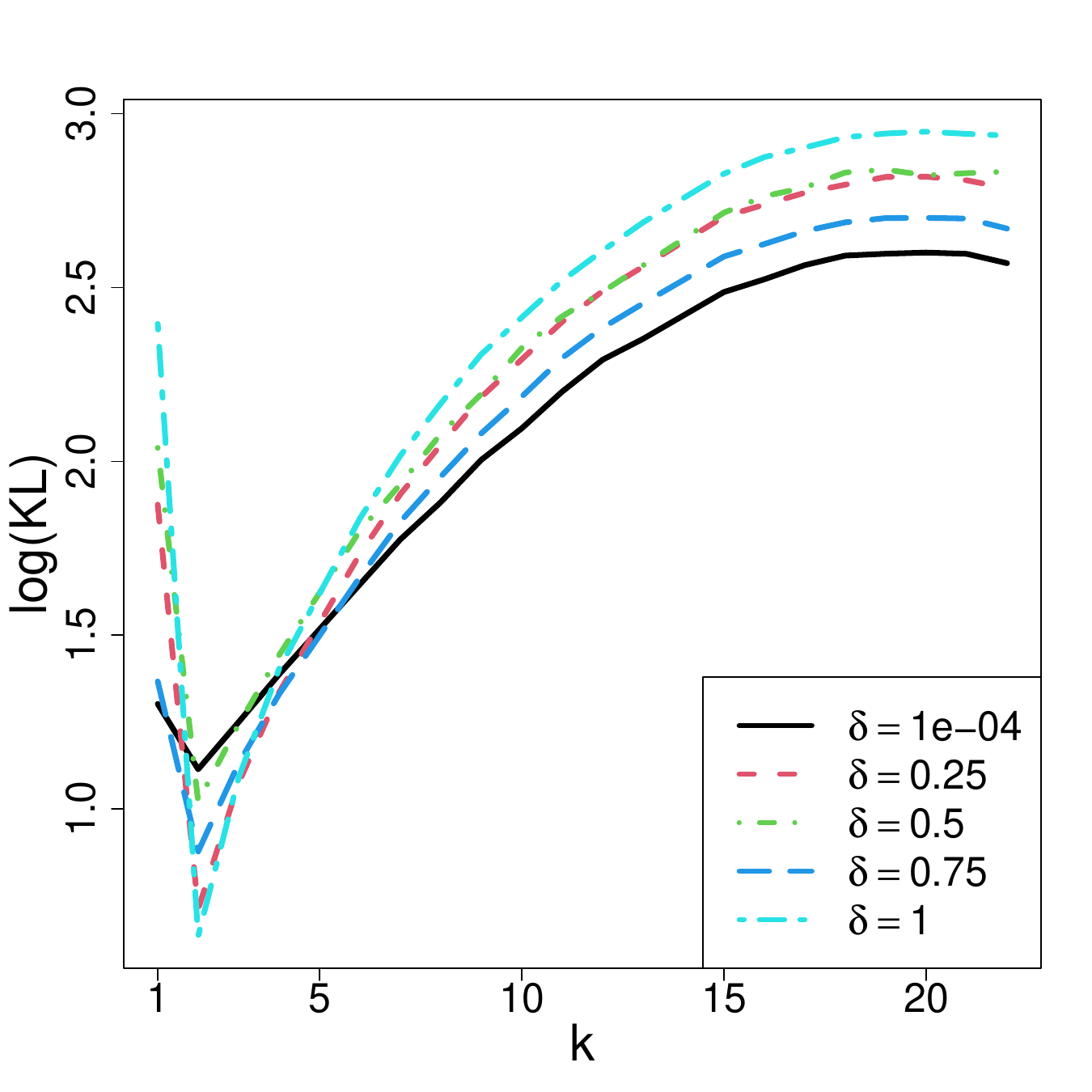}
   &\includegraphics[width=.31\textwidth]{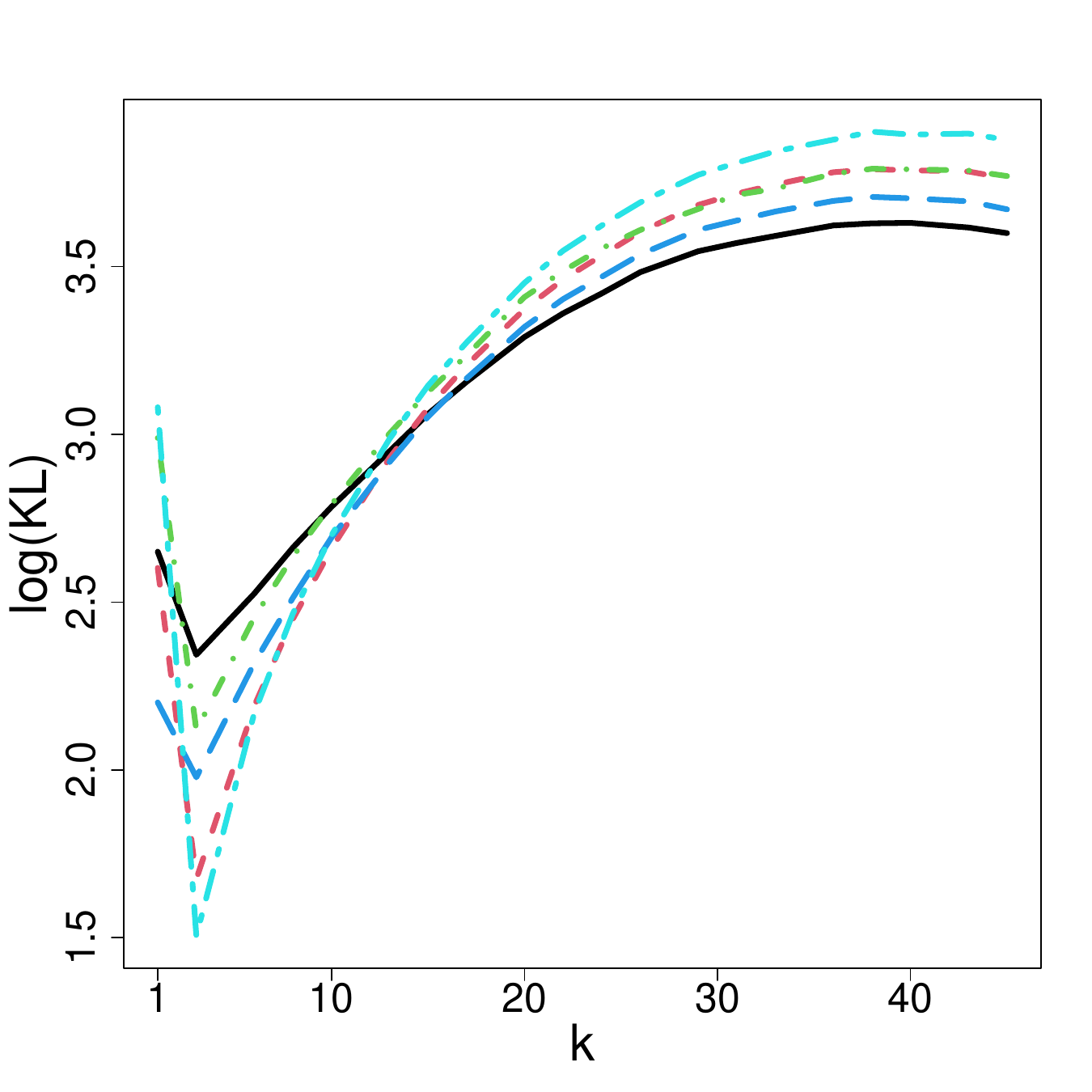}
   &\includegraphics[width=.31\textwidth]{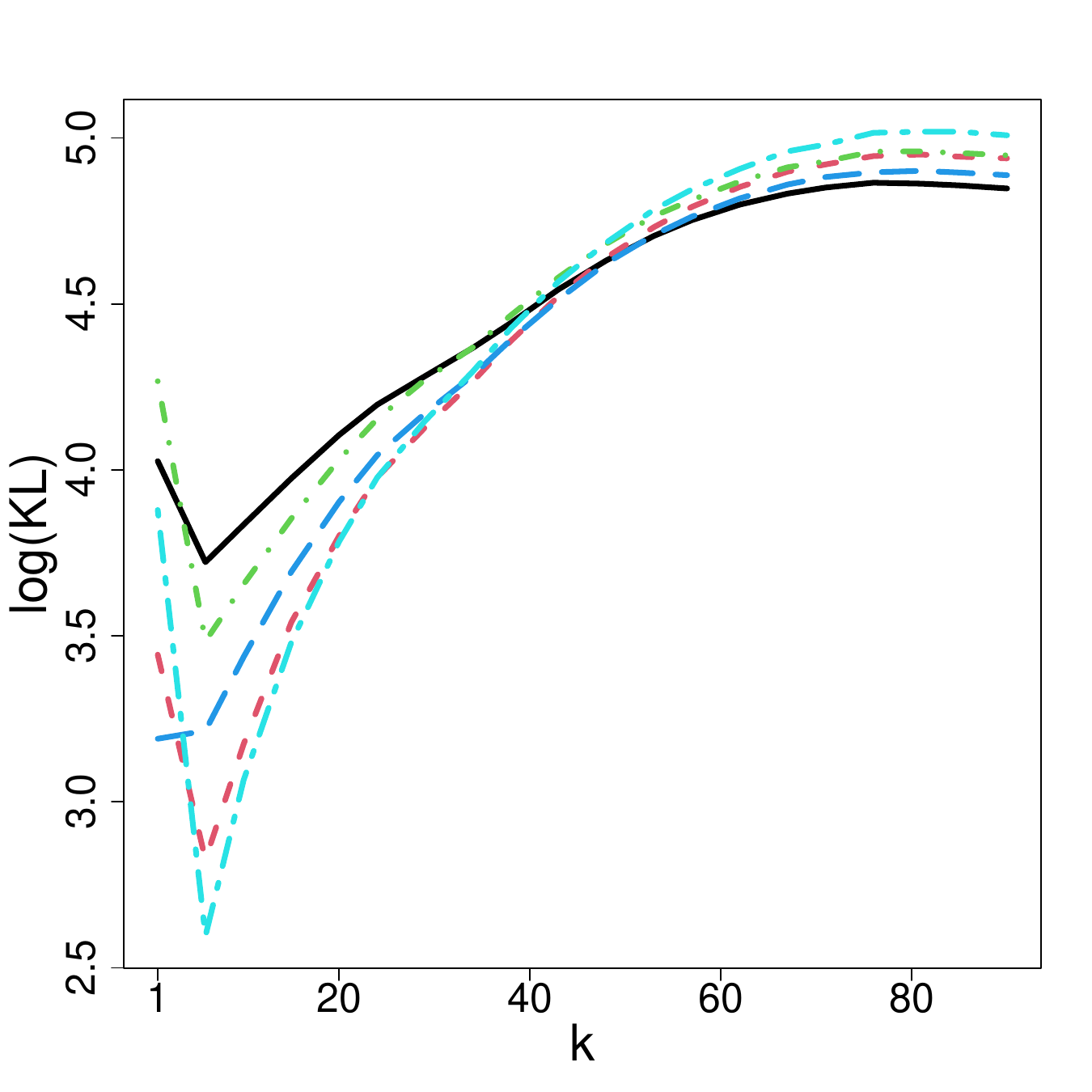} \\ [-4mm]
   \rotatebox{90}{\textbf{\footnotesize{$\gamma=6$}}}
   &\includegraphics[width=.31\textwidth]{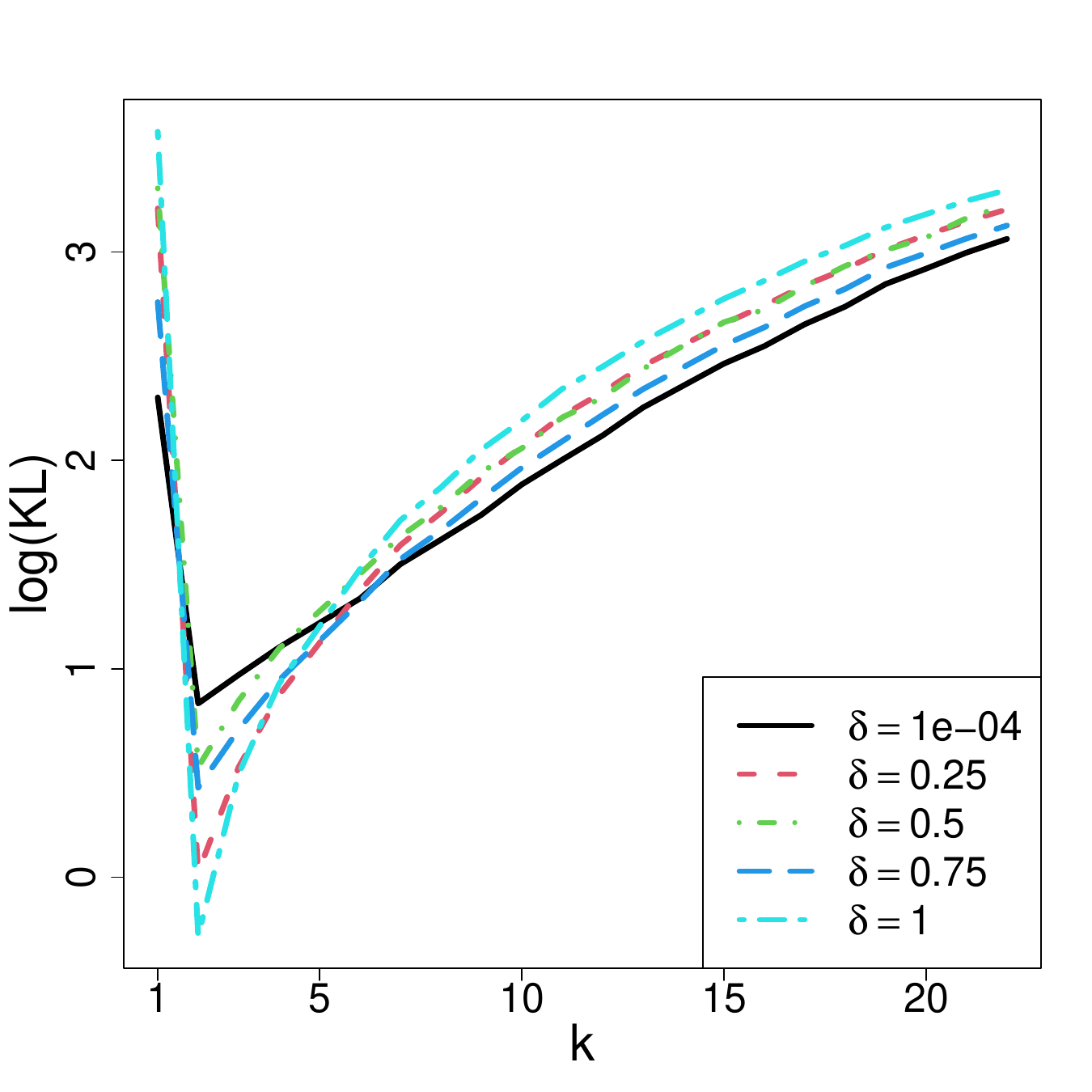}
   &\includegraphics[width=.31\textwidth]{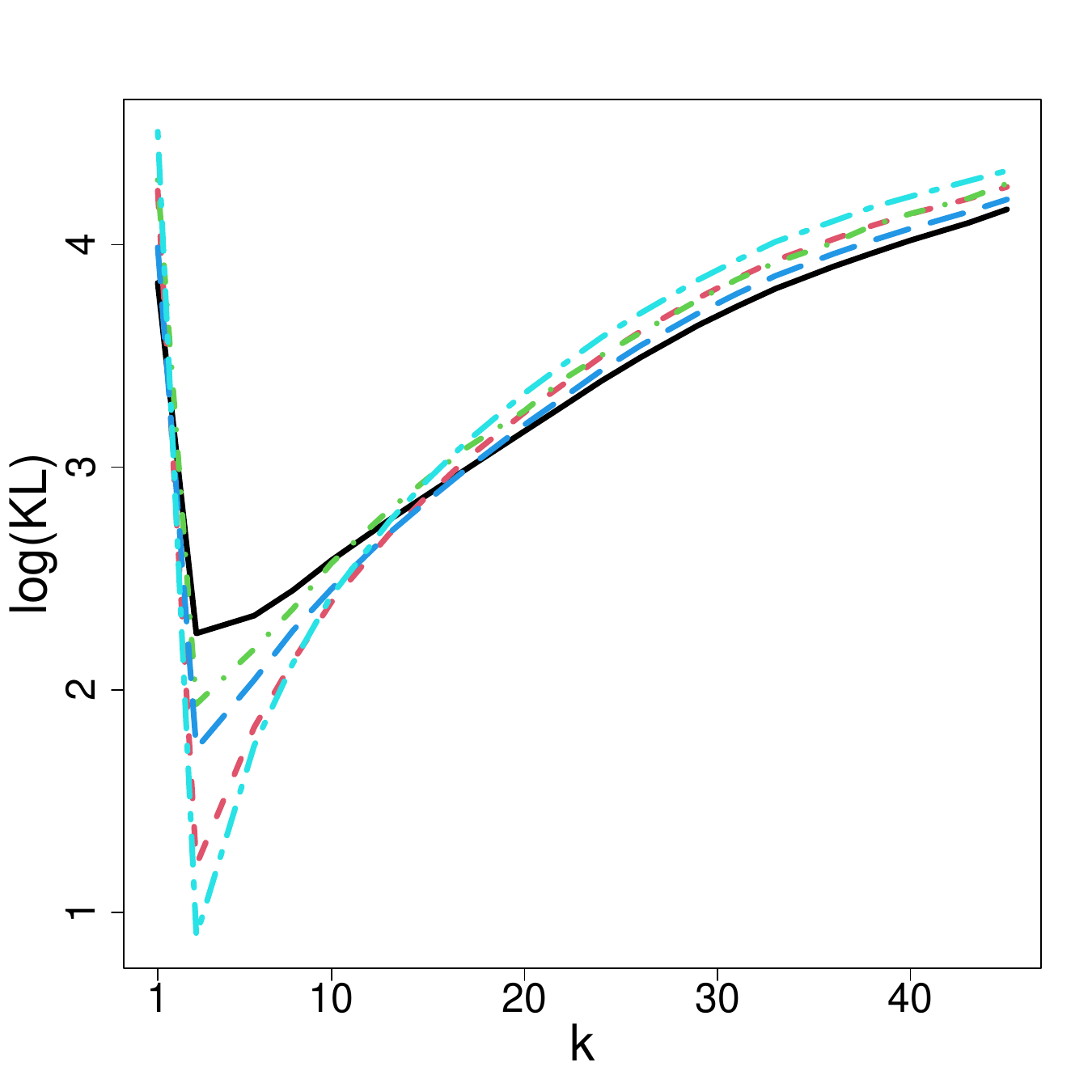}
   &\includegraphics[width=.31\textwidth]{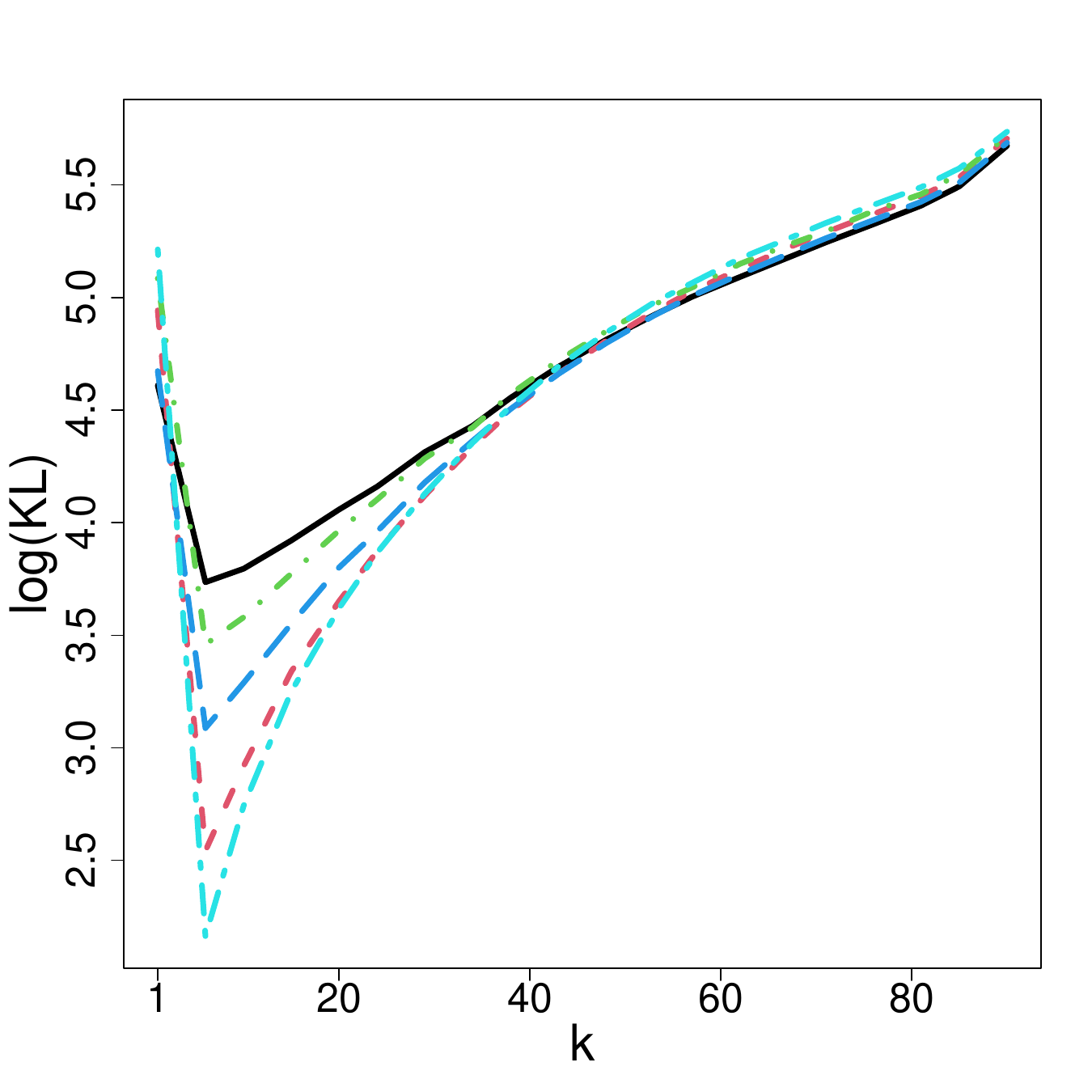}
\end{tabular}
\caption{Average log(KL) attained by cellRCov as a function of $\rk$ and $\delta$ for the planar covariance
model for to uncontaminated ($\gamma=0$) and contaminated data
($\gamma=6$) in dimensions 
$p$ in $\lbrace30,60,120\rbrace$.}
\label{fig:sens_planar}
\end{figure}

\begin{figure}[H]
\centering
\begin{tabular}{M{0.0005\textwidth}M{0.31\textwidth}M{0.31\textwidth}M{0.31\textwidth}}
   & \large \textbf{$p=30$}  & \large \textbf{$p=60$} & \large{\textbf{$p=120$}} \\ [-4mm]
   \rotatebox{90}{\textbf{\footnotesize{$\gamma=0$}}}
   &\includegraphics[width=.31\textwidth]{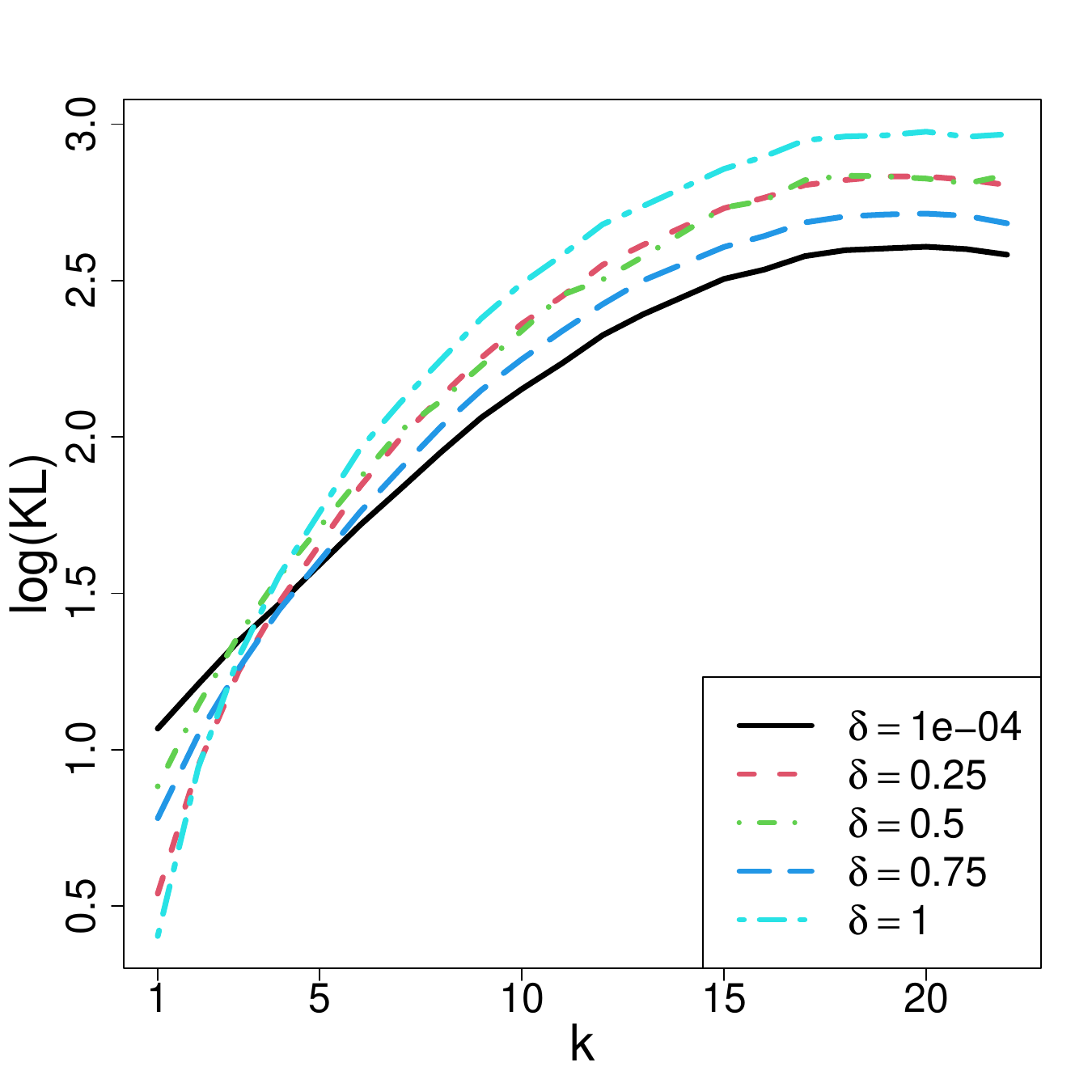}
   &\includegraphics[width=.31\textwidth]{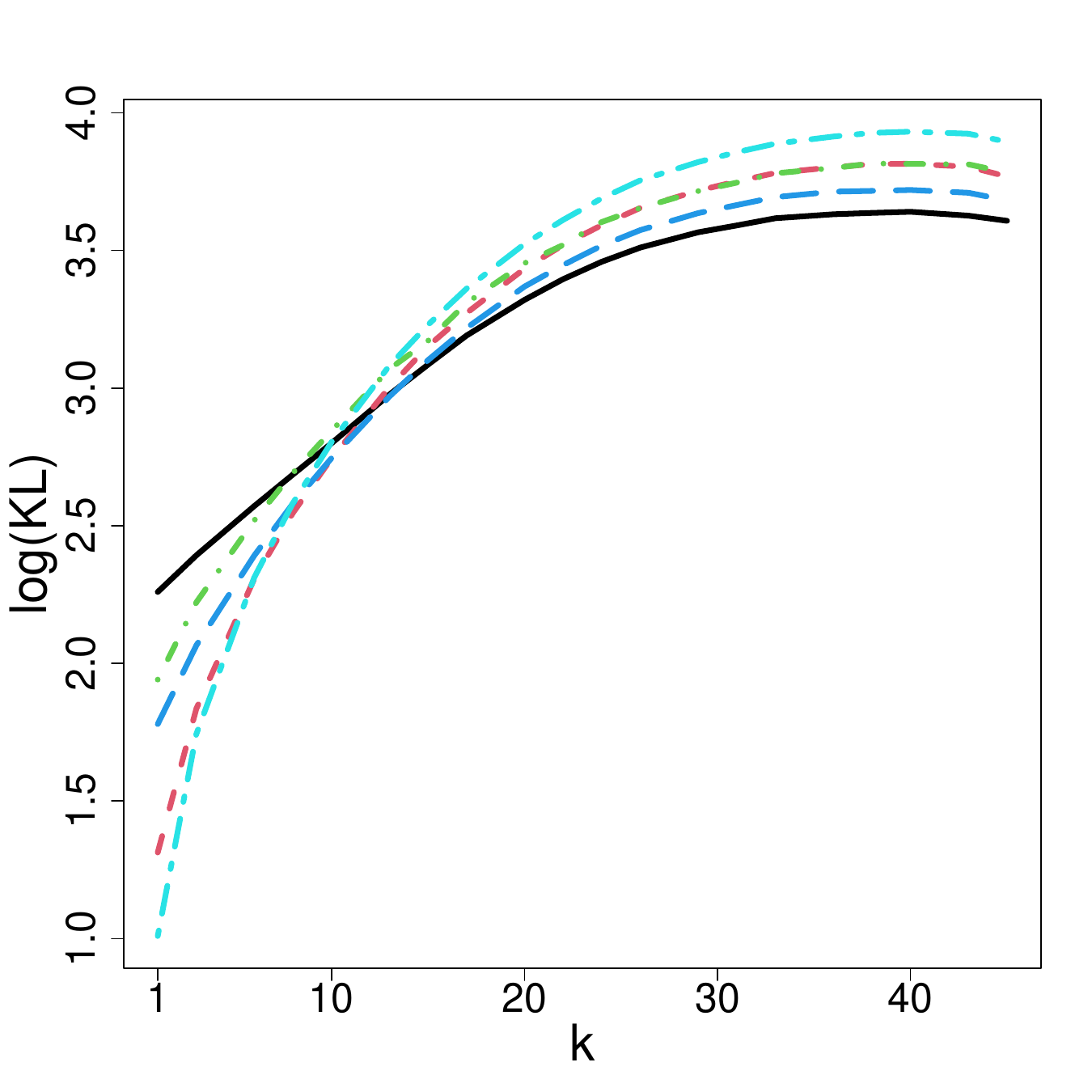}
   &\includegraphics[width=.31\textwidth]{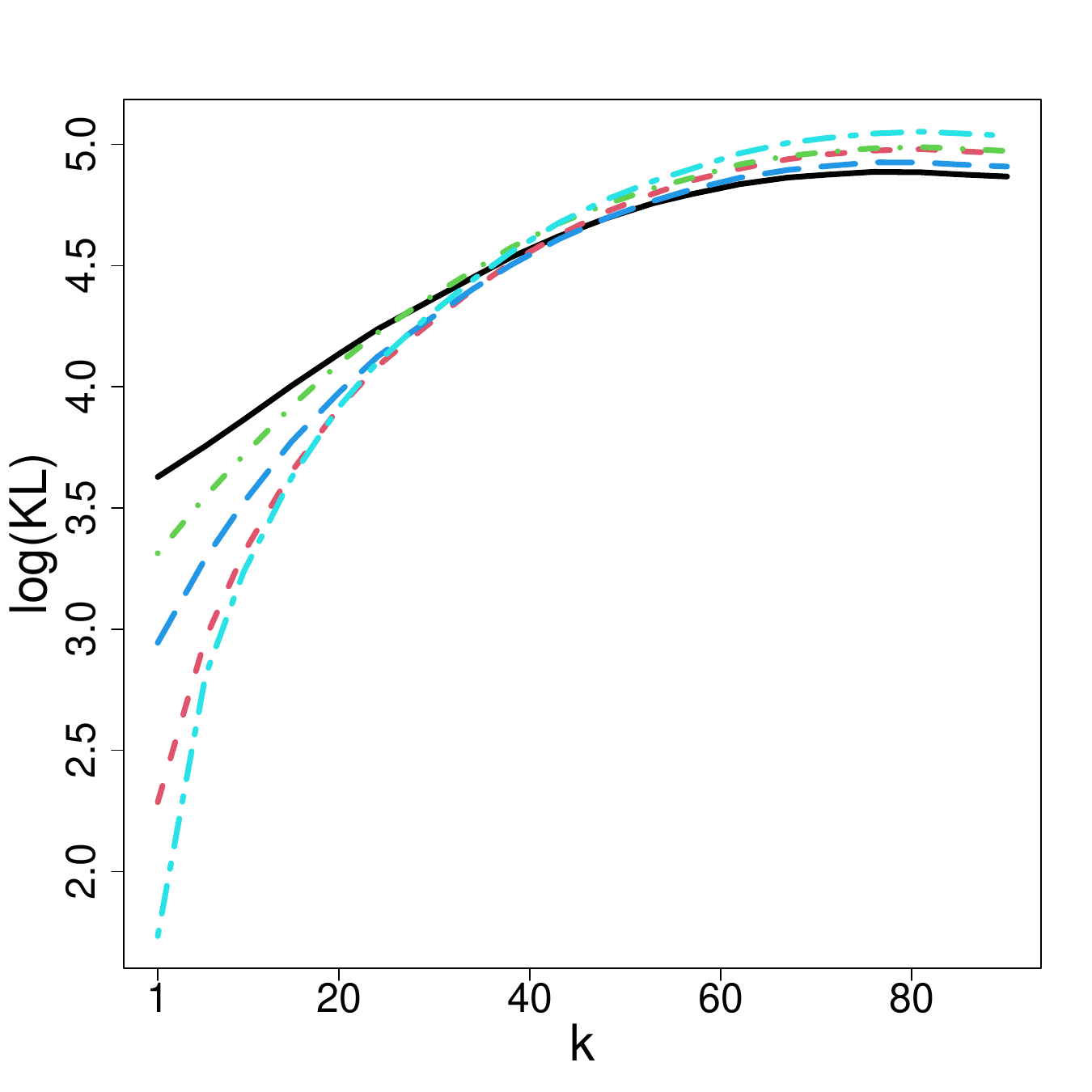} \\ [-4mm]
   \rotatebox{90}{\textbf{\footnotesize{$\gamma=6$}}}
   &\includegraphics[width=.31\textwidth]{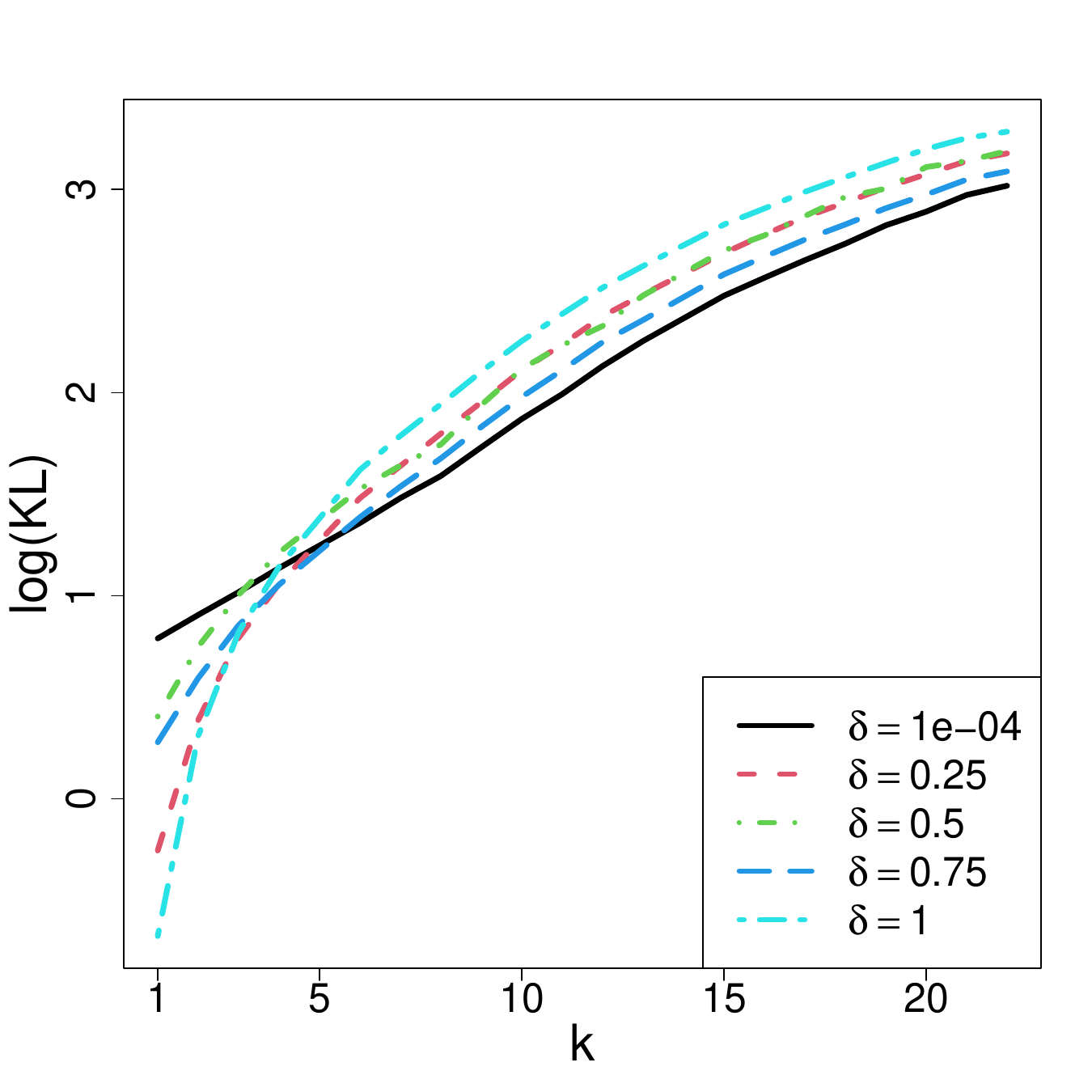}
   &\includegraphics[width=.31\textwidth]{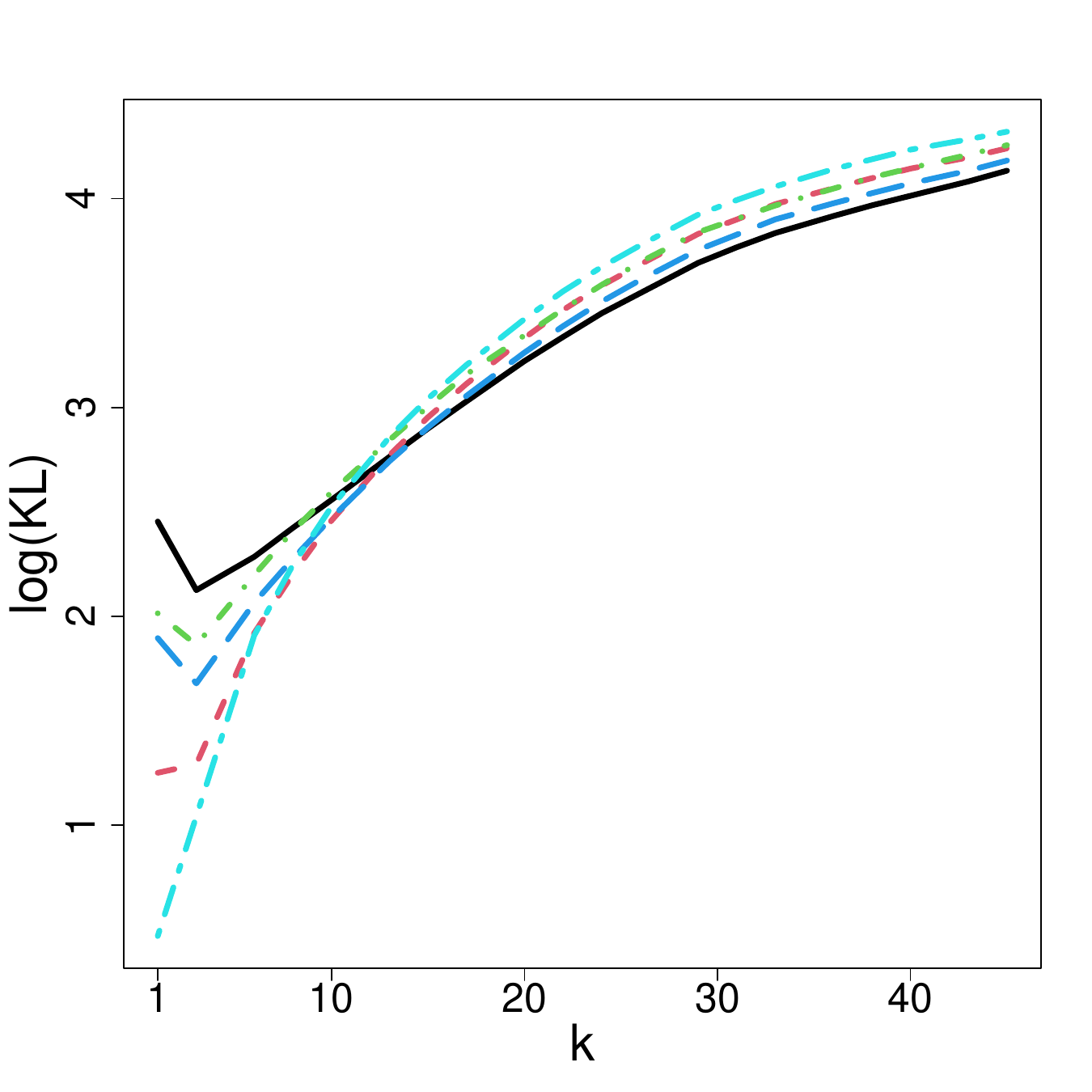}
   &\includegraphics[width=.31\textwidth]{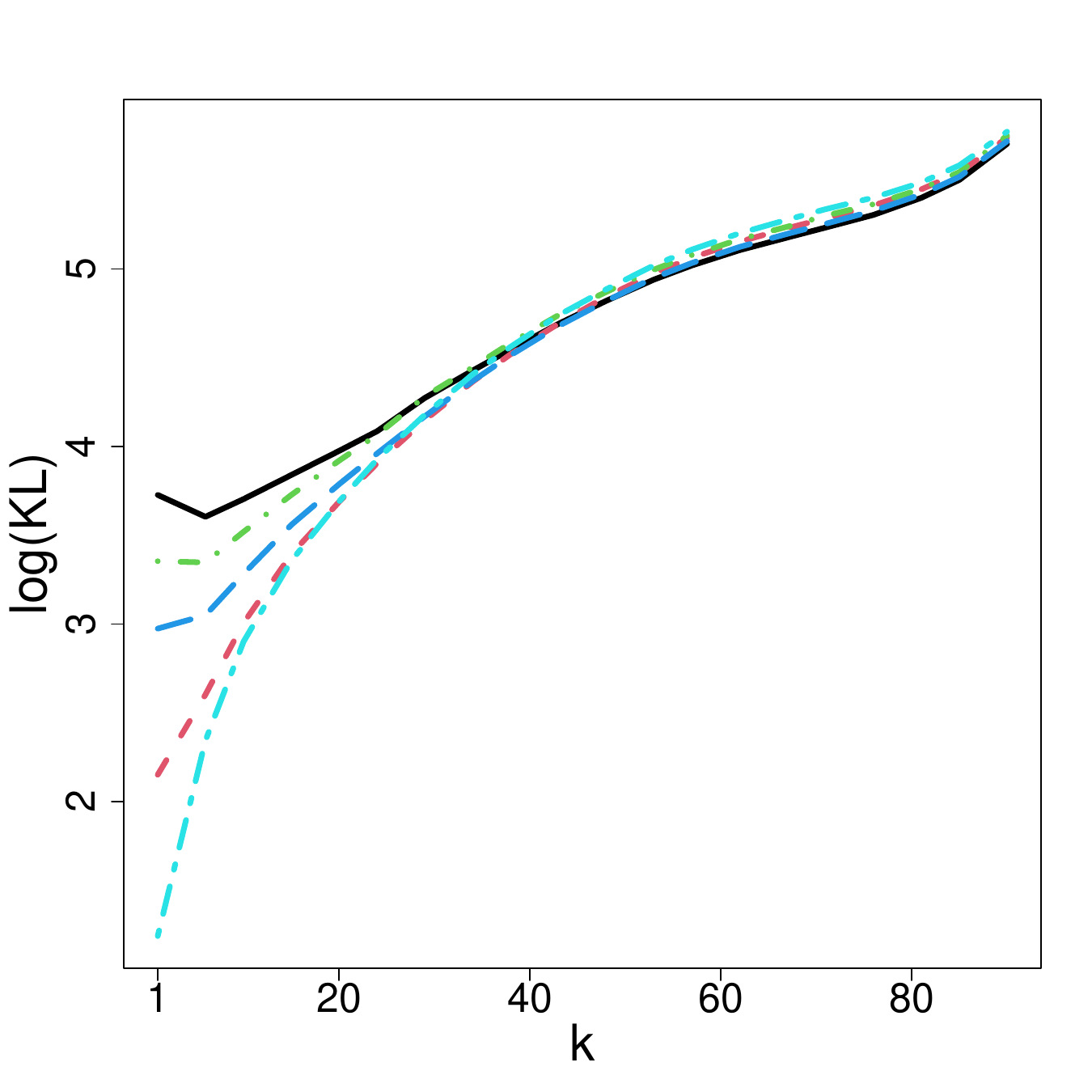}
\end{tabular}
\caption{Average log(KL) attained by cellRCov as a function of $\rk$ and $\delta$ for the dense covariance
model for to uncontaminated ($\gamma=0$) and contaminated data
($\gamma=6$) in dimensions 
$p$ in $\lbrace30,60,120\rbrace$.}
\label{fig:sens_dense}
\end{figure}

\section{\large Computation times}
\label{app:times}

We also assessed the computational feasibility 
of cellRCov by recording the running times in 
the simulation study. The timings were 
performed for the A09 covariance model without 
contamination and without missing values, for
dimensions $p=30,60,90$. The reported times 
include the full cellRCov procedure, with the 
automatic selection of the rank $\rk$ and  
the regularization parameter $\delta$. 
The computations were run on the KU 
Leuven/UHasselt Tier-2 wICE  cluster, with
IceLake thin nodes. Each node contains two 
Intel Xeon Platinum 8360Y CPUs at 2.4 GHz, 
with 36 cores per CPU and 256 GB RAM.
Figure~\ref{fig:runtime_cellRCov} shows the 
boxplots of the running times in seconds, 
averaged over the Monte Carlo replications. 

As expected, the computation time increases 
with the dimension $p$. This is consistent 
with the complexity analysis in 
Section~\ref{sec:supp-complexity-cellRCov},
which shows that the cost of cellRCov
contains the term $O(np^2)$ due to the 
computation of the full residual covariance 
matrix. Nevertheless, the procedure remains
computationally feasible for the dimensions
considered, even when the tuning parameters 
are selected automatically.

Finally, we note that the computation speed 
of the procedure can be substantially improved 
in several ways. The selection of $\rk$ can 
be parallelized, since the cellPCA fits for 
different candidate $\rk$ do not depend on
each other. In addition, the current 
implementation of cellPCA has not yet been 
extensively optimized, and implementing the 
most computationally intensive steps in C++
is expected to reduce the computation time 
considerably.

\begin{figure}[H]
\centering
\includegraphics[width=.47\textwidth]
  {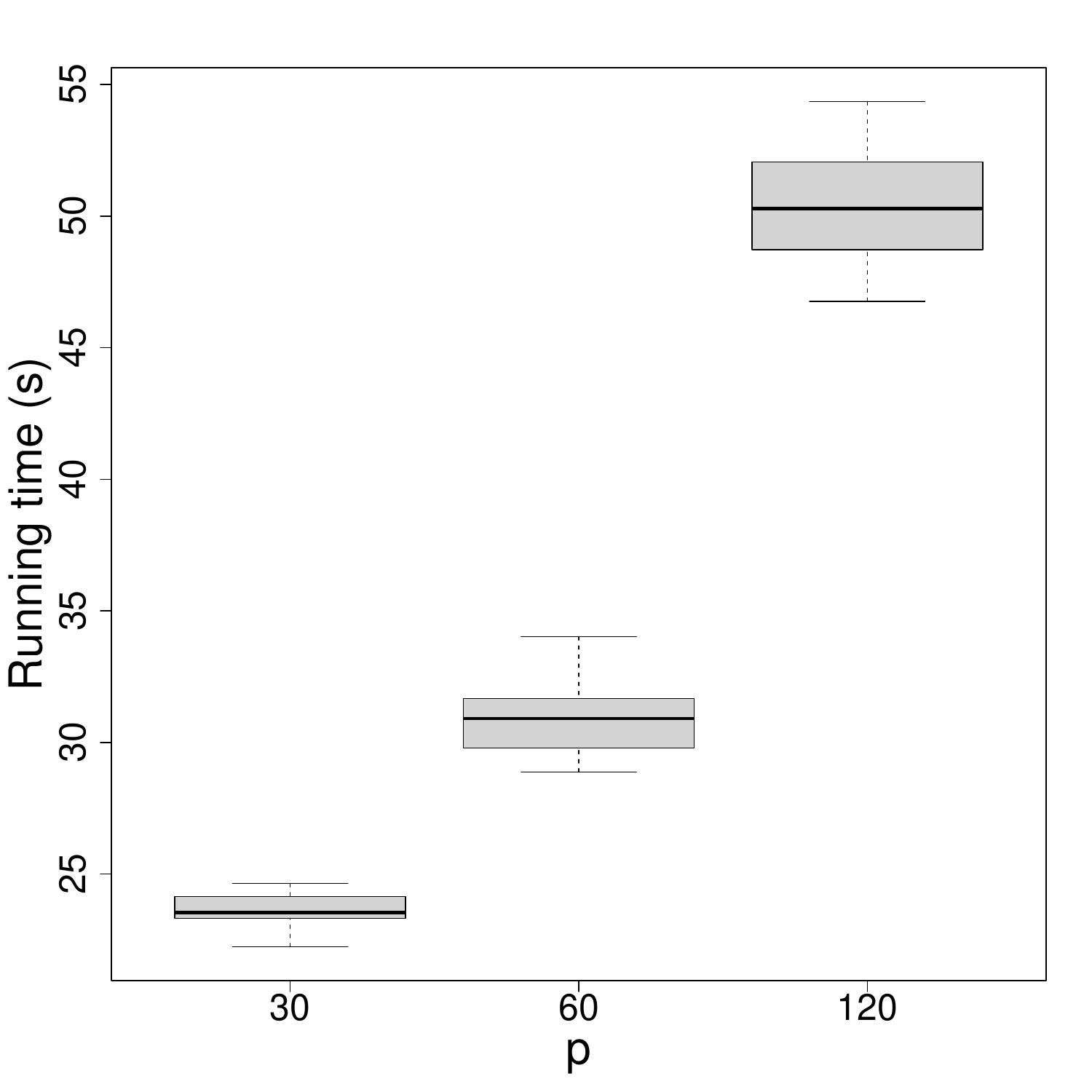}
\caption{Running time of cellRCov in 
  seconds, for $n=100$ and dimensions 
  $p = 30,60,120$\,.}
\label{fig:runtime_cellRCov}
\end{figure}

\renewcommand{\refname}{Additional Reference}


\begin{thebibliography}{}

\bibitem[\protect\citeauthoryear{Aerts and Wilms}{Aerts and Wilms}{2017}]{aerts2017cellwise}
Aerts, S. and I.~Wilms (2017).
\newblock Cellwise robust regularized discriminant analysis.
\newblock {\em Statistical Analysis and Data Mining\/}~{\em 10\/}(6), 436--447.

\bibitem[\protect\citeauthoryear{Agostinelli, Leung, Yohai, and Zamar}{Agostinelli et~al.}{2015}]{agostinelli2015robust}
Agostinelli, C., A.~Leung, V.~J. Yohai, and R.~H. Zamar (2015).
\newblock Robust estimation of multivariate location and scatter in the presence of cellwise and casewise contamination.
\newblock {\em Test\/}~{\em 24}, 441--461.

\bibitem[\protect\citeauthoryear{Alfons, Croux, and Filzmoser}{Alfons et~al.}{2017}]{Alfons_MaxAssoc}
Alfons, A., C.~Croux, and P.~Filzmoser (2017).
\newblock Robust maximum association estimators.
\newblock {\em Journal of the American Statistical Association\/}~{\em 112}, 436--445.

\bibitem[\protect\citeauthoryear{Alqallaf, Van~Aelst, Yohai, and Zamar}{Alqallaf et~al.}{2009}]{alqallaf2009}
Alqallaf, F., S.~Van~Aelst, V.~J. Yohai, and R.~H. Zamar (2009).
\newblock Propagation of outliers in multivariate data.
\newblock {\em The Annals of Statistics\/}~{\em 37}, 311--331.

\bibitem[\protect\citeauthoryear{Bickel and Levina}{Bickel and Levina}{2008}]{bickel2008regularized}
Bickel, P.~J. and E.~Levina (2008).
\newblock {Regularized estimation of large covariance matrices}.
\newblock {\em The Annals of Statistics\/}~{\em 36\/}(1), 199 -- 227.

\bibitem[\protect\citeauthoryear{Boudt, Rousseeuw, Vanduffel, and Verdonck}{Boudt et~al.}{2020}]{boudt2020minimum}
Boudt, K., P.~J. Rousseeuw, S.~Vanduffel, and T.~Verdonck (2020).
\newblock The minimum regularized covariance determinant estimator.
\newblock {\em Statistics and Computing\/}~{\em 30\/}(1), 113--128.

\bibitem[\protect\citeauthoryear{Branco, Croux, Filzmoser, and Oliveira}{Branco et~al.}{2005}]{branco2005robust}
Branco, J.~A., C.~Croux, P.~Filzmoser, and M.~R. Oliveira (2005).
\newblock Robust canonical correlations: A comparative study.
\newblock {\em Computational Statistics\/}~{\em 20}, 203--229.

\bibitem[\protect\citeauthoryear{Buja and Eyuboglu}{Buja and Eyuboglu}{1992}]{buja1992remarks}
Buja, A. and N.~Eyuboglu (1992).
\newblock Remarks on parallel analysis.
\newblock {\em Multivariate Behavioral Research\/}~{\em 27\/}(4), 509--540.

\bibitem[\protect\citeauthoryear{Butler, Davies, and Jhun}{Butler et~al.}{1993}]{butler1993asymptotics}
Butler, R., P.~Davies, and M.~Jhun (1993).
\newblock Asymptotics for the minimum covariance determinant estimator.
\newblock {\em The Annals of Statistics\/}~{\em 21\/}(3), 1385--1400.

\bibitem[\protect\citeauthoryear{Cator and Lopuha{\"a}}{Cator and Lopuha{\"a}}{2012}]{cator2012central}
Cator, E.~A. and H.~P. Lopuha{\"a} (2012).
\newblock Central limit theorem and influence function for the {MCD} estimators at general multivariate distributions.
\newblock {\em Bernoulli\/}~{\em 18}, 520--551.

\bibitem[\protect\citeauthoryear{Centofanti, Hubert, Palumbo, and Rousseeuw}{Centofanti et~al.}{2025}]{centofanti2025multivariate}
Centofanti, F., M.~Hubert, B.~Palumbo, and P.~J. Rousseeuw (2025).
\newblock Multivariate singular spectrum analysis by robust diagonalwise low-rank approximation.
\newblock {\em Journal of Computational and Graphical Statistics\/}~{\em 34\/}(1), 360--373.

\bibitem[\protect\citeauthoryear{Centofanti, Hubert, and Rousseeuw}{Centofanti et~al.}{2026}]{centofanti2026robust}
Centofanti, F., M.~Hubert, and P.~J. Rousseeuw (2026).
\newblock Robust principal components by casewise and cellwise weighting.
\newblock {\em Technometrics\/}~{\em (in press)}, 1--25, \url{https://doi.org/10.1080/00401706.2026.2643216}.

\bibitem[\protect\citeauthoryear{Clarke}{Clarke}{2018}]{clarke2018robustness}
Clarke, B.~R. (2018).
\newblock {\em Robustness Theory and Application}.
\newblock John Wiley \& Sons.

\bibitem[\protect\citeauthoryear{Croux and Öllerer}{Croux and Öllerer}{2016}]{croux2016robust}
Croux, C. and V.~Öllerer (2016).
\newblock Robust and sparse estimation of the inverse covariance matrix using rank correlation measures.
\newblock In C.~Agostinelli, A.~Basu, P.~Filzmoser, and D.~Mukherjee (Eds.), {\em Recent Advances in Robust Statistics: Theory and Applications}, pp.\  35--55. New Delhi: Springer India.

\bibitem[\protect\citeauthoryear{Engel, Buydens, and Blanchet}{Engel et~al.}{2017}]{engel2017overview}
Engel, J., L.~Buydens, and L.~Blanchet (2017).
\newblock An overview of large-dimensional covariance and precision matrix estimators with applications in chemometrics.
\newblock {\em Journal of Chemometrics\/}~{\em 31\/}(4), e2880.

\bibitem[\protect\citeauthoryear{Fernholz}{Fernholz}{2012}]{fernholz2012mises}
Fernholz, L.~T. (2012).
\newblock {\em Von Mises Calculus for Statistical Functionals}.
\newblock Springer.

\bibitem[\protect\citeauthoryear{Hampel, Ronchetti, and Rousseeuw}{Hampel et~al.}{1981}]{tanh1981}
Hampel, F.~R., E.~M. Ronchetti, and P.~J. Rousseeuw (1981).
\newblock {The Change-of-Variance Curve and Optimal Redescending M-Estimators}.
\newblock {\em Journal of the American Statistical Association\/}~{\em 76}, 643--648.

\bibitem[\protect\citeauthoryear{Hampel, Ronchetti, Rousseeuw, and Stahel}{Hampel et~al.}{1986}]{hampel1986}
Hampel, F.~R., E.~M. Ronchetti, P.~J. Rousseeuw, and W.~A. Stahel (1986).
\newblock {\em {Robust Statistics: the Approach based on Influence Functions}}.
\newblock Wiley.

\bibitem[\protect\citeauthoryear{Hirari, Centofanti, Hubert, and Van~Aelst}{Hirari et~al.}{2025}]{hirari2025robust}
Hirari, M., F.~Centofanti, M.~Hubert, and S.~Van~Aelst (2025).
\newblock Robust multilinear principal component analysis, {arXiv} preprint arxiv:2503.07327.

\bibitem[\protect\citeauthoryear{Horn}{Horn}{1965}]{horn1965rationale}
Horn, J.~L. (1965).
\newblock A rationale and test for the number of factors in factor analysis.
\newblock {\em Psychometrika\/}~{\em 30}, 179--185.

\bibitem[\protect\citeauthoryear{Hubert, Rousseeuw, and {Van den Bossche}}{Hubert et~al.}{2019}]{hubert2019macropca}
Hubert, M., P.~J. Rousseeuw, and W.~{Van den Bossche} (2019).
\newblock Macro{PCA}: an all-in-one {PCA} method allowing for missing values as well as cellwise and rowwise outliers.
\newblock {\em Technometrics\/}~{\em 61\/}(4), 459--473.

\bibitem[\protect\citeauthoryear{Hubert, Rousseeuw, and Verdonck}{Hubert et~al.}{2012}]{hubert2012deterministic}
Hubert, M., P.~J. Rousseeuw, and T.~Verdonck (2012).
\newblock A deterministic algorithm for robust location and scatter.
\newblock {\em Journal of Computational and Graphical Statistics\/}~{\em 21\/}(3), 618--637.

\bibitem[\protect\citeauthoryear{Johnson and Wichern}{Johnson and Wichern}{2002}]{johnson2002applied}
Johnson, R.~A. and D.~W. Wichern (2002).
\newblock {\em {Applied Multivariate Statistical Analysis}}.
\newblock Upper Saddle River, NJ: Prentice-Hall.

\bibitem[\protect\citeauthoryear{Kalivas}{Kalivas}{1997}]{kalivas1997two}
Kalivas, J.~H. (1997).
\newblock Two data sets of near infrared spectra.
\newblock {\em Chemometrics and Intelligent Laboratory Systems\/}~{\em 37\/}(2), 255--259.

\bibitem[\protect\citeauthoryear{Lam}{Lam}{2020}]{lam2020high}
Lam, C. (2020).
\newblock High-dimensional covariance matrix estimation.
\newblock {\em Wiley Interdisciplinary Reviews: Computational Statistics\/}~{\em 12\/}(2), e1485.

\bibitem[\protect\citeauthoryear{Ledoit and Wolf}{Ledoit and Wolf}{2004}]{ledoit2004well}
Ledoit, O. and M.~Wolf (2004).
\newblock A well-conditioned estimator for large-dimensional covariance matrices.
\newblock {\em Journal of Multivariate Analysis\/}~{\em 88\/}(2), 365--411.

\bibitem[\protect\citeauthoryear{Ledoit and Wolf}{Ledoit and Wolf}{2012}]{ledoit2012nonlinear}
Ledoit, O. and M.~Wolf (2012).
\newblock {Nonlinear shrinkage estimation of large-dimensional covariance matrices}.
\newblock {\em The Annals of Statistics\/}~{\em 40\/}(2), 1024 -- 1060.

\bibitem[\protect\citeauthoryear{Leurgans, Moyeed, and Silverman}{Leurgans et~al.}{1993}]{leurgans1993canonical}
Leurgans, S.~E., R.~A. Moyeed, and B.~W. Silverman (1993).
\newblock Canonical correlation analysis when the data are curves.
\newblock {\em Journal of the Royal Statistical Society Series B: Statistical Methodology\/}~{\em 55\/}(3), 725--740.

\bibitem[\protect\citeauthoryear{Maronna, Martin, Yohai, and Salibi{\'a}n-Barrera}{Maronna et~al.}{2019}]{maronna2019robust}
Maronna, R.~A., R.~D. Martin, V.~J. Yohai, and M.~Salibi{\'a}n-Barrera (2019).
\newblock {\em Robust Statistics: {T}heory and Methods (with R)}.
\newblock John Wiley \& Sons.

\bibitem[\protect\citeauthoryear{Mart{\'\i}n, Pereda, Santos, and Gal{\'a}n}{Mart{\'\i}n et~al.}{2014}]{martin2014assessment}
Mart{\'\i}n, {\'O}., M.~Pereda, J.~I. Santos, and J.~M. Gal{\'a}n (2014).
\newblock Assessment of resistance spot welding quality based on ultrasonic testing and tree-based techniques.
\newblock {\em Journal of Materials Processing Technology\/}~{\em 214\/}(11), 2478--2487.

\bibitem[\protect\citeauthoryear{Mikno, Pilarczyk, Korzeniowski, Kustro{\'n}, and Ambroziak}{Mikno et~al.}{2018}]{mikno2018analysis}
Mikno, Z., A.~Pilarczyk, M.~Korzeniowski, P.~Kustro{\'n}, and A.~Ambroziak (2018).
\newblock Analysis of resistance welding processes and expulsion of liquid metal from the weld nugget.
\newblock {\em Archives of Civil and Mechanical Engineering\/}~{\em 18}, 522--531.

\bibitem[\protect\citeauthoryear{Portnoy and He}{Portnoy and He}{2000}]{portnoyhe2000}
Portnoy, S. and X.~He (2000).
\newblock {A Robust Journey in the New Millennium}.
\newblock {\em Journal of the American Statistical Association\/}~{\em 95\/}(452), 1331--1335.

\bibitem[\protect\citeauthoryear{Pourahmadi}{Pourahmadi}{2013}]{pourahmadi2013high}
Pourahmadi, M. (2013).
\newblock {\em High-dimensional Covariance Estimation}.
\newblock John Wiley \& Sons.

\bibitem[\protect\citeauthoryear{Raymaekers and Rousseeuw}{Raymaekers and Rousseeuw}{2024}]{raymaekers2024cellMCD}
Raymaekers, J. and P.~J. Rousseeuw (2024).
\newblock The cellwise minimum covariance determinant estimator.
\newblock {\em Journal of the American Statistical Association\/}~{\em 119\/}(548), 2610--2621, \url{https://doi.org/10.1080/01621459.2023.2267777}.

\bibitem[\protect\citeauthoryear{Raymaekers and Rousseeuw}{Raymaekers and Rousseeuw}{2026}]{raymaekersChallenges}
Raymaekers, J. and P.~J. Rousseeuw (2026).
\newblock Challenges of cellwise outliers.
\newblock {\em Econometrics and Statistics\/}~{\em 38}, 6--25, \url{https://doi.org/10.1016/j.ecosta.2024.02.002}.

\bibitem[\protect\citeauthoryear{Rothman, Levina, and Zhu}{Rothman et~al.}{2009}]{rothman2009generalized}
Rothman, A.~J., E.~Levina, and J.~Zhu (2009).
\newblock Generalized thresholding of large covariance matrices.
\newblock {\em Journal of the American Statistical Association\/}~{\em 104\/}(485), 177--186.

\bibitem[\protect\citeauthoryear{Rousseeuw}{Rousseeuw}{1984}]{rousseeuw1984least}
Rousseeuw, P.~J. (1984).
\newblock Least median of squares regression.
\newblock {\em Journal of the American Statistical Association\/}~{\em 79\/}(388), 871--880.

\bibitem[\protect\citeauthoryear{Rousseeuw and {Van den Bossche}}{Rousseeuw and {Van den Bossche}}{2018}]{DDC2018}
Rousseeuw, P.~J. and W.~{Van den Bossche} (2018).
\newblock Detecting deviating data cells.
\newblock {\em Technometrics\/}~{\em 60\/}(2), 135--145, \url{https://doi.org/10.1080/00401706.2017.1340909}.

\bibitem[\protect\citeauthoryear{Sch{\"a}fer and Strimmer}{Sch{\"a}fer and Strimmer}{2005}]{schafer2005shrinkage}
Sch{\"a}fer, J. and K.~Strimmer (2005).
\newblock A shrinkage approach to large-scale covariance matrix estimation and implications for functional genomics.
\newblock {\em Statistical Applications in Genetics and Molecular Biology\/}~{\em 4\/}(Article 32), 1--30.

\bibitem[\protect\citeauthoryear{Sundaram}{Sundaram}{1996}]{sundaram1996first}
Sundaram, R.~K. (1996).
\newblock {\em A First Course in Optimization Theory}.
\newblock Cambridge University Press.

\bibitem[\protect\citeauthoryear{Tarr, M{\"u}ller, and Weber}{Tarr et~al.}{2016}]{tarr2016robust}
Tarr, G., S.~M{\"u}ller, and N.~C. Weber (2016).
\newblock Robust estimation of precision matrices under cellwise contamination.
\newblock {\em Computational Statistics \& Data Analysis\/}~{\em 93}, 404--420.

\bibitem[\protect\citeauthoryear{Van~Aelst, Vandervieren, and Willems}{Van~Aelst et~al.}{2011}]{van2011stahel}
Van~Aelst, S., E.~Vandervieren, and G.~Willems (2011).
\newblock {Stahel-Donoho estimators with cellwise weights}.
\newblock {\em Journal of Statistical Computation and Simulation\/}~{\em 81\/}(1), 1--27.

\bibitem[\protect\citeauthoryear{{von Mises}}{{von Mises}}{1947}]{mises1947asymptotic}
{von Mises}, R. (1947).
\newblock On the asymptotic distribution of differentiable statistical functions.
\newblock {\em The Annals of Mathematical Statistics\/}~{\em 18\/}(3), 309--348.

\bibitem[\protect\citeauthoryear{Warton}{Warton}{2008}]{warton2008penalized}
Warton, D.~I. (2008).
\newblock Penalized normal likelihood and ridge regularization of correlation and covariance matrices.
\newblock {\em Journal of the American Statistical Association\/}~{\em 103\/}(481), 340--349.

\bibitem[\protect\citeauthoryear{Wilms and Croux}{Wilms and Croux}{2015}]{WilmsCroux2015}
Wilms, I. and C.~Croux (2015).
\newblock Sparse canonical correlation analysis from a predictive point of view.
\newblock {\em SSRN Electronic Journal\/}~{\em 57}, 1--24.

\bibitem[\protect\citeauthoryear{Zaccaria, Garc{\'\i}a-Escudero, Greselin, and Mayo-{\'I}scar}{Zaccaria et~al.}{2025}]{zaccaria2024cellwise}
Zaccaria, G., L.~A. Garc{\'\i}a-Escudero, F.~Greselin, and A.~Mayo-{\'I}scar (2025).
\newblock Cellwise outlier detection in heterogeneous populations.
\newblock {\em Technometrics\/}~{\em 67}, 643--654.

\end{thebibliography}

\begin{thebibliography}{}
\providecommand{\natexlab}[1]{#1}
\providecommand{\url}[1]{\texttt{#1}}
\expandafter\ifx\csname urlstyle\endcsname\relax
  \providecommand{\doi}[1]{doi: #1}\else
  \providecommand{\doi}{doi: \begingroup \urlstyle{rm}\Url}\fi

\bibitem[\protect\citeauthoryear{Rio Branco~de Oliveira}{Rio Branco~de Oliveira}{2012}]{Riobranco2012}
Rio Branco~de Oliveira, O. (2012).
\newblock The implicit and the inverse function theorems: easy proofs, arXiv preprint 1212.2066.
\end{thebibliography}
\end{document}